\newcommand{\myitem}[1][]{%
\item[#1]\protected@edef\@currentlabel{#1}\ignorespaces%
}
\def\leftrule{L}%
\def\rightrule{R}%
\newcommand{\bebecomes}{\mathrel{::=}}
\newcommand{\alternative}{~|~}
\newcommand{\solvar}{\varphi}
\newcommand{\I}{\dLint[state=\omega]}
\newcommand{\It}{\dLint[state=\nu]}
\newcommand{\If}{\DALint[flow=\solvar]}
\newcommand*{\Iff}[1][\zeta]{\dLint[state=\solvar({#1})]}%
\newcommand*{\Iffy}[1][\zeta]{\dLint[state=\solvar_y({#1})]}%
\newcommand*{\IffA}[1][\zeta]{\dLint[state=\solvar_1({#1})]}%
\newcommand*{\IffB}[1][\zeta]{\dLint[state=\solvar_2({#1})]}%
\newcommand*{\Ifff}[1][\zeta]{\dLint[state=\psi({#1})]}%
\newcommand*{\Ifffy}[1][\zeta]{\dLint[state=\psi_y({#1})]}%
\newcommand*{\Ifffye}[1][\zeta]{\dLint[state=\psi_y|_\epsilon({#1})]}%
\newcommand*{\Iffff}[1][\zeta]{\dLint[state=\Phi({#1})]}%
\newcommand{\solmodels}[3]{#1 \models #2 \land #3}
\newsavebox{\Rval}%
\sbox{\Rval}{$\scriptstyle\mathbb{R}$}
\newsavebox{\Rvalext}%
\sbox{\Rvalext}{$\scriptstyle\mathbb{R}_{\exp,\sin,\cos}$}
\newsavebox{\Rvalexp}%
\sbox{\Rvalexp}{$\scriptstyle\mathbb{R}_{\exp}$}
  \newdimen\linferenceRulehskipamount%
  \newdimen\lcalculuscollectionvskipamount%
\definecolor{vblue}{rgb}{.1,.15,.62}
\definecolor{vgray}{rgb}{.35,.35,.35}
\newcommand{\rref}[2][]{\prettyref{#2}}
\newenvironment{proofsketch}[1][TODO]{\proof[Proof Summary (\ifthenelse{\equal{#1}{TODO}}{TODO}{\rref{#1}})]}{\endproof}
\newcommand{\ie}{i.e.}
\newcommand{\eg}{e.g.}
\newcommand{\cmp}{\succcurlyeq}
\renewcommand{\allvars}{\mathbb{V}}
\newcommand{\States}{\mathbb{S}}
\newcommand{\initassum}{x{=}y}
\newcommand{\notinitassum}{x{\neq}y}
\newcommand{\dprogressin}[3][]{%
  {\langle{\pevolvein{#2}{#3}}\rangle}{\ddnext} {#1}%
}
\newcommand{\ddnext}{\bigcirc}
\definecolor{highlightred}{rgb}{.7, 0.0, 0.0}
\renewcommand*{\der}[1]{\D{(#1)}}
\renewcommand*{\lie}[3][]
{\mathcal{L}_{#2}^{\ifthenelse{\equal{#1}{}}{}{^{\left(#1\right)}}}(#3)}
\renewcommand*{\lied}[3][]{\overset{\bm .}{#3}\ifthenelse{\equal{#1}{}}{}{{}^{(#1)}}}
\renewcommand{\siglied}[3][]{\overset{\bm .}{#3}{}^{\Dostar{#1}}}
\renewcommand{\Dostar}[1]{\ifthenelse{\equal{#1}{}}{(*)}{-(*)}}
\newcommand{\sigliedgt}[3][]{\siglied[#1]{#2}{#3}>0}
\newcommand{\sigliedgeq}[3][]{\siglied[#1]{#2}{#3}\geq0}
\newcommand{\sigliedzero}[3][]{\siglied[#1]{#2}{#3}=0}
\newcommand{\sigliedsai}[3][]{\siglied[#1]{#2}{#3}}
\renewcommand*{\vec}[1]{\mathbf{#1}}
\newcommand{\vecpolyn}[2]{\vec{#1}}
\newcommand{\matpolyn}[2]{#1}
\newcommand{\truncafter}[2]{#1|_{#2}}
\newcommand{\soltrunc}[1]{\truncafter{\solvar}{#1}}
\newcommand{\etermA}{e}
\newcommand{\etermB}{\tilde{e}}
\newcommand{\odeterm}{f}
\renewcommand*{\genDE}[1]{\odeterm(#1)}
\newcommand{\etermAA}{d}
\newcommand{\etermBB}{\tilde{d}}
\newcommand{\ptermA}{p}
\newcommand{\ptermB}{q}
\newcommand{\cofterm}{g}
\newcommand{\coftermC}{G}
\newcommand{\argx}{(x)}
\newcommand{\argxx}{(x,\D{x})}
\newcommand{\noef}{h}
\newcommand{\noeff}{\upsilon }
\newcommand{\noefdom}{H}
\newcommand{\noeffdom}{\Upsilon}
\newcommand{\noefff}{f}
\newcommand{\fvarA}{\phi}
\newcommand{\fvarB}{\psi}
\newcommand{\rfvar}{P}
\newcommand{\rrfvar}{R}
\newcommand{\footnotesizeoff}{}%
\newcommand{\normeuc}[1]{\left\lVert#1\right\rVert_2}
\newcommand{\normfrob}[1]{\left\lVert#1\right\rVert_F}
\newsavebox{\Lightningval}%
\sbox{\Lightningval}{\mbox{\lightning}}
\begin{document}

\title{Differential Equation Invariance Axiomatization}

\author{Andr\'e Platzer}
\orcid{0000-0001-7238-5710}
\affiliation{
  \department{Computer Science Department}
  \institution{Carnegie Mellon University}
  \streetaddress{5000 Forbes Avenue}
  \city{Pittsburgh}
  \state{PA}
  \postcode{15213}
  \country{USA}
}
\email{aplatzer@cs.cmu.edu}

\author{Yong Kiam Tan}
\orcid{0000-0001-7033-2463}
\affiliation{
  \department{Computer Science Department}
  \institution{Carnegie Mellon University}
  \streetaddress{5000 Forbes Avenue}
  \city{Pittsburgh}
  \state{PA}
  \postcode{15213}
  \country{USA}
}
\email{yongkiat@cs.cmu.edu}

\begin{abstract}
This article proves the completeness of an axiomatization for differential equation invariants described by Noetherian functions.
First, the differential equation axioms of differential dynamic logic are shown to be complete for reasoning about analytic invariants.
Completeness crucially exploits differential ghosts, which introduce additional variables that can be chosen to evolve freely along new differential equations.
Cleverly chosen differential ghosts are the proof-theoretical counterpart of dark matter.
They create new hypothetical state, whose relationship to the original state variables satisfies invariants that did not exist before.
The reflection of these new invariants in the original system then enables its analysis.

An extended axiomatization with existence and uniqueness axioms is complete for all local progress properties,
and, with a real induction axiom, is complete for all semianalytic invariants.
This parsimonious axiomatization serves as the logical foundation for reasoning about invariants of differential equations.
Indeed, it is precisely this logical treatment that enables the generalization of completeness to the Noetherian case.
\end{abstract}

\begin{CCSXML}
<ccs2012>
<concept>
<concept_id>10002950.10003714.10003727.10003728</concept_id>
<concept_desc>Mathematics of computing~Ordinary differential equations</concept_desc>
<concept_significance>500</concept_significance>
</concept>
<concept>
<concept_id>10003752.10003790.10003792</concept_id>
<concept_desc>Theory of computation~Proof theory</concept_desc>
<concept_significance>500</concept_significance>
</concept>
<concept>
<concept_id>10003752.10003790.10003793</concept_id>
<concept_desc>Theory of computation~Modal and temporal logics</concept_desc>
<concept_significance>500</concept_significance>
</concept>
<concept>
<concept_id>10003752.10003790.10003806</concept_id>
<concept_desc>Theory of computation~Programming logic</concept_desc>
<concept_significance>500</concept_significance>
</concept>
</ccs2012>
\end{CCSXML}

\ccsdesc[500]{Mathematics of computing~Ordinary differential equations}
\ccsdesc[500]{Theory of computation~Proof theory}
\ccsdesc[500]{Theory of computation~Modal and temporal logics}
\ccsdesc[500]{Theory of computation~Programming logic}

\keywords{differential equation axiomatization, invariants, differential dynamic logic, differential ghosts, Noetherian functions}

\maketitle

\section{Introduction}
\label{sec:introduction}

Classically, differential equations are studied by analyzing their solutions, which is at odds with the fact that solutions are often much more complicated than the differential equations themselves.
This stark difference between the simple local description as differential equations, and the complex global behavior exhibited by their solutions is fundamental to the descriptive power of differential equations.
Poincar\'e's qualitative study of differential equations~\cite{Poincare81} calls for the exploitation of this difference by deducing properties of solutions \emph{directly from the differential equations}.
This article completes an important step in this enterprise by identifying the \emph{logical foundations for proving invariance properties of differential equations} described by Noetherian functions~\cite{MR1150568,MR1732408,MR2083248,MR3925105}.

This result exploits the differential equation axioms of differential dynamic logic (\dL)~\cite{DBLP:conf/lics/Platzer12b,DBLP:journals/jar/Platzer17}.
\dL is a logic for deductive verification of hybrid systems that are modeled by hybrid programs combining discrete computation (\eg, assignments, tests and loops), and continuous dynamics specified using systems of ordinary differential equations (ODEs).
By the continuous relative completeness theorem for \dL~\cite[Theorem 1]{DBLP:conf/lics/Platzer12b}, verification of hybrid systems reduces completely to the study of differential equations.
Thus, the hybrid systems axioms of \dL provide a way of lifting the findings of this article about differential equations to hybrid systems.
The remaining practical challenge is to find succinct ODE invariants. The \dL calculus reduces proving such an invariant to arithmetical questions, which are decidable if the invariants are in first-order real arithmetic~\cite{Bochnak1998}.

To understand the difficulty in verifying properties of ODEs, it is useful to draw an analogy between ODEs and discrete program loops.\footnote{%
In fact, this analogy can be made precise: \dL also has a converse relative completeness theorem~\cite[Theorem 2]{DBLP:conf/lics/Platzer12b} that reduces hybrid systems and their ODEs completely to discrete Euler approximation loops.}
Loops also exhibit the dichotomy between global behavior and local description.
Although the body of a loop may be simple, it is almost always impractical to reason about the global behavior of loops by unfolding all possible iterations.
Instead, the premier reasoning technique for loops is to study their loop invariants, \ie, inductive properties that are always preserved across each execution of the loop body.

Similarly, invariants of ODEs describe subsets of the state space from which solutions of the ODEs cannot escape.
The three basic \dL principles for reasoning about such invariants are: (1) \emph{differential invariants}, which enable local reasoning about local change of truth in differential form, (2) \emph{differential cuts}, which accumulate knowledge about the evolution of an ODE from multiple proofs, and (3) \emph{differential ghosts}, which add differential equations for new ghost variables to the existing system of differential equations enabling reasoning about the historical evolution of ODE systems in integral form.
These reasoning principles relate to their discrete loop counterparts as follows: (1) corresponds to loop induction by analyzing the loop body, (2) corresponds to progressive refinement of the loop guards, and (3) corresponds to adding discrete ghost variables to remember intermediate program states.
At first glance, differential ghosts seem counter-intuitive: they increase the dimension of the system, which should be adverse to analyzing it!
However, just as the addition of discrete ghosts allows the expression of new relationships between variables along executions of a program~\cite{DBLP:journals/cacm/OwickiG76}, adding differential ghosts that suitably co-evolve with the ODEs crucially allows the expression of new relationships along solutions to the differential equations.
The ramifications of these new relationships are then used to analyze the original, unaugmented ODEs.

The \dL proof calculus internalizes these reasoning principles as \emph{syntactic} axioms \cite{DBLP:journals/jar/Platzer17}.
ODE invariance proofs in \dL are syntactic derivations whose correctness relies solely on the soundness of these underlying axioms.
Crucially, this obviates the need to unfold the mathematical semantics of differential equations for proving their invariance properties.
This separation of syntax and axiomatics from semantics enables their sound implementation, e.g., in \KeYmaeraX~\cite{DBLP:conf/cade/FultonMQVP15} with \dL's uniform substitution calculus~\cite{DBLP:journals/jar/Platzer17}, and their verification in foundational theorem provers~\cite{DBLP:conf/cpp/BohrerRVVP17}.

This article extends the authors' earlier conference version~\cite{DBLP:conf/lics/PlatzerT18} beyond the polynomial setting and presents a \emph{differential equation invariance axiomatization}.
For extended term languages (and ODEs) meeting three extended term conditions, this article presents the following contributions:
\begin{enumerate}
\item\label{itm:contrib1} \emph{All} analytic invariants, \ie, finite conjunctions and disjunctions of equations between extended terms, are provable using only the three ODE axioms outlined above.
\item\label{itm:contrib2} With axioms internalizing the existence and uniqueness theorems for solutions of differential equations, all \emph{local progress} properties of ODEs are provable for all semianalytic formulas, \ie, propositional combinations of inequalities between extended terms.
\item\label{itm:contrib3} With a real induction axiom that reduces invariance to local progress, the \dL calculus is complete for proving \emph{all} semianalytic invariants of differential equations.
\item\label{itm:contrib4} These are \emph{axiomatic completeness} results: all (semi)analytic invariance and local progress questions are provably \emph{equivalent} in \dL to questions about the underlying arithmetic. This equivalence also yields disproofs when the resulting arithmetic questions are refuted.
\end{enumerate}

These results are proved constructively, yielding practical and purely logical proof-producing procedures for reducing ODE invariance questions to arithmetical questions in \dL.
The most subtle step is the construction of suitable differential ghosts that simplify the analysis as a function of both the differential equations and desired invariant.
The axiomatic approach crucially enables these contributions because the axioms internalize basic properties of ODEs and thus remain sound and complete for \emph{all} extended term languages meeting the extended term conditions.
Furthermore, the identification of a parsimonious yet complete ODE axiomatization provides the best of both worlds: \emph{parsimony} minimizes effort required in implementation and verification of the proof calculus while \emph{completeness} guarantees that all ODE invariance reasoning is possible using only syntactic proofs from the foundational axioms.
Since the completeness results prove equivalences, these advantages continue to hold whether proving or disproving invariance properties of differential equations.
This logical foundation is essential because it enables \emph{compositional} syntactic reasoning in \dL for (continuous) differential equations in isolation from other (discrete) parts of the hybrid system \cite{DBLP:conf/lics/Platzer12b}.

Since Noetherian functions from real analytic geometry~\cite{MR1150568,MR1732408,MR2083248,MR3925105} generate Noetherian rings closed under (partial) derivatives, they meet all of the extended term conditions and thus provide an ideal setting for extending term languages.
Noetherian functions include many functions of practical interest for modeling hybrid systems, e.g., real exponential and trigonometric functions, which are implicitly definable in \dL~\cite{DBLP:journals/logcom/Platzer10,DBLP:conf/lics/Platzer12b} but do not come with effective reasoning principles.\footnote{The relative decidability theorem for \dL~\cite[Theorem 11]{DBLP:conf/lics/Platzer12b} needs either an oracle for (continuous) differential equation properties or an oracle for discrete program properties.}
Making them first-class members of the term language enables their explicit use in hybrid systems models and proofs, especially in descriptions of ODE invariants.
The study of Noetherian functions is a major new contribution of this extended version, among others:
\begin{enumerate}[resume]
\item\label{itm:contrib5} Noetherian functions are shown to meet the extended term conditions. Any such extension automatically inherits all of the aforementioned completeness results.
\item\label{itm:contrib6} The authors' earlier results and proofs~\cite{DBLP:conf/lics/PlatzerT18} are generalized to extended term languages.
\item\label{itm:contrib7} A stronger proof-theoretical result is shown for algebraic (and analytic) completeness compared to the earlier result~\cite{DBLP:conf/lics/PlatzerT18} using only \emph{scalar} differential ghosts.
\end{enumerate}

Just as discrete ghosts can make a program logic relatively complete~\cite{DBLP:journals/cacm/OwickiG76}, differential ghosts achieve completeness for algebraic (and analytic) invariants in \dL.
The improvement (\rref{itm:contrib7}) is significant for conceptual, implementation, and proof purposes.
All algebraic (and analytic) invariants can now be proved using only a \emph{constant} number of scalar differential ghosts compared to the earlier result \cite{DBLP:conf/lics/PlatzerT18} which introduces a \emph{quadratic} number of ghost variables using \emph{vectorial} differential ghosts.

\section{Background: Differential Dynamic Logic}
\label{sec:background}

This section briefly reviews (differential-form) differential dynamic logic (\dL), focusing on its continuous fragment.
It also establishes the notational conventions used in this article and motivates the extended term conditions.
The interested reader is referred to the literature~\cite{DBLP:conf/lics/Platzer12b,DBLP:journals/jar/Platzer17} for a complete exposition of \dL, including its discrete and hybrid fragments.

\subsection{Syntax}
\label{subsec:background-syntax}

\emph{Terms} in \dL are given by the following grammar, where $x \in \allvars$ is a variable from the set of all variables $\allvars$, $c \in \rationals$ is a rational constant, and $\noef \in \{ \noef_1,\dots, \noef_r\}$ are $k$-ary function symbols:
\[
  \etermA,\etermB \bebecomes x \alternative c \alternative \etermA + \etermB \alternative \etermA \cdot \etermB \alternative \noef(\etermA_1,\dots,\etermA_k) \alternative \der{\etermA}
\]

Terms generated using only the first 4 clauses of this grammar correspond to polynomials over the variables under consideration.
For this article, the term language is extended with a finite number of new $k$-ary fixed function symbols, $\noef \in \{ \noef_1,\dots, \noef_r \}$, with fixed interpretations.
As a running example of such an extended term language, consider the unary function symbols $\exp,\sin,\cos$ which are always interpreted as the real exponential and trigonometric functions respectively:
\begin{align}
	\etermA,\etermB \bebecomes x \alternative c \alternative \etermA + \etermB \alternative \etermA \cdot \etermB \alternative \exp(\etermA) \alternative \sin(\etermA) \alternative \cos(\etermA) \alternative \der{\etermA}
\label{eq:extlang}
\end{align}

Of course, the syntactic extension cannot be completely arbitrary, e.g., adding functions $\noef$ whose interpretation is nowhere differentiable would fundamentally break the enterprise of studying ODEs directly by their local behavior.
These unsuitable syntactic extensions are ruled out by a set of extended term conditions.
These conditions are developed and motivated along the way, with a summary in~\rref{subsec:background-compatibility}.
The class of Noetherian functions, which meets all of these extended term conditions, is introduced and studied in~\rref{sec:noetherianfunctions}.
The Noetherian class includes the functions in the example ($\exp,\sin,\cos$) so the completeness results also apply to the extended term language~\rref{eq:extlang}.

\emph{Differentials} \(\der{\etermA}\) are used in \dL for sound differential equations reasoning~\cite{DBLP:journals/jar/Platzer17}.
The value of \(\der{\etermA}\) relates to how the value of term $\etermA$ changes with each of its variables as a function also of how those variables themselves change.
The fundamental insight is that, along the evolution of a differential equation, the value of differential $\der{\etermA}$ coincides with the analytic time derivative of $\etermA$ \cite[Lem.\,35]{DBLP:journals/jar/Platzer17}, so that proofs about \emph{equations of differentials} yield proofs about \emph{differential equations}.
It is crucial for soundness and compositionality that differentials have a local semantics defined in any state, so that they can be used correctly in any context to draw sound conclusions from syntactic manipulations mixing dynamic statements about differential equations and static statements about differentials.
The precise semantics of differentials is elaborated in~\rref{subsec:background-semantics}, while~\rref{subsec:background-differentials} explains how they can be used to obtain a syntactic representation of (semantic) time derivatives along solutions to differential equations.
Syntactically, every variable $x \in \allvars$ is assumed to have a corresponding differential variable $\D{x} \in \allvars$ which, like differential terms, are syntactic representations of the semantic time derivative of $x$ along solutions.
\rref{subsec:background-differentials} also shows that, when in the context of an ODE, differential terms (and variables) can be provably turned into terms that do not contain any differentials or differential variables.
Therefore, $\etermA,\etermB$ is used exclusively in this article to refer to such differential-free terms, e.g., $\der{\etermA}$ is the differential of $\etermA$, where $\etermA$ is a differential-free term.

For this article, we write $x$ to refer to a vector of variables $x_1,\dots,x_n$ and write $\etermA\argx,\etermB\argx$ to emphasize that these terms depend only on variables $x$ free.
When this dependency is unimportant, terms $\etermA,\etermB$ are written as usual without any dependencies.
As convenient distinguishing notation, vectors of terms are written in bold $\vecpolyn{\etermA}{x},\vecpolyn{\etermB}{x}$, with $\vecpolyn{\etermA}{x}_i,\vecpolyn{\etermB}{x}_i$ for their $i$-th components.
These vectorial terms and their corresponding dimensions are always explicitly specified when used, e.g., in~\rref{subsec:vecdarbouxeq}.
Polynomial terms are useful as familiar illustrative examples and they also enjoy special properties not necessarily shared by extended term languages.
The notation $\ptermA,\ptermB$ is reserved for polynomial terms, with dependencies $\ptermA\argx,\ptermB\argx$ added when necessary.

\emph{Formulas} of \dL are given by the following grammar, where $\sim$ is a comparison operator $=,\geq,>$ and $\alpha$ is a differential equation (or, more generally, a \emph{hybrid program}~\cite{DBLP:conf/lics/Platzer12b,DBLP:journals/jar/Platzer17}):
\[
  \fvarA,\fvarB \bebecomes \etermA \sim \etermB \alternative \fvarA \land \fvarB \alternative \fvarA \lor \fvarB \alternative \lnot{\fvarA} \alternative \lforall{x}{\fvarA} \alternative \lexists{x}{\fvarA} \alternative \dbox{\alpha}{\fvarA} \alternative\ddiamond{\alpha}{\fvarA}
\]

Formulas can be normalized such that every atomic comparison $\etermA \sim \etermB$ has $0$ on the right-hand side.
The notation $\etermA \cmp 0$ is used when there is a free choice between $\geq$ or $>$. Other logical connectives, e.g., $\limply,\lbisubjunct$ are definable as usual.
For the formula $\vecpolyn{\etermA}{x}=\vecpolyn{\etermB}{x}$ where both $\vecpolyn{\etermA}{x},\vecpolyn{\etermB}{x}$ have dimension $n$, equality is understood \emph{component-wise} as \m{\landfold_{i=1}^{n} \vecpolyn{\etermA}{x}_i=\vecpolyn{\etermB}{x}_i} and $\vecpolyn{\etermA}{x}\neq\vecpolyn{\etermB}{x}$ as $\lnot{(\vecpolyn{\etermA}{x}=\vecpolyn{\etermB}{x})}$.
The modal formula $\dibox{\alpha}\fvarA$ is true iff $\fvarA$ is true after \emph{all} runs of $\alpha$, and its dual $\didia{\alpha}\fvarA$ is true iff $\fvarA$ is true after \emph{some} run of $\alpha$.

Formulas not containing the first-order quantifiers and modal connectives are called \emph{semianalytic} formulas and are written as $\rfvar,\ivr$.
The word ``analytic'' refers to the (semantic) real analyticity~\cite{MR1916029} of terms when extended with Noetherian functions in~\rref{sec:noetherianfunctions}.
As usual, the dependencies $\rfvar\argx,\ivr\argx$ are added when necessary.
Every semianalytic formula can be normalized to one that is formed from only conjunctions and disjunctions of atomic comparison formulas.
Formulas $\rfvar,\ivr$ formed from only conjunctions and disjunctions of equalities are called \emph{analytic} formulas.
When all atomic comparisons in (semi)analytic formulas are restricted to only occur between polynomial terms $\ptermA \sim \ptermB$, the resulting formulas are also known as \emph{(semi)algebraic} formulas~\cite{Bochnak1998}.
The first-order theory of the reals with polynomial terms (and with quantifiers), is decidable by quantifier elimination~\cite{Bochnak1998}.
Thus, every first-order formula of real arithmetic is equivalent to a (quantifier-free) semialgebraic formula and no expressiveness is lost by disallowing first-order quantifiers in the semialgebraic case.
Unfortunately, quantifier elimination is impossible even for simple term language extensions like the exponential function~\cite{MR762106}.
Furthermore, even for the extended term language~\rref{eq:extlang} with trigonometric functions, arithmetic questions are already undecidable~\cite{DBLP:journals/jsyml/Richardson68}.
Therefore, special care is taken in this article to distinguish first-order properties of the real closed fields, i.e., those described by (semi)algebraic formulas~\cite{Bochnak1998}, from those properties described by (semi)analytic formulas.

The \dL modalities $\dbox{\alpha}{}$ and $\ddiamond{\alpha}{}$ are parameterized by a \emph{continuous program} $\alpha$ (more general \emph{hybrid programs} combining discrete and continuous dynamics are supported in \dL but not relevant here):
\[\alpha \bebecomes \cdots \alternative \pevolvein{\D{x}=\genDE{x}}{\ivr}\]

The continuous program $\pevolvein{\D{x}=\genDE{x}}{\ivr}$ is an autonomous (vectorial) differential equation system with LHS $\D{x} = (\D{x_1},\dots,\D{x_n})$ and RHS term $\odeterm_i(x)$ for each $\D{x_i}$.
Following the notational convention, $\odeterm_i(x)$ is differential-free so the ODE system $\D{x}=\genDE{x}$ is given in explicit form~\cite{DBLP:journals/jar/Platzer17}.
Autonomous ODEs $\D{x}=\genDE{x}$ do not depend explicitly on time on the RHS.
A standard transformation is to add a clock variable $t$ to the system with $\D{x}=\genDE{x,t},\D{t}=1$ if time dependency on the RHS is desired.
The evolution domain constraint $\ivr$ is a semianalytic formula restricting the set of states in which the ODE is allowed to evolve continuously; the ODE is simply written as \m{\pevolve{\D{x}=\genDE{x}}} when the domain constraint is $\ltrue$.
The following running example ODE is used in this article (see \rref{fig:exampleODE}):
\begin{align}
\alpha_e \mdefequiv \D{u}=-v + \frac{u}{4} (1 - u^2 - v^2), \D{v} = u + \frac{v}{4} (1 - u^2 - v^2)
\label{eq:example-ode}
\end{align}

\begin{wrapfigure}[13]{r}{0.58\textwidth}
\centering
\includegraphics[width=0.58\textwidth,trim=0 3 0 3,clip]{graphics-long/fig-invariant-swirl-example.pdf}
\caption{The red dashed circle $u^2+v^2=1$ is approached by solutions of $\alpha_e$ from all points except the origin, e.g., the blue trajectory from $(\frac{1}{8},\frac{1}{8})$ spirals towards the circle. The red circle, green region $u^2 \leq v^2 + \frac{9}{2}$, and origin are invariants of the system.}
\label{fig:exampleODE}
\end{wrapfigure}
Following the analogy between ODEs and (discrete) program loops in \rref{sec:introduction}, solutions of an ODE must continuously (locally) follow its RHS.
This is visualized in~\rref{fig:exampleODE} with directional arrows corresponding to the RHS of $\alpha_e$ evaluated at points on the plane.
Even though the RHS of $\alpha_e$ are polynomials, its solutions, which must locally follow the arrows, already exhibit complex global behavior.
\rref{fig:exampleODE} suggests, \eg, that all points (except the origin) globally evolve towards the unit circle.

\subsection{Semantics}
\label{subsec:background-semantics}
A state $\iget[state]{\I} : \allvars \to \reals$ assigns a real value to each variable in $\allvars$; the set of all states is written $\States$.
The semantics of term $\etermA$ in state $\iget[state]{\I}$ is written as $\ivaluation{\I}{\etermA} \in \reals$.
It is defined as usual for the standard arithmetic operators, e.g., $\ivaluation{\I}{\etermA+\etermB} = \ivaluation{\I}{\etermA} + \ivaluation{\I}{\etermB}$.
The semantics of each fixed function symbol $\noef$ is given by a corresponding real-valued function $\noef : \reals^k \to \reals$ (using the same symbol $\noef$ for the LHS syntactic function symbol and its RHS semantic interpretation by a slight abuse of notation):
\begin{align*}
\ivaluation{\I}{\noef(\etermA_1,\dots,\etermA_k)} &= \noef(\ivaluation{\I}{\etermA_1},\dots,\ivaluation{\I}{\etermA_k})
\end{align*}

The semantics of differentials~\cite{DBLP:journals/jar/Platzer17} is the sum of partial derivatives \(\Dp[x]{\ivaluation{\I}{\etermA}}\) by all variables $x\in\allvars$ multiplied by the values of their associated differential variables $\D{x}$, where $\iget[state]{\I}(\D{x})$ selects the direction in which $x$ evolves locally and \(\Dp[x]{\ivaluation{\I}{\etermA}}\) describes how the value of $\etermA$ changes with a change of $x$:
\begin{align*}
\ivaluation{\I}{\der{\etermA}} &= \sum_{x\in\allvars} \iget[state]{\I}(\D{x}) \Dp[x]{\ivaluation{\I}{\etermA}}
\end{align*}

There are two subtleties to highlight.
First, the real-valued functions $\noef$ are required to be defined on the domain $\reals^k$ so that the term semantics are well-defined in all states.
It is possible to extend \dL with terms that are only defined within an open domain of definition rather than the entire real domain and this would allow, e.g., rational functions to be added to the term language.
This article will not pursue such an extension, although the Noetherian functions from~\rref{sec:noetherianfunctions} and \rref{prop:diffaxiomatization} give an implicit way of working with quotients of extended terms.
Second, the semantics of differentials implicitly requires that the partial derivatives $\Dp[x]{\ivaluation{\I}{\etermA}}$ exist.
In fact, partial derivatives of any order for the semantics of any term must exist because their differentials (which provably reduce to differential-free terms by \rref{subsec:background-differentials}), in turn, have differentials that must also have a well-defined semantics.
Following the standard interpretation of function symbols~\cite{DBLP:journals/jar/Platzer17}, it suffices to require that the fixed function symbols $\noef$ are interpreted as smooth $C^\infty$ functions, i.e., have partial derivatives of any order.
Since the $C^\infty$ functions are closed under addition, multiplication and function composition, the resulting term semantics are also smooth~\cite{DBLP:journals/jar/Platzer17}, as required.

The semantics of differentials $\der{\etermA}$ is well-defined for isolated states $\iget[state]{\I}$, independent of any ODEs.
Their importance for differential equations reasoning in \dL stems from the semantics of hybrid programs $\alpha$, which are transition relations, $\iaccess[\alpha]{\I} \subseteq \States \times \States$, between states. The semantics of an ODE, $\iaccess[\pevolvein{\D{x}=\genDE{x}}{\ivr}]{\I}$, is the set of all pairs of states that are connected by some solution of the ODE:
\begin{align*}
\iaccessible[\pevolvein{\D{x}=\genDE{x}}{\ivr}]{\I}{\It} ~\text{iff}~
&\text{there is a duration}~0 \leq T \in \reals~\text{and a function}~\solvar:[0,T] \to \States\\
&\text{with}~ \solvar(0)=\iget[state]{\I} ~\text{on}~ \scomplement{\{\D{x}\}}, \solvar(T)=\iget[state]{\It}, ~\text{and}~ \solmodels{\solvar}{\D{x}=\genDE{x}}{\ivr}
\end{align*}

The \m{\solmodels{\solvar}{\D{x}=\genDE{x}}{\ivr}} condition checks \m{\imodels{\Iff}{\D{x}=\genDE{x} \land \ivr}}, $\solvar(0) = \solvar(\zeta)$ on $\scomplement{\{x,\D{x}\}}$ for all $0 \leq \zeta \leq T$, and, if $T>0$, then $\D[t]{\solvar(t)(x)}(\zeta)$ exists, and is equal to $\solvar(\zeta)(\D{x})$ for all $0 \leq \zeta \leq T$. In other words, $\solvar$ is a solution of the differential equations $\D{x}=\genDE{x}$ that always stays in the evolution domain constraint $\ivr$. It is also required to hold all variables other than $x,\D{x}$ constant. Most importantly, the values of the differential variables $\D{x}$ are required to match the value of the RHS $\genDE{x}$ of the differential equation along the solution. See~\cite[Definition 7]{DBLP:journals/jar/Platzer17} for further details.

The semantics of comparison operations and first-order logical connectives are defined as usual, with $\imodel{\I}{\fvarA} \subseteq \States$ being the set of states where formula $\fvarA$ is true.
For example, $\imodels{\I}{\etermA \leq \etermB}$ iff $\ivaluation{\I}{\etermA} \leq \ivaluation{\I}{\etermB}$, and $\imodels{\I}{\fvarA \land \fvarB}$ iff $\imodels{\I}{\fvarA}$ and $\imodels{\I}{\fvarB}$.
The semantics of modal connectives are:
\begin{align*}
\imodels{\I}{\dbox{\alpha}{\fvarA}} &~\text{iff}~ \imodels{\It}{\fvarA} ~\text{for all}~ \iget[state]{\It} ~\text{such that}~ \iaccessible[\alpha]{\I}{\It}\\
\imodels{\I}{\ddiamond{\alpha}{\fvarA}} &~\text{iff there is a state}~ \iget[state]{\It} ~\text{such that}~ \iaccessible[\alpha]{\I}{\It} ~\text{and}~ \imodels{\It}{\fvarA}
\end{align*}

Formula $\fvarA$ is \emph{valid} iff it is true in all states, i.e., \(\imodel{\I}{\fvarA} = \States \).
If the formula \(\rfvar \limply \dbox{\pevolvein{\D{x}=\genDE{x}}{\ivr}}{\rfvar}\) is valid, then the formula $\rfvar$ is called an \emph{invariant} of the ODE, $\pevolvein{\D{x}=\genDE{x}}{\ivr}$. Unfolding the semantics, this means that from any initial state $\imodels{\I}{\rfvar}$, any solution $\solvar$ of $\D{x}=\genDE{x}$ starting in $\iget[state]{\I}$, which does not leave the evolution domain $\imodel{\I}{\ivr}$, stays in $\imodel{\I}{\rfvar}$ for its \emph{entire duration}.
\rref{fig:exampleODE} suggests several invariants for the ODE $\alpha_e$ from~\rref{eq:example-ode}.
The unit circle, $u^2+v^2=1$, is an equational invariant because the direction of flow on the circle is always tangential to it.
The open unit disk, $u^2+v^2 < 1$, is also invariant because trajectories within the disk spiral towards the circle but never reach it.
The green region described by $u^2 \leq v^2 + \frac{9}{2}$ is invariant but needs a careful proof.

\subsection{Axiomatization}
\label{subsec:background-axiomatizaton}

\irlabel{qear|\usebox{\Rval}}
\irlabel{qearpoly|\usebox{\Rval}} %
\irlabel{qearext|\usebox{\Rvalext}}
\irlabel{qearexp|\usebox{\Rvalexp}}
\irlabel{badqear|\usebox{\Lightningval}}

This article uses a standard, classical sequent calculus~\cite{DBLP:journals/jar/Platzer08} with the usual rules for manipulating logical connectives and sequents, \eg, \irref{orl+andr}, and \irref{cut}.
A summary of all base axioms and proof rules used in this article is in~\rref{app:axiomssummary}, along with a list of the standard propositional and first-order sequent calculus proof rules.
The semantics of \emph{sequent} \(\lsequent{\Gamma}{\fvarA}\) is equivalent to the formula \((\landfold_{\fvarB \in\Gamma} \fvarB) \limply \fvarA\) and the sequent is valid iff its corresponding formula is valid.
Formulas $\Gamma$ are \emph{antecedents} of the sequent, while formula $\fvarA$ is its \emph{succedent}.
Completed branches are marked with $\lclose$ in a sequent proof.
An axiom is \emph{sound} iff all its instances are valid, and a proof rule is \emph{sound} iff the validity of all its \emph{premises} (above the rule bar) imply the validity of its \emph{conclusion} (below rule bar).
When an implicational or equivalence axiom is used, propositional sequent manipulation steps are omitted and the proof step is directly labeled with the axiom, giving the resulting premises accordingly~\cite{DBLP:journals/jar/Platzer17}.
The calculus includes \irref{alll+existsr} rules for the reals allowing $x$ to be instantiated (or witnessed) with a term for the sequents \(\lsequent{\Gamma,\lforall{x}{\fvarA}}{\fvarB}\) and \(\lsequent{\Gamma}{\lexists{x}{\fvarA}}\) respectively.
First-order real arithmetic is decidable~\cite{Bochnak1998} so access to such a decision procedure is assumed; proof steps are labeled with \irref{qear} whenever they follow as a substitution instance of a valid formula of first-order real arithmetic.
Extra care must be taken when using \irref{qear} for extended term languages, as illustrated next.

\begin{example}[Proving with \irref{qear}]
\label{ex:provingwithR}
Consider the following two proofs in the extended term language~\rref{eq:extlang}.
The left sequent proves (as a substitution instance) by \irref{qear} because the negation of a real number is its additive inverse.
In contrast, the right sequent does not prove by \irref{qear} alone (indicated by the subscript on rule \irref{qearexp}) because it uses the fact that the real exponential function is strictly positive.
\begin{minipage}[b]{0.5\textwidth}
{\footnotesizeoff%
\begin{sequentdeduction}[default]
\linfer[qear]{
  \lclose
}
  {\lsequent{} {\exp{(x)} + (-\exp{(x)}) = 0}}
\end{sequentdeduction}
}%
\end{minipage}%
\hfill%
\begin{minipage}[b]{0.5\textwidth}
{\footnotesizeoff%
\begin{sequentdeduction}[default]
\linfer[qearexp]{
  \lclose
}
  {\lsequent{}{\exp{(x)} > 0}}
\end{sequentdeduction}
}%
\end{minipage}\\%

An alternative understanding is that rule~\irref{qear} can be used to conclude valid properties that follow \emph{only} from first-order properties of the real closed fields~\cite{Bochnak1998}.
\end{example}

All of the axiom and rule (schemata) in this article can be derived from axioms in \dL's uniform substitution calculus~\cite{DBLP:journals/jar/Platzer17}, an approach which minimizes the soundness-critical core in implementations of the logic~\cite{DBLP:conf/cade/FultonMQVP15}.
This presentation is omitted as it is not the focus for this article. Readers are referred to the authors' earlier conference version~\cite{DBLP:conf/lics/PlatzerT18} for details.

\subsubsection{Differentials and Lie Derivatives}
\label{subsec:background-differentials}
ODEs $\D{x}=\genDE{x}$ precisely specify time derivatives that their solutions must obey.
The deduction of invariants directly from these differential equations therefore relates to the study of time derivatives of the quantities that the invariants involve.
However, directly using time derivatives leads to numerous subtle sources of unsoundness because they are semantic objects that only make sense when a ``time'' axis even exists at all.
Such a continuous time axis is furnished by the domain of definition of an ODE solution, but time derivatives are not otherwise well-defined in arbitrary contexts, e.g., in isolated states or across discrete transitions.
It is of utmost importance for soundness that, unlike time derivatives, differentials have a local semantics that is well-defined in single states which enables their use in arbitrary contexts for sound syntactic manipulations~\cite{DBLP:journals/jar/Platzer17}.
The crucial differential lemma~\cite[Lem.\,35]{DBLP:journals/jar/Platzer17} shows that, along a solution of the ODE \(\pevolve{\D{x}=\genDE{x}}\), the value of the differential term $\der{\etermA}$ coincides with the time derivative $\D[t]{}$ of the value of term $\etermA$.
This relationship allows conclusions to be drawn about the differential equations directly from syntactic \dL proofs involving differentials.
The latter syntactic manipulation of differentials is achieved using the \emph{differential axioms} of \dL, which are given below.
In axiom \irref{DE}, \(\D{x}=\genDE{x}\) is understood vectorially, i.e., $x$ is a vector of variables $x_1,\dots,x_n$, $\D{x}$ the corresponding vector of differential variables $\D{x_1},\dots,\D{x_n}$, and $\genDE{x}$ a vector of terms $\odeterm_1(x),\dots,\odeterm_n(x)$.

\begin{theorem}[Differential axioms~\cite{DBLP:journals/jar/Platzer17}]
\label{thm:diffaxioms}
The following are sound axioms of \dL:

\begin{calculuscollection}
\begin{calculus}
\cinferenceRule[DE|DE]{differential effect} %
{\linferenceRule[equiv]
  {\dbox{\pevolvein{\D{x}=\genDE{x}}{\ivr\argx}}{\dbox{\Dupdate{\Dumod{\D{x}}{\genDE{x}}}}{\rfvar\argxx}}}
  {\axkey{\dbox{\pevolvein{\D{x}=\genDE{x}}{\ivr\argx}}{\rfvar\argxx}}}
}
{}%
\end{calculus}\\
\begin{calculus}
\;\quad
\cinferenceRule[Dconst|$c'$]{derive constant}
{\linferenceRule[eq]
  {0}
  {\axkey{\der{c}}}
}
{}%

\cinferenceRule[Dplus|$+'$]{derive sum}
{\linferenceRule[eq]
  {\der{\etermA}+\der{\etermB}}
  {\axkey{\der{\etermA + \etermB}}}
}
{}
\end{calculus}
\qquad
\begin{calculus}
\cinferenceRule[Dvar|$x'$]{derive variable}
{\linferenceRule[eq]
  {\D{x}}
  {\axkey{\der{x}}}
}
{}%
\cinferenceRule[Dtimes|$\cdot'$]{derive product}
{\linferenceRule[eq]
  {\der{\etermA}\cdot \etermB + \etermA \cdot\der{\etermB}}
  {\axkey{\der{\etermA \cdot \etermB}}}
}
{}
\end{calculus}
\end{calculuscollection}
\end{theorem}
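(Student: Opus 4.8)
The plan is to prove soundness of each of the five axioms separately, in two groups. The four equational derivation axioms \irref{Dconst}, \irref{Dvar}, \irref{Dplus}, \irref{Dtimes} make \emph{pointwise} assertions about the semantics of differentials in a single state, so by the definition of validity of an equation it suffices to check that the two sides receive the same value $\ivaluation{\I}{\cdot}$ in every state $\iget[state]{\I}$. The modal axiom \irref{DE} is different and rests instead on the defining property of ODE solution semantics.

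For the derivation axioms, unfold $\ivaluation{\I}{\der{\etermA}} = \sum_{y\in\allvars} \iget[state]{\I}(\D{y})\,\Dp[y]{\ivaluation{\I}{\etermA}}$. Two preliminary remarks make this manageable: the sum is effectively finite, since $\Dp[y]{\ivaluation{\I}{\etermA}} = 0$ for every variable $y$ not occurring in $\etermA$; and every partial derivative appearing exists, because term semantics are $C^\infty$ (the fixed function symbols $\noef$ are interpreted by smooth functions and $C^\infty$ is closed under $+$, $\cdot$, and composition). The four axioms then follow from elementary multivariate calculus applied termwise under the sum: for \irref{Dconst}, $\ivaluation{\I}{c}$ is the state-independent constant $c$, so every partial derivative vanishes; for \irref{Dvar}, $\Dp[y]{\ivaluation{\I}{x}}$ equals $1$ when $y$ is $x$ and $0$ otherwise, collapsing the sum to $\iget[state]{\I}(\D{x}) = \ivaluation{\I}{\D{x}}$; for \irref{Dplus}, linearity of partial differentiation splits the sum into $\ivaluation{\I}{\der{\etermA}} + \ivaluation{\I}{\der{\etermB}}$; and for \irref{Dtimes}, the Leibniz rule $\Dp[y]{(\ivaluation{\I}{\etermA}\ivaluation{\I}{\etermB})} = \ivaluation{\I}{\etermB}\,\Dp[y]{\ivaluation{\I}{\etermA}} + \ivaluation{\I}{\etermA}\,\Dp[y]{\ivaluation{\I}{\etermB}}$, distributed over the sum with the state-independent factors $\ivaluation{\I}{\etermA},\ivaluation{\I}{\etermB}$ pulled out, yields $\ivaluation{\I}{\der{\etermA}}\,\ivaluation{\I}{\etermB} + \ivaluation{\I}{\etermA}\,\ivaluation{\I}{\der{\etermB}}$, which is exactly $\ivaluation{\I}{\der{\etermA}\cdot\etermB + \etermA\cdot\der{\etermB}}$.

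For \irref{DE}, unfold the box modality on both sides of the equivalence. The crux is that for \emph{every} state $\iget[state]{\It}$ reachable from $\iget[state]{\I}$ by $\pevolvein{\D{x}=\genDE{x}}{\ivr\argx}$, say via a solution $\solvar:[0,T]\to\States$ with $\solvar(T)=\iget[state]{\It}$, the solution semantics enforce $\imodels{\Iff}{\D{x}=\genDE{x}}$ at every $\zeta\in[0,T]$, in particular at $\zeta=T$, so $\iget[state]{\It}(\D{x_i}) = \ivaluation{\It}{\odeterm_i(x)}$ for each $i$. Since $\genDE{x}$ contains only the variables $x$ and not $\D{x}$, the differential assignment $\Dupdate{\Dumod{\D{x}}{\genDE{x}}}$ then acts as the identity on such a state $\iget[state]{\It}$ — it overwrites $\D{x}$ with precisely the value it already holds — so $\imodels{\It}{\dbox{\Dupdate{\Dumod{\D{x}}{\genDE{x}}}}{\rfvar\argxx}}$ iff $\imodels{\It}{\rfvar\argxx}$. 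Quantifying this equivalence over all ODE-reachable $\iget[state]{\It}$ yields the equivalence of the two box formulas.

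I expect no substantive obstacle: the whole content is (i) having enough regularity for the calculus identities to hold, which is exactly the smoothness hypothesis on the fixed function symbols that rules out pathological term-language extensions, and (ii) for \irref{DE}, the bookkeeping of which variables are held constant along solutions, together with the degenerate zero-duration case $T=0$ — but even then the solution semantics still impose $\imodels{\Iff}{\D{x}=\genDE{x}}$ at $\zeta=0$, so $\D{x}$ matches $\genDE{x}$ at the endpoint regardless and the argument goes through uniformly. These are precisely the soundness arguments of~\cite{DBLP:journals/jar/Platzer17}.
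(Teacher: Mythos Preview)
Your proposal is correct and takes essentially the same approach as the paper, which defers entirely to~\cite{DBLP:journals/jar/Platzer17} with only the remark that smoothness of the fixed function symbols makes those soundness proofs carry over to extended term languages. You have in effect spelled out the content of that reference: pointwise verification of the differential identities from the definition of $\ivaluation{\I}{\der{\etermA}}$ for the four equational axioms, and the observation that the solution semantics already enforce $\D{x}=\genDE{x}$ at every reachable state so the differential assignment is a no-op for \irref{DE}.
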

\begin{proof}[Proof Sketch]
The soundness proofs for these axioms~\cite{DBLP:journals/jar/Platzer17} carry over unchanged for extended term languages since fixed function symbols $\noef$ are interpreted as smooth $C^\infty$ functions.
\end{proof}

The differential effect axiom (\irref{DE}) says that the differential variables $\D{x}$ take on the values of the RHS along solutions to an ODE.
This is expressed on its RHS with an assignment \(\Dupdate{\Dumod{\D{x}}{\genDE{x}}}\) to the differential variable $\D{x}$.
The syntax and semantics of this (discrete) assignment are defined elsewhere~\cite{DBLP:journals/jar/Platzer17} but it suffices to understand it here with the assignment axiom of \dL:
\[
\cinferenceRule[assignb|$\dibox{:=}$]{assignment / substitution axiom}
{\linferenceRule[equiv]
  {\fvarA(e)}
  {\axkey{\dbox{\pupdate{\umod{x}{\etermA}}}{\fvarA\argx}}}
}
{\text{$\etermA$ free for $x$ in $\fvarA$}}
\]

Intuitively, axiom \irref{assignb} says that property $\fvarA\argx$ is true after assigning $\etermA$ to $x$ iff $\fvarA(\etermA)$ is true right now.
Together, axioms \irref{DE+assignb} allow replacing free occurrences of $\D{x}$ in the postcondition $\rfvar\argxx$ yielding postcondition $\rfvar(x,f(x))$.
However, proofs usually involve working with differentials of terms $\der{\etermA}$ rather than differential variables directly.
This is where the differential axioms (\irref{Dconst+Dvar+Dplus+Dtimes}) are used.
Axiom \irref{Dconst} says that the differential of a constant is $0$, while axiom \irref{Dvar} says the differential of a variable $\der{x}$ is the corresponding differential variable $\D{x}$.
Axioms \irref{Dplus+Dtimes} are the sum and product rules of differentiation respectively.
Soundness of these axioms allows differential terms to be rewritten equationally in all contexts, including in the postcondition of an ODE and within sub-terms.
The differential axioms enable sound syntactic differentiation because differential terms $\der{\etermA}$ can be rewritten according to these equational axioms until no further differential sub-terms occur; any remaining differential variables are substituted away using \irref{DE+assignb} under an ODE.
Such \emph{exhaustive use of differential axioms} is simply labeled as \irref{Dder}\irlabel{Dder|$(~)'$} in proofs.
The following example shows such a derivation concretely using a polynomial from the running example:

\begin{example}[Syntactic differentiation]
\label{ex:syndifferentation}
The following \dL derivation syntactically differentiates the polynomial $v^2-u^2+\frac{9}{2}$ along the ODE $\alpha_e$ from~\rref{eq:example-ode} for any comparison operator $\sim$:
{\footnotesizeoff%
\begin{sequentdeduction}[array]
\linfer[DE]{
\linfer[Dder]{
\linfer[assignb]{
\linfer[qearpoly]{
  \lsequent{} {\dbox{\alpha_e}{4uv + \frac{1}{2}(1-u^2-v^2)(v^2-u^2) \sim 0}}
}
  {\lsequent{} {\dbox{\alpha_e}{2v(u + \frac{v}{4} (1 - u^2 - v^2))- 2u(-v + \frac{u}{4} (1 - u^2 - v^2))\sim 0}}}
}
  {\lsequent{} {\dbox{\alpha_e}{\Dusubst{\D{u}}{-v + \frac{u}{4} (1 - u^2 - v^2)}{\Dusubst{\D{v}}{u + \frac{v}{4} (1 - u^2 - v^2)}{2v\D{v}-2u\D{u} \sim 0}}}}}
}
  {\lsequent{} {\dbox{\alpha_e}{\Dusubst{\D{u}}{-v + \frac{u}{4} (1 - u^2 - v^2)}{\Dusubst{\D{v}}{u + \frac{v}{4} (1 - u^2 - v^2)}{\der{v^2-u^2+\frac{9}{2}} \sim 0}}}}}
}
  {\lsequent{} {\dbox{\alpha_e}{\der{v^2-u^2+\frac{9}{2}} \sim 0}}}
\end{sequentdeduction}
}%

The first \irref{DE} step makes available assignments on $\D{u},\D{v}$.
The \irref{Dder} step is then used to syntactically simplify $\der{v^2-u^2+\frac{9}{2}}$ yielding $2v\D{v}-2u\D{u}$.
A subsequent use of \irref{assignb} replaces the resulting differential variables $\D{u},\D{v}$ with their respective RHSes along the ODE.
Finally, rule \irref{qearpoly} is used to rearrange the calculated derivative arithmetically, which results in a (simplified) polynomial term.
\end{example}

\rref{ex:syndifferentation} suggests that differentials can always be eliminated under an ODE.
This is the case for differentials of polynomial terms $\der{p}$, but differential axioms are needed for the fixed function symbols.
Consider the case of a unary fixed function symbol $\noef$ which is semantically interpreted as the function $\noef : \reals \to \reals$.
Expanding the semantics of term $\der{\noef(\etermA)}$ and applying the chain rule:
\begin{align*}
\ivaluation{\I}{\der{\noef(e)}}
&= \sum_{x\in\allvars} \iget[state]{\I}(\D{x}) \Dp[x]{\ivaluation{\I}{\noef(\etermA)}} = \sum_{x\in\allvars} \iget[state]{\I}(\D{x}) \Dp[y]{\noef}(\ivaluation{\I}{\etermA}) \Dp[x]{\ivaluation{\I}{\etermA}} \\
&= \Dp[y]{\noef}(\ivaluation{\I}{\etermA}) \sum_{x\in\allvars} \iget[state]{\I}(\D{x}) \Dp[x]{\ivaluation{\I}{\etermA}} = \Dp[y]{\noef}(\ivaluation{\I}{\etermA}) \ivaluation{\I}{\der{e}}
\end{align*}

The RHS product between $\Dp[y]{\noef}(\ivaluation{\I}{\etermA})$ and $\ivaluation{\I}{\der{e}}$ can be represented syntactically provided that the partial derivative of $\noef$ with respect to its argument is representable as a term.
Assume (suggestively) that such a term is written as $\Dp[y]{\noef}(\etermA)$.
The easiest case is to think of $\Dp[y]{\noef}$ as another unary fixed function symbol and $\Dp[y]{\noef}(\etermA)$ as function application, hence the suggestive notation.
This is not strictly necessary, however: $\Dp[y]{\noef}$ can be another term that mentions variable $y$ free, in which case $\Dp[y]{\noef}(\etermA)$ corresponds to substituting $e$ for $y$ in that term.
The differential axiom \irref{Dnoether} for fixed function symbol $\noef$ uses a product of the (syntactic) partial derivative $\Dp[y]{\noef}$ and  differential $\der{\etermA}$ of $\etermA$:
\[
  \cinferenceRule[Dnoether|$\noef'$]{derive noether}
  {\linferenceRule[eq]
    { \Dp[y]{\noef}(\etermA) \cdot \der{\etermA} }
    {\axkey{\der{\noef(\etermA)}}}
  }
  {}
\]

\begin{example}[Unary extended differential axioms]
For the extended term language~\rref{eq:extlang}, the extended terms for the partial derivatives are as usual from calculus:
\[ \Dp[y]{\exp(y)} = \exp(y) \qquad
   \Dp[y]{\sin(y)} = \cos(y) \qquad
   \Dp[y]{\cos(y)} = -\sin(y) \]

Following the axiom schema \irref{Dnoether}, the differential axioms for these fixed function symbols are:
\[
  \cinferenceRule[Dexp|$exp'$]{derive exp}
  {\linferenceRule[eq]
    { \exp(\etermA) \cdot \der{\etermA} }
    {\axkey{\der{\exp(\etermA)}}}
  }
  {} \quad
  \cinferenceRule[Dsin|$sin'$]{derive sine}
  {\linferenceRule[eq]
    { \cos(\etermA) \cdot \der{\etermA} }
    {\axkey{\der{\sin(\etermA)}}}
  }
  {} \quad
  \cinferenceRule[Dcos|$cos'$]{derive cosine}
  {\linferenceRule[eq]
    { -\sin(\etermA) \cdot \der{\etermA} }
    {\axkey{\der{\cos(\etermA)}}}
  }
  {}
\]

Axioms \irref{Dsin+Dcos} illustrate a syntactic subtlety.
Fixed function symbols must only be introduced in a syntactically complete way with respect to differentials.
The unary function symbol $\sin$ for the trigonometric sine function cannot be added without also adding one for the cosine function because there would otherwise be no way to express the differential of $\sin$ syntactically.\footnote{Technically, $\pi$ could be added and $\cos(x)$ encoded as $\sin(x+\pi)$ but that also requires another 0-ary function symbol $\pi$.}
\end{example}

The following lemma generalizes this syntactic representation condition and gives sound differential axioms for $k$-ary fixed function symbols:

\begin{lemma}[Extended differential axioms]
\label{lem:noefdifferentials}
Let the $k$-ary fixed function symbol $\noef$ be semantically interpreted as a differentiable function $\noef:\reals^k \to \reals$.
Suppose its partial derivative \m{\Dp[y_i]{\noef}(y_1,\dots,y_k)} at $y_1,\dots,y_k$ is syntactically represented by term $\Dp[y_i]{\noef}$ for each $i$ such that \m{\ivaluation{\I}{\Dp[y_i]{\noef}(y_1,\dots,y_k)}} $=$ \m{\Dp[y_i]{\noef}(\iget[state]{\I}(y_1),\dots,\iget[state]{\I}(y_k))} for all states $\iget[state]{\I}$.
Then the differential axiom schema \irref{Dnoethergen} for $\noef$ is sound:
\[
  \cinferenceRule[Dnoethergen|$\noef'$]{derive noether}
  {\linferenceRule[eq]
    {\sum_{i=1}^{k} \Dp[y_i]{\noef}(\etermA_1,\dots,\etermA_k) \cdot \der{\etermA_i} }
    {\axkey{\der{\noef(\etermA_1,\dots,\etermA_k)}}}
  }
  {}
\]
\end{lemma}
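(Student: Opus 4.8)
The plan is to prove soundness of \irref{Dnoethergen} directly from the semantics of differentials, reducing the equational axiom to a single application of the multivariate chain rule. Since an axiom is sound exactly when all of its instances are valid, and an equation is valid exactly when it holds in every state, it suffices to fix an arbitrary state $\I$, arbitrary argument terms $\etermA_1,\dots,\etermA_k$, and syntactic representatives $\Dp[y_i]{\noef}$ meeting the stated representation condition, and to establish
\[
  \ivaluation{\I}{\der{\noef(\etermA_1,\dots,\etermA_k)}} = \ivaluation{\I}{\sum_{i=1}^{k} \Dp[y_i]{\noef}(\etermA_1,\dots,\etermA_k) \cdot \der{\etermA_i}}.
\]

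First I would unfold the left-hand side with the semantics of differentials from \rref{subsec:background-semantics}, obtaining $\sum_{x\in\allvars} \iget[state]{\I}(\D{x}) \,\Dp[x]{\ivaluation{\I}{\noef(\etermA_1,\dots,\etermA_k)}}$. Using $\ivaluation{\I}{\noef(\etermA_1,\dots,\etermA_k)} = \noef(\ivaluation{\I}{\etermA_1},\dots,\ivaluation{\I}{\etermA_k})$ and the fact, already recorded in \rref{subsec:background-semantics}, that every term has a smooth ($C^\infty$) semantics as a function of the finitely many variables it mentions, the multivariate chain rule applies (its hypotheses being differentiability of $\noef$, assumed, and of each $\ivaluation{\I}{\etermA_i}$ in $x$), giving for each $x$
\[
  \Dp[x]{\ivaluation{\I}{\noef(\etermA_1,\dots,\etermA_k)}} = \sum_{i=1}^{k} \Dp[y_i]{\noef}(\ivaluation{\I}{\etermA_1},\dots,\ivaluation{\I}{\etermA_k}) \,\Dp[x]{\ivaluation{\I}{\etermA_i}}.
\]
Substituting this into the sum over $\allvars$, exchanging the two finite sums, and pulling out the scalar $\Dp[y_i]{\noef}(\ivaluation{\I}{\etermA_1},\dots,\ivaluation{\I}{\etermA_k})$ (which is independent of $x$) leaves $\sum_{i=1}^{k} \Dp[y_i]{\noef}(\ivaluation{\I}{\etermA_1},\dots,\ivaluation{\I}{\etermA_k}) \cdot \bigl(\sum_{x\in\allvars}\iget[state]{\I}(\D{x})\,\Dp[x]{\ivaluation{\I}{\etermA_i}}\bigr)$, whose inner sum is precisely $\ivaluation{\I}{\der{\etermA_i}}$ by the semantics of differentials again.

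It then remains to recognize the coefficient $\Dp[y_i]{\noef}(\ivaluation{\I}{\etermA_1},\dots,\ivaluation{\I}{\etermA_k})$ as $\ivaluation{\I}{\Dp[y_i]{\noef}(\etermA_1,\dots,\etermA_k)}$. When $\Dp[y_i]{\noef}$ is taken literally as a fresh fixed function symbol, this is immediate from the semantics of function application together with the representation hypothesis (which pins the interpreting function to be the genuine partial derivative); when $\Dp[y_i]{\noef}$ is instead a term with free variables among $y_1,\dots,y_k$ and $\Dp[y_i]{\noef}(\etermA_1,\dots,\etermA_k)$ abbreviates the substitution of $\etermA_j$ for $y_j$, one applies the standard substitution (coincidence) lemma for term semantics~\cite{DBLP:journals/jar/Platzer17} to rewrite the value of the substituted term as the value of $\Dp[y_i]{\noef}$ in the state where each $y_j$ holds $\ivaluation{\I}{\etermA_j}$, and then the representation hypothesis (stated for the variables $y_j$) closes the identification. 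Collecting the pieces gives the claimed equality; specializing to $k=1$ recovers \irref{Dnoether}, and the derivations \irref{Dexp+Dsin+Dcos} are precisely the instances with the indicated partial-derivative terms.

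I expect the chain-rule step together with the bookkeeping around the meaning of $\Dp[y_i]{\noef}(\etermA_1,\dots,\etermA_k)$ to be the only delicate point: one must handle the ``genuine function application vs.\ substitution instance'' reading of that notation uniformly, so that the representation condition (imposed only for the distinguished variables $y_1,\dots,y_k$) transfers to arbitrary argument terms $\etermA_j$, and one must ensure the chain rule is legitimately applicable, which is why the smoothness of the term semantics from \rref{subsec:background-semantics} is invoked. Everything else --- finiteness of the sums over $\allvars$, so that reordering is harmless, and existence of all the partial derivatives $\Dp[x]{\ivaluation{\I}{\etermA_i}}$ --- is already in place.
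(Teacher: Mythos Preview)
Your proposal is correct and follows essentially the same approach as the paper: unfold the differential semantics, apply the multivariate chain rule, swap the two sums, recognize $\ivaluation{\I}{\der{\etermA_i}}$, and invoke the syntactic representation hypothesis to replace the semantic partial derivative by the value of the syntactic term. Your additional discussion of the two readings of $\Dp[y_i]{\noef}(\etermA_1,\dots,\etermA_k)$ (fixed function symbol versus substitution instance) is more explicit than the paper, which simply declares the notation to be syntactic function application, but the underlying argument is identical.
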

\begin{proof}
The terms $\Dp[y_i]{\noef}(\etermA_1,\dots,\etermA_k)$ appearing on the RHS of axiom \irref{Dnoethergen} are understood as (syntactic) function application of $\Dp[y_i]{\noef}$ to the arguments $\etermA_1,\dots,\etermA_k$.
Soundness of this axiom follows from the (multivariate) chain rule and the semantics of differential terms. For any given state $\iget[state]{\I}$:
\allowdisplaybreaks%
\begin{align*}
&\ivaluation{\I}{\der{\noef(e_1,\dots,e_k)}}
= \sum_{x\in\allvars} \iget[state]{\I}(\D{x}) \Dp[x]{\ivaluation{\I}{\noef(e_1,\dots,e_k)}} = \sum_{x\in\allvars} \iget[state]{\I}(\D{x}) \sum_{i=1}^{k} \Dp[y_i]{\noef}(\ivaluation{\I}{\etermA_1},\dots,\ivaluation{\I}{\etermA_k}) \Dp[x]{\ivaluation{\I}{\etermA_i}} \\
&\qquad = \sum_{i=1}^{k} \Dp[y_i]{\noef}(\ivaluation{\I}{\etermA_1},\dots,\ivaluation{\I}{\etermA_k}) \sum_{x\in\allvars} \iget[state]{\I}(\D{x}) \Dp[x]{\ivaluation{\I}{\etermA_i}} = \sum_{i=1}^{k} \Dp[y_i]{\noef}(\ivaluation{\I}{\etermA_1},\dots,\ivaluation{\I}{\etermA_k}) \ivaluation{\I}{\der{e_i}} \\
& \qquad = \sum_{i=1}^{k} \ivaluation{\I}{\Dp[y_i]{\noef}(\etermA_1,\dots,\etermA_k)} \ivaluation{\I}{\der{e_i}} = \ivaluation{\I}{\sum_{i=1}^{k} \Dp[y_i]{\noef}(\etermA_1,\dots,\etermA_k)\der{e_i}}
\end{align*}
The penultimate step uses the fact that all partial derivatives are syntactically represented in the term language in order to replace a semantic function application with its syntactic representation.
\end{proof}

With the differential axioms of~\rref{lem:noefdifferentials}, all differential terms $\der{\etermA}$ under an ODE \(\D{x}=\genDE{x}\) can be axiomatically rewritten to another term not mentioning differentials and differential variables.
This resulting term is the \emph{Lie derivative} of term $\etermA$ along ODE $\D{x}=\genDE{x}$, succinctly written as:
\[ \lie[]{\genDE{x}}{\etermA} \mdefeq \sum_{i=1}^n \Dp[x_i]{\etermA} \cdot \odeterm_i(x) \]

Unlike (semantic) time derivatives, Lie derivatives can be written down syntactically in the extended term language.
Like time derivatives though, Lie derivatives still depend on the ODE context in which they are used, so they do not give a compositional means of defining syntactic differentiation.
The use of differentials in \dL solves this problem by giving a compositional term semantics that is defined independently of any hybrid programs or formulas.
Along an ODE $\D{x}=\genDE{x}$, however, the value of Lie derivative $\lie[]{\genDE{x}}{e}$ coincides with that of the differential $\der{e}$ and \dL allows transformation between the two by proof with the differential axioms~\cite{DBLP:journals/jar/Platzer17}.
The Lie derivative $\lie[]{\genDE{x}}{\etermA}$ is written as $\lied[]{\genDE{x}}{\etermA}$ when $\D{x}=\genDE{x}$ is clear from the context.
The $i$-th \emph{higher Lie derivative} $\lied[i]{\genDE{x}}{e}{}$ of term $\etermA$ along the ODE $\D{x}=\genDE{x}$ is defined by iterating the Lie derivation operator:
\begin{align*}
  \lied[0]{\genDE{x}}{e}{} \mdefeq e, \quad \lied[i+1]{\genDE{x}}{e}{} \mdefeq \lie{\genDE{x}}{\lied[i]{\genDE{x}}{e}{}}{}, \quad \lied[]{\genDE{x}}{e} \mdefeq \lied[1]{\genDE{x}}{e}{}
\end{align*}

\subsubsection{Differential Equation Axiomatization}

Having enabled meaningful syntactic differentiation with differentials, it suffices to give axioms for working with differential equations.
The following are the \dL axioms for differential equations~\cite[Figure 3]{DBLP:journals/jar/Platzer17} as highlighted in~\rref{sec:introduction}.
All axioms are understood for vectorial differential equations as described in \rref{thm:diffaxioms} for axiom \irref{DE}.

\begin{theorem}[Differential equation axiomatization~\cite{DBLP:journals/jar/Platzer17}]
\label{thm:diffeqax}
The following are sound axioms of \dL.
In axiom~\irref{DG}, the $\exists$ quantifier can be replaced with a $\forall$ quantifier.

\begin{calculus}
\irlabel{DI|DI}
\cinferenceRule[DIeq|DI$_=$]{differential invariant}
{\linferenceRule[impl]
  {\dbox{\pevolvein{\D{x}=\genDE{x}}{\ivr}}{\der{\etermA} = 0}
  }
  {\big(\axkey{\dbox{\pevolvein{\D{x}=\genDE{x}}{\ivr}}{\etermA = 0}} \lbisubjunct (\ivr \limply \etermA = 0)\big)}
}{}

\cinferenceRule[DIgeq|DI$_\cmp$]{differential invariant}
{\linferenceRule[impl]
  {\dbox{\pevolvein{\D{x}=\genDE{x}}{\ivr}}{\der{\etermA} \geq 0}
  }
  {\big(\axkey{\dbox{\pevolvein{\D{x}=\genDE{x}}{\ivr}}{\etermA \cmp 0}} \lbisubjunct (\ivr \limply \etermA \cmp 0)\big)}
}{\text{$\cmp$ is either $\geq$ or $>$}}

\cinferenceRule[DC|DC]{differential cut}
{\linferenceRule[impl]
  {\dbox{\pevolvein{\D{x}=\genDE{x}}{\ivr}}{\rrfvar}
  }
  {\big(\axkey{\dbox{\pevolvein{\D{x}=\genDE{x}}{\ivr}}{\rfvar}} \lbisubjunct \dbox{\pevolvein{\D{x}=\genDE{x}}{\ivr\land \rrfvar}}{\rfvar}\big)}
}{}

\cinferenceRule[DG|DG]{differential ghost}
{\linferenceRule[equiv]
  {\lexists{y}{\dbox{\pevolvein{\D{x}=\genDE{x},\D{y}=a(x)y+b(x)}{\ivr}}{\rfvar}}}
  {\axkey{\dbox{\pevolvein{\D{x}=\genDE{x}}{\ivr}}{\rfvar}}}
}{}
\end{calculus}
\end{theorem}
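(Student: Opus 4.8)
The plan is to prove each of \irref{DIeq+DIgeq+DC+DG} sound in the standard way~\cite{DBLP:journals/jar/Platzer17}, by showing that every instance is a valid \dL formula, \ie, true in all states. The only point needing attention for the extended term language is that these soundness arguments still go through, and they do for exactly the reason used in~\rref{thm:diffaxioms}: every term keeps a smooth ($C^\infty$) semantics, since fixed function symbols are interpreted as $C^\infty$ functions and $C^\infty$ is closed under $+$, $\cdot$, and composition. Consequently the differential lemma~\cite[Lem.\,35]{DBLP:journals/jar/Platzer17} carries over unchanged: along any solution $\solvar:[0,T]\to\States$ of $\pevolve{\D{x}=\genDE{x}}$, the value $\ivaluation{\Iff}{\der{\etermA}}$ equals the time derivative of $\zeta\mapsto\ivaluation{\Iff}{\etermA}$ at every $\zeta\in[0,T]$. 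Beyond this, the only analytic input is elementary real analysis on the compact interval $[0,T]$ (the mean value theorem and global existence for linear ODEs), none of which is sensitive to the term language.

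For \irref{DIeq} (resp.\ \irref{DIgeq}) I would fix a solution $\solvar$ of $\pevolvein{\D{x}=\genDE{x}}{\ivr}$ on $[0,T]$ remaining in $\ivr$ and use the differential lemma to read the antecedent $\dbox{\pevolvein{\D{x}=\genDE{x}}{\ivr}}{\der{\etermA}=0}$ (resp.\ $\der{\etermA}\geq0$) as the assertion that $\zeta\mapsto\ivaluation{\Iff}{\etermA}$ has zero (resp.\ nonnegative) derivative throughout $[0,T]$; by the mean value theorem this function is then constant (resp.\ nondecreasing). Both directions of the stated biconditional then follow: left-to-right by considering a run of duration $0$ (under the hypothesis $\imodels{\I}{\ivr}$), right-to-left by propagating the initial truth of $\etermA=0$ (resp.\ $\etermA\cmp0$) forward along $\solvar$, with the $>$ case of \irref{DIgeq} additionally using that a nondecreasing function starting strictly positive stays strictly positive. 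Axiom \irref{DC} requires no analysis at all: if $\dbox{\pevolvein{\D{x}=\genDE{x}}{\ivr}}{\rrfvar}$ holds, then every solution staying in $\ivr$ satisfies $\rrfvar$ at each of its times (apply the assumption to every prefix run), hence stays in $\ivr\land\rrfvar$; so the solutions of $\pevolvein{\D{x}=\genDE{x}}{\ivr}$ and of $\pevolvein{\D{x}=\genDE{x}}{\ivr\land\rrfvar}$ reach exactly the same states, which is the claimed equivalence.

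I expect the main obstacle to be \irref{DG}. The direction from the augmented system to the original is routine: projecting a solution of $\pevolvein{\D{x}=\genDE{x},\D{y}=a(x)y+b(x)}{\ivr}$ onto its $x$-coordinates yields a solution of $\pevolvein{\D{x}=\genDE{x}}{\ivr}$ of the same duration, and $\rfvar$, $\ivr$, and $\genDE{x}$ do not mention the fresh variable $y$. The converse direction is the delicate one: from a solution $\solvar$ of $\pevolvein{\D{x}=\genDE{x}}{\ivr}$ on $[0,T]$, I must construct a solution of the \emph{augmented} ODE of the same duration $T$. Substituting the $x$-values $\solvar(\zeta)(x)$ into the ghost equation turns $\D{y}=a(x)y+b(x)$ into an \emph{affine} scalar ODE in $y$ whose coefficients are continuous, hence bounded, on the compact interval $[0,T]$; global existence for linear ODEs (equivalently, a Gr\"onwall estimate ruling out finite-time blow-up) then provides a solution for $y$ on all of $[0,T]$ for any chosen initial value, which together with $\solvar$ solves the augmented system while staying in $\ivr$. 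The leading $\exists$-quantifier over $y$ (equivalently $\forall$, since $y$ is fresh) absorbs that freedom in the initial value. It is precisely the affine-in-$y$ form of the ghost equation that makes this global-existence step go through, and it is the one place where the exact shape of axiom \irref{DG} is used; enriching the term language with Noetherian functions affects none of this, since the coefficients remain $C^\infty$ and hence continuous along solutions.
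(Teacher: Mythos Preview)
Your proposal is correct and matches the paper's approach: the paper's proof is a one-line sketch stating that the soundness proofs from~\cite{DBLP:journals/jar/Platzer17} carry over unchanged because the fixed function symbols are interpreted as $C^\infty$ functions, which is exactly the point you identify and then unpack in detail for each axiom. Your expanded account of \irref{DIeq+DIgeq} via the differential lemma and mean value theorem, of \irref{DC} via prefix-closure, and of \irref{DG} via global existence for the affine ghost equation is precisely the content of the cited soundness proofs the paper defers to.
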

\begin{proof}[Proof Sketch]
The soundness proofs for these axioms~\cite{DBLP:journals/jar/Platzer17} carry over unchanged for extended term languages since fixed function symbols $\noef$ are interpreted as smooth $C^\infty$ functions.
\end{proof}

\emph{Differential invariants} (\irref{DI}) reduce questions about invariance of $\etermA=0,\etermA\cmp0$ (globally, along solutions of the ODE) to local questions about differentials.
Only two instances (\irref{DIeq+DIgeq}) of the more general \irref{DI} axiom~\cite{DBLP:journals/jar/Platzer17} are needed here.
Axiom~\irref{DIeq} says that the value of extended term $\etermA$ always stays zero if its differential $\der{\etermA}$ is always zero along the solution.
Similarly, axiom~\irref{DIgeq} says that $\etermA$ stays non-negative (or strictly positive) if its differential is non-negative.
Note that axiom \irref{DIgeq} only requires $\der{\etermA} \geq 0$ in its premise even for the $\etermA > 0$ case.
These axioms internalize the mean value theorem (see~\rref{app:diaaxioms}).
With the differential axioms of~\rref{subsec:background-differentials}, the differential $\der{\etermA}$ can be soundly and syntactically transformed into the Lie derivative $\lied[]{\genDE{x}}{\etermA}$ in proofs.
\emph{Differential cut} (\irref{DC}) expresses that if the system never leaves $\rrfvar$ while staying in $\ivr$ (the outer assumption), then $\rrfvar$ may be additionally assumed in the domain constraint when proving the postcondition $\rfvar$ (the last subformula).
Even if \irref{DC} increases the deductive power over \irref{DI}~\cite{DBLP:journals/lmcs/Platzer12}, the deductive power increases even further~\cite{DBLP:journals/lmcs/Platzer12} with the \emph{differential ghost} axiom (\irref{DG}) which adds a \emph{fresh} variable $y$ to the system of ODEs for the sake of the proof.
Since $y$ is fresh, its initial value can be either existentially (\irref{DG}) or universally (\irref{DGall}) quantified~\cite{DBLP:journals/jar/Platzer17}.
\irlabel{DGall|DG$_\forall$}%
The syntactic restriction of \irref{DG} is that the new ODE must be linear (or affine) in $y$, hence $a(x),b(x)$ are not allowed to mention $y$.
This restriction prevents the newly added equation from unsoundly restricting the duration of existence for solutions to the differential equations~\cite{DBLP:journals/lmcs/Platzer12}, e.g., the (unsound) differential ghost \(\pevolve{\D{y}=y^2}\) may cause finite-time (or early) blowup of solutions~\cite{Walter1998}.
The added differential ghost variable $y$ co-evolves along solutions and crucially enables the expression of new (integral) relationships between variables along the differential equations.
These new relationships are then used to deduce invariance properties of interest in the original system.
The systematic construction of appropriate differential ghosts is central to this article's completeness results.
For example, the equational invariance axiom~\irref{DIeq} is only complete for invariant functions~\cite{DBLP:conf/itp/Platzer12}.
In~\rref{sec:darboux}, axiom~\irref{DG} is used to extend this deductive power to all Darboux (in)equalities,~\rref{sec:analyticinvs} extends it further to \emph{all} analytic invariants.

To utilize axioms~\irref{DI+DC+DG} in proofs, the following axiom and proof rule of \dL are also used~\cite{DBLP:journals/jar/Platzer17}:
\[\dinferenceRule[dW|dW]{}
{\linferenceRule
  {\lsequent{\ivr}{\rfvar}}
  {\lsequent{\Gamma}{\dbox{\pevolvein{\D{x}=\genDE{x}}{\ivr}}{\rfvar}}}
}{}
\qquad\hspace{2cm}
\cinferenceRule[K|K]{K axiom / modal modus ponens} %
{\linferenceRule[impl]
  {\dbox{\alpha}{(\fvarA \limply \fvarB)}}
  {(\dbox{\alpha}{\fvarA}\limply\axkey{\dbox{\alpha}{\fvarB}})}
}{}
\]

\emph{Differential weakening}~(\irref{dW}) drops the ODEs entirely and proves postcondition $\rfvar$ directly from the evolution domain constraint $\ivr$.
Kripke axiom~\irref{K} is the modal modus ponens for postconditions of the box modality.
Using~\irref{dW} and~\irref{K}, the following proof rule and axiom derive:
\[\dinferenceRule[MbW|M${\dibox{'}}$]{}
{\linferenceRule
  {\lsequent{\ivr,\fvarB}{\fvarA} \qquad \lsequent{\Gamma}{\dbox{\pevolvein{\D{x}=\genDE{x}}{\ivr}}{\fvarB}}}
  {\lsequent{\Gamma}{\dbox{\pevolvein{\D{x}=\genDE{x}}{\ivr}}{\fvarA}}}
}{}
\qquad
\dinferenceRule[band|${\dibox{\cdot}\land}$]{}
{\linferenceRule[equiv]
  {\dbox{\alpha}{\fvarA} \land \dbox{\alpha}{\fvarB}}
  {\axkey{\dbox{\alpha}{(\fvarA \land \fvarB)}}}
}{}
\]

Monotonicity rule \irref{MbW} strengthens the postcondition to $\fvarB$ if it implies $\fvarA$ within the domain constraint $\ivr$ of the ODE $\pevolvein{\D{x}=\genDE{x}}{\ivr}$.
Axiom \irref{band} proves conjunctive postconditions separately, e.g.,~\irref{DIeq} derives from \irref{DIgeq} using \irref{band} with the real arithmetic equivalence $\etermA = 0 \lbisubjunct \etermA \geq 0 \land -\etermA \geq 0$.

\subsection{Extended Term Conditions}
\label{subsec:background-compatibility}

Two natural conditions on the fixed function symbols $\noef \in \{\noef_1,\dots,\noef_r\}$ and their semantics have been uncovered thus far.
For this article's completeness results, a third condition is needed.
All three \emph{extended term conditions} are assumed throughout this article:

\begin{enumerate}[label=(\Alph*)]
\myitem[(S)]\label{itm:reqsmooth}
\emph{Smoothness.}
All fixed function symbols \(\noef \in \{ \noef_1,\dots \noef_r \}\) in the extended term language are interpreted as smooth \(C^\infty$ functions $\noef : \reals^k \to \reals\).

\myitem[(P)]\label{itm:reqpartial}
\emph{Syntactic partial derivatives.}
Each partial derivative $\Dp[y_i]{\noef}$ of \(\noef(y_1,\dots,y_k)\) has a syntactic representation in the extended term language in the sense of~\rref{lem:noefdifferentials}.

\myitem[(R)]\label{itm:reqdiffradical}
\emph{Computable differential radicals.}
The extended term language has computable differential radicals, i.e., for each extended term $\etermA$ and ODE \(\D{x}=\genDE{x}\) with extended terms in its RHS $\genDE{x}$, there must computably exist an $N \geq 1$ and $N$ extended terms $\cofterm_i$ such that the higher Lie derivatives of $\etermA$ along \(\D{x}=\genDE{x}\) provably satisfy the following \emph{differential radical identity}~\cite{DBLP:conf/tacas/GhorbalP14}:
\begin{equation}
\lied[N]{\genDE{x}}{\etermA} = \sum_{i=0}^{N-1} \cofterm_i \lied[i]{\genDE{x}}{\etermA}
\label{eq:differential-rank}
\end{equation}
\end{enumerate}

Condition~\rref{itm:reqsmooth} ensures that the semantics are well-defined, while conditions~\rref{itm:reqpartial} and~\rref{itm:reqdiffradical} enable (complete) \emph{syntactic} analysis of differential equations invariance by their local (differential) behavior.
The $C^\infty$ smoothness required by \rref{itm:reqsmooth} is subtly weaker than \emph{real analyticity}~\cite{MR1916029}.
This article often gives brief but intuitive (semantic) explanations of results, and explicitly indicates when those arguments only apply \emph{in the real analytic setting}.
None of the actual proofs given in this article require real analyticity.
Condition~\rref{itm:reqdiffradical} requires an algorithm that computes and proves the identity~\rref{eq:differential-rank}.
This identity is crucially used for completeness in~\rref{sec:analyticinvs} and~\rref{sec:semianalyticinvs}, where it is also motivated logically.
It yields a finiteness property on the number of Lie derivatives that need to be analyzed for any given term $\etermA$ and ODE.
From identity~\rref{eq:differential-rank}, the first $N-1$ Lie derivatives will turn out to suffice for completely determining the local behavior of extended term $\etermA$ along the ODE \(\D{x}=\genDE{x}\).
All three extended term conditions are met by the polynomial term language without extensions.

\begin{proposition}
\label{prop:poly-compatibility}
Polynomial term languages satisfy the extended term conditions.
\end{proposition}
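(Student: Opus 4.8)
The plan is to note that the two structural extended term conditions are vacuous for polynomial term languages, and then to obtain the \emph{Computable differential radicals} condition~\rref{itm:reqdiffradical} from Noetherianity of the polynomial ring together with Gröbner basis algorithms.

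Conditions~\rref{itm:reqsmooth} and~\rref{itm:reqpartial} speak about fixed function symbols $\noef \in \{\noef_1,\dots,\noef_r\}$, but a polynomial term language uses only the first four clauses of the term grammar, so $\{\noef_1,\dots,\noef_r\}=\emptyset$: there is nothing to interpret and no partial derivative that needs a syntactic representative, and both conditions hold vacuously. (Alternatively, one could regard polynomials as built-in function symbols; even then the conditions hold, since polynomial functions are $C^\infty$ and their partial derivatives are polynomials, hence syntactically represented in the sense of~\rref{lem:noefdifferentials}.)

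For condition~\rref{itm:reqdiffradical}, fix a polynomial term $\ptermA$ and a polynomial ODE $\D{x}=\genDE{x}$; since the only constants available in the grammar are $c\in\rationals$, each $\odeterm_i(x)$ lies in $\rationals[x_1,\dots,x_n]$. The Lie derivation operator $\lie[]{\genDE{x}}{\ptermA} = \sum_{i=1}^n \Dp[x_i]{\ptermA}\cdot\odeterm_i(x)$ maps $\rationals[x_1,\dots,x_n]$ into itself, so every higher Lie derivative $\lied[i]{\genDE{x}}{\ptermA}$ is again such a polynomial. Consider the ascending chain of ideals in $\rationals[x_1,\dots,x_n]$:
\[ \langle \ptermA \rangle \subseteq \langle \ptermA, \lied[1]{\genDE{x}}{\ptermA} \rangle \subseteq \langle \ptermA, \lied[1]{\genDE{x}}{\ptermA}, \lied[2]{\genDE{x}}{\ptermA} \rangle \subseteq \cdots \]
As $\rationals[x_1,\dots,x_n]$ is Noetherian (Hilbert basis theorem), this chain stabilizes, so there is a least $N\geq1$ with $\lied[N]{\genDE{x}}{\ptermA}\in\langle \ptermA,\lied[1]{\genDE{x}}{\ptermA},\dots,\lied[N-1]{\genDE{x}}{\ptermA}\rangle$. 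Unwinding ideal membership yields polynomials $\cofterm_0,\dots,\cofterm_{N-1}$ with $\lied[N]{\genDE{x}}{\ptermA}=\sum_{i=0}^{N-1}\cofterm_i\,\lied[i]{\genDE{x}}{\ptermA}$, which is precisely identity~\rref{eq:differential-rank}.

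Finally, $N$ and the $\cofterm_i$ are computable and the identity is provable. Everything lives over the computable field $\rationals$, so one iterates over $i=1,2,\dots$, computing a Gröbner basis of $\langle \ptermA,\dots,\lied[i-1]{\genDE{x}}{\ptermA}\rangle$ and testing whether $\lied[i]{\genDE{x}}{\ptermA}$ reduces to $0$; the chain-stabilization argument guarantees this halts (at $i=N$), and the extended Buchberger/division procedure then returns explicit cofactors $\cofterm_i$. Since~\rref{eq:differential-rank} is an identity of polynomials, it is a valid formula of first-order real arithmetic and so proves in \dL by~\irref{qearpoly}. I expect no serious obstacle here; the only point needing care is this ``provably'' clause of~\rref{itm:reqdiffradical}, where one must observe that the computed identity holds as a polynomial identity and is hence discharged purely by~\irref{qearpoly}, with no reasoning specific to the differential equation. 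The mathematical content is exactly the classical observation~\cite{DBLP:conf/tacas/GhorbalP14} that the differential rank $N$ exists because polynomial ideals satisfy the ascending chain condition and that ideal membership with cofactor extraction is algorithmic over $\rationals$.
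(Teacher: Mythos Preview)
Your proposal is correct and follows essentially the same approach as the paper's proof sketch: conditions~\rref{itm:reqsmooth} and~\rref{itm:reqpartial} are immediate (the paper phrases them as ``polynomials are smooth and closed under partial derivatives'' rather than ``vacuous for an empty set of function symbols,'' but you also give that reading), and condition~\rref{itm:reqdiffradical} is obtained from the ascending chain of ideals generated by successive Lie derivatives stabilizing by Noetherianity, with the resulting polynomial identity provable by~\irref{qear}. The paper additionally remarks that the proposition is a corollary of the later~\rref{thm:noetheriancompat}, but the direct argument you spell out---including the explicit use of Gr\"obner bases over $\rationals$ for the computability clause---is exactly the content of that sketch.
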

\begin{proof}[Proof Sketch]
A full proof is omitted because this is a corollary of a later result (\rref{thm:noetheriancompat}).
Briefly, conditions~\rref{itm:reqsmooth} and~\rref{itm:reqpartial} are met because polynomial functions are smooth (even real analytic) and the polynomials are closed under partial derivatives.
Condition~\rref{itm:reqdiffradical} generalizes different flavors of results that have been proved in the literature~\cite{MR1697373,DBLP:conf/emsoft/LiuZZ11,DBLP:conf/tacas/GhorbalP14}.
The proofs rely on the fact that polynomials form a Noetherian ring~\cite{MR1727221} so that the ascending chain of ideals\footnote{The \emph{ideal}~\cite{Bochnak1998} generated by polynomials $\ptermA_1,\dots,\ptermA_s \in \polynomials{\reals}{x}$ is the set of all their linear combination with polynomial cofactors $\cofterm_i \in \polynomials{\reals}{x}$, denoted by $\ideal{\ptermA_1,\dots,\ptermA_s} \mdefeq \{\text{\large$\Sigma$}_{i=1}^s \cofterm_i \ptermA_i \with \cofterm_i \in \reals[x]\}$.}
 formed by successive (polynomial) Lie derivatives stabilizes.
The resulting (polynomial) identity~\rref{eq:differential-rank} is a formula of real arithmetic and can therefore always be proved by the rule~\irref{qear} for decidable real arithmetic.
\end{proof}

It is less straightforward to show that an extended term language like~\rref{eq:extlang} meets these conditions.
Indeed, even the simple language extension~\rref{eq:extlang} already features exponential rings which are not Noetherian~\cite[Remark 1.4.2]{Terzo} and undecidable arithmetic over the trigonometric functions~\cite{DBLP:journals/jsyml/Richardson68}.
In the interest of a general presentation, the question of how to determine if a candidate term language extension $\{\noef_1,\dots,\noef_r\}$ meets the extended term conditions is deferred till~\rref{sec:noetherianfunctions}.
Until then, the only assumption about the extended term language is that it satisfies these three conditions.
This suffices for the completeness results, which the next section begins to show.

\section{Darboux Invariants}
\label{sec:darboux}

This section exploits differential ghosts for proving an important class of invariance properties.
These are called \emph{Darboux invariants} because they are inspired by Darboux polynomials~\cite{Darboux}.
The derived proof rule for Darboux equalities corresponds to the case $N=1$ in the differential radical identity~\rref{eq:differential-rank}, while the subsequent rule for Darboux \emph{in}equalities is a crucial step for the completeness result in~\rref{sec:analyticinvs}.
Their derivations also show how analytic and geometric notions from the theory of differential equations, such as Darboux polynomials~\cite{Darboux} and Gr\"onwall's lemma~\cite[\S29.VI]{Walter1998,DBLP:journals/mathann/Gronwall19} can be internalized syntactically with differential ghost arguments without extension to any axiom.

\subsection{Darboux Equalities}
\label{subsec:darbouxeq}
Assume that the extended term $\etermA$ satisfies the differential radical identity~\rref{eq:differential-rank} with $N=1$ and extended term cofactor $\cofterm$, i.e., $\lied[]{\genDE{x}}{\etermA} = \cofterm\etermA$.
Taking Lie derivatives on both sides gives:
\[
\lied[2]{\genDE{x}}{\etermA} = \lie[]{\genDE{x}}{\lied[]{\genDE{x}}{\etermA}} = \lie[]{\genDE{x}}{\cofterm\etermA}
= \lied[]{\genDE{x}}{\cofterm}\etermA + \cofterm\lied[]{\genDE{x}}{\etermA} = (\lied[]{\genDE{x}}{\cofterm} + \cofterm^2) \etermA
\]

By repeatedly taking Lie derivatives, \emph{all} higher Lie derivatives of $\etermA$ can be written as a product between $\etermA$ and some cofactor.
Now, consider an initial state $\iget[state]{\I}$ where $\etermA$ evaluates to $\ivaluation{\I}{\etermA} = 0$, then:
\[\ivaluation{\I}{\lied[]{\genDE{x}}{\etermA}} = \ivaluation{\I}{\cofterm\etermA} = \ivaluation{\I}{\cofterm} \cdot \ivaluation{\I}{\etermA} = 0\]
Because every higher Lie derivative is a product with $\etermA$, all of them are simultaneously $0$ in state $\iget[state]{\I}$.
Thus, \emph{in the real analytic setting},
 $\etermA=0$ stays invariant along solutions to the ODE starting at $\iget[state]{\I}$ because all its derivatives are $0$.
This motivates the following proof rule for invariance of $\etermA=0$:
\[
\dinferenceRule[dbx|dbx]{Darboux}
{\linferenceRule
  {\lsequent{\ivr} {\lied[]{\genDE{x}}{\etermA} = \cofterm\etermA}}
  {\lsequent{\etermA=0} {\dbox{\pevolvein{\D{x}=\genDE{x}}{\ivr}}{\etermA=0}}}
}{}
\]

Rule \irref{dbx} derives using differential ghosts and is a first hint at their deductive power for equational invariants.
A special case of \irref{dbx} proves invariance for \emph{Darboux polynomials}, which are polynomials $\ptermA$ satisfying the polynomial identity $\lied[]{\genDE{x}}{\ptermA} = \cofterm\ptermA$ for some polynomial cofactor $\cofterm$.
These are of significant interest in the study of (polynomial) ODEs~\cite{Darboux} and invariant generation for hybrid systems~\cite{DBLP:journals/fmsd/SankaranarayananSM08}.
In~\rref{sec:analyticinvs}, \irref{dbx} is generalized vectorially to yield proofs of \emph{all} analytic invariants with differential ghosts.
Although the rule can be derived from \irref{DG} directly, this article follows a detour through a proof rule for Darboux \emph{in}equalities instead, which is crucially used for this vectorial generalization.

\subsection{Darboux Inequalities}
\label{subsec:darbouxineq}

Assume that the extended term $\etermA$ satisfies the Darboux \emph{inequality} $\lied{\genDE{x}}{\etermA} \geq \cofterm\etermA$ for some extended term cofactor $\cofterm$.
Semantically, in an initial state $\iget[state]{\I}$ where $\ivaluation{\I}{\etermA} \geq 0$, Gr\"onwall's lemma~\cite[\S29.VI]{Walter1998,DBLP:journals/mathann/Gronwall19} implies that $\etermA \geq 0$ stays invariant along solutions starting at $\iget[state]{\I}$ because the semantic value of extended term $\etermA$ is bounded below by a (typically decaying) non-negative exponential solution of the non-autonomous linear differential equation \(\D{\etermA}=\cofterm(t)\etermA\) for the variable $e$.
Here, $\cofterm(t)$ is the time-dependent function corresponding to the value of term $\cofterm$ evaluated along the solution to the differential equations \(\pevolve{\D{x}=\genDE{x}}\) from $\iget[state]{\I}$, see~\rref{subfig:gronwall} for an illustration.
Indeed, if $\etermA$ satisfies the Darboux equality $\lied[]{\genDE{x}}{\etermA} = \cofterm\etermA$ with cofactor $\cofterm$, then it satisfies both Darboux inequalities $\lied[]{\genDE{x}}{\etermA} \geq \cofterm\etermA$ and $\lied[]{\genDE{x}}{\etermA} \leq \cofterm\etermA$, giving an alternative semantic argument for the invariance of $\etermA=0$ in rule \irref{dbx}.

Differentials and Lie derivatives along differential equations $\D{x}=\genDE{x}$ provably coincide (\rref{subsec:background-differentials}), so axiomatic Darboux inequalities assume \m{\dbox{\pevolvein{\D{x}=\genDE{x}}{\ivr}}{\der{\etermA}\geq\cofterm\etermA}} and Darboux equalities assume \m{\dbox{\pevolvein{\D{x}=\genDE{x}}{\ivr}}{\der{\etermA}=\cofterm\etermA}}
instead of \(\lied{\genDE{x}}{\etermA} \geq \cofterm\etermA\) and \(\lied{\genDE{x}}{\etermA} = \cofterm\etermA\), respectively.
Darboux (in)equality invariance can be proved purely syntactically with the differential ghost axiom \irref{DG}.

\begin{lemma}[Darboux (in)equalities are differential ghosts] \label{lem:Darboux}
The Darboux equality \irref{DBX} and Darboux inequality \irref{DBXineq} axioms derive from \irref{DG} (and \irref{DI+DC}) for any extended term cofactor $\cofterm$.

\begin{calculus}
\dinferenceRule[DBX|DBX]{Darboux equality axiom}
{\linferenceRule[impl]
  {\dbox{\pevolvein{\D{x}=\genDE{x}}{\ivr}}{\der{\etermA}=\cofterm\etermA}}
  {(\etermA=0 \limply \axkey{\dbox{\pevolvein{\D{x}=\genDE{x}}{\ivr}}{\etermA=0})}}
}{}
\dinferenceRule[DBXineq|DBX${_\cmp}$]{Darboux inequality axiom}
{\linferenceRule[impl]
  {\dbox{\pevolvein{\D{x}=\genDE{x}}{\ivr}}{\der{\etermA}\geq\cofterm\etermA}}
  {(\etermA\cmp0 \limply \axkey{\dbox{\pevolvein{\D{x}=\genDE{x}}{\ivr}}{\etermA\cmp0}})}
  \quad
}{\text{$\cmp$ is either $\geq$ or $>$}}
\end{calculus}
\end{lemma}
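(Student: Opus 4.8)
The plan is to derive the inequality axiom \irref{DBXineq} first and obtain \irref{DBX} from it. For \irref{DBX}, from the premise $\dbox{\pevolvein{\D{x}=\genDE{x}}{\ivr}}{\der{\etermA}=\cofterm\etermA}$ one gets, by \irref{MbW} and the differential axioms, both $\dbox{\pevolvein{\D{x}=\genDE{x}}{\ivr}}{\der{\etermA}\geq\cofterm\etermA}$ and $\dbox{\pevolvein{\D{x}=\genDE{x}}{\ivr}}{\der{(-\etermA)}\geq\cofterm(-\etermA)}$ (since $\der{(-\etermA)}=-\der{\etermA}=-\cofterm\etermA=\cofterm(-\etermA)$ is a valid equation everywhere). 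Applying \irref{DBXineq} with $\cmp$ as $\geq$ to $\etermA$ and to $-\etermA$, and using that $\etermA=0$ implies both $\etermA\geq0$ and $-\etermA\geq0$, yields $\dbox{\pevolvein{\D{x}=\genDE{x}}{\ivr}}{\etermA\geq0}$ and $\dbox{\pevolvein{\D{x}=\genDE{x}}{\ivr}}{\etermA\leq0}$; then \irref{band}, the arithmetic equivalence $\etermA\geq0\land\etermA\leq0\lbisubjunct\etermA=0$, and \irref{MbW} give $\dbox{\pevolvein{\D{x}=\genDE{x}}{\ivr}}{\etermA=0}$. So the real work is \irref{DBXineq}.

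For \irref{DBXineq} the idea is to internalize the integrating-factor (Gr\"onwall) argument with \irref{DG}. First I would adjoin a fresh scalar ghost $y$ evolving by $\D{y}=-\cofterm y$ (affine in $y$, as \irref{DG} requires) with initial value $1$, reducing the goal to $\dbox{\pevolvein{\D{x}=\genDE{x},\D{y}=-\cofterm y}{\ivr}}{\etermA\cmp0}$ under antecedents $\etermA\cmp0$ and $y=1$; the premise $\dbox{\pevolvein{\D{x}=\genDE{x}}{\ivr}}{\der{\etermA}\geq\cofterm\etermA}$ still applies to the augmented system because $y$ is fresh and $\etermA,\cofterm$ do not mention it. The quantity of interest is the product $y\etermA$, whose Lie derivative along the augmented ODE is $-\cofterm y\etermA+y\lied[]{\genDE{x}}{\etermA}$, and this is $\geq0$ whenever $y>0$, since then $\lied[]{\genDE{x}}{\etermA}\geq\cofterm\etermA$ may be multiplied by $y$.

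This splits into two subsidiary tasks. First, I would establish $\dbox{\pevolvein{\D{x}=\genDE{x},\D{y}=-\cofterm y}{\ivr}}{y>0}$ and \irref{DC} it into the evolution domain. This uses \irref{DG} again: adjoin a further fresh ghost $z$ with $\D{z}=\frac{\cofterm}{2}z$ and initial value $1$, so that $yz^2=1$ holds initially and has Lie derivative $-\cofterm yz^2+y\cdot2z\cdot\frac{\cofterm}{2}z=0$ identically; hence \irref{DIeq} proves $\dbox{\ldots}{yz^2=1}$, and \irref{dW} with \irref{qear} (from $yz^2=1$ and $z^2\geq0$, necessarily $y>0$) gives $\dbox{\ldots}{y>0}$. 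Second, using the premise and the provable coincidence of $\der{\etermA}$ with $\lied[]{\genDE{x}}{\etermA}$ along the ODE, I would \irref{DC} the differential-free fact $\lied[]{\genDE{x}}{\etermA}\geq\cofterm\etermA$ into the domain as well. With the domain strengthened to $\ivr\land y>0\land\lied[]{\genDE{x}}{\etermA}\geq\cofterm\etermA$, the differential $\der{(y\etermA)}$ rewrites, by \irref{Dder}, \irref{DE}, \irref{assignb}, to $-\cofterm y\etermA+y\lied[]{\genDE{x}}{\etermA}$, and the required $-\cofterm y\etermA+y\lied[]{\genDE{x}}{\etermA}\geq0$ follows from the domain by \irref{qear} (it is the real-arithmetic tautology that $y>0\land a\geq\cofterm\etermA$ implies $y(a-\cofterm\etermA)\geq0$, with $a$ standing for $\lied[]{\genDE{x}}{\etermA}$). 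Thus \irref{DIgeq} proves $\dbox{\ldots}{y\etermA\cmp0}$ (initially $y\etermA\cmp0$ since $\etermA\cmp0$ and $y=1$), and \irref{dW} with $y>0\land y\etermA\cmp0\limply\etermA\cmp0$ closes the goal, uniformly for $\cmp$ either $\geq$ or $>$.

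The main obstacle is discovering and justifying the nested differential-ghost construction — in particular the square-ghost $z$ that certifies the strict positivity $y>0$ syntactically, which is precisely what licenses multiplying $\lied[]{\genDE{x}}{\etermA}\geq\cofterm\etermA$ by $y$ — together with the careful staging of the differential cuts (first positivity of $y$, then the differential-free form of the premise, then $y\etermA\cmp0$) and the check that the premise survives the introduction of the fresh ghosts and becomes available in differential-free form inside the domain. The \irref{DG} affinity side condition is met since both $-\cofterm y$ and $\frac{\cofterm}{2}z$ are linear in the new variable.
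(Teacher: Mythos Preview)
Your proposal is correct and follows essentially the same approach as the paper: derive \irref{DBXineq} first via the two nested ghosts $\D{y}=-\cofterm y$ and $\D{z}=\frac{\cofterm}{2}z$ with invariants $yz^2=1$ (certifying $y>0$) and $y\etermA\cmp0$, then obtain \irref{DBX} from \irref{DBXineq} applied to $\etermA$ and $-\etermA$. The only organizational differences are that the paper keeps the premise in differential form $\der{\etermA}\geq\cofterm\etermA$ as a box assumption and discharges it at the end via \irref{MbW} (rather than converting to $\lied[]{\genDE{x}}{\etermA}\geq\cofterm\etermA$ and \irref{DC}'ing it into the domain as you do), and that the paper applies \irref{DGall} to lift the premise to the augmented system (which is the formal justification behind your remark that the premise ``still applies'' since $y$ is fresh); also, where you write \irref{dW} to pass from $[\ldots]yz^2=1$ to $[\ldots]y>0$ and from $[\ldots]y\etermA\cmp0$ to $[\ldots]\etermA\cmp0$, the rule actually used is monotonicity \irref{MbW} (or a \irref{DC} followed by \irref{dW}).
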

\begin{proof}
Axiom~\irref{DBXineq} is derived first before axiom~\irref{DBX} is derived as a corollary.
After propositional normalization, the derivation starts with a \irref{DG+DGall} step, introducing a new ghost variable $y$ satisfying a carefully chosen differential equation \(\D{y}=-\cofterm y\).
Next, \irref{existsr+alll} pick an initial value for $y$. It suffices to pick any $y>0$.
The augmented ODE is abbreviated with $\alpha_y \mnodefequiv \D{x}=\genDE{x}\syssep\D{y}=-\cofterm y$:
{\footnotesizeoff%
\begin{sequentdeduction}[array]
\linfer[DG+DGall]
  {
  \linfer[existsr+alll]
  {
    \lsequent{\dbox{\pevolvein{\alpha_y}{\ivr}}{\der{\etermA}\geq\cofterm\etermA}, \etermA\cmp0,y>0} {\dbox{\pevolvein{\alpha_y}{\ivr}}{\etermA\cmp0}}
  }
    {\lsequent{\lforall{y}{\dbox{\pevolvein{\alpha_y}{\ivr}}{\der{\etermA}\geq\cofterm\etermA}}, \etermA\cmp0} {\lexists{y}{\dbox{\pevolvein{\alpha_y}{\ivr}}{\etermA\cmp0}}}}
  }
  {\lsequent{\dbox{\pevolvein{\D{x}=\genDE{x}}{\ivr}}{\der{\etermA}\geq\cofterm\etermA}, \etermA\cmp0} {\dbox{\pevolvein{\D{x}=\genDE{x}}{\ivr}}{\etermA\cmp0}}}
\end{sequentdeduction}
}%

The augmented ODE $\alpha_y$ has a new provable invariant relationship $\etermA y \cmp 0$ (see~\rref{fig:gronwallcounterweight} and discussion after this proof).
To deduce the original property of interest ($\etermA \cmp 0$) from this new relationship, it suffices to prove $y > 0$ invariant because the formula $\etermA y \cmp 0 \land y > 0 \limply \etermA \cmp 0$ is provable by \irref{qear}.
Axiom \irref{DC} is used to prove $y>0$ separately (right premise abbreviated with \textcircled{1}) and assume it in the evolution domain constraints of the left premise.
Subsequently, monotonicity rule \irref{MbW} and \irref{qear} strengthen the postcondition to $\etermA y \cmp 0$ using the newly added domain constraint $y>0$.
{\footnotesizeoff%
\begin{sequentdeduction}[array]
\linfer[DC]{
\linfer[MbW+qear]{
  \lsequent{\dbox{\pevolvein{\alpha_y}{\ivr \land y > 0}}{\der{\etermA}\geq\cofterm\etermA}, \etermA\cmp0,y>0} {\dbox{\pevolvein{\alpha_y}{\ivr \land y>0}}{\etermA y \cmp 0}}
}
  {\lsequent{\dbox{\pevolvein{\alpha_y}{\ivr \land y > 0}}{\der{\etermA}\geq\cofterm\etermA}, \etermA\cmp0,y>0} {\dbox{\pevolvein{\alpha_y}{\ivr \land y>0}}{\etermA \cmp 0}} }
        !
    \textcircled{1}
    }%
    {\lsequent{\dbox{\pevolvein{\alpha_y}{\ivr}}{\der{\etermA}\geq\cofterm\etermA}, \etermA\cmp0,y>0} {\dbox{\pevolvein{\alpha_y}{\ivr}}{\etermA\cmp0}}}
\end{sequentdeduction}
}%

From the left premise, a \irref{cut+qear} step adds $\etermA y \cmp 0$ to the assumptions using the provable arithmetic formula $\etermA \cmp 0 \land y > 0 \limply \etermA y \cmp 0$.
Axiom \irref{DIgeq} is used to prove the inequational invariant $\etermA y\cmp0$ and the resulting differential $\der{\etermA y}$ simplifies with~\irref{Dder} from \rref{subsec:background-differentials}.
An additional~\irref{DE+assignb} step replaces the differential variable $\D{y}$ according to the augmented ODE $\alpha_y$, before a monotonicity~\irref{MbW} (with a \irref{cut}) and \irref{qear} step closes the derivation using the domain constraint $y>0$.
The differential ghost \m{\D{y}=-\cofterm y} is specifically crafted so that this final arithmetic step proves with~\irref{qear}.
{\footnotesizeoff\renewcommand{\arraystretch}{1.2}%
\begin{sequentdeduction}[array]
\linfer[cut+qear] {
\linfer[DIgeq] {
\linfer[Dder] {
\linfer[DE+assignb] {
\linfer[MbW] {
\linfer[qear] {
  \lclose
}
  {\lsequent{\der{\etermA}\geq\cofterm\etermA, y>0} {\der{\etermA}y + \etermA(-\cofterm y) \geq 0}}
}
  {\lsequent{\dbox{\pevolvein{\alpha_y}{\ivr \land y > 0}}{\der{\etermA}\geq\cofterm\etermA}}{\dbox{\pevolvein{\alpha_y}{\ivr\land y>0}}{\der{\etermA} y + \etermA(-\cofterm y) \geq 0}}}
}
  {\lsequent{\dbox{\pevolvein{\alpha_y}{\ivr \land y > 0}}{\der{\etermA}\geq\cofterm\etermA}}{\dbox{\pevolvein{\alpha_y}{\ivr\land y>0}}{\der{\etermA} y + \etermA \D{y} \geq 0}}}
}
  {\lsequent{\dbox{\pevolvein{\alpha_y}{\ivr \land y > 0}}{\der{\etermA}\geq\cofterm\etermA}}{\dbox{\pevolvein{\alpha_y}{\ivr\land y>0}}{\der{\etermA y} \geq 0}}}
}
  {\lsequent{\dbox{\pevolvein{\alpha_y}{\ivr \land y > 0}}{\der{\etermA}\geq\cofterm\etermA}, \etermA y\cmp0} {\dbox{\pevolvein{\alpha_y}{\ivr\land y>0}}{\etermA y\cmp0}}}
}
  {\lsequent{\dbox{\pevolvein{\alpha_y}{\ivr \land y > 0}}{\der{\etermA}\geq\cofterm\etermA}, \etermA\cmp0, y>0} {\dbox{\pevolvein{\alpha_y}{\ivr\land y>0}}{\etermA y\cmp0}}}
\end{sequentdeduction}
}%

The derivation continues from premise \textcircled{1} with a second differential ghost \(\D{z}=\frac{\cofterm}{2}z\) analogously:
{\footnotesizeoff\renewcommand{\arraystretch}{1.2}%
\begin{sequentdeduction}[array]
\linfer[DG]
{\linfer[existsr+MbW+qear]
  {\linfer[DIeq+Dder]
    {\linfer[DE+assignb]
      {\linfer[dW]
        {\linfer[qear]
          {\lclose}
          {\lsequent{\ivr} {(-\cofterm y)z^2+2yz(\frac{\cofterm}{2}z)=0}}
        }%
        {\lsequent{} {\dbox{\pevolvein{\D{x}=\genDE{x}\syssep\D{y}=-\cofterm y\syssep\D{z}=\frac{\cofterm}{2}z}{\ivr}}{(-\cofterm y)z^2+2yz(\frac{\cofterm}{2}z)=0}}}
      }%
      {\lsequent{} {\dbox{\pevolvein{\D{x}=\genDE{x}\syssep\D{y}=-\cofterm y\syssep\D{z}=\frac{\cofterm}{2}z}{\ivr}}{\D{y}z^2+2yz\D{z}=0}}}
    }%
    {\lsequent{yz^2=1} {\dbox{\pevolvein{\D{x}=\genDE{x}\syssep\D{y}=-\cofterm y\syssep\D{z}=\frac{\cofterm}{2}z}{\ivr}}{yz^2=1}}}
  }%
  {\lsequent{y>0} {\lexists{z}{\dbox{\pevolvein{\D{x}=\genDE{x}\syssep\D{y}=-\cofterm y\syssep\D{z}=\frac{\cofterm}{2}z}{\ivr}}{y>0}}}}
}%
{\lsequent{y>0} {\dbox{\pevolvein{\D{x}=\genDE{x}\syssep\D{y}=-\cofterm y}{\ivr}}{y>0}}}
\end{sequentdeduction}
}%

In the \irref{existsr+MbW+qear} step, observe that if $y>0$ initially, then there exists $z$ such that $yz^2=1$.
Moreover, $yz^2=1$ is sufficient to imply $y > 0$ in the postcondition.
Rule \irref{qear} again applies here since both of these are properties of real arithmetic.
The differential ghost \m{\D{z}=\frac{\cofterm}{2}z} is specifically constructed so that $yz^2=1$ can be proved invariant along the differential equation.

Axiom \irref{DBX} derives using the derived axiom \irref{band}, the equivalence \m{\etermA=0 \lbisubjunct \etermA\geq0 \land -\etermA \geq 0} by \irref{qear}, and the equality \m{\der{-\etermA}=-\der{\etermA}} provable by \irref{Dder}.
The ODE is abbreviated with \(\alpha_x \mnodefequiv {\pevolvein{\D{x}=\genDE{x}}{\ivr}}\):
{\footnotesizeoff%
\begin{sequentdeduction}[array]
\linfer[MbW+qear]
{\linfer[band]
  {\linfer[DBXineq+andl+andr]
    {\lclose}
    {\lsequent{\dbox{\alpha_x}{\der{\etermA}\geq\cofterm\etermA}\land\dbox{\alpha_x}{\der{-\etermA}\geq\cofterm(-\etermA)}, \etermA\geq0 \land -\etermA\geq0} {\dbox{\alpha_x}{\etermA\geq0}\land\dbox{\alpha_x}{{-}\etermA\geq0}}}
  }%
  {\lsequent{\dbox{\alpha_x}{(\der{\etermA}\geq\cofterm\etermA\land\der{-\etermA}\geq\cofterm(-\etermA))}, \etermA\geq0\land-\etermA\geq0} {\dbox{\alpha_x}{(\etermA\geq0\land-\etermA\geq0)}}}
}%
{\lsequent{\dbox{\pevolvein{\D{x}=\genDE{x}}{\ivr}}{\der{\etermA}=\cofterm\etermA}, \etermA=0} {\dbox{\pevolvein{\D{x}=\genDE{x}}{\ivr}}{\etermA=0}}}
\\[-\normalbaselineskip]\tag*{\qedhere}
\end{sequentdeduction}
}%
\end{proof}

\begin{figure}[tb]
\centering
\subfloat[Gr\"onwall's lemma lower bounds ${\lied[]{\genDE{x}}{\etermA}} \geq \cofterm\etermA$]{
\includegraphics[width=.4\textwidth]{graphics-long/fig-darboux-gronwall.pdf}
\label{subfig:gronwall}
}
\qquad
\subfloat[Differential ghost $\D{y}=-\cofterm y$ for $\D{\etermA}=\cofterm(t)\etermA$]{
\includegraphics[width=.4\textwidth]{graphics-long/fig-darboux-counterweight1.pdf}
\label{subfig:counterweight1}
}
\vskip\baselineskip
\subfloat[Differential ghost $\D{y}=-\cofterm y$ for ${\lied[]{\genDE{x}}{\etermA}} \geq \cofterm\etermA$]{
\includegraphics[width=.4\textwidth]{graphics-long/fig-darboux-counterweight2.pdf}
\label{subfig:counterweight2}
}
\qquad
\subfloat[Differential ghost $\D{z}=\frac{\cofterm}{2}z$ for $\D{y}=-\cofterm y$]{
\includegraphics[width=.4\textwidth]{graphics-long/fig-darboux-counterweight3.pdf}
\label{subfig:counterweight3}
}

\caption{
The horizontal axis tracks the evolution of time $t$ along solutions.
Dashed lines indicate steps based on semantical arguments while solid lines indicate constructions used in the syntactical proof of~\rref{lem:Darboux}.
(a) Solutions of \({\lied[]{\genDE{x}}{\etermA}} \geq \cofterm\etermA\) (solid blue) are bounded below by those of the non-autonomous linear differential equation $\D{\etermA} = \cofterm(t)\etermA$ (dashed blue) by Gr\"onwall's lemma.
(b) The differential ghost \(\D{y}=-\cofterm y\) (solid green) balances out \(\D{\etermA} = \cofterm(t)\etermA\) so that the value of $\etermA y$ (dashed red) remains constant at $1$.
(c) The same ghost \(\D{y}=-\cofterm y\) also balances out \({\lied[]{\genDE{x}}{\etermA}} \geq \cofterm\etermA\), where the value of $\etermA y$ (solid red) remains non-negative but not necessarily constant.
(d) A second differential ghost \(\D{z}=\frac{\cofterm}{2}z\) (solid black) balances out \(\D{y}=-\cofterm y\) so that the value of $yz^2$ (solid red) remains constant at $1$.
The constant $1$ in the RHS of $\etermA y = 1$ and $yz^2 = 1$ in (b) and (d) respectively is chosen for simplicity.
Any positive constant suffices with appropriate initial values of the differential ghosts.
}
\label{fig:gronwallcounterweight}
\end{figure}

The first two syntactic derivation steps in the derivation of~\irref{DBXineq} do not appear to have changed the sequent much, but they correspond to a significant geometric transformation of the problem, as illustrated in~\rref{subfig:counterweight1} and~\rref{subfig:counterweight2}.
In the system extended with differential ghost $y$, there is now a \emph{new} invariant $\etermA y \cmp 0$ which can be observed along solutions!
While the value of $\etermA$ decays (dangerously) towards $0$, the chosen differential equation $\D{y}=-\cofterm y$ yields an (integral) value for $y$ that counteracts this change, ensuring that their product still always stays non-negative along all solutions.
In fact, the value of $\etermA y$ even remains constant when the extended term $\etermA$ satisfies the equational identity $\lied[]{\genDE{x}}{\etermA} = \cofterm\etermA$.
The second differential ghost \m{\D{z}=\frac{\cofterm}{2}z} in the proof is similarly constructed so that $yz^2=1$ can be proved invariant \emph{along the differential equation}.
The geometric transformation from this second syntactic differential ghost is illustrated in~\rref{subfig:counterweight3}.
Since the first differential ghost $y$ satisfies a differential \emph{equation}, the second ghost $z$ exactly balances it out with the value of $yz^2$ remaining (provably) constant and positive at 1 along solutions (similarly to~\rref{subfig:counterweight1}).

The derivation of~\irref{DBXineq} illustrates how the ODE axioms of \dL (\irref{DI+DC+DG}) complement each other in proofs of ODE invariance.
For brevity, the same derivation is used for both $\geq$ and $>$ cases of \irref{DBXineq} even though the latter only needs one ghost (using $\D{y}=-\frac{\cofterm}{2}y$ and invariant $\etermA y^2 > 0$ instead).
Axiom \irref{DBX} also derives directly (similarly to~\irref{DBXineq}, using the invariant $\etermA y = 0$ instead) using just two differential ghosts rather than the four incurred with \irref{band}.

\begin{corollary}[Darboux (in)equality rules] \label{cor:Darboux}
The Darboux equality \irref{dbx} and Darboux inequality \irref{dbxineq} proof rules derive from \irref{DG} (and \irref{DI+DC}) for any extended cofactor term $\cofterm$.
\[
\dinferenceRule[dbxagain|dbx]{Darboux}
{\linferenceRule
  {\lsequent{\ivr} {\lied[]{\genDE{x}}{\etermA} = \cofterm\etermA}}
  {\lsequent{\etermA=0} {\dbox{\pevolvein{\D{x}=\genDE{x}}{\ivr}}{\etermA=0}}}
}{}
\qquad
\dinferenceRule[dbxineq|dbx${_\cmp}$]{Darboux inequality}
{\linferenceRule
  {\lsequent{\ivr} {\lied[]{\genDE{x}}{\etermA}\geq \cofterm\etermA}}
  {\lsequent{\etermA\cmp0} {\dbox{\pevolvein{\D{x}=\genDE{x}}{\ivr}}{\etermA\cmp0}}}
}{\text{$\cmp$ is either $\geq$ or $>$}}
\]
\end{corollary}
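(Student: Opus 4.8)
The plan is to derive the \emph{rules} \irref{dbxagain} and \irref{dbxineq} from the \emph{axioms} \irref{DBX} and \irref{DBXineq} that were already obtained in \rref{lem:Darboux}, by bridging the only mismatch: the axioms mention the differential $\der{\etermA}$, whereas the rule premises mention the Lie derivative $\lied[]{\genDE{x}}{\etermA}$, and these provably coincide \emph{along the ODE}.

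Concretely, for \irref{dbxineq} I would start from the rule's conclusion $\lsequent{\etermA\cmp0}{\dbox{\pevolvein{\D{x}=\genDE{x}}{\ivr}}{\etermA\cmp0}}$ and apply the derived implicational axiom \irref{DBXineq}; since $\etermA\cmp0$ already sits in the antecedent, it remains to prove the premise $\dbox{\pevolvein{\D{x}=\genDE{x}}{\ivr}}{\der{\etermA}\geq\cofterm\etermA}$ of \irref{DBXineq}. Under this box modality I would convert the differential into a Lie derivative exactly as in \rref{ex:syndifferentation}: a \irref{DE} step exposes the assignment on the differential variables $\D{x}$, exhaustive use of the differential axioms \irref{Dder} (available for the extended term language by \rref{lem:noefdifferentials}, which uses extended term condition \rref{itm:reqpartial}) rewrites $\der{\etermA}$ into an expression in the $\D{x}$, and \irref{assignb} substitutes the right-hand sides $\genDE{x}$ for them; as $\etermA$ and the cofactor $\cofterm$ are differential-free by the notational convention, the postcondition becomes $\lied[]{\genDE{x}}{\etermA}\geq\cofterm\etermA$. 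Finally, differential weakening \irref{dW} reduces this box formula to the sequent $\lsequent{\ivr}{\lied[]{\genDE{x}}{\etermA}\geq\cofterm\etermA}$, which is precisely the premise of \irref{dbxineq} (any remaining antecedents, including $\etermA\cmp0$, are discarded by \irref{dW}). The derivation of \irref{dbxagain} is identical with \irref{DBX} in place of \irref{DBXineq} and $=$ in place of $\geq$, reducing to the premise $\lsequent{\ivr}{\lied[]{\genDE{x}}{\etermA}=\cofterm\etermA}$.

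I do not expect a substantial obstacle here: the conceptual work (the differential ghost constructions and the internalization of Gr\"onwall's lemma) was done in \rref{lem:Darboux}, and the one nontrivial ingredient of this corollary — the provable coincidence of $\der{\etermA}$ with $\lied[]{\genDE{x}}{\etermA}$ under the ODE — is already established in \rref{subsec:background-differentials} and illustrated in \rref{ex:syndifferentation}. What remains is routine bookkeeping with \irref{DE}, \irref{Dder}, \irref{assignb}, and \irref{dW}. An equivalent alternative would be to inline the proof of \rref{lem:Darboux} and perform the $\der{}$-to-Lie-derivative rewriting directly within that derivation, obtaining the rules without explicitly citing the intermediate axioms; either way the derivation uses only \irref{DG} together with \irref{DI+DC} and the propositional and first-order steps already used for \rref{lem:Darboux}.
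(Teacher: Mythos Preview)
Your proposal is correct and matches the paper's own proof essentially step for step: apply \irref{DBX} (resp.\ \irref{DBXineq}) to reduce to proving $\dbox{\pevolvein{\D{x}=\genDE{x}}{\ivr}}{\der{\etermA}=\cofterm\etermA}$ (resp.\ $\geq$), use \irref{DE}, \irref{Dder}, \irref{assignb} to rewrite $\der{\etermA}$ to $\lied[]{\genDE{x}}{\etermA}$ under the box, and finish with \irref{dW}.
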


\begin{proof}
The \irref{dbx} proof rule derives from axiom~\irref{DBX} (and rule \irref{dbxineq} from axiom \irref{DBXineq}) using an additional~\irref{Dder+DE} step to turn the differential $\der{\etermA}$ into the Lie derivative followed by a~\irref{dW} step:

\begin{minipage}[b]{0.5\textwidth}
{\footnotesizeoff%
\begin{sequentdeduction}[array]
\linfer[DBX]{
\linfer[Dder+DE+assignb]{
\linfer[dW]{
  \lsequent{\ivr} {\lied[]{\genDE{x}}{\etermA}=\cofterm\etermA}
}
  {\lsequent{} {\dbox{\pevolvein{\D{x}=\genDE{x}}{\ivr}}{\lied[]{\genDE{x}}{\etermA}=\cofterm\etermA}}}
}
  {\lsequent{} {\dbox{\pevolvein{\D{x}=\genDE{x}}{\ivr}}{\der{\etermA}=\cofterm\etermA}}}
}
  {\lsequent{\etermA=0} {\dbox{\pevolvein{\D{x}=\genDE{x}}{\ivr}}{\etermA=0}}}
\end{sequentdeduction}
}%
\end{minipage}%
\hfill%
\begin{minipage}[b]{0.5\textwidth}
{\footnotesizeoff%
\begin{sequentdeduction}[array]
\linfer[DBXineq]{
\linfer[Dder+DE+assignb]{
\linfer[dW]{
  \lsequent{\ivr} {\lied[]{\genDE{x}}{\etermA}\geq\cofterm\etermA}
}
  {\lsequent{} {\dbox{\pevolvein{\D{x}=\genDE{x}}{\ivr}}{\lied[]{\genDE{x}}{\etermA}\geq\cofterm\etermA}}}
}
  {\lsequent{} {\dbox{\pevolvein{\D{x}=\genDE{x}}{\ivr}}{\der{\etermA}\geq\cofterm\etermA}}}
}
  {\lsequent{\etermA\cmp0} {\dbox{\pevolvein{\D{x}=\genDE{x}}{\ivr}}{\etermA\cmp0}}}
\\[-\normalbaselineskip]\tag*{\qedhere}
\end{sequentdeduction}
}%
\end{minipage}%
\end{proof}

Axioms~\irref{DBX+DBXineq} yield particularly efficient proofs within \dL's uniform substitution calculus~\cite{DBLP:journals/jar/Platzer17}.
They derive once-and-for-all, independently of the ODE $\D{x}=\genDE{x}$.
Subsequently substituting~\cite{DBLP:journals/jar/Platzer17} for specific ODE instances means that only the final Lie derivative calculation steps~\irref{DE+Dder+assignb} are needed for each concrete derived instance of~\irref{dbx+dbxineq}.

The following example shows a concrete proof utilizing the newly derived proof rules.

\begin{example}[Proving ODE properties in \dL]
\label{ex:continuousproperties}
Judging by the plot (\rref{fig:exampleODE}) of the ODE $\alpha_e$ from~\rref{eq:example-ode}, trajectories from within the open (or closed) disk stay trapped within the disk.
Rather than relying (informally) on a potentially incorrect plot though, this fact can be shown formally by proving that $\etermA \cmp 0$, with $\etermA \mnodefeq 1-u^2-v^2$, is an invariant of $\alpha_e$.
By calculating the Lie derivative:
\[ \lie[]{\alpha_e}{\etermA} = -2u(-v + \frac{u}{4} (1 - u^2 - v^2)) -2v(u + \frac{v}{4} (1 - u^2 - v^2)) = -\frac{1}{2}(u^2+v^2) \etermA \]
Thus, $\etermA$ satisfies the (polynomial) inequality $\lied[]{\alpha_e}{\etermA} \geq \cofterm \etermA$ with polynomial cofactor $\cofterm\mnodefeq-\frac{1}{2}(u^2+v^2)$.
The following derivation with \irref{dbxineq} proves invariance of $1-u^2-v^2 \cmp 0$:
{\footnotesizeoff%
\begin{sequentdeduction}[array]
\linfer[dbxineq]{
  \linfer[qearpoly]{\lclose}{\lsequent{} {\lie[]{\alpha_e}{1 - u^2 - v^2} \geq -\frac{1}{2}(u^2+v^2) (1 - u^2 - v^2)}}
}
  {\lsequent{1 - u^2 - v^2 \cmp 0} {\dbox{\alpha_e}{1 - u^2 - v^2 \cmp 0}}}
\end{sequentdeduction}
}%

The term $\etermA$ obeys the special case $\lied[]{\alpha_e}{\etermA} = \cofterm \etermA$ (\rref{subfig:counterweight1}) in which the seemingly innocuous syntactic introduction of a differential ghost $\D{y}=-\cofterm y$ even \emph{exactly} balances out the complicated (decaying) evolution of $\etermA$ geometrically.
Indeed, in this case, $e = 0$ can also be proved invariant for the ODE $\alpha_e$ using rule \irref{dbx}.
This proves the observation from~\rref{fig:exampleODE} that the unit circle is also invariant for $\alpha_e$.
\end{example}

The derivations of axioms~\irref{DBX+DBXineq} give constructive choices of differential ghosts when the invariant is a Darboux (in)equality.
The derived rule~\irref{dbxineq} exceeds the deductive power of \irref{DI+DC} because the formula \(y > 0 \limply \dbox{\pevolve{\D{y}=-y}}{y > 0}\) is easily provable by~\irref{dbxineq} using the Darboux equality \(\lied[]{\genDE{x}}{y}=-y\), but is \emph{not} provable with~\irref{DI+DC} alone~\cite{DBLP:journals/lmcs/Platzer12}.
The next section builds on these constructions, showing that the deductive power afforded by axiom \irref{DG} extends to \emph{all} true analytic invariants.

\section{Analytic Invariants}
\label{sec:analyticinvs}

Analytic formulas are formed from finite conjunctions and disjunctions of equalities, but, over $\reals$, can be normalized to a single equality $\etermA=0$ using the provable real arithmetic equivalences:
\(\etermA=0 \land \etermB =0 \lbisubjunct \etermA^2 + \etermB^2 = 0\) and \(\etermA=0 \lor \etermB=0 \lbisubjunct \etermA\etermB = 0\).
Thus, it suffices to restrict attention to equational formulas \(\etermA=0\) when proving completeness for analytic invariants.

The key to completeness is the differential radical identity~\rref{eq:differential-rank} for $\etermA$ with arbitrary rank $N \geq 1$, which analyzes \emph{all} higher Lie derivatives simultaneously.
Suppose that extended term $\etermA$ satisfies identity~\rref{eq:differential-rank} with rank $N$ and some cofactors $\cofterm_i$.
Taking Lie derivatives on both sides of~\rref{eq:differential-rank} yields:
\begin{align*}
\lied[N+1]{\genDE{x}}{\etermA} =&~\lie[]{\genDE{x}}{\lied[N]{\genDE{x}}{\etermA}} = \lie[]{\genDE{x}}{\sum_{i=0}^{N-1} \cofterm_i \lied[i]{\genDE{x}}{\etermA}} = \sum_{i=0}^{N-1}  \lie[]{\genDE{x}}{\cofterm_i \lied[i]{\genDE{x}}{\etermA}}
=~\sum_{i=0}^{N-1}  \left(\lied[]{\genDE{x}}{\cofterm_i} \lied[i]{\genDE{x}}{\etermA} + \cofterm_i\lied[i+1]{\genDE{x}}{\etermA}\right) \\
=&\sum_{i=0}^{N-1}  \left(\lied[]{\genDE{x}}{\cofterm_i} \lied[i]{\genDE{x}}{\etermA}\right) + \sum_{i=0}^{N-2}   \left( \cofterm_i\lied[i+1]{\genDE{x}}{\etermA}\right) + \cofterm_{N-1}\lied[N]{\genDE{x}}{\etermA}
=\sum_{i=0}^{N-1}  \left(\lied[]{\genDE{x}}{\cofterm_i} \lied[i]{\genDE{x}}{\etermA}\right) + \sum_{i=0}^{N-2}   \left( \cofterm_i\lied[i+1]{\genDE{x}}{\etermA}\right) + \cofterm_{N-1}\left(\sum_{i=0}^{N-1} \cofterm_i \lied[i]{\genDE{x}}{\etermA}\right)
\end{align*}

The last step follows using~\rref{eq:differential-rank} to expand $\lied[N]{\genDE{x}}{\etermA}$.
Observe that the resulting expression for $\lied[N+1]{\genDE{x}}{\etermA}$ is again a sum over the lower Lie derivatives $\lied[i]{\genDE{x}}{\etermA}$ for $i = 0,\dots,N-1$ multiplied by appropriate cofactors.
By repeatedly taking Lie derivatives on both sides, the higher Lie derivatives $\lied[N]{\genDE{x}}{\etermA}, \lied[N+1]{\genDE{x}}{\etermA}, \dots$ can all be written as sums over these lower Lie derivatives with appropriate cofactors.
Thus, \emph{in the real analytic setting}, for initial states $\iget[state]{\I}$ where $\ivaluation{\I}{\etermA},\ivaluation{\I}{\lied[]{\genDE{x}}{\etermA}},\dots, \ivaluation{\I}{\lied[N-1]{\genDE{x}}{\etermA}}$ all simultaneously evaluate to $0$, $\etermA = 0$ (\emph{and similarly for all its higher Lie derivatives}) stays invariant along solutions to the ODE.

This suggests that rule~\irref{dbx} should be generalized by considering higher Lie derivatives.
The canonical technique for generalizing to higher derivatives comes from the study of ODEs.
All (explicit form) ordinary differential equations involving higher derivatives can be transformed into vectorial systems of differential equations involving only first derivatives but possibly over a vector of variables~\cite[\S11.I]{Walter1998}.
This transformation can be done syntactically and is precisely the idea used to derive the (complete) proof rule for analytic invariants by reduction to a suitable vectorial generalization of rule~\irref{dbx}.
This crucial vectorial generalization is derived first.

\subsection{Vectorial Darboux Equalities}
\label{subsec:vecdarbouxeq}

Suppose that the $m$-dimensional vector of extended terms $\vecpolyn{\etermA}{x}$ satisfies the vectorial identity \(\lied[]{\genDE{x}}{\vecpolyn{\etermA}{x}}=\matpolyn{\coftermC}{x}\itimes\vecpolyn{\etermA}{x}\), where $\matpolyn{\coftermC}{x}$ is an $m \times m$ matrix of extended terms and $\lied[]{\genDE{x}}{\vecpolyn{\etermA}{x}}$ denotes component-wise Lie derivatives of vector $\vecpolyn{\etermA}{x}$ along \(\D{x}=\genDE{x}\), just like $\der{\vecpolyn{\etermA}{x}}$ denotes component-wise differentials.
If all components of $\vecpolyn{\etermA}{x}$ evaluate to $0$ in an initial state, then they all always stay at $0$ along \m{\pevolve{\D{x}=\genDE{x}}} because their component-wise Lie derivatives \emph{all} evaluate to $0$ in that initial state.

\begin{lemma}[Vectorial Darboux equalities are diffrential ghosts] \label{lem:vdbxscalar}
The vectorial Darboux axiom~\irref{VDBX} derives from \irref{DG} (and \irref{DI+DC}), where $\coftermC$ is an $m \times m$ cofactor matrix of extended terms and $\vecpolyn{\etermA}{x}$ is an $m$-dimensional vector of extended terms.
\[
\dinferenceRule[VDBX|VDBX]{vectorial Darboux axiom}
{\linferenceRule[impl]
  {\dbox{\pevolvein{\D{x}=\genDE{x}}{\ivr}}{\der{\vecpolyn{\etermA}{x}}=\matpolyn{\coftermC}{x}\itimes\vecpolyn{\etermA}{x}}}
  {(\vecpolyn{\etermA}{x}=0 \limply \axkey{\dbox{\pevolvein{\D{x}=\genDE{x}}{\ivr}}{\vecpolyn{\etermA}{x}=0}})}
}{}
\]
\end{lemma}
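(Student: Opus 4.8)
The plan is to collapse the $m$-dimensional invariant $\vecpolyn{\etermA}{x}=0$ into a single \emph{scalar} Darboux inequality, so that the derivation needs only the scalar differential ghosts already packaged inside the Darboux inequality axiom \irref{DBXineq} of \rref{lem:Darboux} --- this is exactly what delivers the constant number of scalar ghosts instead of the quadratic number of vectorial ghosts of the conference version. Write $\alpha_x$ for $\pevolvein{\D{x}=\genDE{x}}{\ivr}$ and put $\normeuc{\vecpolyn{\etermA}{x}}^2 \mdefeq \sum_{i=1}^{m}\vecpolyn{\etermA}{x}_i^2$. Since $\vecpolyn{\etermA}{x}=0$ is understood componentwise, the equivalences $\vecpolyn{\etermA}{x}=0 \lbisubjunct \normeuc{\vecpolyn{\etermA}{x}}^2=0 \lbisubjunct -\normeuc{\vecpolyn{\etermA}{x}}^2 \geq 0$ are provable by \irref{qear}. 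So, framing the derivation with a \irref{cut} and \irref{qear} on the antecedent and a monotonicity \irref{MbW} with \irref{qear} on the succedent, it suffices to derive, from the premise $\dbox{\alpha_x}{\der{\vecpolyn{\etermA}{x}}=\matpolyn{\coftermC}{x}\itimes\vecpolyn{\etermA}{x}}$, the scalar invariance $-\normeuc{\vecpolyn{\etermA}{x}}^2 \geq 0 \limply \dbox{\alpha_x}{-\normeuc{\vecpolyn{\etermA}{x}}^2 \geq 0}$, which is exactly the shape of \irref{DBXineq} once the right cofactor is supplied.

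Finding that cofactor is the crux. The differential axioms rewrite $\der{-\normeuc{\vecpolyn{\etermA}{x}}^2}$ to $-\sum_{i=1}^{m}2\vecpolyn{\etermA}{x}_i\der{\vecpolyn{\etermA}{x}_i}$, and the premise's componentwise equation $\der{\vecpolyn{\etermA}{x}_i}=\sum_{j=1}^{m}\coftermC_{ij}\vecpolyn{\etermA}{x}_j$ turns this into $-2\sum_{i,j}\coftermC_{ij}\vecpolyn{\etermA}{x}_i\vecpolyn{\etermA}{x}_j$. The decisive observation --- a purely arithmetic fact provable by \irref{qear} --- is the sum-of-squares bound obtained by summing $2\coftermC_{ij}\vecpolyn{\etermA}{x}_i\vecpolyn{\etermA}{x}_j \leq \coftermC_{ij}^2\vecpolyn{\etermA}{x}_i^2 + \vecpolyn{\etermA}{x}_j^2$ over all $i,j$, namely $2\sum_{i,j}\coftermC_{ij}\vecpolyn{\etermA}{x}_i\vecpolyn{\etermA}{x}_j \leq \big(\sum_{i,j}\coftermC_{ij}^2 + m\big)\normeuc{\vecpolyn{\etermA}{x}}^2$. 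Crucially, $\cofterm \mdefeq \sum_{i,j=1}^{m}\coftermC_{ij}^2 + m$ is itself an \emph{extended term}, since it is built only from the extended-term entries of the cofactor matrix $\matpolyn{\coftermC}{x}$ and a rational constant, so it is a legitimate cofactor for \irref{DBXineq}, and it depends only on $\matpolyn{\coftermC}{x}$ (not on $\vecpolyn{\etermA}{x}$ or the ODE). Consequently $\dbox{\alpha_x}{\der{-\normeuc{\vecpolyn{\etermA}{x}}^2} \geq \cofterm\,(-\normeuc{\vecpolyn{\etermA}{x}}^2)}$ follows from the premise by monotonicity (\irref{MbW} or \irref{K}) together with \irref{Dder} and \irref{qear}.

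Assembled backwards, the derivation is: reduce the goal to the scalar invariance as above; apply \irref{DBXineq} with the extended term $-\normeuc{\vecpolyn{\etermA}{x}}^2$, cofactor $\cofterm$, and $\cmp$ instantiated to $\geq$, which leaves the obligation $\dbox{\alpha_x}{\der{-\normeuc{\vecpolyn{\etermA}{x}}^2} \geq \cofterm\,(-\normeuc{\vecpolyn{\etermA}{x}}^2)}$; and discharge that obligation from the premise by the monotonicity step of the previous paragraph. Since \irref{DBXineq} itself derives from \irref{DG} (and \irref{DI+DC}) using only scalar ghost variables (\rref{lem:Darboux}), so does \irref{VDBX}. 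I expect the only genuinely nonroutine step to be the choice of scalar potential: recognizing that the squared Euclidean norm $\normeuc{\vecpolyn{\etermA}{x}}^2$ works and that its Lie derivative, the quadratic form $2\sum_{i,j}\coftermC_{ij}\vecpolyn{\etermA}{x}_i\vecpolyn{\etermA}{x}_j$, can be bounded above by $\cofterm\normeuc{\vecpolyn{\etermA}{x}}^2$ with an \emph{extended-term} cofactor --- so that no symmetrization of $\matpolyn{\coftermC}{x}$ or appeal to its eigenvalues, which need not be extended terms, is required. Everything else is routine differential-axiom rewriting and decidable real arithmetic.
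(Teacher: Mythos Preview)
Your proposal is correct and follows essentially the same approach as the paper: reduce the vectorial equality $\vecpolyn{\etermA}{x}=0$ to the scalar inequality $-\normeuc{\vecpolyn{\etermA}{x}}^2 \geq 0$ via \irref{qear}, then apply \irref{DBXineq} with a Frobenius-norm-based cofactor. The only difference is cosmetic: the paper chooses $\cofterm = \normfrob{\coftermC}^2 + 1$ and justifies the bound via Cauchy--Schwarz and compatibility of the Euclidean and Frobenius norms, whereas your $\cofterm = \normfrob{\coftermC}^2 + m$ comes from the elementary per-entry AM--GM bound (note that summing $\coftermC_{ij}^2\vecpolyn{\etermA}{x}_i^2$ gives $\sum_i \vecpolyn{\etermA}{x}_i^2\sum_j \coftermC_{ij}^2$, which is only $\leq \normfrob{\coftermC}^2\normeuc{\vecpolyn{\etermA}{x}}^2$ rather than equal, but this extra step is still first-order real arithmetic and hence covered by \irref{qear}).
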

\begin{proof}
First, observe that the formula $\vecpolyn{\etermA}{x}=0$ is provably equivalent in real arithmetic to the formula ${-\normeuc{\vecpolyn{\etermA}{x}}^2} \geq 0$, where the term $\normeuc{\vecpolyn{\etermA}{x}}^2 \mdefeq \sum_{i=1}^m \vecpolyn{\etermA}{x}_i^2$ is the \emph{squared} Euclidean norm of vector $\vecpolyn{\etermA}{x}$.
The derivation starts with~\irref{MbW}, \irref{cut} and~\irref{qear} to rephrase $\vecpolyn{\etermA}{x}=0$ using this equivalence:
{\footnotesizeoff\renewcommand*{\arraystretch}{1.3}%
\begin{sequentdeduction}[array]
  \linfer[MbW+cut+qear]
  {
  \linfer[DBXineq]
  {
  \linfer[MbW]
  {
  \linfer[Dder+qear]
  {
    \lclose
  }
  {\lsequent{\der{\vecpolyn{\etermA}{x}}=\matpolyn{\coftermC}{x}\itimes\vecpolyn{\etermA}{x}}
{\der{-\normeuc{\vecpolyn{\etermA}{x}}^2} \geq g\itimes(-\normeuc{\vecpolyn{\etermA}{x}}^2)}}
  }
    {\lsequent{\dbox{\pevolvein{\D{x}=\genDE{x}}{\ivr}}{\der{\vecpolyn{\etermA}{x}}=\matpolyn{\coftermC}{x}\itimes\vecpolyn{\etermA}{x}}}{\dbox{\pevolvein{\D{x}=\genDE{x}}{\ivr}}{\der{-\normeuc{\vecpolyn{\etermA}{x}}^2} \geq g\itimes(-\normeuc{\vecpolyn{\etermA}{x}}^2)}}}
  }%
  {\lsequent{\dbox{\pevolvein{\D{x}=\genDE{x}}{\ivr}}{\der{\vecpolyn{\etermA}{x}}=\matpolyn{\coftermC}{x}\itimes\vecpolyn{\etermA}{x}},~ {-\normeuc{\vecpolyn{\etermA}{x}}^2} \geq 0} {\dbox{\pevolvein{\D{x}=\genDE{x}}{\ivr}}{{-\normeuc{\vecpolyn{\etermA}{x}}^2} \geq 0}}}
  }
  {\lsequent{\dbox{\pevolvein{\D{x}=\genDE{x}}{\ivr}}{\der{\vecpolyn{\etermA}{x}}=\matpolyn{\coftermC}{x}\itimes\vecpolyn{\etermA}{x}},~ \vecpolyn{\etermA}{x}=0} {\dbox{\pevolvein{\D{x}=\genDE{x}}{\ivr}}{\vecpolyn{\etermA}{x}=0}}}
\end{sequentdeduction}
}%

Thanks to this rephrasing, the sequent no longer contains vectorial quantities and the derivation is completed using a (scalar) \irref{DBXineq} step with the extended term cofactor $\cofterm \mnodefeq \normfrob{\coftermC}^2+1$, where the term $\normfrob{\coftermC}^2 \mdefeq \sum_{i=1}^m\sum_{j=1}^m \coftermC_{ij}^2$ is the squared Frobenius norm of matrix $\coftermC$.
All that remains is to justify the final \irref{Dder+qear} step after \irref{MbW} by showing that the following arithmetic formula is provable:
\begin{equation}
\der{\vecpolyn{\etermA}{x}}=\matpolyn{\coftermC}{x}\itimes\vecpolyn{\etermA}{x}
\,\limply\,
\der{-\normeuc{\vecpolyn{\etermA}{x}}^2} \geq g\itimes(-\normeuc{\vecpolyn{\etermA}{x}}^2)
\label{eq:vdbximp}
\end{equation}

The differential \(\der{-\normeuc{\vecpolyn{\etermA}{x}}^2}\) is calculated (and proved via \irref{Dder} from \rref{subsec:background-differentials}) as follows, where $\vec{u} \stimes \vec{v}$ denotes the dot product of vectors $\vec{u},\vec{v}$.
The last step uses \(\der{\vecpolyn{\etermA}{x}}=\matpolyn{\coftermC}{x}\itimes\vecpolyn{\etermA}{x}\):
\begin{align*}
\der{-\normeuc{\vecpolyn{\etermA}{x}}^2}
&= -\der{ \sum_{i=1}^m \vecpolyn{\etermA}{x}_i^2}
= -2\sum_{i=1}^m \vecpolyn{\etermA}{x}_i \der{\vecpolyn{\etermA}{x}_i}
= -2 (\vecpolyn{\etermA}{x} \stimes \der{\vecpolyn{\etermA}{x}})
= -2 (\vecpolyn{\etermA}{x} \stimes (\matpolyn{\coftermC}{x}\itimes\vecpolyn{\etermA}{x}))
\end{align*}

Thus, it suffices to prove the validity of \(-2(\vecpolyn{\etermA}{x} \stimes (\matpolyn{\coftermC}{x}\itimes\vecpolyn{\etermA}{x})) \geq \cofterm \itimes (-\normeuc{\vecpolyn{\etermA}{x}}^2)\), i.e., its truth in all states $\iget[state]{\I}$.
Validity is first shown semantically.
For ease of notation, let $\ivaluation{\I}{\vecpolyn{\etermA}{x}},\ivaluation{\I}{\matpolyn{\coftermC}{x}}$ stand for the respective real vector and matrix values of $\vecpolyn{\etermA}{x}$ and $\matpolyn{\coftermC}{x}$ evaluated component-wise in state $\iget[state]{\I}$.
The notation $\normeuc{\cdot},\normfrob{\cdot}$ denotes the (real-valued) Euclidean and Frobenius norms for vectors and matrices respectively.

By the Cauchy-Schwarz inequality~\cite[\S28.I]{Walter1998}, the dot product between vectors $\ivaluation{\I}{\vecpolyn{\etermA}{x}}$ and $\ivaluation{\I}{\matpolyn{\coftermC}{x}}\itimes\ivaluation{\I}{\vecpolyn{\etermA}{x}}$ is bounded by the product of their norms:
\[ \ivaluation{\I}{\vecpolyn{\etermA}{x}} \stimes (\ivaluation{\I}{\matpolyn{\coftermC}{x}}\itimes\ivaluation{\I}{\vecpolyn{\etermA}{x}}) \leq \normeuc{\ivaluation{\I}{\vecpolyn{\etermA}{x}}} \, \normeuc{\ivaluation{\I}{\matpolyn{\coftermC}{x}}\itimes\ivaluation{\I}{\vecpolyn{\etermA}{x}}} \]
The norm $\normeuc{\ivaluation{\I}{\matpolyn{\coftermC}{x}}\itimes\ivaluation{\I}{\vecpolyn{\etermA}{x}}}$ of this matrix-vector product is bounded by the product of their matrix and vector norms because the Euclidean and Frobenius norms are compatible~\cite[\S14.II]{Walter1998}:
\[ \normeuc{\ivaluation{\I}{\matpolyn{\coftermC}{x}}\itimes\ivaluation{\I}{\vecpolyn{\etermA}{x}}} \leq \normfrob{\ivaluation{\I}{\matpolyn{\coftermC}{x}}} \, \normeuc{\ivaluation{\I}{\vecpolyn{\etermA}{x}}} \]
Expanding the (square) inequality \(0 \leq (\normfrob{ \ivaluation{\I}{\matpolyn{\coftermC}{x}}}-1)^2\) yields an upper bound on the Frobenius norm $\normfrob{ \ivaluation{\I}{\matpolyn{\coftermC}{x}}} $ by its squared value:
\[ 2\normfrob{\ivaluation{\I}{\matpolyn{\coftermC}{x}}} \leq \normfrob{\ivaluation{\I}{\matpolyn{\coftermC}{x}}}^2 + 1 \]
Chaining these (in)equalities yields:
\begin{align*}
\ivaluation{\I}{-2 (\vecpolyn{\etermA}{x} \stimes (\matpolyn{\coftermC}{x}\itimes\vecpolyn{\etermA}{x}))}
&= -2 (\ivaluation{\I}{\vecpolyn{\etermA}{x}} \stimes (\ivaluation{\I}{\matpolyn{\coftermC}{x}}\itimes\ivaluation{\I}{\vecpolyn{\etermA}{x}}))
\geq -2 \normeuc{\ivaluation{\I}{\vecpolyn{\etermA}{x}}} \, \normeuc{\ivaluation{\I}{\matpolyn{\coftermC}{x}}\itimes\ivaluation{\I}{\vecpolyn{\etermA}{x}}} \\
& \geq -2 \normeuc{\ivaluation{\I}{\vecpolyn{\etermA}{x}}} \, \normfrob{\ivaluation{\I}{\matpolyn{\coftermC}{x}}} \, \normeuc{\ivaluation{\I}{\vecpolyn{\etermA}{x}}}
= -2 \normfrob{\ivaluation{\I}{\matpolyn{\coftermC}{x}}} \, \normeuc{\ivaluation{\I}{\vecpolyn{\etermA}{x}}}^2 \\
& \geq (\normfrob{\ivaluation{\I}{\matpolyn{\coftermC}{x}}}^2 {+} 1)(-\normeuc{\ivaluation{\I}{\vecpolyn{\etermA}{x}}}^2) = \ivaluation{\I}{\cofterm (-\normeuc{\vecpolyn{\etermA}{x}}^2)}
\end{align*}
where $\normfrob{\ivaluation{\I}{\matpolyn{\coftermC}{x}}}^2 + 1$ is precisely the semantic value of cofactor $\cofterm$ in state $\iget[state]{\I}$.
Since this semantic argument for the validity of implication~\rref{eq:vdbximp} only depends on first-order properties of the real closed fields, which is decidable~\cite{Bochnak1998},  formula~\rref{eq:vdbximp} is provable syntactically by \irref{Dder+qear}.
\end{proof}

\begin{corollary}[Vectorial Darboux equality rule] \label{cor:vdbxscalar}
The vectorial Darboux equality proof rule~\irref{vdbx} derives from  \irref{DG} (and \irref{DI+DC}), where $\coftermC$ is an $m \times m$ cofactor matrix of extended terms and $\vecpolyn{\etermA}{x}$ is an $m$-dimensional vector of extended terms.
\[
\dinferenceRule[vdbx|vdbx]{vectorial Darboux}
{\linferenceRule
  {\lsequent{\ivr} {\lied[]{\genDE{x}}{\vec{\etermA}}=\matpolyn{\coftermC}{x}\itimes\vecpolyn{\etermA}{x}}}
  {\lsequent{\vecpolyn{\etermA}{x}=0} {\dbox{\pevolvein{\D{x}=\genDE{x}}{\ivr}}{\vecpolyn{\etermA}{x}=0}}}
}{}
\]
\end{corollary}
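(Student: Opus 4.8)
The plan is to obtain rule~\irref{vdbx} from axiom~\irref{VDBX} of~\rref{lem:vdbxscalar} by exactly the packaging argument that turned the scalar axioms~\irref{DBX+DBXineq} into the rules~\irref{dbx+dbxineq} in~\rref{cor:Darboux}, now read component-wise over the vector $\vecpolyn{\etermA}{x}$ and the matrix $\matpolyn{\coftermC}{x}$. Since~\irref{VDBX} already derives from~\irref{DG} (and~\irref{DI+DC}) by~\rref{lem:vdbxscalar}, it suffices to exhibit a derivation of the conclusion sequent $\lsequent{\vecpolyn{\etermA}{x}=0}{\dbox{\pevolvein{\D{x}=\genDE{x}}{\ivr}}{\vecpolyn{\etermA}{x}=0}}$ of rule~\irref{vdbx} that uses~\irref{VDBX} and reduces to the premise $\lsequent{\ivr}{\lied[]{\genDE{x}}{\vec{\etermA}}=\matpolyn{\coftermC}{x}\itimes\vecpolyn{\etermA}{x}}$. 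The first step applies~\irref{VDBX}: its implicational form absorbs the antecedent $\vecpolyn{\etermA}{x}=0$ and leaves the premise obligation $\lsequent{}{\dbox{\pevolvein{\D{x}=\genDE{x}}{\ivr}}{\der{\vecpolyn{\etermA}{x}}=\matpolyn{\coftermC}{x}\itimes\vecpolyn{\etermA}{x}}}$, where differential and equality are understood component-wise.

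The remaining two steps convert the vector of differentials into the vector of Lie derivatives and then discard the ODE. A~\irref{DE} step makes the assignments $\Dupdate{\Dumod{\D{x}}{\genDE{x}}}$ available in the postcondition; an exhaustive~\irref{Dder} step (using the differential axioms of~\rref{thm:diffaxioms} together with the extended differential axioms~\irref{Dnoethergen} of~\rref{lem:noefdifferentials} for whatever fixed function symbols occur in $\vecpolyn{\etermA}{x}$) rewrites each component $\der{\vecpolyn{\etermA}{x}_i}$ into a differential-free term in the differential variables $\D{x_j}$; and~\irref{assignb} substitutes each $\D{x_j}$ by $\odeterm_j(x)$, producing $\sum_j \Dp[x_j]{\vecpolyn{\etermA}{x}_i}\,\odeterm_j(x)=\lied[]{\genDE{x}}{\vecpolyn{\etermA}{x}_i}$ in every component, so that the postcondition becomes the component-wise identity $\lied[]{\genDE{x}}{\vec{\etermA}}=\matpolyn{\coftermC}{x}\itimes\vecpolyn{\etermA}{x}$. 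A final~\irref{dW} step drops the ODE and leaves exactly the premise $\lsequent{\ivr}{\lied[]{\genDE{x}}{\vec{\etermA}}=\matpolyn{\coftermC}{x}\itimes\vecpolyn{\etermA}{x}}$ of rule~\irref{vdbx}, completing the derivation. This is the same \irref{VDBX}, then \irref{Dder+DE+assignb}, then \irref{dW} skeleton as in~\rref{cor:Darboux}, just applied to every component.

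I do not expect any genuine obstacle here: the corollary merely repackages~\rref{lem:vdbxscalar} into rule form, and all the real content — collapsing the vectorial identity to a single scalar~\irref{DBXineq} application via the Cauchy--Schwarz and Frobenius-norm estimate — was already discharged in the lemma. The only points worth flagging in the write-up are that the~\irref{Dder} step relies on condition~\rref{itm:reqpartial} (syntactic partial derivatives) so that the component-wise differentials of the extended terms in $\vecpolyn{\etermA}{x}$ are fully eliminable into differential-free Lie derivatives, and that $\matpolyn{\coftermC}{x}$ and $\vecpolyn{\etermA}{x}$ may themselves mention fixed function symbols, which is harmless since those appear only inside terms.
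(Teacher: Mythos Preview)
Your proposal is correct and follows exactly the paper's approach: derive rule~\irref{vdbx} from axiom~\irref{VDBX} using~\irref{Dder+DE+assignb} to transform $\der{\vecpolyn{\etermA}{x}}$ into $\lied[]{\genDE{x}}{\vec{\etermA}}$ component-wise, then~\irref{dW}, precisely mirroring how~\irref{dbx} was obtained from~\irref{DBX} in~\rref{cor:Darboux}. The paper's own proof is a single sentence to this effect, so you have simply unpacked the same argument in more detail.
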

\begin{proof}
Rule \irref{vdbx} derives from derived axiom \irref{VDBX} using~\irref{Dder+DE+assignb} to provably transform between $\der{\vecpolyn{\etermA}{x}}$ and $\lied[]{\genDE{x}}{\vec{\etermA}}$, just like rule \irref{dbx} derives from derived axiom \irref{DBX} in \rref{cor:Darboux}.
\end{proof}

The use of axiom~\irref{DBXineq} in the derivation of axiom~\irref{VDBX} corresponds to Gr\"onwall's lemma~\cite[\S29.VI]{Walter1998,DBLP:journals/mathann/Gronwall19}, as illustrated in~\rref{subfig:gronwall}.
In case $\etermA$ starts with value $0$ initially and satisfies the Darboux inequality $\lied[]{\genDE{x}}{\etermA} \geq \cofterm\etermA$, the constant zero solution of the differential equation $\D{\etermA} = \cofterm(t)\etermA$ bounds it from below.
In~\rref{subfig:gronwall}, this corresponds to the case where both blue lines lie exactly on the horizontal axis.
The proof uses the (squared) Euclidean and Frobenius norms to reduce a vectorial equality ($\vecpolyn{\etermA}{x}=0$) to a scalar inequality (${-\normeuc{\vecpolyn{\etermA}{x}}^2} \geq 0$), which enables further analysis using \emph{scalar} differential ghosts.
The convenient choice of compatible norms ensures that all syntactic proof steps are done within the extended term language.
Since all norms are equivalent on finite-dimensional vector spaces~\cite[\S10.III]{Walter1998}, this reduction can also be done using other norms with suitable syntactic representations.
Convenient choices of norms are a common technique in the study of differential equations~\cite{Walter1998}.

An alternative derivation of rule~\irref{vdbx} is given in the authors' earlier conference version~\cite{DBLP:conf/lics/PlatzerT18} based on Liouville's formula~\cite[\S15.III]{Walter1998}.
That alternative derivation has an alternative geometric interpretation as a continuous change of basis that is expressed purely syntactically~\cite{DBLP:conf/lics/PlatzerT18} but requires the use of \emph{vectorial} differential ghosts and a number of ghost variables that is quadratic in the dimension.
The new derivation in~\rref{lem:vdbxscalar} uses exactly $2$ scalar differential ghosts in the \irref{DBXineq} step independent of dimension and relies only on basic properties of real arithmetic.
In fact, just like the scalar Darboux axioms, axiom \irref{VDBX} for $m$-dimensional extended terms $\vecpolyn{\etermA}{x}$ derives once-and-for-all so no differential ghosts are needed for its subsequent use.

\subsection{Completeness for Analytic Invariants}
\label{subsec:completenessanalytic}

Returning to extended terms $\etermA$ satisfying the differential radical identity~\rref{eq:differential-rank}, a proof rule for invariance of $\etermA=0$ based on \emph{higher} Lie derivatives derives as a direct instance of derived rule \irref{vdbx}:

\begin{theorem}[Differential radical invariants are vectorial Darboux] \label{thm:DRI}
The differential radical invariant proof rule~\irref{dRI} derives from \irref{vdbx} (which in turn derives from \irref{DG}).
\[
\dinferenceRule[dRI|dRI]{differential radical invariants}
{\linferenceRule
  {\lsequent{\Gamma,\ivr} {\landfold_{i=0}^{N-1}  \lied[i]{\genDE{x}}{\etermA} = 0} & \lsequent{\ivr} {\lied[N]{\genDE{x}}{\etermA} = \sum_{i=0}^{N-1} \cofterm_i \lied[i]{\genDE{x}}{\etermA}}}
  {\lsequent{\Gamma} {\dbox{\pevolvein{\D{x}=\genDE{x}}{\ivr}}{\etermA=0}}}
}{}
\]
\end{theorem}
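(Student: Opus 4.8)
The plan is to derive $\irref{dRI}$ as a direct instance of the vectorial Darboux rule $\irref{vdbx}$, realizing the ``rewrite a higher-order ODE as a first-order vectorial system'' idea sketched just before the theorem: the vector of the first $N$ Lie derivatives of $\etermA$ satisfies a vectorial Darboux equality whose cofactor matrix is the companion matrix built from $\cofterm_0,\dots,\cofterm_{N-1}$.

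Concretely, I would take $\vecpolyn{\etermA}{x}$ to be the $N$-dimensional vector of extended terms with $\vecpolyn{\etermA}{x}_i = \lied[i-1]{\genDE{x}}{\etermA}$ for $i=1,\dots,N$, so that componentwise Lie differentiation gives $\lied[]{\genDE{x}}{\vecpolyn{\etermA}{x}}_i = \lied[i]{\genDE{x}}{\etermA}$ by definition of the higher Lie derivatives, and let $\matpolyn{\coftermC}{x}$ be the $N\times N$ matrix whose $i$-th row for $1\le i\le N-1$ is the $(i{+}1)$-st unit vector and whose last row is $(\cofterm_0,\dots,\cofterm_{N-1})$. Then for $i<N$ the $i$-th component of the equation $\lied[]{\genDE{x}}{\vecpolyn{\etermA}{x}}=\matpolyn{\coftermC}{x}\itimes\vecpolyn{\etermA}{x}$ is the trivial identity $\lied[i]{\genDE{x}}{\etermA}=\lied[i]{\genDE{x}}{\etermA}$, while its last component is exactly $\lied[N]{\genDE{x}}{\etermA}=\sum_{i=0}^{N-1}\cofterm_i\lied[i]{\genDE{x}}{\etermA}$, that is, the second premise of $\irref{dRI}$. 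Hence, reading the terms $\lied[i]{\genDE{x}}{\etermA}$ and $\cofterm_i$ as opaque subterms, the implication from the second premise of $\irref{dRI}$ to $\lied[]{\genDE{x}}{\vecpolyn{\etermA}{x}}=\matpolyn{\coftermC}{x}\itimes\vecpolyn{\etermA}{x}$ is a substitution instance of a valid formula of first-order real arithmetic, so $\irref{qear}$ discharges the single (arithmetic) premise of $\irref{vdbx}$ for this $\vecpolyn{\etermA}{x}$ and $\matpolyn{\coftermC}{x}$ from the second premise of $\irref{dRI}$; both antecedents are just $\ivr$, so they match up. This yields a \dL proof of $\lsequent{\vecpolyn{\etermA}{x}=0}{\dbox{\pevolvein{\D{x}=\genDE{x}}{\ivr}}{\vecpolyn{\etermA}{x}=0}}$.

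It remains to massage this into the conclusion $\lsequent{\Gamma}{\dbox{\pevolvein{\D{x}=\genDE{x}}{\ivr}}{\etermA=0}}$. A monotonicity step $\irref{MbW}$ strengthens the postcondition from $\vecpolyn{\etermA}{x}=0$ to $\etermA=0$ (its side condition $\lsequent{\ivr,\vecpolyn{\etermA}{x}=0}{\etermA=0}$ holds by $\irref{qear}$ since $\etermA=\vecpolyn{\etermA}{x}_1$). Since $\landfold_{i=0}^{N-1}\lied[i]{\genDE{x}}{\etermA}=0$ is literally the formula $\vecpolyn{\etermA}{x}=0$, the first premise of $\irref{dRI}$ reads $\lsequent{\Gamma,\ivr}{\vecpolyn{\etermA}{x}=0}$. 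A propositional case split on $\ivr$ (a $\irref{cut}$ with $\ivr\lor\lnot\ivr$) then finishes: on the branch assuming $\ivr$, a further $\irref{cut}$ with $\vecpolyn{\etermA}{x}=0$ chains the first premise with the sequent just obtained (after weakening its antecedent by $\Gamma,\ivr$); on the branch assuming $\lnot\ivr$ the box $\dbox{\pevolvein{\D{x}=\genDE{x}}{\ivr}}{\etermA=0}$ is vacuously true, since the ODE has no runs from a state violating its evolution domain constraint, a standard derived principle of \dL.

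The only genuinely inventive move is the first one: recognizing that the differential radical identity is itself a vectorial Darboux equality with the companion cofactor matrix, which converts $N$-th order information about $\etermA$ into first-order information about the vector of its lower Lie derivatives. After that everything is routine; the two points that still want a little care are verifying that $\irref{qear}$ genuinely suffices for the arithmetic premise of $\irref{vdbx}$ (it does, as the required implication is a substitution instance of a trivially valid arithmetic formula once the Lie derivatives and cofactors are treated as indeterminates) and handling the evolution domain constraint $\ivr$ when threading the context $\Gamma$ into the conclusion.
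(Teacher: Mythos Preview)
Your proposal is correct and follows essentially the same approach as the paper: both take $\vecpolyn{\etermA}{x}$ to be the vector of the first $N$ Lie derivatives of $\etermA$, use the companion matrix with cofactors $\cofterm_0,\dots,\cofterm_{N-1}$ in the last row as the cofactor matrix for \irref{vdbx}, observe that all but the last component of the vectorial Darboux equation are trivial while the last is exactly the right premise of \irref{dRI}, and finish with \irref{MbW} and a cut on $\vecpolyn{\etermA}{x}=0$. The only cosmetic difference is that where you do a propositional case split on $\ivr$ and invoke vacuity of the box when $\lnot\ivr$, the paper uses axiom \irref{DX} directly to add $\ivr$ to the antecedents; these amount to the same thing.
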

\begin{proofsketch}[app:alginvariants]
Rule \irref{dRI} derives from rule \irref{vdbx} by transforming identity~\rref{eq:differential-rank} involving higher Lie derivatives of $\etermA$ into a vectorial Darboux equality involving only first Lie derivatives of the extended term vector.
The proof uses the following choice of cofactor matrix $\matpolyn{\coftermC}{x}$:
{\footnotesizeoff%
\[\matpolyn{\coftermC}{x}= \left(\begin{array}{ccccc}
0      & 1      & 0      & \dots & 0      \\
0      & 0      & \ddots & \ddots & \vdots \\
\vdots & \vdots & \ddots & \ddots & 0      \\
0      & 0      & \dots & 0      & 1\\
\cofterm_0    & \cofterm_1    & \dots & \cofterm_{N-2}& \cofterm_{N-1} \end{array}\right),
\quad
\vecpolyn{\etermA}{x} = \left(\begin{array}{l}\etermA\\ \lied[1]{\genDE{x}}{\etermA}\\ \vdots \\\lied[N-2]{\genDE{x}}{\etermA}\\ \lied[N-1]{\genDE{x}}{\etermA}\end{array}\right)
\]}%
The matrix $\matpolyn{\coftermC}{x}$ has $1$ on its superdiagonal and the $\cofterm_i$ cofactors in the last row.
The left premise of \irref{dRI} is used to show $\vecpolyn{\etermA}{x} = 0$ initially, while its right premise is used to show the premise of \irref{vdbx}.
\end{proofsketch}

For any extended term $\etermA$ in the LHS of normalized equation $\etermA=0$, the computable differential radicals condition~\rref{itm:reqdiffradical} requires that the differential radical identity~\rref{eq:differential-rank} (computably) exists and proves with associated rank $N$ and cofactors $\cofterm_i$ for $\etermA$.
The resulting (provable) identity~\rref{eq:differential-rank} proves the right premise of \irref{dRI}.\footnote{%
\rref{thm:DRI} shows $\ivr$ can be assumed when proving the right premise. A finite rank must exist either way, but assuming $\ivr$ may reduce the number of higher Lie derivatives of $\etermA$ that need to be considered for the proof (as in \rref{ex:expressivity}).}
The succedent in the remaining left premise of \irref{dRI} thus gives a finitary characterization for when \emph{all} Lie derivatives of $\etermA$ evaluate to zero in the initial state.
This motivates the following definition of a \emph{finite} formula summarizing that \emph{all} higher Lie derivatives of $\etermA$ are zero:

\begin{definition}[Differential radical formula]
\label{def:diffradfml}
The \emph{differential radical formula} \(\sigliedzero{\genDE{x}}{\etermA}\) for extended term $\etermA$ of rank $N \geq 1$ from identity \rref{eq:differential-rank} with Lie derivatives along \m{\D{x}=\genDE{x}} is defined to be:
\begin{align*}
\sigliedzero{\genDE{x}}{\etermA} ~\mdefequiv~ \landfold_{i=0}^{N-1}  \lied[i]{\genDE{x}}{\etermA} = 0
\end{align*}
\end{definition}

The finiteness of \(\sigliedzero{\genDE{x}}{\etermA}\) depends on Lie derivatives along the particular differential equation $\D{x}=\genDE{x}$ of interest, because, without considering the ODE, \emph{no} corresponding chain of higher-order differentials would stabilize.
The rest of this article focuses on Lie derivatives to utilize this finiteness property, but relies under the hood on \dL's axiomatic proof transformation from differentials.

The completeness of derived rule \irref{dRI} can be proved semantically by extending earlier arguments~\cite{DBLP:conf/tacas/GhorbalP14} to extended term languages.
Even better: the following equivalent characterization in arithmetic of the future truth of analytic formulas along differential equations (including analytic invariants) derives axiomatically using the extensions developed in \rref{sec:extaxioms}.\footnote{%
With these axiomatic extensions, the requirement in~\rref{thm:algcomplete} that $\ivr$ is formed from strict inequalities is not necessary.
A derived equivalence axiom for analytic invariance with arbitrary semianalytic domain constraint $\ivr$ is given in~\rref{thm:algcompletedom}.}
In contrast to the semantic completeness argument, this syntactic characterization enables complete proofs and \emph{complete disproofs} of analytic invariance within the \dL calculus.
In other words, \emph{disproving} the RHS of the characterization under assumptions $\Gamma$, yields a \dL proof of $\lsequent{\Gamma}{\lnot{\dbox{\pevolvein{\D{x}=\genDE{x}}{\ivr}}{\etermA=0}}}$.

\begin{theorem}[Analytic completeness]
\label{thm:algcomplete}
The differential radical invariant axiom~\irref{DRI} derives in \dL when $\ivr$ is a semianalytic formula formed from conjunctions and disjunctions of strict inequalities:
\[
\dinferenceRule[DRI|DRI]{differential radical invariant axiom}
{\linferenceRule[equiv]
  {\big(\ivr \limply \sigliedzero{\genDE{x}}{\etermA}\big)}
  {\axkey{\dbox{\pevolvein{\D{x}=\genDE{x}}{\ivr}}{\etermA=0}}}
}{}
\]
\end{theorem}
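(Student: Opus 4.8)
Write $\alpha \mnodefequiv \pevolvein{\D{x}=\genDE{x}}{\ivr}$ and let $N \geq 1$ and the cofactors $\cofterm_i$ be those supplied for $\etermA$ and the ODE by the computable differential radicals condition~\rref{itm:reqdiffradical}, so that $\sigliedzero{\genDE{x}}{\etermA}$ is the finite formula $\landfold_{i=0}^{N-1}\lied[i]{\genDE{x}}{\etermA}=0$ of~\rref{def:diffradfml}. The plan is to derive the two implications of the biconditional~\irref{DRI} separately. The ``arithmetic implies invariance'' direction $\big(\ivr \limply \sigliedzero{\genDE{x}}{\etermA}\big) \limply \dbox{\alpha}{\etermA=0}$ is routine: it is a direct instance of the derived rule~\irref{dRI} (\rref{thm:DRI}) with antecedent $\Gamma \mnodefequiv \big(\ivr \limply \sigliedzero{\genDE{x}}{\etermA}\big)$. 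Its left premise $\lsequent{\Gamma,\ivr}{\landfold_{i=0}^{N-1}\lied[i]{\genDE{x}}{\etermA}=0}$ closes by~\irref{qear} (propositional modus ponens on $\Gamma$ with $\ivr$), and its right premise $\lsequent{\ivr}{\lied[N]{\genDE{x}}{\etermA}=\sum_{i=0}^{N-1}\cofterm_i\lied[i]{\genDE{x}}{\etermA}}$ is exactly the provable differential radical identity~\rref{eq:differential-rank} granted by~\rref{itm:reqdiffradical} (in fact it holds unconditionally, so $\ivr$ is not even used). Thus~\irref{dRI} yields $\lsequent{\Gamma}{\dbox{\alpha}{\etermA=0}}$.

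The substantive half is the converse $\dbox{\alpha}{\etermA=0} \limply \big(\ivr \limply \sigliedzero{\genDE{x}}{\etermA}\big)$, and this is exactly where the hypothesis that $\ivr$ is built from strict inequalities (hence open) is used. Semantically: if $\etermA$ is identically $0$ along every solution of the ODE staying in $\ivr$, and $\imodels{\I}{\ivr}$, then openness of $\ivr$ together with the existence theorem for ODEs furnishes a solution $\solvar$ from $\iget[state]{\I}$ of strictly positive duration staying in $\ivr$; the value of $\etermA$ along $\solvar$ is identically $0$ on this nondegenerate interval (since every restriction $\solvar|_{[0,s]}$ is again a solution staying in $\ivr$), so all of its one-sided time derivatives at time $0$ vanish, and by the differential lemma~\cite[Lem.\,35]{DBLP:journals/jar/Platzer17} (iterated) the $j$-th such time derivative equals the value of $\lied[j]{\genDE{x}}{\etermA}$ at $\iget[state]{\I}$, so $\lied[j]{\genDE{x}}{\etermA}=0$ at $\iget[state]{\I}$ for all $j$, in particular for $j=0,\dots,N-1$; this is $\sigliedzero{\genDE{x}}{\etermA}$. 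At states with $\imodels{\I}{\lnot\ivr}$ the implication is vacuous, since the domain constraint forbids even the trivial solution. Strictness is genuinely needed: for the non-strict domain $x\leq0$ and ODE $\pevolvein{\D{x}=1}{x\leq0}$, the formula $\dbox{\pevolvein{\D{x}=1}{x\leq0}}{x=0}$ holds at the boundary state $x=0$ while the first Lie derivative of $x$ equals $1\neq0$ there, whereas strict inequalities push such boundary states outside $\ivr$.

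I would realize this argument as a \dL derivation in two steps, invoking the axiomatic extensions of~\rref{sec:extaxioms}. The base case $j=0$ is the standard \dL fact $\dbox{\alpha}{\rfvar}\limply(\ivr\limply\rfvar)$ (the initial state is reached by the trivial solution whenever $\ivr$ holds). The inductive step is a ``differentiate an invariant'' lemma, $\dbox{\alpha}{\etermB=0}\limply\dbox{\alpha}{\lied[]{\genDE{x}}{\etermB}=0}$ for any extended term $\etermB$ when $\ivr$ is built from strict inequalities: derive it by using~\irref{DC} to adjoin $\etermB=0$ to the domain constraint, then appeal to the existence (and uniqueness) axioms of~\rref{sec:extaxioms} to certify that solutions of the refined ODE still make local progress on the open domain, so that $\der{\etermB}=0$ and, after~\irref{DE}, $\lied[]{\genDE{x}}{\etermB}=0$ at the initial state. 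Instantiating $\etermB$ successively at $\etermA,\lied[]{\genDE{x}}{\etermA},\dots,\lied[N-2]{\genDE{x}}{\etermA}$ gives $\dbox{\alpha}{\lied[i]{\genDE{x}}{\etermA}=0}$ for $i=0,\dots,N-1$; combining each with the base case under the assumption $\ivr$ and conjoining yields $\ivr\limply\sigliedzero{\genDE{x}}{\etermA}$, which together with the first direction and propositional normalisation is the equivalence~\irref{DRI}. The version removing the strictness restriction on $\ivr$ (\rref{thm:algcompletedom}) should follow from the same machinery.

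The main obstacle is the inductive step, i.e.\ deriving $\dbox{\alpha}{\etermB=0}\limply\dbox{\alpha}{\lied[]{\genDE{x}}{\etermB}=0}$ purely syntactically. The $x\leq0$ example shows this is outright false without the strict-inequality hypothesis, so the ODE axioms~\irref{DI+DC+DG} alone cannot possibly deliver it; the step genuinely requires internalising the fact that solutions of an ODE on an open domain make progress, which is precisely the existence-and-uniqueness extension that~\rref{sec:extaxioms} is built to provide. Everything else — the~\irref{dRI} instance, the propositional bookkeeping, and the finiteness of $\sigliedzero{\genDE{x}}{\etermA}$ guaranteed by~\rref{itm:reqdiffradical} — is routine.
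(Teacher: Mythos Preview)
Your ``$\lylpmi$'' direction coincides with the paper's. For ``$\limply$'' you take a genuinely different route, and the sketch of your inductive step does not go through as written.

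The paper never proves a ``differentiate an invariant'' lemma $\dbox{\alpha}{\etermB=0}\limply\dbox{\alpha}{\lied[]{\genDE{x}}{\etermB}=0}$. Instead it argues the whole ``$\limply$'' direction by contraposition \emph{at the initial state}: from $\ivr$ and $\lnot(\sigliedzero{\genDE{x}}{\etermA})$ it uses the provable rearrangement $\lnot(\sigliedzero{\genDE{x}}{\etermA}) \lbisubjunct \sigliedgt{\genDE{x}}{\etermA} \lor \sigliedgt{\genDE{x}}{(-\etermA)}$ (\rref{prop:rearrangement}). Each disjunct, via the derived local-progress axiom \irref{Lpgtfull}, yields local progress into $\etermA>0$ (resp.\ $-\etermA>0$); since $\ivr$ is a boolean combination of strict inequalities, \irref{Cont} (with \irref{Uniq}) gives local progress in $\ivr$; \irref{Uniq} then merges the two into a single solution staying in $\ivr$ that reaches $\etermA\neq0$, i.e.\ $\ddiamond{\alpha}{\etermA\neq0}$. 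No induction over $i=0,\dots,N-1$ is needed; the rank $N$ enters only as the length of the formula $\sigliedzero{\genDE{x}}{\etermA}$.

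Your sketch has two concrete problems. First, after the \irref{DC} step the domain constraint is $\ivr\land\etermB=0$, which is not open, so the appeal to ``local progress on the open domain'' no longer has a hook: axiom \irref{Cont} is stated only for strict-inequality domains. Second, and more seriously for your iteration, your sketch concludes ``$\lied[]{\genDE{x}}{\etermB}=0$ at the initial state'', but to instantiate the lemma again at $\etermB\mdefeq\lied[]{\genDE{x}}{\etermA}$ you need the full box $\dbox{\alpha}{\lied[]{\genDE{x}}{\etermB}=0}$, not merely its value at the start. Closing that gap---from ``vanishes initially'' to ``vanishes along every solution in $\ivr$''---is exactly the hard content, and it is not supplied by \irref{DC} plus existence. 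Your lemma is semantically correct and can be derived, but the derivation (e.g.\ by its own contrapositive: if $\ddiamond{\alpha}{\lied[]{\genDE{x}}{\etermB}\neq0}$, re-anchor at the offending state and run local progress into $\etermB\neq0$) already requires the \irref{Cont}/\irref{Uniq}/\irref{Lpgtfull} machinery the paper uses directly, at which point the outer induction is superfluous.
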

\begin{proofsketch}[app:alginvariants]
The ``$\lylpmi$" direction derives (for any $\ivr$) by an application of derived rule \irref{dRI}, whose right premise closes by \rref{eq:differential-rank}.
The ``$\limply$" direction relies on existence and uniqueness of solutions to differential equations, which are internalized as axioms in \rref{sec:extaxioms}.
\end{proofsketch}

For the proof of \rref{thm:algcomplete}, the additional axioms are \emph{only required} for syntactically deriving the ``$\limply$" direction (completeness) of \irref{DRI}.
The ``$\lylpmi$`` direction (soundness) derives using~\irref{dRI}, which, by \rref{thm:DRI}, can be derived using only \irref{DI+DC+DG}.
Thus, the base \dL axiomatization with differential ghosts is \emph{complete} for proving properties of the form $\dbox{\pevolvein{\D{x}=\genDE{x}}{\ivr}}{\etermA=0}$ because \irref{dRI} provably reduces all such questions to $\ivr \limply \sigliedzero{\genDE{x}}{\etermA}$.
The validity of this resulting semianalytic formula is a purely arithmetical question.
In fact, the base \dL axiomatization \emph{decides} $\dbox{\pevolvein{\D{x}=\genDE{x}}{\ivr}}{\ptermA=0}$ in the case where $\D{x}=\genDE{x}$ is polynomial and $\ivr$ is semialgebraic, because the resulting RHS of~\irref{DRI} is semialgebraic, and hence, a formula of decidable real arithmetic~\cite{Bochnak1998}.
The same applies for the next result, which is a corollary of \rref{thm:algcomplete} but applies beyond the continuous fragment of \dL.

\begin{corollary}[Analytic hybrid program completeness]
\label{cor:testfree}
For analytic formulas $\rfvar$ and analytic hybrid programs $\alpha$, i.e., whose tests and domain constraints are negations of analytic formulas, it is possible to compute an extended term $\etermA$ such that the equivalence \(\dbox{\alpha}{\rfvar} \lbisubjunct \etermA=0\) is derivable in \dL, provided that the term language is Noetherian (defined in~\rref{app:hybridprograms}).
\end{corollary}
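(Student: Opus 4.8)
The plan is to prove \rref{cor:testfree} by induction on the structure of the analytic hybrid program $\alpha$, carrying the strengthened claim that for \emph{every} analytic formula $\rfvar$ one can compute an analytic formula $\rfvar_\alpha$ together with a \dL derivation of the equivalence $\dbox{\alpha}{\rfvar} \lbisubjunct \rfvar_\alpha$. The single equation $\etermA = 0$ asserted in the statement is then recovered at the very end by normalizing $\rfvar_\alpha$ to a single equality with the \irref{qear}-provable arithmetic equivalences $\etermA_1 = 0 \land \etermA_2 = 0 \lbisubjunct \etermA_1^2 + \etermA_2^2 = 0$ and $\etermA_1 = 0 \lor \etermA_2 = 0 \lbisubjunct \etermA_1\etermA_2 = 0$ already used at the start of \rref{sec:analyticinvs}. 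The bookkeeping facts that make the induction go through are that analytic formulas are closed under conjunction, disjunction, and substitution of extended terms for variables, and that whenever $Q$ is the negation $\lnot R$ of an analytic formula $R$ (as every test guard and domain constraint of $\alpha$ is, by hypothesis) and $\psi$ is analytic, the implication $Q \limply \psi$ is equivalent to $R \lor \psi$, which is again analytic.

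First I would dispatch the non-loop constructs, each via the corresponding \dL hybrid-program axiom~\cite{DBLP:journals/jar/Platzer17}. For an assignment, axiom~\irref{assignb} gives $\dbox{\pupdate{\umod{x}{\etermA}}}{\rfvar} \lbisubjunct \rfvar(\etermA)$, which is analytic since substituting an extended term into an analytic formula yields an analytic formula. For a test with guard $Q = \lnot R$ ($R$ analytic), the box-test axiom reduces the box formula to $Q \limply \rfvar$, equivalent to the analytic $R \lor \rfvar$. Nondeterministic choice and sequential composition reduce, via the box-choice and box-composition axioms together with monotonicity, to $\dbox{\alpha}{\rfvar} \land \dbox{\beta}{\rfvar}$ and to $\dbox{\alpha}{\dbox{\beta}{\rfvar}}$ respectively, and are discharged by the induction hypothesis (applied to $\beta$ first, then to $\alpha$ on its analytic output) and closure under conjunction. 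For an ODE $\pevolvein{\D{x}=\genDE{x}}{\ivr}$ the domain constraint $\ivr$ is the negation of an analytic formula $R$, hence a propositional combination of strict inequalities, so \rref{thm:algcomplete} applies: after normalizing $\rfvar$ to $\etermA = 0$, axiom~\irref{DRI} gives $\dbox{\pevolvein{\D{x}=\genDE{x}}{\ivr}}{\etermA = 0} \lbisubjunct (\ivr \limply \sigliedzero{\genDE{x}}{\etermA})$, whose finite rank $N \geq 1$ and cofactors $\cofterm_i$ exist computably by condition~\rref{itm:reqdiffradical}, and whose right-hand side $\ivr \limply \landfold_{i=0}^{N-1} \lied[i]{\genDE{x}}{\etermA} = 0$ is equivalent to the analytic $R \lor \landfold_{i=0}^{N-1} \lied[i]{\genDE{x}}{\etermA} = 0$. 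A nondeterministic assignment to $x$ is handled the same way: its box formula is $\lforall{x}{\rfvar}$, which reduces to an analytic formula by applying \rref{thm:algcomplete} to the trivial differential equations $\pevolve{\D{x}=1}$ and $\pevolve{\D{x}=-1}$, collapsing the infinitely many $x$-derivatives of $\rfvar$'s defining term to finitely many, again by condition~\rref{itm:reqdiffradical}.

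The decisive case is nondeterministic repetition $\alpha^*$, and here the Noetherian hypothesis on the term language is indispensable. Let $\psi$ be abbreviated by $\psi_n$ for the analytic formula obtained by applying the induction hypothesis to $\alpha$ and composing $n$ times, so that $\dbox{\alpha^n}{\rfvar} \lbisubjunct \psi_n$ is \dL-derivable, and let $\etermA_n$ be the extended term with $\psi_n \lbisubjunct (\etermA_n = 0)$ after normalization. Semantically $\dbox{\alpha^*}{\rfvar}$ describes the descending intersection of the zero sets of $\etermA_0, \etermA_1, \dots$; correspondingly the ideals $I_n$ generated by $\etermA_0, \dots, \etermA_n$ form an ascending chain in the Noetherian ring generated by the term language (\rref{app:hybridprograms}), which therefore stabilizes at some index $N$ that is moreover \emph{computable} from the effective Noetherianity established in \rref{sec:noetherianfunctions}. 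Hence $\dbox{\alpha^*}{\rfvar}$ is equivalent to the analytic $\landfold_{n=0}^{N} \psi_n$; the ``$\limply$'' direction is \dL-derivable by unwinding the loop $n$ times for each $n \leq N$, and the ``$\lylpmi$'' direction follows from the loop induction rule of \dL with invariant $\landfold_{n=0}^{N} \psi_n$, which reduces (using $\psi_m \lbisubjunct \dbox{\alpha^m}{\rfvar}$) to deriving the single implication $\landfold_{n=0}^{N} \etermA_n = 0 \limply \etermA_{N+1} = 0$; this implication is valid because $\etermA_{N+1} \in I_N$ by stabilization, and it becomes \dL-provable once that ideal membership is witnessed by a cofactor identity $\etermA_{N+1} = \sum_{i=0}^{N} \cofterm_i \etermA_i$ provable by \irref{qear} (treating the finitely many Noetherian function values appearing as fresh indeterminates, just as identity~\rref{eq:differential-rank} is provable in the polynomial case).

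The main obstacle is exactly this loop case. On the semantic side one needs stabilization of the chain with an \emph{effective} bound $N$, which is where abstract Noetherianity of the function ring must be upgraded to a concrete algorithm using the quantitative Noetherian/Pfaffian estimates of \rref{sec:noetherianfunctions}. On the syntactic side one needs the step $\landfold_{n=0}^{N} \etermA_n = 0 \limply \etermA_{N+1} = 0$ to be not merely valid but \emph{derivable} in \dL --- which, since \dL is not complete for arbitrary valid implications between analytic formulas (arithmetic over Noetherian functions being undecidable), forces the Noetherian ring of the term language to be set up so that ideal (or radical) memberships come with \dL-provable witnesses, in the same spirit as condition~\rref{itm:reqdiffradical}. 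Every other case is a routine application of a \dL hybrid-program axiom combined with the closure properties of analytic formulas and, for the ODE and nondeterministic-assignment cases, the already-established \rref{thm:algcomplete}.
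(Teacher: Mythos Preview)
Your proof follows essentially the same structural induction as the paper's, with the same key idea in the loop case: build the ascending ideal chain $\langle \etermA_0 \rangle \subseteq \langle \etermA_0, \etermA_1 \rangle \subseteq \cdots$, use Noetherianity to stabilize it at some $N$, prove ``$\limply$'' by unrolling the loop $N$ times with \irref{iterateb}, and prove ``$\lylpmi$'' by the loop rule with invariant $\landfold_{i<N} \etermA_i = 0$, closing the one nontrivial premise via the cofactor identity $\etermA_N = \sum_i \cofterm_i \etermA_i$. The non-loop cases and the normalization of analytic formulas to single equations are also handled the same way.

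Where you go astray is in identifying what ``Noetherian term language'' means here. It is \emph{not} the Noetherian-function machinery of \rref{sec:noetherianfunctions}; the definition in \rref{app:hybridprograms} is the direct assumption that the ring of all extended terms is itself a Noetherian ring, together with an algorithm that decides and proves ideal membership in it. The paper stresses that these are genuinely different conditions: extended term language~\rref{eq:extlang} with $\exp$ consists entirely of Noetherian functions, yet its term ring is \emph{not} Noetherian. So there are no ``quantitative Noetherian/Pfaffian estimates of \rref{sec:noetherianfunctions}'' to invoke for the loop case, and your proposed mechanism of treating function values as fresh indeterminates is beside the point. Both obstacles you flag in your final paragraph---the effective stabilization bound and the \dL-provability of the cofactor identity---are simply part of the hypothesis, not something to be extracted from \rref{sec:noetherianfunctions}. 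Your nondeterministic-assignment case is also superfluous: the grammar of analytic hybrid programs does not include $x := *$.
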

\begin{proofsketch}[app:hybridprograms]
By structural induction on $\alpha$ analogous to \cite[Thm.\ 1]{DBLP:conf/lics/Platzer12b}, using \rref{thm:algcomplete} for the differential equations case and the Noetherian term language for loops.
\end{proofsketch}

The Noetherian condition of~\rref{cor:testfree} implies the computable differential radicals condition~\rref{itm:reqdiffradical}.
Polynomial term languages are Noetherian so~\rref{cor:testfree} shows that \dL decides $\dbox{\alpha}{\rfvar}$ where $\rfvar$ and $\alpha$ are both algebraic.
However, the Noetherian condition fails even for simple extended term languages such as~\rref{eq:extlang}.
The stronger Noetherian condition is only required when the analytic hybrid program $\alpha$ contains loops.
Otherwise, the weaker condition~\rref{itm:reqdiffradical} suffices for~\rref{cor:testfree}.

\section{Extended Axiomatization}
\label{sec:extaxioms}

This section presents the axiomatic extension that is used for the rest of this article.
The purpose of this axiomatic extension is to internalize standard properties of differential equations, such as existence and uniqueness~\cite[\S10.VI]{Walter1998} as \emph{syntactic} reasoning principles.
The extension requires that the system \m{\D{x}=\genDE{x}} \emph{locally evolves $x$}, i.e., it has no fixpoint at which $\genDE{x}$ is the 0 vector.
This can be ensured syntactically, e.g., by requiring that the system contains a clock variable $\D{x_1}=1$ that tracks the passage of time.
Such a clock can always first be added using axiom \irref{DG} if necessary.

\subsection{Existence, Uniqueness, and Continuity}
\label{subsec:existenceuniqcont}

The differential equations of \dL are smooth.
Hence, the Picard-Lindel\"{o}f theorem~\cite[\S10.VI]{Walter1998} guarantees that for any initial state $\iget[state]{\I}$, a \emph{unique} solution of the system $\pevolve{\D{x}=\genDE{x}}$, i.e., $\solvar : [0,T] \to \States$ with $\solvar(0) = \iget[state]{\I}$, \emph{exists} for some duration $T > 0$.
The solution $\solvar$ can be extended (uniquely) to its maximal open interval of existence~\cite[\S10.IX]{Walter1998} and $\solvar(\zeta)$ is smooth with respect to $\zeta$.

\begin{lemma}[Continuous existence, uniqueness, and differential adjoints]
\label{lem:uniqcont}
The following axioms are sound.
In \irref{Cont} and \irref{Dadjoint}, $y$ are fresh variables (so not in \m{\pevolvein{\D{x}=\genDE{x}}{\ivr(x)}} or $\etermA$).

\begin{calculus}
\cinferenceRule[Uniq|Uniq]{uniqueness}
{\linferenceRule[equiv]
  {\big(\ddiamond{\pevolvein{\D{x}=\genDE{x}}{\ivr_1}}{\rfvar}\big) \land
  \big(\ddiamond{\pevolvein{\D{x}=\genDE{x}}{\ivr_2}}{\rfvar}\big)}
  {\axkey{\ddiamond{\pevolvein{\D{x}=\genDE{x}}{\ivr_1 \land \ivr_2}}{\rfvar}}}
}{}

\cinferenceRule[Cont|Cont]{continuous existence}
{
 \linferenceRule[impl]
  {x = y}
  {\big(\axkey{\ddiamond{\pevolvein{\D{x}=\genDE{x}}{\etermA > 0}}{x \not= y}} \lbisubjunct \etermA>0\big)}
}{}

\cinferenceRule[Dadjoint|Dadj]{differential adjoints}
{\linferenceRule[equiv]
  {\ddiamond{\pevolvein{\D{y}=-\genDE{y}}{\ivr(y)}}{\,y=x}}
  {\axkey{\ddiamond{\pevolvein{\D{x}=\genDE{x}}{\ivr(x)}}{\,x=y}}}
}{}
\end{calculus}%
\end{lemma}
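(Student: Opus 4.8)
The plan is to prove soundness of each axiom directly from the semantics of the diamond modality over ODEs (\rref{subsec:background-semantics}), invoking the Picard--Lindel\"of theorem~\cite[\S10.VI]{Walter1998}: since the right-hand sides $\genDE{x}$ of \dL's differential equations are smooth, hence locally Lipschitz, from every state $\iget[state]{\I}$ a \emph{unique} solution $\solvar$ of $\pevolve{\D{x}=\genDE{x}}$ with $\solvar(0) = \iget[state]{\I}$ on $\scomplement{\{\D{x}\}}$ exists for some positive duration, and $\solvar(\zeta)$ is differentiable in $\zeta$. Two of the embedded implications are immediate. In \irref{Uniq}, the implication $\ddiamond{\pevolvein{\D{x}=\genDE{x}}{\ivr_1 \land \ivr_2}}{\rfvar} \limply \big(\ddiamond{\pevolvein{\D{x}=\genDE{x}}{\ivr_1}}{\rfvar} \land \ddiamond{\pevolvein{\D{x}=\genDE{x}}{\ivr_2}}{\rfvar}\big)$ holds because a solution that stays inside $\ivr_1 \land \ivr_2$ a fortiori stays inside $\ivr_1$ and inside $\ivr_2$, hence witnesses both diamonds. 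In \irref{Cont}, the implication $\ddiamond{\pevolvein{\D{x}=\genDE{x}}{\etermA > 0}}{x \neq y} \limply \etermA > 0$ holds because any witnessing solution $\solvar:[0,T]\to\States$ satisfies the domain constraint at time $0$, so $\imodels{\Iff[0]}{\etermA > 0}$; as $\etermA$ is differential-free and $\solvar(0) = \iget[state]{\I}$ on $\scomplement{\{\D{x}\}}$, already $\imodels{\I}{\etermA > 0}$.

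For the converse direction of \irref{Uniq}, let $\solvar_1:[0,T_1]\to\States$ witness $\ddiamond{\pevolvein{\D{x}=\genDE{x}}{\ivr_1}}{\rfvar}$ and $\solvar_2:[0,T_2]\to\States$ witness $\ddiamond{\pevolvein{\D{x}=\genDE{x}}{\ivr_2}}{\rfvar}$ from the same state $\iget[state]{\I}$, and assume without loss of generality that $T_1 \leq T_2$. Both solve $\D{x}=\genDE{x}$ with the same initial value on $x$ and hold all other variables at their $\iget[state]{\I}$-values, so uniqueness forces $\solvar_1(\zeta) = \solvar_2(\zeta)$ for all $\zeta \in [0,T_1]$. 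Hence the restriction of $\solvar_1$ to $[0,T_1]$ stays inside $\ivr_1$ (by hypothesis) and inside $\ivr_2$ (on $[0,T_1]$ it equals $\solvar_2$, which stays inside $\ivr_2$ on $[0,T_2] \supseteq [0,T_1]$), and $\imodels{\Iff[T_1]}{\rfvar}$; thus it witnesses $\ddiamond{\pevolvein{\D{x}=\genDE{x}}{\ivr_1 \land \ivr_2}}{\rfvar}$.

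For the substantive direction of \irref{Cont}, assume $\imodels{\I}{\etermA > 0}$ and $\imodels{\I}{x = y}$. Picard--Lindel\"of yields the unique solution $\solvar$ of $\D{x}=\genDE{x}$ from $\iget[state]{\I}$ on some $[0,T]$ with $T > 0$; by continuity of $\zeta \mapsto \ivaluation{\Iff}{\etermA}$ and $\imodels{\Iff[0]}{\etermA > 0}$ there is $0 < \epsilon \leq T$ with $\imodels{\Iff}{\etermA > 0}$ for all $\zeta \in [0,\epsilon]$. By the section's standing assumption that $\D{x}=\genDE{x}$ locally evolves $x$, the vector $\genDE{x}$ is nonzero at $\iget[state]{\I}$, say $\odeterm_i(\iget[state]{\I}) \neq 0$; then $\D[t]{\solvar(t)(x_i)}(0) = \odeterm_i(\iget[state]{\I}) \neq 0$, so $\solvar(\zeta)(x_i) \neq \iget[state]{\I}(x_i) = \iget[state]{\I}(y_i) = \solvar(\zeta)(y_i)$ for all sufficiently small $\zeta > 0$ (the fresh $y$ is held constant along $\solvar$). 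Picking such a $\zeta \in (0,\epsilon)$, the restriction of $\solvar$ to $[0,\zeta]$ stays inside $\etermA > 0$ and ends in a state satisfying $x \neq y$, witnessing $\ddiamond{\pevolvein{\D{x}=\genDE{x}}{\etermA > 0}}{x \neq y}$. The ``locally evolves'' hypothesis is essential: at a fixpoint of $\genDE{x}$ the unique solution is constant, so the diamond is false even though $\etermA > 0$ may hold.

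Finally, \irref{Dadjoint} is the time-reversal symmetry of autonomous differential equations; it suffices to prove one direction, the other being the mirror image under swapping $x$ with $y$ and $\genDE{x}$ with $-\genDE{y}$. Suppose $\solvar:[0,T]\to\States$ witnesses $\ddiamond{\pevolvein{\D{x}=\genDE{x}}{\ivr(x)}}{x = y}$ from $\iget[state]{\I}$: it solves $\D{x}=\genDE{x}$, stays inside $\ivr(x)$, holds every variable other than $x,\D{x}$ --- in particular the fresh $y$ --- constant, and $\solvar(T)(x) = \solvar(T)(y) = \iget[state]{\I}(y)$. Define $\psi:[0,T]\to\States$ to agree with $\iget[state]{\I}$ on every variable other than $y,\D{y}$, with $\psi(t)(y) := \solvar(T - t)(x)$ and $\psi(t)(\D{y})$ equal to the value of $-\genDE{y}$ at $\psi(t)$. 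Then $\D[t]{\psi(t)(y)}(\zeta) = -\D[s]{\solvar(s)(x)}(T - \zeta)$, which is exactly the value of $-\genDE{y}$ at $\psi(\zeta)$, that is, $\psi(\zeta)(\D{y})$; so $\psi$ solves $\D{y}=-\genDE{y}$. Since $\ivr(y)$ depends only on $y$ and the $y$-value of $\psi(t)$ equals the $x$-value of $\solvar(T - t)$ (which satisfies $\ivr(x)$), $\psi$ stays inside $\ivr(y)$; moreover $\psi(0)(y) = \solvar(T)(x) = \iget[state]{\I}(y)$, so $\psi(0) = \iget[state]{\I}$ on $\scomplement{\{\D{y}\}}$, and $\psi(T)(y) = \solvar(0)(x) = \iget[state]{\I}(x) = \psi(T)(x)$, so $\imodels{\Ifff[T]}{y = x}$. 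Hence $\psi$ witnesses $\ddiamond{\pevolvein{\D{y}=-\genDE{y}}{\ivr(y)}}{y = x}$. The main obstacle throughout is not deep analysis but the disciplined semantic bookkeeping: keeping frozen variables frozen, pinning differential variables to their right-hand sides, and, for \irref{Dadjoint}, noticing that the terminal condition $\solvar(T)(x) = \iget[state]{\I}(y)$ of the forward run is precisely what makes $\psi(0)$ a legal initial state of the reversed run (and symmetrically).
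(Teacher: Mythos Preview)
Your proof is correct and follows essentially the same approach as the paper's: each axiom is established directly from the ODE semantics via Picard--Lindel\"of, with \irref{Uniq} using that one witnessing solution is a prefix of the other, \irref{Cont} using continuity of $\etermA$ along solutions together with the local-evolution assumption, and \irref{Dadjoint} using the explicit time-and-variable reversal $\psi(t)(y)=\solvar(T-t)(x)$. The only minor difference is that the paper invokes the coincidence lemma explicitly where you say ``$\ivr(y)$ depends only on $y$'' (strictly, $\ivr$ may mention other free variables, but they are held at $\iget[state]{\I}$ in both $\solvar$ and $\psi$, so coincidence still applies).
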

\begin{proofsketch}[app:extaxiomatization]
\irref{Uniq} internalizes uniqueness, \irref{Cont} internalizes continuity of the values of $\etermA$ and existence of solutions, and \irref{Dadjoint} internalizes differential adjoints by the group action of time on ODE solutions, which is another consequence of existence and uniqueness.
\end{proofsketch}

The \emph{uniqueness axiom} \irref{Uniq} says that if a state has two solutions $\solvar_1,\solvar_2$ respectively staying in evolution domains $\ivr_1,\ivr_2$ and whose endpoints satisfy $\rfvar$, then, by uniqueness, one of $\solvar_1$ or $\solvar_2$ is a prefix of the other, and therefore, that prefix stays in both evolution domains $\ivr_1 \land \ivr_2$ and satisfies $\rfvar$ at its endpoint.
The \emph{continuous existence axiom} \irref{Cont} expresses a notion of \emph{local progress} for differential equations.
It says that from an initial state satisfying $x=y$, the system can locally evolve to another state satisfying $x \neq y$ while still staying in the \emph{open set} of states characterized by $\etermA>0$ iff the initial state is already in that open set.
This uses the assumption that the system locally evolves $x$ at all.
The \emph{differential adjoints} axiom \irref{Dadjoint} expresses that $x$ can flow forward to $y$ iff $y$ can flow backward to $x$ along the negated ODE.
It is at the heart of the ``there and back again" axiom that equivalently expresses properties of differential equations with evolution domains in terms of properties of forwards and backwards differential equations without evolution domains~\cite{DBLP:conf/lics/Platzer12b}.

Although all three axioms are stated as (conditional\footnote{Axiom~\irref{Cont} is sound even without the condition from assumption $x=y$. It is stated conditionally to align with the intuition of local evolution from an initial state satisfying $x=y$.}) equivalences to support intuition, the main properties of interest are their ``$\lylpmi$'' directions.
For example, the ``$\limply$'' direction of~\irref{Uniq} derives from domain constraint monotonicity for the diamond modality (derived rule~\irref{gddR} below).
These diamond monotonicity principles are given below as they are useful for working with the newly introduced axioms.
They derive by duality from the usual \dL box modality principles:

\begin{corollary}[Derived diamond modality domain rules and axioms]
\label{cor:diadiffeqax}
The following axiom and its corollary proof rule derive in \dL:

\begin{calculus}
\dinferenceRule[dDR|DR${\didia{\cdot}}$]{}
{\linferenceRule[impl]
  {\dbox{\pevolvein{\D{x}=\genDE{x}}{\rrfvar}}{\ivr}}
  {\big(\ddiamond{\pevolvein{\D{x}=\genDE{x}}{\rrfvar}}{\rfvar} \limply \axkey{\ddiamond{\pevolvein{\D{x}=\genDE{x}}{\ivr}}{\rfvar}}\big)}
}{}

\dinferenceRule[gddR|dRW${\didia{\cdot}}$]{}
{\linferenceRule
  {
  \lsequent{\rrfvar}{\ivr} &
  \lsequent{\Gamma}{\ddiamond{\pevolvein{\D{x}=\genDE{x}}{\rrfvar}}{\rfvar}}
  }
  {\lsequent{\Gamma}{\ddiamond{\pevolvein{\D{x}=\genDE{x}}{\ivr}}{\rfvar}}}
}{}
\end{calculus}
\end{corollary}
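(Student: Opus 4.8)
The plan is to obtain both \irref{dDR} and \irref{gddR} as the diamond-modality shadows of box-modality principles already established for \dL, using the abbreviation $\ddiamond{\alpha}{\rfvar} \equiv \lnot\dbox{\alpha}{\lnot\rfvar}$ together with the elementary fact that \emph{shrinking} the evolution domain constraint of an ODE only \emph{removes} solutions (so box modalities are antitone and diamond modalities monotone in the domain constraint). For \irref{dDR}, I would first dualize an available box refinement principle, e.g.\ the differential cut axiom \irref{DC}: substituting $\lnot\rfvar$ for its postcondition and pushing negations inward turns \irref{DC} into the equivalent diamond statement
\[
  \dbox{\pevolvein{\D{x}=\genDE{x}}{\ivr}}{\rrfvar} \limply \big( \ddiamond{\pevolvein{\D{x}=\genDE{x}}{\ivr}}{\rfvar} \lbisubjunct \ddiamond{\pevolvein{\D{x}=\genDE{x}}{\ivr\land\rrfvar}}{\rfvar} \big).
\]
Instantiating this with the two domain constraints exchanged---its ``$\ivr$'' taken to be the refined domain $\rrfvar$ of \irref{dDR} and its ``$\rrfvar$'' taken to be $\ivr$---yields, under the hypothesis $\dbox{\pevolvein{\D{x}=\genDE{x}}{\rrfvar}}{\ivr}$ of \irref{dDR}, the equivalence between $\ddiamond{\pevolvein{\D{x}=\genDE{x}}{\rrfvar}}{\rfvar}$ and $\ddiamond{\pevolvein{\D{x}=\genDE{x}}{\rrfvar\land\ivr}}{\rfvar}$. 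It then suffices to weaken the domain constraint from $\rrfvar\land\ivr$ to $\ivr$ in the diamond modality---sound because every solution remaining in $\rrfvar\land\ivr$ is in particular a solution remaining in $\ivr$---and chaining these implications gives exactly the consequent $\ddiamond{\pevolvein{\D{x}=\genDE{x}}{\rrfvar}}{\rfvar} \limply \ddiamond{\pevolvein{\D{x}=\genDE{x}}{\ivr}}{\rfvar}$ of \irref{dDR}.

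With \irref{dDR} available, the derived rule \irref{gddR} follows by supplying its hypothesis from the side-condition premise. From the premise $\lsequent{\rrfvar}{\ivr}$, an application of differential weakening \irref{dW} derives $\lsequent{\Gamma}{\dbox{\pevolvein{\D{x}=\genDE{x}}{\rrfvar}}{\ivr}}$. Using this box formula as the hypothesis of \irref{dDR} yields the implication $\ddiamond{\pevolvein{\D{x}=\genDE{x}}{\rrfvar}}{\rfvar} \limply \ddiamond{\pevolvein{\D{x}=\genDE{x}}{\ivr}}{\rfvar}$ under $\Gamma$, and a \irref{cut} (modus ponens) against the second premise $\lsequent{\Gamma}{\ddiamond{\pevolvein{\D{x}=\genDE{x}}{\rrfvar}}{\rfvar}}$ closes the derivation of $\lsequent{\Gamma}{\ddiamond{\pevolvein{\D{x}=\genDE{x}}{\ivr}}{\rfvar}}$. (Because the hypothesis of \irref{dDR} here arises from the valid implication $\rrfvar\limply\ivr$, rule \irref{gddR} could also be justified directly by diamond domain-monotonicity; routing it through \irref{dDR} just exhibits the corollary structure.)

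I do not anticipate a genuine obstacle: there is no semantic content beyond what \irref{DC}, \irref{dW}, and the monotonicity of ODE solutions in the evolution domain already provide, and the entire argument is propositional rearrangement under the box/diamond duality. The only thing demanding care is the bookkeeping---keeping straight which evolution domain constraint plays the role of the ``larger'' domain and which the ``refined'' domain when instantiating the dualized \irref{DC}, and tracking negations through the duality so that the equivalences and implications face in the intended directions.
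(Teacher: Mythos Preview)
Your derivation of \irref{gddR} from \irref{dDR} via \irref{dW} matches the paper exactly.

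For \irref{dDR} itself, your route differs from the paper's. The paper does not go through \irref{DC}: it dualizes the \emph{differential modus ponens} axiom \irref{DMP} directly. Concretely, after pushing negations with \irref{diamond}, the goal becomes
\(\lsequent{\dbox{\pevolvein{\D{x}=\genDE{x}}{\rrfvar}}{\ivr},\ \dbox{\pevolvein{\D{x}=\genDE{x}}{\ivr}}{\lnot\rfvar}}{\dbox{\pevolvein{\D{x}=\genDE{x}}{\rrfvar}}{\lnot\rfvar}}\),
which is a single \irref{DMP} step whose remaining obligation \(\dbox{\pevolvein{\D{x}=\genDE{x}}{\rrfvar}}{(\rrfvar\limply\ivr)}\) closes by \irref{K}+\irref{dW} from the first antecedent. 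That is the whole proof.

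Your route via dualized \irref{DC} is viable but longer, and one step is not yet a \dL derivation: after obtaining \(\ddiamond{\pevolvein{\D{x}=\genDE{x}}{\rrfvar}}{\rfvar} \lbisubjunct \ddiamond{\pevolvein{\D{x}=\genDE{x}}{\rrfvar\land\ivr}}{\rfvar}\), you invoke the domain weakening \(\ddiamond{\pevolvein{\D{x}=\genDE{x}}{\rrfvar\land\ivr}}{\rfvar} \limply \ddiamond{\pevolvein{\D{x}=\genDE{x}}{\ivr}}{\rfvar}\) with only a semantic justification (``every solution remaining in $\rrfvar\land\ivr$ is in particular a solution remaining in $\ivr$''). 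To make that a syntactic step you would dualize and appeal to \irref{DMP} with the trivially provable hypothesis \(\dbox{\pevolvein{\D{x}=\genDE{x}}{\rrfvar\land\ivr}}{(\rrfvar\land\ivr\limply\ivr)}\)---but then you are using \irref{DMP} anyway, and the detour through \irref{DC} buys nothing. The paper's one-shot use of \irref{DMP} is the cleaner formalization; your high-level plan is sound, just underspecified at that one point and less economical.
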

\begin{proofsketch}[app:diaaxioms]
Axiom \irref{dDR} is the diamond version of  \dL's domain refinement axiom that underlies axiom \irref{DC}; if solutions never leave $\ivr$ when staying in $\rrfvar$ (first assumption), then any solution staying in $\rrfvar$ (second assumption) must also stay in $\ivr$ (conclusion).
The rule \irref{gddR} derives from \irref{dDR} using rule \irref{dW} on its first assumption.
\end{proofsketch}

\subsection{Real Induction}
\label{subsec:realind}
The final axiomatic extension is based on the real induction principle~\cite{doi:10.1080/0025570X.2019.1549902}, which is briefly recalled:

\begin{definition}[Inductive subset~\cite{doi:10.1080/0025570X.2019.1549902}]
\label{def:indsubset}
The subset $S\subseteq\interval{[a,b]}$ is called an \emph{inductive} subset of the compact interval $\interval{[a,b]}$ iff for all $a \leq \zeta \leq b$ and $\interval{[a,\zeta)} \subseteq S$,
\begin{enumerate}
\item [\textcircled{1}] $\zeta \in S$.
\item [\textcircled{2}] If $\zeta < b$ then $\interval{(\zeta,\zeta+\epsilon]} \subseteq S$ for some $\epsilon>0$.
\end{enumerate}
Here, $\interval{[a,a)}$ is the empty interval, hence \textcircled{1} requires $a \in S$.
\end{definition}

\begin{proposition}[Real induction~\cite{doi:10.1080/0025570X.2019.1549902}]
\label{prop:realindR}
The subset $S \subseteq [a,b]$ is inductive iff $S=[a,b]$.
\end{proposition}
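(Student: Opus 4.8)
The ``$\lylpmi$'' direction is immediate: if $S = [a,b]$, then for every $\zeta$ with $a \le \zeta \le b$ both closure conditions of~\rref{def:indsubset} hold trivially, since $\zeta \in [a,b] = S$ handles~\textcircled{1} and any interval $(\zeta,\zeta+\epsilon]$ clipped to $[a,b]$ lies inside $[a,b] = S$. The substance is the ``$\limply$'' direction, which I would prove by a supremum argument.

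Assume $S$ is inductive. The plan is to track how far the initial segment $[a,\zeta)$ can be forced to lie in $S$. Consider the set $M = \{x \in [a,b] \with [a,x) \subseteq S\}$. It is nonempty, because the empty-interval convention $[a,a) = \emptyset \subseteq S$ puts $a \in M$, and it is bounded above by $b$; let $\zeta = \sup M$, so $a \le \zeta \le b$. First I would check that $[a,\zeta) \subseteq S$: for any $y$ with $a \le y < \zeta$, the definition of supremum supplies some $x \in M$ with $y < x \le \zeta$, whence $y \in [a,x) \subseteq S$. Thus $[a,\zeta) \subseteq S$, and condition~\textcircled{1} of~\rref{def:indsubset} gives $\zeta \in S$, so in fact $[a,\zeta] \subseteq S$.

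It then remains to rule out $\zeta < b$. Suppose $\zeta < b$; since $[a,\zeta) \subseteq S$, condition~\textcircled{2} furnishes $\epsilon > 0$ with $(\zeta,\zeta+\epsilon] \subseteq S$, and shrinking $\epsilon$ if necessary we may assume $\zeta + \epsilon \le b$. Combining with $[a,\zeta] \subseteq S$ yields $[a,\zeta+\epsilon] \subseteq S$, hence $\zeta+\epsilon \in M$ --- contradicting that $\zeta$ is an upper bound of $M$, since $\zeta + \epsilon > \zeta$. Therefore $\zeta = b$, which means $[a,b) \subseteq S$ and, once more by~\textcircled{1}, $b \in S$, i.e.\ $S = [a,b]$. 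I do not anticipate a genuine obstacle here: the argument uses only the least-upper-bound property of $\reals$. The only points requiring care are the degenerate cases --- the empty-interval convention $[a,a) = \emptyset$ that bootstraps $a \in S$, and keeping $\zeta+\epsilon$ within $[a,b]$ --- both of which are dispatched above.
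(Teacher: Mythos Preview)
Your proof is correct and follows essentially the same completeness-based argument as the paper: identify a critical point $\zeta$ via the least-upper-bound property, then use conditions~\textcircled{1} and~\textcircled{2} to push past it. The only cosmetic difference is that the paper takes $\zeta = \inf([a,b]\setminus S)$ and argues by contradiction, whereas you take $\zeta = \sup\{x : [a,x)\subseteq S\}$ and argue directly; both routes are standard and equally short.
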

\begin{proof}
In the ``$\mylpmi$'' direction, $S = [a,b]$ is inductive by definition.
For the ``$\mimply$'' direction, let $S \subseteq [a,b]$ be inductive.
Suppose $S \neq [a,b]$, so that the complement set $\scomplement{S} = [a,b] \setminus S$ is nonempty.
Let $\zeta$ be the infimum of $\scomplement{S}$, then $\zeta \in [a,b]$ since $[a,b]$ is left-closed.
First, note that $[a,\zeta) \subseteq S$.
Otherwise, $\zeta$ is not an infimum of $\scomplement{S}$, because there would exist $a \leq \tau < \zeta$, such that $\tau \in \scomplement{S}$.
By \textcircled{1}, $\zeta \in S$.
Next, if $\zeta = b$, then $S = [a,b]$, contradiction.
Thus, $\zeta < b$, and by \textcircled{2}, $(\zeta,\zeta+\epsilon] \subseteq S$ for some $\epsilon > 0$.
Since $\zeta \in S$, this implies that $\zeta + \epsilon$ is a greater lower bound of $\scomplement{S}$ than $\zeta$, contradiction.
\end{proof}

\rref{prop:realindR} is based on the completeness of the reals~\cite{doi:10.1080/0025570X.2019.1549902} for compact intervals $[a,b]$ of $\reals$.
Applying it to the time axis of ODE solutions yields real induction \emph{along} solutions of differential equations.
For brevity, only the real induction axiom for systems without evolution domain constraints is presented here, leaving the general version to~\rref{app:extaxiomatization}, since evolution domains are definable in \dL \cite{DBLP:conf/lics/Platzer12b}.

\begin{lemma}[Real induction]
\label{lem:realindODE}
The real induction axiom~\irref{RealInd} is sound, where $y$ is fresh in \m{\dbox{\pevolve{\D{x}=\genDE{x}}}{\rfvar}}.
\[
  \cinferenceRule[RealInd|RI]{real induction axiom}
  {\linferenceRule[equiv]
  {\lforall{y}{\dbox{\pevolvein{\D{x}=\genDE{x}}{\rfvar \lor x=y}}{
  \big(x=y \limply \rfvar \land \ddiamond{\pevolvein{\D{x}=\genDE{x}}{\rfvar \lor x=y}}{x \neq y}\big)}}}
  {\axkey{\dbox{\pevolve{\D{x}=\genDE{x}}}{\rfvar}}}
  }{}
\]
\end{lemma}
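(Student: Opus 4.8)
The plan is to prove that the real induction axiom \irref{RealInd} is \emph{sound}, and since it is an equivalence, I would establish both implications, reducing each to the real induction principle \rref{prop:realindR} applied along the time axis of ODE solutions, together with existence and uniqueness of solutions to smooth ODEs (Picard-Lindel\"{o}f, \rref{subsec:existenceuniqcont}) and the standing assumption of this section that $\D{x}=\genDE{x}$ locally evolves $x$, i.e., that the vector $\genDE{x}$ vanishes in no state. Fix an initial state $\iget[state]{\I}$; recall that $\imodels{\I}{\dbox{\pevolve{\D{x}=\genDE{x}}}{\rfvar}}$ holds iff $\rfvar$ is true at $\solvar(\zeta)$ for every solution $\solvar:[0,T]\to\States$ of $\D{x}=\genDE{x}$ with $\solvar(0)=\iget[state]{\I}$ and every $\zeta\in[0,T]$. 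Since $y$ is fresh, it occurs neither in $\rfvar$ nor in $\genDE{x}$ and is held constant along every solution considered.

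For the ``$\limply$'' direction I would assume $\imodels{\I}{\dbox{\pevolve{\D{x}=\genDE{x}}}{\rfvar}}$ and fix any value for $y$. A run of $\pevolvein{\D{x}=\genDE{x}}{\rfvar\lor x=y}$ is a fortiori a run of the unconstrained ODE, so the first postcondition conjunct $\rfvar$ holds at its endpoint by assumption. For the second conjunct at an endpoint $\iget[state]{\It}$ satisfying $x=y$, take the unique maximal forward solution $\delta$ from $\iget[state]{\It}$, which runs for positive time by Picard-Lindel\"{o}f; concatenating the incoming solution with $\delta$ shows $\rfvar$ holds throughout $\delta$, hence $\delta$ stays in $\rfvar\lor x=y$; and since $\genDE{x}$ does not vanish at $\iget[state]{\It}$, some component of $x$ has nonzero time-derivative there and therefore moves off its value $y$ for all sufficiently small positive times, so $\delta$ reaches a state with $x\neq y$. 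Restricting $\delta$ to such a short duration witnesses $\ddiamond{\pevolvein{\D{x}=\genDE{x}}{\rfvar\lor x=y}}{x\neq y}$.

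For the ``$\lylpmi$'' direction I would assume the quantified box formula, fix an arbitrary solution $\solvar:[0,T]\to\States$ of $\D{x}=\genDE{x}$ from $\iget[state]{\I}$, and show that
\[ S \mdefeq \{\zeta\in[0,T] \with \solvar(\tau)\models\rfvar \text{ for all } \tau\in[0,\zeta]\} \]
is an inductive subset of $[0,T]$ in the sense of \rref{def:indsubset}; then $S=[0,T]$ by \rref{prop:realindR}, so $\solvar(T)\models\rfvar$, and as $\solvar$ is arbitrary this gives $\imodels{\I}{\dbox{\pevolve{\D{x}=\genDE{x}}}{\rfvar}}$. For condition \textcircled{1}, given $\zeta$ with $[0,\zeta)\subseteq S$, I would instantiate the quantified $y$ with $\solvar(\zeta)(x)$: the prefix of $\solvar$ on $[0,\zeta]$ (with $y$ held at this value) stays in $\rfvar\lor x=y$ — in $\rfvar$ on $[0,\zeta)$ because $[0,\zeta)\subseteq S$, and at $x=y$ at time $\zeta$ — and its endpoint satisfies $x=y$, so the box formula yields $\solvar(\zeta)\models\rfvar$, hence $\zeta\in S$ (for $\zeta=0$ this uses only the duration-zero run). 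For condition \textcircled{2} with such a $\zeta<T$, the same instantiation of $y$ also yields $\ddiamond{\pevolvein{\D{x}=\genDE{x}}{\rfvar\lor x=y}}{x\neq y}$ at $\solvar(\zeta)$, i.e., a solution from $\solvar(\zeta)$ that stays in $\rfvar\lor x=y$ and reaches $x\neq y$; by uniqueness this solution coincides with the continuation of $\solvar$, so $\solvar$ stays in $\rfvar\lor x=y$ on some $(\zeta,\zeta+\epsilon_1]$, while local evolution forces some component of $x$ off the value $y=\solvar(\zeta)(x)$ on some $(\zeta,\zeta+\epsilon_2]$, so $\solvar$ in fact satisfies $\rfvar$ on $(\zeta,\zeta+\epsilon]$ with $\epsilon\mdefeq\min(\epsilon_1,\epsilon_2)$; with $\zeta\in S$ this gives $(\zeta,\zeta+\epsilon]\subseteq S$.

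I expect condition \textcircled{2} to be the main obstacle: the axiom's diamond promises only that \emph{some} solution \emph{eventually} escapes $x=y$ while remaining in $\rfvar\lor x=y$, whereas inductivity needs $\rfvar$ itself to hold on an entire right-neighbourhood of $\zeta$. Closing this gap is exactly where uniqueness of solutions (to identify the diamond's witness with the continuation of $\solvar$) and local evolution (to prevent $x$ from lingering at $y$, so that the disjunctive domain $\rfvar\lor x=y$ collapses to $\rfvar$ just past $\zeta$) are both indispensable; indeed, dropping local evolution makes the equivalence fail, since for a stationary system $\genDE{x}\equiv0$ the left-hand side can be true while the right-hand side's diamond $x\neq y$ is unsatisfiable. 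Finally, the version with a genuine evolution domain constraint, deferred to \rref{app:extaxiomatization}, follows from this constraint-free case by the standard \dL reduction internalizing evolution domains.
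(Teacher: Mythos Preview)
Your proposal is correct and follows essentially the same approach as the paper: both directions are argued semantically, with the ``$\lylpmi$'' direction reducing to real induction (\rref{prop:realindR}) on the time axis by instantiating $y$ with $\solvar(\zeta)(x)$, and the key ingredients (Picard--Lindel\"of existence, uniqueness for identifying the diamond witness with the continuation of $\solvar$, and local evolution for forcing $x\neq y$ on a right-neighbourhood) are exactly those the paper uses. Two minor differences worth noting: the paper actually proves the more general domain-constrained axiom \irref{RealIndIn} directly in the appendix and obtains \irref{RealInd} as the special case $\ivr\equiv\ltrue$, rather than the reverse derivation you suggest in your final paragraph; and the paper's proof is more explicit about the bookkeeping with fresh variables $y$, constructing auxiliary states $\iget[state]{\I}_y$ and solutions $\solvar_y$ and invoking the coincidence lemma (\rref{lem:coincide}) to transfer truth of $\rfvar$ between them, whereas your sketch handles this informally (which is fine at the level of a proof plan, but would need to be made precise in a full proof).
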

\begin{proofsketch}[app:extaxiomatization]
The \irref{RealInd} axiom follows from the real induction principle~\cite{doi:10.1080/0025570X.2019.1549902} and the Picard-Lindel\"{o}f theorem~\cite[\S10.VI]{Walter1998}.
\end{proofsketch}

\begin{wrapfigure}[17]{r}{0.4\textwidth}
\centering
\includegraphics[width=0.38\textwidth,trim=0 10 0 10,clip]{graphics-long/fig-RI-example.pdf}
\caption{The half-open green disk is not invariant for the ODE $\alpha_e$ from~\rref{eq:example-ode} because the red and blue trajectories spiral out of it towards the unit circle at a closed (solid green) or open (dashed green) boundary, respectively.}
\label{fig:realinduct}
\end{wrapfigure}

Real induction axiom~\irref{RealInd} can be understood in relation to~\rref{def:indsubset}: its RHS is true in a state iff the subset of times at which the solution satisfies $\rfvar$ is inductive.
First, $\lforall{y}{\dbox{\dots}{\big(x=y \limply \dots\big)}}$ can be understood as quantifying over all final states ($x=y$) reached by trajectories staying within $\rfvar$ except possibly at the endpoint $x=y$.
This corresponds to $[a,\zeta) \subseteq S$ in~\rref{def:indsubset}.
The left conjunct ($\rfvar$) under the box modality expresses that $\rfvar$ is still true at such an endpoint, corresponding to \textcircled{1} in~\rref{def:indsubset}.
The right conjunct ($\ddiamond{\pevolvein{\D{x}=\genDE{x}}{\rfvar \lor x=y}}{x \neq y}$) expresses that $\rfvar$ continues to remain true locally when following the ODE for a short time, corresponding to \textcircled{2} in~\rref{def:indsubset}.

To see the topological significance of \irref{RealInd}, recall the ODE $\alpha_e$ from~\rref{eq:example-ode} and consider a set of points that is \emph{not invariant}.
Figure~\ref{fig:realinduct} illustrates two trajectories that leave the half-open candidate invariant disk characterized by the disjunctive formula: $u^2+v^2 < \frac{1}{4} \lor u^2+v^2 = \frac{1}{4}\land u \geq 0$.
Trajectories starting in the disk leave it through its boundary but only in one of two ways: either at a point which is also in the disk (red trajectory exiting right) or which is not in the disk (blue trajectory exiting left).
The left conjunct of~\irref{RealInd} rules out trajectories like the blue one exiting left in \rref{fig:realinduct}, while the right conjunct rules out trajectories like the red trajectory exiting right.

The right conjunct of axiom~\irref{RealInd} also suggests a way to use it: axiom~\irref{RealInd} reduces proofs of invariance to local progress properties under the box modality.
This motivates the following syntactic modality abbreviation for \emph{local progress} into an evolution domain $\ivr$:
\[ \dprogressin{\D{x}=\genDE{x}}{\ivr}{} \mdefequiv \ddiamond{\pevolvein{\D{x}=\genDE{x}}{\ivr \lor x=y}}{\,x\neq y} \]

All proofs in this article use the modality with an initial assumption $x=y$, where $y$ is fresh.
In this case, where \(\ivaluation{\I}{x}=\ivaluation{\I}{y}\), since the ODE locally evolves $x$, the $\ddnext$ modality has this semantics:
\begin{align*}
&\imodels{\I}{\dprogressin{\D{x}=\genDE{x}}{\ivr}{}}~\text{iff}~\text{there is a function}~\solvar:[0,T] \to \States \text{~with~} T>0, \solvar(0)=\iget[state]{\I},\\
&\solvar~\text{solves the ODE}~\D{x}=\genDE{x} ~\text{and}~ \solvar(\zeta) \in \imodel{\I}{\ivr}~\text{for all $\zeta$ in the half-open interval}~(0,T]
\end{align*}

Thus, the abbreviation $\ddnext$ is a continuous-time version of the \emph{next} modality of temporal logic~\cite{DBLP:reference/mc/2018} for differential equations.
Conventionally, such a next state operator is excluded from continuous-time generalizations of temporal logic~\cite{DBLP:reference/mc/2018} because there is no unique ``next'' state in the continuous setting.
The local progress modality $\ddnext$ overcomes this by instead quantifying over \emph{some} time interval $(0,T]$, with $T > 0$, of states along the solution.
Intuitively, the exclusion of time $0$ is because the $\ddnext$ modality describes what solutions will do \emph{next} (or \emph{locally}) rather than what it is doing \emph{now}.
A precise (topological) explanation is provided in~\rref{app:localprogress}.
A complete characterization of local progress for all semianalytic formulas is derived in~\rref{sec:semianalyticinvs}.
As a corollary this shows that, like its discrete counterpart, the $\ddnext$ modality is self-dual for semianalytic $\ivr$.

The final derived rule \irref{realind} shows what the added axioms and local progress provide: axiom \irref{RealInd} reduces global invariance properties of ODEs to local progress properties.
These local progress properties are provable using \irref{Cont+Uniq} and the existing \dL axioms, as shown in the next section.
\begin{corollary}[Real induction rule]
\label{cor:realind}
The real induction proof rule~\irref{realind} derives from \irref{RealInd+Dadjoint}.
Variables $y$ are fresh in the ODE $\D{x}=\genDE{x}$ and formula $\rfvar$.

\[
\dinferenceRule[realind|rI]{}
{\linferenceRule
  {
   \lsequent{\initassum, \rfvar}{\dprogressin{\D{x}=\genDE{x}}{\rfvar}} \qquad
   \lsequent{\initassum, \lnot{\rfvar}}{\dprogressin{\D{x}=-\genDE{x}}{\lnot{\rfvar}}}
  }
  {\lsequent{\rfvar}{\dbox{\pevolve{\D{x}=\genDE{x}}}{\rfvar}}}
}{}
\]
\end{corollary}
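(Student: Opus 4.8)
The plan is to derive \irref{realind} from the real induction axiom \irref{RealInd}, using \irref{Dadjoint} together with the uniqueness axiom \irref{Uniq} for the one genuinely nontrivial step. Applying \irref{RealInd} to the conclusion $\lsequent{\rfvar}{\dbox{\pevolve{\D{x}=\genDE{x}}}{\rfvar}}$ reduces it to proving $\lsequent{\rfvar}{\lforall{y}{\dbox{\pevolvein{\D{x}=\genDE{x}}{\rfvar \lor x=y}}{\big(x=y \limply \rfvar \land \dprogressin{\D{x}=\genDE{x}}{\rfvar}\big)}}}$, where the fresh $y$ from \irref{realind} is also fresh in $\dbox{\pevolve{\D{x}=\genDE{x}}}{\rfvar}$ as \irref{RealInd} requires; a $\forall$-right step then removes the quantifier. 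Writing $\alpha_y \mnodefequiv {\pevolvein{\D{x}=\genDE{x}}{\rfvar \lor x=y}}$, the propositional tautology $(x=y \limply \rfvar) \land (x=y \land \rfvar \limply \dprogressin{\D{x}=\genDE{x}}{\rfvar}) \limply (x=y \limply \rfvar \land \dprogressin{\D{x}=\genDE{x}}{\rfvar})$ allows monotonicity \irref{MbW}, followed by \irref{band} and \irref{andr}, to reduce the goal to two subgoals: (A) $\lsequent{\rfvar}{\dbox{\alpha_y}{(x=y \limply \rfvar)}}$ and (B) $\lsequent{}{\dbox{\alpha_y}{(x=y \land \rfvar \limply \dprogressin{\D{x}=\genDE{x}}{\rfvar})}}$.

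Subgoal (B) closes immediately: its postcondition is exactly the valid formula corresponding to the first premise of \irref{realind}, so one more \irref{MbW} step against the trivially provable $\dbox{\alpha_y}{\ltrue}$ discharges it. Subgoal (A) is the heart of the argument. Semantically, (A) can fail only if some solution $\solvar$ of $\D{x}=\genDE{x}$ on an interval $[0,T]$ starts in $\rfvar$, stays inside $\rfvar \lor x=y$ throughout, and ends with $x=y \land \lnot\rfvar$ at time $T$; here $T>0$, since $T=0$ would already contradict that the initial state satisfies $\rfvar$. The time reversal $s \mapsto \solvar(T-s)$ is a solution of $\D{x}=-\genDE{x}$ issuing from $\solvar(T)$, a state satisfying $x=y \land \lnot\rfvar$, so the second premise of \irref{realind} applies there and furnishes a solution of $\D{x}=-\genDE{x}$ from $\solvar(T)$ that stays in $\lnot\rfvar$ on some half-open interval $(0,\delta]$ with $\delta>0$. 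By uniqueness of solutions (\irref{Uniq}), this solution coincides with the time reversal of $\solvar$ on $(0,\min(\delta,T)]$, so there it satisfies both $\lnot\rfvar$ and, being a backward piece of $\solvar$, the domain constraint $\rfvar \lor x=y$; hence $x=y$ holds along that interval, $x$ is constant on it (as $y$ is fresh and does not evolve), $\genDE{x}$ vanishes there, and this contradicts the standing assumption that $\D{x}=\genDE{x}$ locally evolves $x$.

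Syntactically I would realize the argument for (A) by using \irref{Dadjoint} to pass to the reversed differential equation $\D{x}=-\genDE{x}$, invoking the second premise of \irref{realind} there, fusing the backward local-progress solution it provides with the backward segment of $\solvar$ via \irref{Uniq}, and finally closing with \irref{dW} and \irref{qear} against the no-fixpoint hypothesis. The main obstacle will be exactly this subgoal (A): one has to carry the disjunctive evolution domain $\rfvar \lor x=y$ correctly through the time reversal so that $\lnot\rfvar$ is forced onto $x=y$, match up the half-open local-progress interval produced by the second premise with the backward portion of $\solvar$, and exploit the local-evolution (no-fixpoint) assumption for the final contradiction. The \irref{RealInd} reduction, the conjunction split, and subgoal (B) are all routine modal and propositional reasoning by comparison.
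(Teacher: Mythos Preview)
Your proposal is correct and follows essentially the same route as the paper's proof. Both apply \irref{RealInd}, split the postcondition into the $\rfvar$-conjunct and the local-progress conjunct via \irref{MbW}/\irref{band}, discharge the local-progress conjunct directly from the first premise, and handle the $\rfvar$-conjunct by reversing time with \irref{Dadjoint} (the paper uses the derived \irref{diareflect}) and invoking the second premise together with \irref{Uniq}.

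The one place where the paper is more explicit than your sketch is subgoal~(A): the paper case-splits on whether $x=y$ holds \emph{initially}. The $x=y$ case closes trivially via \irref{V} (since $\rfvar(y)$ is then a constant assumption), and only the $x\neq y$ case requires the time-reversal argument. Your semantic reasoning (``$T>0$, since $T=0$ would already contradict\ldots'') covers exactly this distinction, but syntactically you will need this case split to make the \irref{Dadjoint}/\irref{diareflect} step apply cleanly. The paper also packages the final steps through the derived axioms \irref{bigsmallequiv} and \irref{dualityimp}, which cleanly convert between the $\ddnext$ modality and the raw diamond with $x=y$ in the domain; these are the syntactic tools that resolve the ``main obstacle'' you correctly anticipate.
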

\begin{proofsketch}[app:diaaxioms]
Rule~\irref{realind} derives from axiom \irref{RealInd}, where the left/right premises of the rule correspond respectively to the right/left conjunct of the RHS of~\irref{RealInd}.
Axiom \irref{Dadjoint} is used to syntactically flip signs in the right premise.
\end{proofsketch}

\section{Semianalytic Invariants}
\label{sec:semianalyticinvs}

From now on, assume domain constraint $\ivr \equiv \ltrue$ since it is definable \cite{DBLP:conf/lics/Platzer12b} and not central to the core idea of this section.
Using the generalizations of \irref{RealInd+realind} from~\rref{app:axiomatization}, the case of semianalytic invariants for ODEs with arbitrary semianalytic evolution domain $\ivr$ is given in~\rref{app:completeness}.

The first step in invariance proofs for semianalytic $\rfvar$ is to use derived rule \irref{realind}, which yields premises of the form \(\lsequent{\initassum, \rfvar}{\dprogressin{\D{x}=\genDE{x}}{\rfvar}}\) (modulo sign changes and negation).
These premises express local progress properties of the ODE $\D{x}=\genDE{x}$.
Analogously to the equivalent arithmetic reduction of equational properties of differential equations in~\rref{thm:algcomplete} using the \emph{finite} differential radical formula (\rref{def:diffradfml}), the key insight is that local progress for any semianalytic formula is also (provably) completely characterized by a \emph{finite} semianalytic progress formula.

\subsection{Local Progress}
\label{subsec:localprogress}

The characterization of local progress was previously used implicitly for semialgebraic invariants~\cite{DBLP:conf/emsoft/LiuZZ11,DBLP:journals/cl/GhorbalSP17}.
This section shows how it can be derived syntactically in \dL for extended term languages and proves the completeness of this characterization.
The derivation is built up systematically, starting from the base case of atomic inequalities before moving on to the full semianalytic case.
Interesting properties of this characterization, e.g., self-duality, are observed along the way.

\subsubsection{Atomic Inequalities}
Consider the atomic inequality $\etermA \cmp 0$.
Intuitively, to show \emph{local progress} into such an inequality, it is sufficient to locally consider the \emph{first} (significant) Lie derivative of $\etermA$ because the sign of a smooth function is locally dominated by the sign of its first non-zero derivative (if one exists).
The key to its syntactic rendition is the following lemma for non-strict inequalities:

\begin{lemma}[Local progress step]
The local progress step axiom~\irref{Lpgeq} derives from \irref{Cont}.
Variables $y$ are fresh in the ODE $\D{x}=\genDE{x}$ and extended term $\etermA$.
\[
\dinferenceRule[Lpgeq|LPi$_\geq$]{}
{
\linferenceRule[impl]
  {\initassum}
  {\Big( \etermA \geq 0 \land \big(\etermA=0 \limply
    \ddiamond{\pevolvein{\D{x}=\genDE{x}}{\lied[]{\genDE{x}}{\etermA} \geq 0}}{\notinitassum}\big)
    \limply
    \axkey{\ddiamond{\pevolvein{\D{x}=\genDE{x}}{\etermA \geq 0}}{\notinitassum}}\Big)}
}{}
\]
\end{lemma}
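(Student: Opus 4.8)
The plan is to reduce the succedent $\ddiamond{\pevolvein{\D{x}=\genDE{x}}{\etermA \geq 0}}{\notinitassum}$ to the two cases afforded by the antecedent $\etermA \geq 0$: the \emph{open} case $\etermA > 0$ and the \emph{boundary} case $\etermA = 0$. These are separated by a \irref{cut} with the \irref{qear}-valid disjunction $\etermA \geq 0 \limply \etermA > 0 \lor \etermA = 0$ followed by \irref{orl}. Beyond the base \dL calculus and the already-derived diamond domain principles (\rref{cor:diadiffeqax}), the only genuinely new ingredient will be axiom \irref{Cont}, as claimed.

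In the open case $\etermA > 0$, axiom \irref{Cont} instantiated with the extended term $\etermA$ turns the assumption $\initassum$ into $\ddiamond{\pevolvein{\D{x}=\genDE{x}}{\etermA > 0}}{\notinitassum}$ outright (this is where the standing hypothesis that $\D{x}=\genDE{x}$ locally evolves $x$ enters). Relaxing the evolution domain from $\etermA > 0$ to $\etermA \geq 0$ by the derived diamond domain rule \irref{gddR}, whose arithmetic premise $\etermA > 0 \limply \etermA \geq 0$ closes by \irref{qear}, yields the goal. The gated second conjunct of the antecedent is unused here.

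In the boundary case $\etermA = 0$, that gated second conjunct fires and supplies $\ddiamond{\pevolvein{\D{x}=\genDE{x}}{\lied[]{\genDE{x}}{\etermA} \geq 0}}{\notinitassum}$; it then remains only to transform the evolution domain of this diamond formula from $\lied[]{\genDE{x}}{\etermA} \geq 0$ into $\etermA \geq 0$. This is a domain \emph{restriction}, not a relaxation, so \irref{gddR} does not apply: the implication $\lied[]{\genDE{x}}{\etermA} \geq 0 \limply \etermA \geq 0$ is not valid. Instead I use the derived axiom \irref{dDR}, whose box premise $\dbox{\pevolvein{\D{x}=\genDE{x}}{\lied[]{\genDE{x}}{\etermA} \geq 0}}{\etermA \geq 0}$ says that solutions confined to $\lied[]{\genDE{x}}{\etermA} \geq 0$ never leave $\etermA \geq 0$. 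I prove this box formula by \irref{DIgeq}: rewriting $\der{\etermA}$ to the Lie derivative $\lied[]{\genDE{x}}{\etermA}$ via \irref{Dder+DE+assignb} turns the differential-invariant premise of \irref{DIgeq} into $\dbox{\pevolvein{\D{x}=\genDE{x}}{\lied[]{\genDE{x}}{\etermA} \geq 0}}{\lied[]{\genDE{x}}{\etermA} \geq 0}$, which closes by \irref{dW} since the postcondition is literally the evolution domain constraint; and the residual arithmetic obligation of \irref{DIgeq}, namely $\lied[]{\genDE{x}}{\etermA} \geq 0 \limply \etermA \geq 0$, closes by \irref{qear} under the case assumption $\etermA = 0$, since $\etermA = 0$ entails $\etermA \geq 0$ unconditionally. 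With this box premise established, \irref{dDR} delivers $\ddiamond{\pevolvein{\D{x}=\genDE{x}}{\etermA \geq 0}}{\notinitassum}$, closing the case.

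The conceptually delicate point is precisely the use of \irref{dDR} to \emph{shrink} the evolution domain of a diamond formula, which is sound only because of the accompanying box certificate $\dbox{\pevolvein{\D{x}=\genDE{x}}{\lied[]{\genDE{x}}{\etermA} \geq 0}}{\etermA \geq 0}$. That certificate is the syntactic incarnation of the analytic fact underpinning the entire local-progress characterization: near a zero of the smooth function $\etermA$, the sign of $\etermA$ is governed by the sign of its first Lie derivative, so that keeping that derivative non-negative keeps $\etermA$ from dipping below $0$. Everything else — the propositional case split, the two \irref{qear} side conditions, and the mechanical differential-axiom rewrite turning $\der{\etermA}$ into $\lied[]{\genDE{x}}{\etermA}$ under the ODE — is routine bookkeeping.
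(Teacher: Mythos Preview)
Your proof is correct and follows essentially the same approach as the paper's: a \irref{qear} case split on $\etermA>0$ versus $\etermA=0$, with the open case closing by \irref{Cont} followed by the domain relaxation \irref{gddR}, and the boundary case closing by \irref{dDR} with the box premise $\dbox{\pevolvein{\D{x}=\genDE{x}}{\lied[]{\genDE{x}}{\etermA}\geq 0}}{\etermA\geq 0}$ discharged via \irref{DIgeq} (after the \irref{Dder+DE+assignb} rewrite of $\der{\etermA}$ to $\lied[]{\genDE{x}}{\etermA}$). Your more explicit unpacking of how \irref{DIgeq} closes---the differential premise reducing to the domain constraint itself by \irref{dW}, and the arithmetic residual $\lied[]{\genDE{x}}{\etermA}\geq 0 \limply \etermA\geq 0$ holding trivially under $\etermA=0$---matches what the paper compresses into a single \irref{DI} step.
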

\begin{proof}
The proof starts with a \irref{orl} case split since $\etermA \geq 0$ is equivalent to \(\etermA > 0 \lor \etermA = 0\) by~\irref{qear}.
The resulting premises are respectively abbreviated \textcircled{1} for $\etermA > 0$ and \textcircled{2} for $\etermA = 0$, and continued below.
{\footnotesizeoff\renewcommand*{\arraystretch}{1.3}%
\begin{sequentdeduction}[array]
\linfer[qear+orl]{
  \textcircled{1} ! \textcircled{2}
  }
  {\lsequent{\initassum, \etermA \geq 0, \etermA = 0 \limply \ddiamond{\pevolvein{\D{x}=\genDE{x}}{\lied[]{\genDE{x}}{\etermA} \geq 0}}{\notinitassum}}{\ddiamond{\pevolvein{\D{x}=\genDE{x}}{\etermA \geq 0}}{\notinitassum}}}
\end{sequentdeduction}
}%

From premise~\textcircled{1}, since the value of $\etermA$ is already positive initially, it must \emph{locally} stay positive.
Using~\irref{gddR}, the non-strict inequality in the domain constraint of the succedent is strengthened to a strict one, after which~\irref{Cont} closes the derivation.
{\footnotesizeoff%
\begin{sequentdeduction}[array]
\linfer[gddR]{
\linfer[Cont]{
  \lclose
}
  {\lsequent{\initassum, \etermA > 0}{\ddiamond{\pevolvein{\D{x}=\genDE{x}}{\etermA > 0}}{\notinitassum}}}
}
  {\lsequent{\initassum, \etermA > 0}{\ddiamond{\pevolvein{\D{x}=\genDE{x}}{\etermA \geq 0}}{\notinitassum}}}
\end{sequentdeduction}
}%

From premise~\textcircled{2}, the local sign of $\etermA$ cannot be determined from its initial value alone.
The proof looks to the Lie derivative of $\etermA$, which is assumed to be locally non-negative (in the implication $\etermA=0 \limply \dots$).
Axiom~\irref{dDR} reduces this to a box modality, after which~\irref{DI} finishes the proof.
{\footnotesizeoff\renewcommand*{\arraystretch}{1.3}%
\begin{sequentdeduction}[array]
\linfer[implyl]{
  \linfer[dDR]{
  \linfer[DI+Dder+DE+assignb]{
    \lclose
  }
  {\lsequent{\etermA = 0}{\dbox{\pevolvein{\D{x}=\genDE{x}}{\lied[]{\genDE{x}}{\etermA} \geq 0}}{\etermA \geq 0}}}
  }
  {\lsequent{\etermA = 0, \ddiamond{\pevolvein{\D{x}=\genDE{x}}{\lied[]{\genDE{x}}{\etermA} \geq 0}}{\notinitassum}}{\ddiamond{\pevolvein{\D{x}=\genDE{x}}{\etermA \geq 0}}{\notinitassum}}}
}
  {\lsequent{\etermA = 0, \etermA = 0 \limply \ddiamond{\pevolvein{\D{x}=\genDE{x}}{\lied[]{\genDE{x}}{\etermA} \geq 0}}{\notinitassum}}{\ddiamond{\pevolvein{\D{x}=\genDE{x}}{\etermA \geq 0}}{\notinitassum}}}
\\[-\normalbaselineskip]\tag*{\qedhere}
\end{sequentdeduction}
}%
\end{proof}

Similar to~\irref{DBX+DBXineq}, and \irref{VDBX}, a version of~\irref{Lpgeq} derives once-and-for-all using differentials ($\der{\etermA} \geq 0$) in domain constraints.
This presentation is omitted to keep with this article's syntactic convention that domain constraints are always (differential-free) semianalytic formulas (\rref{subsec:background-syntax}).
Mathematically, to conclude that $\etermA$ is locally non-negative, it is important that the Lie derivative $\lied[]{\genDE{x}}{\etermA}$ is assumed to be \emph{locally} non-negative rather than just \emph{initially} non-negative.
Just as the local sign of $\etermA$ cannot be determined (directly) when its initial value is zero, the same is true for $\lied[]{\genDE{x}}{\etermA}$.
This difference drives the use of \emph{higher} Lie derivatives when~\irref{Lpgeq} is generalized below.
Syntactically, this difference manifests in both~\irref{dDR+DI} proof steps which crucially rely on the formula $\lied[]{\genDE{x}}{\etermA} \geq 0$ appearing in their respective domain constraints rather than simply as an initial assumption.

Observe that \irref{Lpgeq} allows derivations to pass from reasoning about local progress\footnote{The local progress property used in~\irref{Lpgeq} is syntactically simpler than for the $\ddnext$ modality (no $x=y$ in the domain constraints). For non-strict inequalities, the two are equivalent but the syntactic simplification in~\irref{Lpgeq} allows its re-use as a lemma in proving $\ddnext$ local progress for both non-strict and strict inequalities.} for $\etermA \geq 0$ to local progress for its (first) Lie derivative $\lied[]{\genDE{x}}{\etermA} \geq 0$ whilst accumulating $\etermA=0$ in the antecedent.
This is reminiscent of derivative tests from elementary calculus used for testing the local behavior around a given stationary point of a (sufficiently) smooth function.
The difference is that (syntactic) Lie derivatives have to be used for soundness instead of analytic time derivatives, but these notions are provably equatable along ODEs using differentials~\cite[Lem.\,35]{DBLP:journals/jar/Platzer17}.
Similar to these derivative tests, if the first Lie derivative is indeterminate as well, then the derivation can look to the second higher Lie derivative, and so on.
Deductively, this is done by repeated use of derived axiom~\irref{Lpgeq}:
{\footnotesizeoff\renewcommand{\linferPremissSeparation}{~~~}%
\begin{sequentdeduction}[array]
\linfer[Lpgeq]{
  \lsequent{\Gamma}{\etermA \geq 0} !
  \linfer[Lpgeq]{
  \lsequent{\Gamma,\etermA = 0}{\lied[]{\genDE{x}}{\etermA} \geq 0} !
  \linfer[Lpgeq]{
  \lsequent{\Gamma,\initassum,\etermA = 0,\dots, \lied[k-1]{\genDE{x}}{\etermA} = 0}{\ddiamond{\pevolvein{\D{x}=\genDE{x}}{\lied[k]{\genDE{x}}{\etermA} \geq 0}}{\notinitassum}}
}
  {\dots}
}
  {\lsequent{\Gamma,\initassum,\etermA = 0}{\ddiamond{\pevolvein{\D{x}=\genDE{x}}{\lied[]{\genDE{x}}{\etermA} \geq 0}}{\notinitassum}}}
}
  {\lsequent{\Gamma,\initassum}{\ddiamond{\pevolvein{\D{x}=\genDE{x}}{\etermA \geq 0}}{\notinitassum}}}
\end{sequentdeduction}
}%

The rightmost premise closes whenever the (strict) inequality $\lied[k]{\genDE{x}}{\etermA} > 0$ can be proved from the accumulated antecedents.
The local sign of $\etermA$ (and of all its Lie derivatives below the $k$-th one) is dominated by that of $\lied[k]{\genDE{x}}{\etermA}$ because all of the lower (Lie) derivatives have indeterminate sign.
The use of \irref{Cont+gddR} finishes that proof because the solution must then locally enter $\lied[k]{\genDE{x}}{\etermA} > 0$:
{\footnotesizeoff%
\renewcommand{\linferPremissSeparation}{~~~}%
\begin{sequentdeduction}[array]
\linfer[cut]{
  \lsequent{\Gamma,\initassum,\etermA = 0,\dots,\lied[k-1]{\genDE{x}}{\etermA}= 0}{\lied[k]{\genDE{x}}{\etermA} > 0} !
  \linfer[gddR]{
  \linfer[Cont]{
    \lclose
  }
    {\lsequent{\initassum,\lied[k]{\genDE{x}}{\etermA} > 0}{\ddiamond{\pevolvein{\D{x}=\genDE{x}}{\lied[k]{\genDE{x}}{\etermA} > 0}}{\notinitassum}}}
  }
  {\lsequent{\initassum,\lied[k]{\genDE{x}}{\etermA} > 0}{\ddiamond{\pevolvein{\D{x}=\genDE{x}}{\lied[k]{\genDE{x}}{\etermA} \geq 0}}{\notinitassum}}}
}
  \lsequent{\Gamma,\initassum,\etermA = 0,\dots,\lied[k-1]{\genDE{x}}{\etermA} = 0}{\ddiamond{\pevolvein{\D{x}=\genDE{x}}{\lied[k]{\genDE{x}}{\etermA} \geq 0}}{\notinitassum}}
\end{sequentdeduction}
}%

The extended term conditions of smoothness~\rref{itm:reqsmooth} and syntactic partial derivatives~\rref{itm:reqpartial} guarantee that all of the (infinitely many) higher Lie derivatives of $\etermA$ are well-defined semantically and syntactically.
Derivations, on the other hand, are finite syntactic objects and can only mention \emph{finitely many} Lie derivatives.
Thus, one might suspect they are insufficient (hence incomplete) when, e.g., none of the higher Lie derivatives has a definite sign or if (infinitely many) different choices of $k$ are needed in the proof depending on the initial state that satisfies assumptions $\Gamma$.

This is where the third, computable differential radicals condition~\rref{itm:reqdiffradical} is crucially used.
When $N$ is the rank of $\etermA$ according to identity~\rref{eq:differential-rank}, then once the derivation has gathered $\etermA=0,\dots,\lied[N-1]{\genDE{x}}{\etermA}=0$, i.e., $\sigliedzero{\genDE{x}}{\etermA}$ in the antecedents, derived rule \irref{dRI} proves the invariant $\etermA=0$ and ODEs always locally progress in invariants.
Furthermore, this shows (mathematically) that it is unnecessary to go to higher Lie derivatives when proving local progress for $\etermA > 0$ because none of the signs of the higher Lie derivatives of $\etermA$ beyond $N$ will be sign-definite.
Thus, the rank provides a uniform and finite bound for the number of Lie derivatives of $\etermA$ that need to be analyzed in any state, regardless of assumptions $\Gamma$.
This finiteness property motivates the following definition, which gathers the above open premises to obtain a finite formula characterizing the \emph{first significant Lie derivative} of $\etermA$:

\begin{definition}[First significant Lie derivative]
The \emph{progress formula} \m{\sigliedgt{\genDE{x}}{\etermA}} for extended term $\etermA$ of rank $N \geq 1$ from identity \rref{eq:differential-rank} with Lie derivatives along \m{\D{x}=\genDE{x}} is defined to be:
\begin{align*}
\sigliedgt{\genDE{x}}{\etermA} \mdefequiv & \etermA \geq 0 \land \big(\etermA = 0 \limply \lied[]{\genDE{x}}{\etermA} \geq 0\big) \land \big(\etermA = 0 \land \lied[]{\genDE{x}}{\etermA} = 0  \limply \lied[2]{\genDE{x}}{\etermA} \geq 0\big) \\
\land& \dots \land \big(\etermA = 0 \land \lied[]{\genDE{x}}{\etermA} = 0 \land \dots \land \lied[N-3]{\genDE{x}}{\etermA} = 0 \limply \lied[N-2]{\genDE{x}}{\etermA} \geq 0\big) \\
\land& \big(\etermA = 0 \land \lied[]{\genDE{x}}{\etermA} = 0 \land \dots \land \lied[N-2]{\genDE{x}}{\etermA} = 0 \limply \lied[N-1]{\genDE{x}}{\etermA} > 0\big)
\end{align*}
The \emph{progress formula} $\sigliedgeq{\genDE{x}}{\etermA}$ is defined to be \m{\sigliedgt{\genDE{x}}{\etermA} \lor \sigliedzero{\genDE{x}}{\etermA}}.
The formulas $\sigliedgt[-]{\genDE{x}}{\etermA}$ (or $\sigliedgeq[-]{\genDE{x}}{\etermA}$) are identical except their Lie derivatives are along ODE \(\D{x}=-\genDE{x}\) instead.
\end{definition}

\begin{lemma}[Local progress $\cmp$]
\label{lem:localprogresscmp}
The local progress inequality axioms~\irref{Lpgeqfull+Lpgtfull} derive from \irref{Lpgeq} and thus from~\irref{Cont}.
Variables $y$ are fresh in the ODE $\D{x}=\genDE{x}$ and extended term $\etermA$.

\begin{calculus}
\dinferenceRule[Lpgeqfull|LP$_{\geq^*}$]{Progress Conditions}
{
\linferenceRule[impl]
  {\initassum}
  {\big( \sigliedgeq{\genDE{x}}{\etermA} \limply \axkey{\dprogressin{\D{x}=\genDE{x}}{\etermA \geq 0}} \big)}
}{}

\dinferenceRule[Lpgtfull|LP$_{>^*}$]{Progress Conditions}
{
\linferenceRule[impl]
  {\initassum}
  {\big( \sigliedgt{\genDE{x}}{\etermA} \limply \axkey{\dprogressin{\D{x}=\genDE{x}}{\etermA > 0}}\big)}
}{}
\end{calculus}
\end{lemma}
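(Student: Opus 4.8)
The plan is to derive the strict axiom~\irref{Lpgtfull} first and then read off the non-strict axiom~\irref{Lpgeqfull} from it. For~\irref{Lpgeqfull} I would split on the disjunction $\sigliedgeq{\genDE{x}}{\etermA} \lbisubjunct \sigliedgt{\genDE{x}}{\etermA} \lor \sigliedzero{\genDE{x}}{\etermA}$ by~\irref{orl}. On the $\sigliedgt{\genDE{x}}{\etermA}$ branch, apply the just-derived~\irref{Lpgtfull} and weaken the target domain from $\etermA>0$ to $\etermA\geq0$ with~\irref{gddR}. On the $\sigliedzero{\genDE{x}}{\etermA}$ branch, derived rule~\irref{dRI} (\rref{thm:DRI}) proves $\etermA=0$ invariant --- its left premise is the disjunct $\sigliedzero{\genDE{x}}{\etermA}$ itself and its right premise is the radical identity~\rref{eq:differential-rank} --- so $\dbox{\pevolve{\D{x}=\genDE{x}}}{\etermA\geq0}$ holds; since the system locally evolves $x$, \irref{Cont} instantiated at a provably positive (constant) term supplies a solution reaching $\notinitassum$, and~\irref{dDR} combines this diamond with the invariance box to yield $\dprogressin{\D{x}=\genDE{x}}{\etermA\geq0}$.

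The core of the argument is the derivation of~\irref{Lpgtfull}, which is the iterated application of~\irref{Lpgeq} already displayed before the lemma. After the propositional setup I would apply~\irref{Lpgeq} repeatedly, peeling the conjuncts of $\sigliedgt{\genDE{x}}{\etermA}$ one at a time: at stage $i$ it reduces local progress into $\lied[i]{\genDE{x}}{\etermA}\geq0$ to the two premises $\lied[i]{\genDE{x}}{\etermA}\geq0$ (the $i$-th conjunct of $\sigliedgt{\genDE{x}}{\etermA}$, closed by~\irref{qear} from the equalities $\etermA=0,\dots,\lied[i-1]{\genDE{x}}{\etermA}=0$ gathered so far) and local progress into $\lied[i+1]{\genDE{x}}{\etermA}\geq0$ under the freshly added $\lied[i]{\genDE{x}}{\etermA}=0$. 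Since the rank $N$ from~\rref{eq:differential-rank} bounds the chain, after $N-1$ stages the antecedent contains $\initassum,\etermA=0,\dots,\lied[N-2]{\genDE{x}}{\etermA}=0$, the last (strict) conjunct of $\sigliedgt{\genDE{x}}{\etermA}$ gives $\lied[N-1]{\genDE{x}}{\etermA}>0$ by~\irref{qear}, and then~\irref{gddR} strengthens the target domain to the open set $\lied[N-1]{\genDE{x}}{\etermA}>0$ so that~\irref{Cont} closes the branch. This much establishes local progress into the \emph{closed} set $\etermA\geq0$, which is already the $\sigliedgt$ disjunct needed for~\irref{Lpgeqfull}; to get the \emph{open} target $\etermA>0$ demanded by~\irref{Lpgtfull} I would retain the strict local progress into $\lied[]{\genDE{x}}{\etermA}>0$ produced at the first nonzero stage and intersect it, via the uniqueness axiom~\irref{Uniq}, with the non-strict local progress into $\etermA\geq0$: on the resulting common-prefix solution, which is confined simultaneously to $\etermA\geq0$ and $\lied[]{\genDE{x}}{\etermA}>0$ and on which $\der{\etermA}$ equals $\lied[]{\genDE{x}}{\etermA}$ by~\irref{Dder}, the value of $\etermA$ starts at $0$ with strictly positive derivative and so is strictly positive at every later time. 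A final domain-refinement step~\irref{dDR} folds this in, its box premise being that any solution confined to $\etermA\geq0 \land \lied[]{\genDE{x}}{\etermA}>0$ from a state with $\etermA=0$ stays in $\etermA>0 \lor \initassum$ --- which is sound because the $\ddnext$ modality ignores the initial instant $\initassum$.

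The hard part will be exactly this strict/non-strict mismatch: \irref{Lpgeq} only transports non-strict positivity, whereas~\irref{Lpgtfull} needs strict positivity to appear \emph{immediately past} an initial point where $\etermA$ may vanish --- a step that no version of~\irref{DI} or the Darboux axioms can take (they never yield strictness just past the boundary), and precisely the reason the extended axioms~\irref{Cont+Uniq} are needed. Getting this right requires careful bookkeeping: when the first nonzero Lie derivative is $\lied[k]{\genDE{x}}{\etermA}$ with $k>1$, one must thread the intermediate non-strict constraints $\lied[i]{\genDE{x}}{\etermA}\geq0$ for $i<k$ through the~\irref{Uniq} (and~\irref{DC}) steps confining a single solution to the whole chain, and then discharge the resulting~\irref{dDR} box premise, which encodes the repeated-integration argument that confinement to this chain from the boundary forces $\etermA>0$ away from time $0$. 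I would also check, as with~\irref{Lpgeq} itself, that both~\irref{Lpgeqfull} and~\irref{Lpgtfull} derive once-and-for-all, independently of the concrete ODE, so that only the finitely many Lie-derivative computations remain for each instance.
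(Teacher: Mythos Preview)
Your treatment of \irref{Lpgeqfull} is fine and matches the paper's: split on $\sigliedgt{\genDE{x}}{\etermA}\lor\sigliedzero{\genDE{x}}{\etermA}$, close the $\sigliedzero$ disjunct with \irref{dRI} plus \irref{Cont} (at the constant $1>0$) via \irref{dDR}, and close the $\sigliedgt$ disjunct by iterating \irref{Lpgeq} exactly as you describe.

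The gap is in your derivation of \irref{Lpgtfull}. Your \irref{dDR} step requires the box premise
\[
x{=}y,\;\etermA{=}0,\dots,\lied[k-1]{\genDE{x}}{\etermA}{=}0 \;\vdash\; \dbox{\pevolvein{\D{x}=\genDE{x}}{\textstyle\bigwedge_{i<k}\lied[i]{\genDE{x}}{\etermA}\geq0 \land \lied[k]{\genDE{x}}{\etermA}>0}}{(\etermA>0 \lor x{=}y)},
\]
and you justify it only semantically (``strictly positive derivative, so strictly positive at every later time''; ``sound because the $\ddnext$ modality ignores the initial instant''). But this is precisely the step that the available axioms do not give you: \irref{DIgeq} applied to $\etermA$ with $\der{\etermA}\geq0$ from $\etermA=0$ yields only $\etermA\geq0$, never the strict $\etermA>0$ away from the initial point. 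Your ``repeated-integration argument'' is a correct analytic fact, but you have not shown how to discharge it syntactically from \irref{DI}, \irref{DC}, \irref{DG}, \irref{Cont}, \irref{Uniq}, and there is no evident way to do so---every attempt runs back into the same strict/non-strict obstacle.

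The paper's proof avoids this box premise altogether by a purely arithmetic reduction. With $N$ the rank of $\etermA$, let $\etermAA\mdefeq|x-y|^{2N}=\big(\sum_i(x_i-y_i)^2\big)^{N}$. Then $\etermA-\etermAA\geq0$ implies $\etermA>0\lor x{=}y$ in real arithmetic, so a single \irref{gddR} step turns the open target into the \emph{closed} target $\etermA-\etermAA\geq0$. The point of the power $N$ (\rref{prop:leibnizpowers}) is that from $x{=}y$ one has $|x-y|^2=0$ and hence $\lied[i]{\genDE{x}}{\etermAA}=0$ for all $i\leq N-1$, so $\lied[i]{\genDE{x}}{(\etermA-\etermAA)}=\lied[i]{\genDE{x}}{\etermA}$ for $i\leq N-1$ provably. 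Now the same iterated \irref{Lpgeq} argument that you already have for the non-strict case applies verbatim to $\etermA-\etermAA$, with the conjuncts of $\sigliedgt{\genDE{x}}{\etermA}$ closing each left premise. No \irref{Uniq}, no case split on the index of the first nonzero Lie derivative, and no problematic box premise are needed.
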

\begin{proofsketch}[app:localprogress]
Both axioms derive after unfolding the syntactic abbreviation of the $\ddnext$ modality.
Axiom~\irref{Lpgeqfull} derives by the preceding discussion with iterated use of derived axioms \irref{Lpgeq} and \irref{dRI}.
Axiom~\irref{Lpgtfull} derives similarly, but with an additional tweak to weaken the strict inequality $\etermA > 0$ so that axiom~\irref{Lpgeq} can be used.
\end{proofsketch}

The difference between the derivations of~\irref{Lpgeqfull} and~\irref{Lpgtfull} is mainly technical and boils down to the handling of the assumptions about the initial state, and in particular, $\initassum$ (see Appendix~\ref{app:diaaxioms} and~\ref{app:localprogress}).
Intuitively, the difference arises from the fact that the formula $\etermA \geq 0$ characterizes a topologically closed set while $\etermA > 0$ characterizes an open set.
To locally progress into a set from initial state $\iget[state]{\I}$, the state $\iget[state]{\I}$ must already be in the topological closure of that set.
Closed sets are equal to their closure so, e.g., $\iget[state]{\I}$ must already satisfy $\etermA \geq 0$ in order to locally progress into it.
Sets that are not closed (e.g., open sets) are not equal to their closure as they lack points on their topological boundary.
An example of this is the half-open disk illustrated in~\rref{fig:realinduct}.
Thus, it is possible to locally progress into such sets from their topological boundary without $\iget[state]{\I}$ already starting in the set.

\subsubsection{Semianalytic Formulas}
Semianalytic formulas $\rfvar$ normalize propositionally to the following disjunctive normal form with extended terms $\etermA_{ij},\etermB_{ij}$:
\begin{equation}
\rfvar \mequiv \lorfold_{i=0}^{M} \Big(\landfold_{j=0}^{m(i)} \etermA_{ij} \geq 0 \land \landfold_{j=0}^{n(i)} \etermB_{ij} > 0\Big)
\label{eq:normalform}
\end{equation}
Progress formulas lift homomorphically to semianalytic formulas in normal form:
\begin{definition}[Semianalytic progress formula]
The \emph{semianalytic progress formula} $\sigliedsai{\genDE{x}}{\rfvar}$ for a semianalytic formula $\rfvar$ in normal form~\rref{eq:normalform} and Lie derivatives along \m{\D{x}=\genDE{x}} is defined to be:
\begin{align*}
\sigliedsai{\genDE{x}}{\rfvar} ~\mdefequiv~ \lorfold_{i=0}^{M} \Big(\landfold_{j=0}^{m(i)} \sigliedgeq{\genDE{x}}{\etermA_{ij}} \land \landfold_{j=0}^{n(i)} \sigliedgt{\genDE{x}}{\etermB_{ij}}\Big)
\end{align*}
The formula $\sigliedsai[-]{\genDE{x}}{\rfvar}$ takes Lie derivatives along ODE \(\D{x}=-\genDE{x}\) instead.
\end{definition}

A mention of the notation $\sigliedsai{\genDE{x}}{\rfvar}$ is understood as the progress formula for semianalytic formula $\rfvar$ after it is rewritten propositionally to any equivalent normal form~\rref{eq:normalform}.

\begin{lemma}[Semianalytic local progress]
\label{lem:localprogresssemialg}
The local progress formula axiom~\irref{LpRfull} derives from $\irref{Cont+Uniq}$.
Variables $y$ are fresh in the ODE \(\D{x}=\genDE{x}\) and semianalytic formula $\rfvar$.
\[
\dinferenceRule[LpRfull|LP\usebox{\Rval}]{Progress Condition}
{\linferenceRule[impl]
  {\initassum}
  {\big( \sigliedsai{\genDE{x}}{\rfvar} \limply \axkey{\dprogressin{\D{x}=\genDE{x}}{\rfvar}} \big)}
}{}
\]
\end{lemma}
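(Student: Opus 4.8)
The plan is to reduce the semianalytic case to the atomic case already handled by \rref{lem:localprogresscmp}, using the uniqueness axiom \irref{Uniq} to glue conjuncts and the derived diamond domain weakening rule \irref{gddR} to pass to a disjunct. First I would unfold the $\ddnext$ abbreviation in the succedent, turning the goal into deriving \(\ddiamond{\pevolvein{\D{x}=\genDE{x}}{\rfvar \lor x=y}}{\,x\neq y}\) from the assumptions $\initassum$ and $\sigliedsai{\genDE{x}}{\rfvar}$, where $\rfvar$ is in the disjunctive normal form~\rref{eq:normalform}. By definition $\sigliedsai{\genDE{x}}{\rfvar}$ is the disjunction \(\lorfold_{i=0}^{M}\big(\landfold_{j} \sigliedgeq{\genDE{x}}{\etermA_{ij}} \land \landfold_{j} \sigliedgt{\genDE{x}}{\etermB_{ij}}\big)\), so an \irref{orl} case split picks one index $i$, after which the antecedent contains $\sigliedgeq{\genDE{x}}{\etermA_{ij}}$ for every $j \leq m(i)$ and $\sigliedgt{\genDE{x}}{\etermB_{ij}}$ for every $j \leq n(i)$. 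For each of these finitely many atoms, derived axiom \irref{Lpgeqfull} (resp.\ \irref{Lpgtfull}) together with the retained assumption $\initassum$ yields the local progress diamond \(\ddiamond{\pevolvein{\D{x}=\genDE{x}}{\etermA_{ij}\geq 0 \lor x=y}}{\,x\neq y}\) (resp.\ with $\etermB_{ij}>0$), all about the \emph{same} ODE $\D{x}=\genDE{x}$ and the \emph{same} postcondition $x\neq y$.

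Next I would conjoin these diamonds using \irref{Uniq} repeatedly, two at a time: combining \(\ddiamond{\pevolvein{\D{x}=\genDE{x}}{\rrfvar_1 \lor x=y}}{\,x\neq y}\) with \(\ddiamond{\pevolvein{\D{x}=\genDE{x}}{\rrfvar_2 \lor x=y}}{\,x\neq y}\) produces \(\ddiamond{\pevolvein{\D{x}=\genDE{x}}{(\rrfvar_1\lor x=y)\land(\rrfvar_2\lor x=y)}}{\,x\neq y}\), and the intersected domain constraint is propositionally equivalent to \((\rrfvar_1\land\rrfvar_2)\lor x=y\), which I would rewrite inside the domain (soundly, since equivalent domain constraints yield equivalent diamonds, e.g.\ by applying \irref{dDR} in both directions). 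Iterating over all $m(i)+n(i)+2$ atoms of the chosen disjunct yields a single diamond whose domain is \((\landfold_j \etermA_{ij}\geq0 \land \landfold_j \etermB_{ij}>0)\lor x=y\). Since that disjunct implies $\rfvar$ propositionally, the domain implies $\rfvar \lor x=y$, so one final \irref{gddR} step concludes \(\ddiamond{\pevolvein{\D{x}=\genDE{x}}{\rfvar \lor x=y}}{\,x\neq y}\), i.e.\ the desired $\ddnext$ progress formula. Finiteness is not an issue: each $\sigliedgeq{\genDE{x}}{\etermA_{ij}}$ and $\sigliedgt{\genDE{x}}{\etermB_{ij}}$ is a finite formula by the computable differential radicals condition~\rref{itm:reqdiffradical}, and the normal form~\rref{eq:normalform} has finitely many disjuncts and conjuncts.

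The main obstacle I anticipate is bookkeeping rather than a deep new idea: one must check that \irref{Uniq} genuinely applies at each gluing step (matching ODE, matching postcondition $x\neq y$, and — crucially — the $x=y$ disjunct surviving the intersection, so that the combined unique solution still witnesses $x\neq y$ at its endpoint), and that each propositional renormalization of the nested \((\rrfvar_k \lor x=y)\)-conjunctions back to the shape \((\landfold\dots)\lor x=y\) is carried out under the modality by domain refinement rather than naively. The conceptually delicate closed-versus-open asymmetry between the $\etermA_{ij}\geq 0$ atoms (topologically closed, requiring the initial state to already lie in the set) and the $\etermB_{ij}>0$ atoms (open) has already been isolated into \irref{Lpgeqfull} versus \irref{Lpgtfull} in \rref{lem:localprogresscmp}, so here it only dictates which atomic axiom to invoke and does not otherwise complicate the gluing argument.
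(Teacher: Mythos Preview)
Your proposal is correct and matches the paper's approach essentially step for step: \irref{orl} on the outer disjunction of $\sigliedsai{\genDE{x}}{\rfvar}$, \irref{Uniq} to glue the conjuncts (with the propositional normalization $(\rrfvar_1\lor x{=}y)\land(\rrfvar_2\lor x{=}y)\lbisubjunct(\rrfvar_1\land\rrfvar_2)\lor x{=}y$ handled by \irref{gddR}), \irref{Lpgeqfull}/\irref{Lpgtfull} at the leaves, and a final \irref{gddR} to pass from the chosen disjunct $\rfvar_i$ to $\rfvar$. The only cosmetic difference is orientation: the paper works backward from the succedent (first \irref{gddR} from $\rfvar$ to $\rfvar_i$, then \irref{Uniq}+\irref{andr} to split, then close the atomic leaves), whereas you describe building forward from the atomic diamonds; but this is the same derivation read in opposite directions, and your anticipated bookkeeping concerns are exactly the ones the paper's \irref{gddR} steps discharge.
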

\begin{proofsketch}[app:localprogress]
The shape of the semianalytic progress formula $\sigliedsai{\genDE{x}}{\rfvar}$ guides the proof.
The derivation is sketched at a high level here for the representative example formula:
\begin{align*}
\rfvar &\mequiv (\etermA_1 \geq 0 \land \etermB_1 > 0) \lor (\etermA_2 \geq 0 \land \etermB_2 > 0)  \\
\sigliedsai{\genDE{x}}{\rfvar} &\mequiv ( \sigliedgeq{\genDE{x}}{\etermA_1} \land \sigliedgt{\genDE{x}}{\etermB_1}) \lor (\sigliedgeq{\genDE{x}}{\etermA_2} \land \sigliedgt{\genDE{x}}{\etermB_2})
\end{align*}

To show local progress into a \emph{disjunction}, it suffices to show local progress into either disjunct.
The derivation starts by decomposing $\sigliedsai{\genDE{x}}{\rfvar}$ according to its (outermost) disjunction and accordingly decomposing $\rfvar$ in the local progress succedent with \irref{gddR}.
The premise for the second disjunct resulting from the~\irref{orl} step is symmetric and omitted here.
{\footnotesizeoff%
\begin{sequentdeduction}[array]
\linfer[orl]{
\linfer[gddR]{
  \lsequent{\initassum,\sigliedgeq{\genDE{x}}{\etermA_1} \land \sigliedgt{\genDE{x}}{\etermB_1} }{\dprogressin{\D{x}=\genDE{x}}{\etermA_1 \geq 0 \land \etermB_1 > 0}}
}
  {\lsequent{\initassum, \sigliedgeq{\genDE{x}}{\etermA_1} \land \sigliedgt{\genDE{x}}{\etermB_1} }{\dprogressin{\D{x}=\genDE{x}}{\rfvar}}}
}
  {\lsequent{\initassum, \sigliedsai{\genDE{x}}{\rfvar} }{\dprogressin{\D{x}=\genDE{x}}{\rfvar}}}
\end{sequentdeduction}
}%

To show local progress into a \emph{conjunction}, by \irref{Uniq}, it suffices to show local progress into both conjuncts separately.
The derivation continues using \irref{Uniq+andr} to split the conjunctive local progress succedent before the derived axioms \irref{Lpgeqfull+Lpgtfull} are used to finish the proofs of the atomic cases:
{\footnotesizeoff\renewcommand{\linferPremissSeparation}{~~~}%
\begin{sequentdeduction}[array]
\linfer[Uniq+andr]{
  \linfer[Lpgeqfull]{
    \lclose
  }
  {\lsequent{\initassum,\sigliedgeq{\genDE{x}}{\etermA_1}}{\dprogressin{\D{x}=\genDE{x}}{\etermA_1 {\geq} 0}}} !
  \linfer[Lpgtfull]{
    \lclose
  }
  {\lsequent{\initassum,\sigliedgt{\genDE{x}}{\etermB_1}}{\dprogressin{\D{x}=\genDE{x}}{\etermB_1 {>} 0}}}
}
  {\lsequent{\initassum,\sigliedgeq{\genDE{x}}{\etermA_1}, \sigliedgt{\genDE{x}}{\etermB_1} }{\dprogressin{\D{x}=\genDE{x}}{\etermA_1 \geq 0 \land \etermB_1 > 0}}}
\\[-\normalbaselineskip]\tag*{\qedhere}
\end{sequentdeduction}
}%
\end{proofsketch}

Completeness could potentially be lost in several steps of the proof of~\rref{lem:localprogresssemialg}, e.g., the use of~\irref{orl} at the start of the derivation, or the implicational axioms~\irref{Lpgeqfull+Lpgtfull}.
The converse (completeness) direction of axiom~\irref{LpRfull} therefore does not follow immediately from~\rref{lem:localprogresssemialg}.
Instead, the axiom~\irref{LpRfull} can be re-used to derive its own strengthening to an equivalence.
This equivalence justifies the syntactic abbreviation $\ddnext$, recalling that the $\ddnext$ modality of temporal logic is self-dual.
It also shows that the progress formulas are congruent over equivalences.

\begin{theorem}[Local progress completeness]
\label{thm:localprogresscomplete}
The local progress axiom~\irref{Lpiff} derives from $\irref{Cont}$, $\irref{Uniq}$.
Variables $y$ are fresh in the ODE \(\D{x}=\genDE{x}\) and semianalytic formula $\rfvar$.
\[\dinferenceRule[Lpiff|LP]{Iff Progress Condition}
{\linferenceRule[impl]
  {\initassum}
  {\big( \axkey{\dprogressin{\D{x}=\genDE{x}}{\rfvar}} \lbisubjunct \sigliedsai{\genDE{x}}{\rfvar} \big)}
}{}\]
\end{theorem}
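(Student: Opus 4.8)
The plan is to derive \irref{Lpiff} by bootstrapping \rref{lem:localprogresssemialg} into an equivalence, so that most of the work is already done: the ``$\lylpmi$'' direction $\initassum \limply \big(\sigliedsai{\genDE{x}}{\rfvar} \limply \dprogressin{\D{x}=\genDE{x}}{\rfvar}\big)$ is literally axiom \irref{LpRfull}. Only the ``$\limply$'' direction, $\initassum \limply \big(\dprogressin{\D{x}=\genDE{x}}{\rfvar} \limply \sigliedsai{\genDE{x}}{\rfvar}\big)$, remains, and I would establish it by a refutation argument: the assumptions $\initassum$, $\dprogressin{\D{x}=\genDE{x}}{\rfvar}$, and $\lnot\sigliedsai{\genDE{x}}{\rfvar}$ are jointly inconsistent.

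The first and central step is the purely arithmetical observation that $\sigliedsai{\genDE{x}}{\rfvar} \lor \sigliedsai{\genDE{x}}{\lnot\rfvar}$ is provable by \irref{qear}. After normalizing $\rfvar$ to the disjunctive form~\rref{eq:normalform} and $\lnot\rfvar$ to the analogous disjunctive normal form (in which $\lnot(\etermA \geq 0)$ becomes $-\etermA > 0$ and $\lnot(\etermB > 0)$ becomes $-\etermB \geq 0$), this reduces to the two atomic facts $\lnot\sigliedgeq{\genDE{x}}{\etermA} \limply \sigliedgt{\genDE{x}}{-\etermA}$ and $\lnot\sigliedgt{\genDE{x}}{\etermB} \limply \sigliedgeq{\genDE{x}}{-\etermB}$. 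These hold because the progress formulas $\sigliedzero{\genDE{x}}{\cdot}$, $\sigliedgeq{\genDE{x}}{\cdot}$, $\sigliedgt{\genDE{x}}{\cdot}$ jointly encode the trichotomy of the \emph{first significant} Lie derivative among $\etermA, \lied[]{\genDE{x}}{\etermA}, \dots, \lied[N-1]{\genDE{x}}{\etermA}$ (with $N$ the rank of $\etermA$ from~\rref{eq:differential-rank}): either all of them vanish, giving $\sigliedzero{\genDE{x}}{\etermA}$; or the first nonzero one is positive, giving $\sigliedgt{\genDE{x}}{\etermA}$; or it is negative, in which case $\lied[i]{\genDE{x}}{-\etermA} = -\lied[i]{\genDE{x}}{\etermA}$ together with the fact that $-\etermA$ inherits the rank of $\etermA$ gives $\sigliedgt{\genDE{x}}{-\etermA}$. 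Since these case distinctions only concern the signs of the finitely many Lie derivative terms appearing in the progress formulas, the implication is a substitution instance of a valid first-order real arithmetic formula and so closes by \irref{qear}.

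Given that, the ``$\limply$'' direction derives as follows. From the sequent $\lsequent{\initassum, \dprogressin{\D{x}=\genDE{x}}{\rfvar}}{\sigliedsai{\genDE{x}}{\rfvar}}$, a \irref{cut} with $\sigliedsai{\genDE{x}}{\rfvar} \lor \sigliedsai{\genDE{x}}{\lnot\rfvar}$ (just justified by \irref{qear}) and an \irref{orl} yield two branches; the one with $\sigliedsai{\genDE{x}}{\rfvar}$ in the antecedent closes immediately. On the branch with $\sigliedsai{\genDE{x}}{\lnot\rfvar}$, axiom \irref{LpRfull} instantiated at $\lnot\rfvar$ (using $\initassum$) produces $\dprogressin{\D{x}=\genDE{x}}{\lnot\rfvar}$, i.e.\ $\ddiamond{\pevolvein{\D{x}=\genDE{x}}{\lnot\rfvar \lor \initassum}}{\notinitassum}$. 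Together with the antecedent $\dprogressin{\D{x}=\genDE{x}}{\rfvar}$, i.e.\ $\ddiamond{\pevolvein{\D{x}=\genDE{x}}{\rfvar \lor \initassum}}{\notinitassum}$, the uniqueness axiom \irref{Uniq} merges the two into $\ddiamond{\pevolvein{\D{x}=\genDE{x}}{(\rfvar \lor \initassum) \land (\lnot\rfvar \lor \initassum)}}{\notinitassum}$. Since $(\rfvar \lor \initassum) \land (\lnot\rfvar \lor \initassum) \lbisubjunct \initassum$ by \irref{qear}, rule \irref{gddR} refines the evolution domain to give $\ddiamond{\pevolvein{\D{x}=\genDE{x}}{\initassum}}{\notinitassum}$; but this is refuted by $\dbox{\pevolvein{\D{x}=\genDE{x}}{\initassum}}{\initassum}$, which is provable by \irref{dW} from the trivial implication $\initassum \limply \initassum$. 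This closes the branch, finishing the ``$\limply$'' direction and hence \irref{Lpiff}.

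The main obstacle is the \irref{qear} step in the second paragraph: pinning down that the semianalytic progress formulas of $\rfvar$ and $\lnot\rfvar$ together cover the whole state space. This requires a careful case analysis of the nested-implication shape of $\sigliedgt{\genDE{x}}{\cdot}$ at the atomic level (which Lie derivative is the first nonzero one, and with which sign), together with the bookkeeping that $-\etermA$ has the same rank as $\etermA$ with componentwise-negated Lie derivatives, so that the progress formulas of $\etermA$ and $-\etermA$ mention matching terms and their disjunction is genuinely a real-arithmetic tautology. Everything after that is a short assembly of the already-derived \irref{LpRfull}, \irref{Uniq}, \irref{gddR}, and rule \irref{dW}. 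As noted in the surrounding text, the companion fact $\lnot(\sigliedsai{\genDE{x}}{\rfvar} \land \sigliedsai{\genDE{x}}{\lnot\rfvar})$ --- again by \irref{qear}, since the first significant Lie derivative cannot be positive for both $\etermA$ and $-\etermA$ --- additionally yields self-duality of the $\ddnext$ modality on semianalytic formulas and congruence of progress formulas under provable equivalences.
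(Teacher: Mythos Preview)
Your proof is correct and follows essentially the same approach as the paper: establish the arithmetical complementarity of the progress formulas for $\rfvar$ and $\lnot\rfvar$ (the paper packages this as the provable equivalence $\lnot(\sigliedsai{\genDE{x}}{\rfvar}) \lbisubjunct \sigliedsai{\genDE{x}}{(\lnot\rfvar)}$ in \rref{prop:negationrearrangement}, while you use the weaker disjunctive form), then apply \irref{LpRfull} to $\lnot\rfvar$ and combine via \irref{Uniq} to reach a contradiction (the paper isolates this last step as the derived axiom \irref{dualityimp} in \rref{cor:dualityimp}, which you inline). The only cosmetic difference is that the paper argues by contrapositive whereas you case-split on the disjunction, but the underlying derivation is the same.
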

\begin{corollary}[Duality and congruence]
\label{cor:localprogresscompletedualcong}
The duality axiom~\irref{duality} and congruence proof rule for progress formulas~\irref{CLP} derive from~\irref{Lpiff} and thus from $\irref{Cont+Uniq}$.
Variables $y$ are fresh in the ODE \(\D{x}=\genDE{x}\) and semianalytic formulas $\rfvar, \rrfvar$.

\begin{calculus}
\dinferenceRule[duality|$\lnot{\ddnext}$]{Duality}
{\linferenceRule[impl]
  {\initassum}
  {\big( \axkey{\dprogressin{\D{x}=\genDE{x}}{\rfvar}} \lbisubjunct \lnot{\dprogressin{\D{x}=\genDE{x}}{\lnot{\rfvar}}} \big)}
}{}
\dinferenceRule[CLP|CLP]{Congruence of local progress}
{\linferenceRule
  {\rfvar \lbisubjunct \rrfvar}
  {\sigliedsai{\genDE{x}}{\rfvar} \lbisubjunct \sigliedsai{\genDE{x}}{\rrfvar}}
}{}
\end{calculus}

\end{corollary}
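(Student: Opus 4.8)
The plan is to derive both \irref{duality} and \irref{CLP} as essentially arithmetical consequences of the completeness axiom \irref{Lpiff}, applied to the formula in question (and, for duality, to its negation), together with the observation that the semianalytic progress formula $\sigliedsai{\genDE{x}}{\cdot}$ lifts propositional structure faithfully. Throughout I work under the antecedent $\initassum$ with $y$ fresh, which is exactly what \irref{Lpiff} requires; when the conclusion has no such antecedent (as in \irref{CLP}) I exploit that $y$ does not occur in the progress formulas, so $\sigliedsai{\genDE{x}}{\rfvar}\lbisubjunct\sigliedsai{\genDE{x}}{\rrfvar}$ is valid iff $\lsequent{\initassum}{\sigliedsai{\genDE{x}}{\rfvar}\lbisubjunct\sigliedsai{\genDE{x}}{\rrfvar}}$ is.

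For \irref{CLP}: under $\initassum$, \irref{Lpiff} applied to $\rfvar$ and to $\rrfvar$ yields $\sigliedsai{\genDE{x}}{\rfvar}\lbisubjunct\dprogressin{\D{x}=\genDE{x}}{\rfvar}$ and $\sigliedsai{\genDE{x}}{\rrfvar}\lbisubjunct\dprogressin{\D{x}=\genDE{x}}{\rrfvar}$. The rule premise $\rfvar\lbisubjunct\rrfvar$ makes the evolution domains $\rfvar\lor\initassum$ and $\rrfvar\lor\initassum$ provably equivalent, so two applications of \irref{gddR} (one in each direction, on the unfolded $\ddnext$ abbreviation) give $\dprogressin{\D{x}=\genDE{x}}{\rfvar}\lbisubjunct\dprogressin{\D{x}=\genDE{x}}{\rrfvar}$. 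Chaining the three equivalences closes the derivation.

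For \irref{duality}: under $\initassum$, apply \irref{Lpiff} to $\rfvar$ and to the semianalytic formula $\lnot\rfvar$ (computing $\sigliedsai{\genDE{x}}{\lnot\rfvar}$ via the normal form of $\lnot\rfvar$ obtained by pushing negations inward), giving $\dprogressin{\D{x}=\genDE{x}}{\rfvar}\lbisubjunct\sigliedsai{\genDE{x}}{\rfvar}$ and $\dprogressin{\D{x}=\genDE{x}}{\lnot\rfvar}\lbisubjunct\sigliedsai{\genDE{x}}{\lnot\rfvar}$. It then suffices to prove $\sigliedsai{\genDE{x}}{\rfvar}\lbisubjunct\lnot\sigliedsai{\genDE{x}}{\lnot\rfvar}$ by \irref{qear}, after which the chain $\dprogressin{\D{x}=\genDE{x}}{\rfvar}\lbisubjunct\sigliedsai{\genDE{x}}{\rfvar}\lbisubjunct\lnot\sigliedsai{\genDE{x}}{\lnot\rfvar}\lbisubjunct\lnot\dprogressin{\D{x}=\genDE{x}}{\lnot\rfvar}$ finishes. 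To see this equivalence is a valid instance of real arithmetic, note that in the chosen normal form the atoms of $\lnot\rfvar$ are $\etermA_{ij}<0$ (i.e.\ $-\etermA_{ij}>0$) and $\etermB_{ij}\le0$ (i.e.\ $-\etermB_{ij}\ge0$), with Boolean structure dual to that of $\rfvar$; using the provable identities $\lied[i]{\genDE{x}}{-\etermA}=-\lied[i]{\genDE{x}}{\etermA}$ and the fact that $\etermA$ and $-\etermA$ may be taken to share a common rank $N$ in identity~\rref{eq:differential-rank}, the lifted formula $\sigliedsai{\genDE{x}}{\lnot\rfvar}$ is the Boolean dual of $\rfvar$ with each $\etermA_{ij}\ge0$ replaced by $\sigliedgt{\genDE{x}}{-\etermA_{ij}}$ and each $\etermB_{ij}>0$ by $\sigliedgeq{\genDE{x}}{-\etermB_{ij}}$. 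The crucial atomic facts, each itself a valid real-arithmetic formula in the (opaque) higher Lie derivatives by trichotomy on their \emph{first non-vanishing} value, are $\sigliedgt{\genDE{x}}{-\etermA}\lbisubjunct\lnot\sigliedgeq{\genDE{x}}{\etermA}$ and $\sigliedgeq{\genDE{x}}{-\etermB}\lbisubjunct\lnot\sigliedgt{\genDE{x}}{\etermB}$: negating the term flips the sign of the leading Lie derivative and interchanges ``leading value $\ge0$'' with ``leading value $>0$'', and these two conditions partition the sign patterns. Substituting these atomic equivalences and using that Boolean duality commutes with complementation yields $\sigliedsai{\genDE{x}}{\lnot\rfvar}\lbisubjunct\lnot\sigliedsai{\genDE{x}}{\rfvar}$, so the whole equivalence is a substitution instance of a valid first-order real formula and proves by \irref{qear}.

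The main obstacle is precisely this last \irref{qear} step for \irref{duality}: verifying that the homomorphic lifting $\sigliedsai{\genDE{x}}{\cdot}$ is faithful to Boolean complementation. This reduces to bookkeeping that (i) the first non-vanishing higher Lie derivative of $-\etermA$ is minus that of $\etermA$, (ii) $\etermA$ and $-\etermA$ range over the same finite list of Lie derivatives (same rank), and (iii) the strict/non-strict asymmetry between $\sigliedgt{\genDE{x}}{\cdot}$ and $\sigliedgeq{\genDE{x}}{\cdot}$ mirrors exactly the asymmetry between negating a strict and a non-strict atomic inequality. Everything else is routine modal reasoning: \irref{Lpiff} applied twice plus domain-constraint congruence via \irref{gddR}. (As a sanity check, the ``$\limply$'' direction of \irref{duality} also derives directly from \irref{Uniq}: combining $\dprogressin{\D{x}=\genDE{x}}{\rfvar}$ and $\dprogressin{\D{x}=\genDE{x}}{\lnot\rfvar}$ forces a solution to reach $\notinitassum$ while staying in $\initassum$, contradicting $\dbox{\pevolvein{\D{x}=\genDE{x}}{\initassum}}{\initassum}$, which holds by \irref{dW}.)
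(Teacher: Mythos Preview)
Your proposal is correct and follows essentially the same approach as the paper: both derive \irref{duality} by applying \irref{Lpiff} to $\rfvar$ and to $\lnot\rfvar$ and bridging via the real-arithmetic equivalence $\lnot(\sigliedsai{\genDE{x}}{\rfvar})\lbisubjunct\sigliedsai{\genDE{x}}{(\lnot\rfvar)}$ (the paper packages this as a separate proposition, while you reconstruct it inline from the atomic facts $\lnot(\sigliedgeq{\genDE{x}}{\etermA})\lbisubjunct\sigliedgt{\genDE{x}}{(-\etermA)}$ and its dual), and both derive \irref{CLP} by introducing a fresh $\initassum$, rewriting via \irref{Lpiff}, and using congruence of the $\ddnext$ modality under $\rfvar\lbisubjunct\rrfvar$. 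Your parenthetical sanity check for the ``$\limply$'' direction of \irref{duality} via \irref{Uniq} also matches what the paper isolates separately.
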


\begin{proof}[Proof Summary for~\rref{thm:localprogresscomplete} and~\rref{cor:localprogresscompletedualcong} (\rref{app:localprogress})]
The axioms (and proof rule) follow from the homomorphic definition of semianalytic progress formulas which implies that any semianalytic formula $\rfvar$ in normal form \rref{eq:normalform} has a corresponding normal form for $\lnot{\rfvar}$ such that the equivalence $\lnot{(\sigliedsai{\genDE{x}}{\rfvar})} \lbisubjunct \sigliedsai{\genDE{x}}{(\lnot{\rfvar})}$ is provable.
Classically, in any state, either formula $\sigliedsai{\genDE{x}}{\rfvar}$ or $\lnot{(\sigliedsai{\genDE{x}}{\rfvar})}$ is true.
Therefore, by~\irref{LpRfull}, the ODE must (exclusively, by uniqueness) either locally progress into $\rfvar$ or $\lnot{\rfvar}$ from this state.
Both axioms~\irref{Lpiff+duality} are derivable consequences of this fact, as shown syntactically in~\rref{app:localprogress}.
Rule~\irref{CLP} follows from~\irref{Lpiff} by congruential equivalence~\cite{DBLP:journals/jar/Platzer17}.
\end{proof}

Congruence rule~\irref{CLP} shows that \emph{any} equivalent choice of normal form~\rref{eq:normalform} for semianalytic formula $\rfvar$ gives a local progress formula that is (provably) equivalent to $\sigliedsai{\genDE{x}}{\rfvar}$.
Note that the rule works for all (semianalytic) equivalences, including arithmetical ones, e.g., $\exp{(x)}=1 \lbisubjunct x=0$ from~\rref{eq:extlang}, so \irref{CLP} does not follow immediately from the homomorphic definition of progress formulas.

\subsection{Completeness for Semianalytic Invariants}
\label{subsec:completenesssemianalytic}

Combining derived axiom~\irref{Lpiff} and derived rule~\irref{realind} yields an effective proof rule which reduces a semianalytic invariance question to questions involving purely arithmetic formulas:
\begin{theorem}[Semianalytic invariants] \label{thm:sAI}
The semianalytic invariant proof rule~\irref{sAI} derives from \irref{RealInd+Dadjoint+Cont+Uniq} for semianalytic formula $\rfvar$.
\[
\dinferenceRule[sAI|sAI]{semianalytic invariants}
{\linferenceRule
  {
  \lsequent{\rfvar} {\sigliedsai{\genDE{x}}{\rfvar}} &
  \lsequent{\lnot{\rfvar}} {\sigliedsai[-]{\genDE{x}}{(\lnot{\rfvar})}}
  }
  {\lsequent{\rfvar}{\dbox{\pevolve{\D{x}=\genDE{x}}}{\rfvar}}}
}{}
\]
\end{theorem}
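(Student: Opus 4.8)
The plan is to derive rule \irref{sAI} by composing two results already established in the excerpt: the real induction rule \irref{realind} of \rref{cor:realind}, which reduces a box invariance goal to two local progress subgoals, and the local progress implication axiom \irref{LpRfull} of \rref{lem:localprogresssemialg}, which discharges each such subgoal via the corresponding semianalytic progress formula. Concretely, I would first apply \irref{realind} to the conclusion $\lsequent{\rfvar}{\dbox{\pevolve{\D{x}=\genDE{x}}}{\rfvar}}$. Since $y$ is fresh in $\D{x}=\genDE{x}$ and in $\rfvar$, this step is legitimate and produces exactly the two premises $\lsequent{\initassum,\rfvar}{\dprogressin{\D{x}=\genDE{x}}{\rfvar}}$ and $\lsequent{\initassum,\lnot{\rfvar}}{\dprogressin{\D{x}=-\genDE{x}}{\lnot{\rfvar}}}$; the sign flip in the second premise already reflects the use of \irref{Dadjoint} internal to the derivation of \irref{realind}.

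Next I would close each of these two premises against the matching premise of \irref{sAI}. For the first premise, instantiate \irref{LpRfull} with the ODE $\D{x}=\genDE{x}$ and semianalytic formula $\rfvar$, which gives, under the antecedent $\initassum$, the implication $\sigliedsai{\genDE{x}}{\rfvar} \limply \dprogressin{\D{x}=\genDE{x}}{\rfvar}$; hence it suffices to prove $\lsequent{\initassum,\rfvar}{\sigliedsai{\genDE{x}}{\rfvar}}$, which follows by weakening away the unused assumption $\initassum$ from the first premise $\lsequent{\rfvar}{\sigliedsai{\genDE{x}}{\rfvar}}$ of \irref{sAI}. For the second premise, first normalize $\lnot{\rfvar}$ to some disjunctive normal form \rref{eq:normalform}; this is possible since the negation of a semianalytic formula is semianalytic (using $\lnot(\etermA\geq0)\equiv-\etermA>0$ and $\lnot(\etermB>0)\equiv-\etermB\geq0$). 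Then instantiate \irref{LpRfull} with the negated ODE $\D{x}=-\genDE{x}$ and that normal form of $\lnot{\rfvar}$; by the definition of $\sigliedsai[-]{\genDE{x}}{(\lnot\rfvar)}$ as the progress formula built from Lie derivatives along $\D{x}=-\genDE{x}$, this yields, under $\initassum$, the implication $\sigliedsai[-]{\genDE{x}}{(\lnot\rfvar)} \limply \dprogressin{\D{x}=-\genDE{x}}{\lnot{\rfvar}}$, reducing the second \irref{realind} premise to $\lsequent{\initassum,\lnot{\rfvar}}{\sigliedsai[-]{\genDE{x}}{(\lnot\rfvar)}}$, which is precisely the second premise of \irref{sAI} after weakening in $\initassum$.

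Rather than a genuine obstacle, the point that needs care is the bookkeeping about freshness and normal forms: that $y$ being fresh justifies both the \irref{realind} step and the harmless addition of $\initassum$ to antecedents, and that the two invocations of \irref{LpRfull} are against the two distinct ODEs $\D{x}=\genDE{x}$ and $\D{x}=-\genDE{x}$ with the matching progress formulas $\sigliedsai{\genDE{x}}{\cdot}$ and $\sigliedsai[-]{\genDE{x}}{\cdot}$. All of the substantive mathematical content — real induction along solutions (Picard--Lindel\"of together with \rref{prop:realindR}) and the finite semianalytic characterization of local progress — is already packaged inside the derived axioms \irref{RealInd} and \irref{LpRfull}; here only the implicational (``$\limply$'') direction of the local progress characterization is invoked, so neither \irref{Lpiff} nor the congruence rule \irref{CLP} is needed, only \irref{LpRfull}. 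Since \irref{realind} depends solely on \irref{RealInd+Dadjoint} and \irref{LpRfull} solely on \irref{Cont+Uniq}, it follows that \irref{sAI} derives from \irref{RealInd+Dadjoint+Cont+Uniq}, as claimed.
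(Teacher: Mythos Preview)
Your proposal is correct and essentially identical to the paper's own proof: apply \irref{realind} to reduce to two local progress premises, then discharge each via the semianalytic progress formula. The paper's one-line proof invokes the full equivalence \irref{Lpiff} to rewrite the \irref{realind} premises, whereas you correctly observe that only the implicational direction \irref{LpRfull} is actually needed here; since both \irref{LpRfull} and \irref{Lpiff} derive from \irref{Cont+Uniq}, the axiomatic dependencies are the same either way.
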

\begin{proof}
This follows immediately by rewriting the premises of rule~\irref{realind} with the equivalence~\irref{Lpiff}.
\end{proof}

Completeness of \irref{sAI} was first proved semantically for polynomial terms languages~\cite{DBLP:conf/emsoft/LiuZZ11}, making crucial use of semialgebraic sets and real analytic solutions to polynomial ODE systems.
The proof rule \irref{sAI} \emph{derives} syntactically in \dL and generalizes to semianalytic invariants for extended term languages.
Its completeness derives syntactically too, which yields \dL disproofs of semianalytic invariance when arithmetic counterexamples can be found.

\begin{theorem}[Semianalytic invariant completeness]
\label{thm:semialgcompleteness}
The semianalytic invariant axiom~\irref{SAI} derives from \irref{RealInd+Dadjoint+Cont+Uniq} for semianalytic formula $\rfvar$.
\[\dinferenceRule[SAI|SAI]{Semianalytic invariant axiom}
{\linferenceRule[equiv]
  {
    \lforall{x}{\big(\rfvar \limply \sigliedsai{\genDE{x}}{\rfvar}\big)}
    \land
    \lforall{x}{\big(\lnot{\rfvar} \limply \sigliedsai[-]{\-genDE{x}}{(\lnot{\rfvar})}\big)}
   }
  {\axkey{\lforall{x}{(\rfvar \limply \dbox{\pevolve{\D{x}=\genDE{x}}}{\rfvar})}}}
}{}\]
\end{theorem}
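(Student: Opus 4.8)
The plan is to prove the equivalence axiom \irref{SAI} direction by direction, reusing the machinery already assembled: the derived rule \irref{sAI} (\rref{thm:sAI}), the derived real-induction rule \irref{realind} (\rref{cor:realind}) together with the equivalence form \irref{RealInd}, the local-progress equivalence \irref{Lpiff} (\rref{thm:localprogresscomplete}), and the duality axiom \irref{duality} (\rref{cor:localprogresscompletedualcong}). The ``$\lylpmi$'' direction, from the two universally quantified progress implications to the invariance property $\lforall{x}{(\rfvar \limply \dbox{\pevolve{\D{x}=\genDE{x}}}{\rfvar})}$, is essentially a repackaging of rule \irref{sAI}: after universally generalizing over $x$ it suffices to derive $\rfvar \limply \dbox{\pevolve{\D{x}=\genDE{x}}}{\rfvar}$ from the instances $\rfvar \limply \sigliedsai{\genDE{x}}{\rfvar}$ and $\lnot{\rfvar} \limply \sigliedsai[-]{\genDE{x}}{(\lnot{\rfvar})}$, and under the antecedent $\rfvar$ these yield exactly the two premises $\lsequent{\rfvar}{\sigliedsai{\genDE{x}}{\rfvar}}$ and $\lsequent{\lnot{\rfvar}}{\sigliedsai[-]{\genDE{x}}{(\lnot{\rfvar})}}$ of \irref{sAI}, whose conclusion is the desired $\lsequent{\rfvar}{\dbox{\pevolve{\D{x}=\genDE{x}}}{\rfvar}}$. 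Since \irref{sAI} derives from \irref{RealInd+Dadjoint+Cont+Uniq}, so does this direction.

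For the ``$\limply$'' direction, assume the invariance property, universally generalize, fix a state, and introduce a fresh variable $y$; since $\sigliedsai{\genDE{x}}{\rfvar}$ and $\sigliedsai[-]{\genDE{x}}{(\lnot{\rfvar})}$ do not mention $y$, it suffices to derive the two implications under the extra assumption $\initassum$ and then discharge $y$ by the usual ghost-of-initial-value argument. Under $\initassum$, the derived local-progress equivalence \irref{Lpiff} — instantiated for $\D{x}=\genDE{x}$, $\rfvar$ and for $\D{x}=-\genDE{x}$, $\lnot{\rfvar}$ — rephrases the targets as $\dprogressin{\D{x}=\genDE{x}}{\rfvar}$ and $\dprogressin{\D{x}=-\genDE{x}}{\lnot{\rfvar}}$ respectively, so it remains to show, from invariance and $\initassum$, both (i) $\rfvar \limply \dprogressin{\D{x}=\genDE{x}}{\rfvar}$ and (ii) $\lnot{\rfvar} \limply \dprogressin{\D{x}=-\genDE{x}}{\lnot{\rfvar}}$; these are precisely the two premises of \irref{realind}, so this direction amounts to the converse of rule \irref{realind}. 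For (i), the ``$\limply$'' direction of axiom \irref{RealInd} turns the invariance hypothesis into $\lforall{y}{\dbox{\pevolvein{\D{x}=\genDE{x}}{\rfvar \lor \initassum}}{(\initassum \limply \rfvar \land \dprogressin{\D{x}=\genDE{x}}{\rfvar})}}$, whose postcondition already contains $\dprogressin{\D{x}=\genDE{x}}{\rfvar}$ as a conjunct; instantiating $\lforall{y}$ with our $y$, reading off that box postcondition at time $0$ (legitimate since $\rfvar$, hence the domain $\rfvar \lor \initassum$, holds initially), and using $\initassum$ yields $\dprogressin{\D{x}=\genDE{x}}{\rfvar}$. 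For (ii), apply the same time-reversal (``there and back'') manipulation via axiom \irref{Dadjoint} that underlies the derivation of \irref{realind} in \rref{cor:realind}: forward invariance of $\rfvar$ along $\D{x}=\genDE{x}$ states equivalently, by \irref{Dadjoint}, that no backward solution of $\D{x}=-\genDE{x}$ issuing from a $\lnot{\rfvar}$-state can enter $\rfvar$ (otherwise the reversed forward solution would leave $\rfvar$), which by the duality axiom \irref{duality} is exactly $\dprogressin{\D{x}=-\genDE{x}}{\lnot{\rfvar}}$. Composing (i) and (ii) with \irref{Lpiff} and discharging $y$ gives the ``$\limply$'' direction, so \irref{SAI} derives from \irref{RealInd+Dadjoint+Cont+Uniq}.

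The main obstacle is step (ii): carrying out the ``there and back'' passage from the global forward invariance of $\rfvar$ to the backward local-progress property of $\lnot{\rfvar}$ as a fully syntactic \irref{Dadjoint}-based derivation, correctly threading the fresh ghost $y$ and the evolution-domain constraints $\rfvar \lor \initassum$ occurring in \irref{RealInd} through the sign change. This is exactly the delicate part already isolated (and resolved) in the proof of \rref{cor:realind}; once it is reused here, the remaining work is routine first-order reasoning together with the two derived equivalences \irref{RealInd} and \irref{Lpiff}, which are themselves what make the overall \irref{SAI} statement an equivalence rather than merely a soundness claim.
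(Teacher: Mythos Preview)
Your ``$\lylpmi$'' direction is correct and essentially matches the paper's: both reduce to the derived rule \irref{sAI}, keeping the universally quantified arithmetic hypotheses as constant context and instantiating them to close the two premises.

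For the ``$\limply$'' direction you take a genuinely different route from the paper. The paper argues by contrapositive: it assumes the progress condition fails at some state, uses the negation equivalence $\lnot{(\sigliedsai{\genDE{x}}{\rfvar})} \lbisubjunct \sigliedsai{\genDE{x}}{(\lnot{\rfvar})}$ (\rref{prop:negationrearrangement}) together with \irref{Lpiff} to obtain local progress for $\lnot{\rfvar}$, and then combines this with \irref{Uniq} and \irref{bigsmallequiv} to construct a concrete $\ddiamond{\pevolve{\D{x}=\genDE{x}}}{\lnot{\rfvar}}$ witness contradicting invariance. Notably, the paper does \emph{not} use the ``$\limply$'' direction of \irref{RealInd} here at all. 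Your approach instead exploits that direction directly: from $\dbox{\pevolve{\D{x}=\genDE{x}}}{\rfvar}$ obtain the right-hand side of \irref{RealInd}, instantiate the $\forall y$ at the ghost $y$ with $x=y$, use \irref{DX} to read off the postcondition at time~$0$ (the domain $\rfvar \lor x=y$ holds there), extract the local-progress conjunct, and convert with \irref{Lpiff}. This is correct for part~(i) and arguably cleaner, since it uses \irref{RealInd} as the two-sided equivalence it is stated to be.

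There is, however, a gap in your part~(ii). You correctly observe (via the \irref{Dadjoint}-based reflection argument) that forward invariance of $\rfvar$ is equivalent to backward invariance of $\lnot{\rfvar}$, i.e., $\lforall{x}{(\lnot{\rfvar}\limply\dbox{\pevolve{\D{x}=-\genDE{x}}}{\lnot{\rfvar}})}$. But the sentence ``which by the duality axiom is exactly $\dprogressin{\D{x}=-\genDE{x}}{\lnot{\rfvar}}$'' is incorrect: backward \emph{invariance} of $\lnot{\rfvar}$ is a global box property, whereas backward \emph{local progress} of $\lnot{\rfvar}$ is the local $\ddnext$ property, and the duality axiom \irref{duality} only relates $\ddnext\rfvar$ to $\ddnext\lnot{\rfvar}$, not invariance to progress. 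Your appeal to \rref{cor:realind} does not help either: that corollary derives the rule \irref{realind} (from local-progress premises to invariance), not its converse, which is what you need. The fix is simple and is exactly your own part~(i) argument applied to $\lnot{\rfvar}$ and the reversed ODE $\D{x}=-\genDE{x}$: after reflection gives backward invariance of $\lnot{\rfvar}$, apply \irref{RealInd} (``$\limply$'') for the backward system, extract at time~$0$ via \irref{DX}, and convert with \irref{Lpiff} to obtain $\sigliedsai[-]{\genDE{x}}{(\lnot{\rfvar})}$.
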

In~\rref{app:completeness}, a generalization of \rref{thm:semialgcompleteness} is proven that handles semianalytic evolution domains $\ivr$ using \irref{Lpiff} and a corresponding generalization of axiom \irref{RealInd}.
The same appendix also proves the following generalization of~\rref{thm:algcomplete} for semianalytic evolution domains:
\begin{theorem}[Analytic completeness with semianalytic domains]
\label{thm:algcompletedom}
The differential radical invariant axiom~\irref{DRIQ} derives from \irref{Cont+Uniq} for semianalytic formula $\ivr$.
\[\dinferenceRule[DRIQ|DRI{$\&$}]{differential radical invariant axiom with domain}
{\linferenceRule[equiv]
  {\big(\ivr \limply \etermA = 0 \land (\sigliedsai{\genDE{x}}{\ivr} \limply \sigliedzero{\genDE{x}}{\etermA} ) \big)}
  {\axkey{\dbox{\pevolvein{\D{x}=\genDE{x}}{\ivr}}{\etermA=0}}}
}{}\]
\end{theorem}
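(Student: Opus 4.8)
\smallskip
\noindent\textbf{Proof proposal.}
The plan is to prove the two directions of the equivalence \irref{DRIQ} separately, in each case using local progress completeness (\rref{thm:localprogresscomplete}, axiom \irref{Lpiff}) to translate between the unfolded continuous behaviour of the ODE and the finitary arithmetic summaries \m{\sigliedsai{\genDE{x}}{\ivr}} and \m{\sigliedzero{\genDE{x}}{\etermA}}, and using the differential radical invariant axiom \irref{DRI} (\rref{thm:algcomplete} in the domain-free case, which by \rref{thm:DRI} ultimately rests on \irref{DI+DC+DG}) to pass between \m{\sigliedzero{\genDE{x}}{\etermA}} and local invariance of \m{\etermA=0}. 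As in the derivation of \irref{realind} (\rref{cor:realind}), a fresh variable $y$ is introduced with $\initassum$ using \irref{DGall} so that \irref{Lpiff} applies; since $y$ occurs in none of $\D{x}=\genDE{x}$, $\etermA$, $\ivr$, the ghost is discarded at the end. Two small arithmetic facts are used: (i) \m{\sigliedsai{\genDE{x}}{(\etermA=0)} \lbisubjunct \sigliedzero{\genDE{x}}{\etermA}}, obtained by normalizing \m{\etermA=0} to \m{\etermA\geq0 \land -\etermA\geq0}, expanding \m{\sigliedsai{\genDE{x}}{(\etermA=0)}} homomorphically to \m{\sigliedgeq{\genDE{x}}{\etermA}\land\sigliedgeq{\genDE{x}}{-\etermA}}, and observing propositionally (using the common rank $N$ of $\etermA$ and $-\etermA$) that the strict disjuncts \m{\sigliedgt{\genDE{x}}{\etermA}} and \m{\sigliedgt{\genDE{x}}{-\etermA}} are mutually inconsistent and inconsistent with \m{\sigliedzero{\genDE{x}}{\etermA}}, so only \m{\sigliedzero{\genDE{x}}{\etermA}} survives, then extending to an arbitrary normal form of \m{\etermA=0} by \irref{CLP}; and (ii) \m{\sigliedzero{\genDE{x}}{\etermA}} is insensitive to replacing $\genDE{x}$ by $-\genDE{x}$, since the $i$-th Lie derivative only changes by the sign $(-1)^i$.

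For the ``$\limply$'' direction, assume \m{\dbox{\pevolvein{\D{x}=\genDE{x}}{\ivr}}{\etermA=0}} and $\ivr$. The left conjunct \m{\etermA=0} holds because the zero-duration solution stays in $\ivr$ when $\ivr$ holds initially, so the box assumption forces \m{\etermA=0} in the current state (a reflexivity principle for domain-constrained ODEs that is standard in \dL). For the right conjunct, assume additionally \m{\sigliedsai{\genDE{x}}{\ivr}}: by \irref{Lpiff} this yields \m{\dprogressin{\D{x}=\genDE{x}}{\ivr}}, i.e.\ a positive-duration solution staying in $\ivr$ (together with the initial state, since $\ivr$ holds now). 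Every prefix of that solution is an $\ivr$-respecting solution, so the box assumption makes \m{\etermA=0} true along the whole segment; hence the same solution witnesses \m{\dprogressin{\D{x}=\genDE{x}}{\etermA=0}}, and \irref{Lpiff} for the semianalytic formula \m{\etermA=0} gives \m{\sigliedsai{\genDE{x}}{(\etermA=0)}}, which equals \m{\sigliedzero{\genDE{x}}{\etermA}} by fact (i).

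For the ``$\lylpmi$'' direction, assume the right-hand side and derive \m{\dbox{\pevolvein{\D{x}=\genDE{x}}{\ivr}}{\etermA=0}} by the domain-constraint generalization of real induction (\rref{lem:realindODE}/\rref{cor:realind}, as generalized in \rref{app:axiomatization}), whose premises reduce invariance of \m{\etermA=0} under $\ivr$ to forward and backward local-progress conditions along $\ivr$-respecting solutions. At the initial point (empty initial sub-interval), \m{\etermA=0} is delivered by the left conjunct \m{\ivr\limply\etermA=0}; and when the solution continues in $\ivr$, its tail is exactly local progress into $\ivr$ from that state, so by \irref{Lpiff} \m{\sigliedsai{\genDE{x}}{\ivr}} holds, the right conjunct yields \m{\sigliedzero{\genDE{x}}{\etermA}}, and \irref{DRI} then makes \m{\etermA=0} invariant for \m{\pevolve{\D{x}=\genDE{x}}} (hence, by uniqueness, for the $\ivr$-constrained solution), discharging the forward step at that point. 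At a later point where \m{\etermA=0} has already held on an initial sub-interval, the needed \m{\sigliedzero{\genDE{x}}{\etermA}} is instead obtained \emph{without} the hypothesis: \m{\etermA=0} held along a backward-time sub-interval there, so \irref{Lpiff} for \m{\D{x}=-\genDE{x}} gives \m{\sigliedsai[-]{\genDE{x}}{(\etermA=0)}}, hence \m{\sigliedzero{\genDE{x}}{\etermA}} by facts (i) and (ii), and \irref{DRI} again propagates \m{\etermA=0}. The backward local-progress premise of the rule is symmetric and again needs only \m{\ivr\limply\etermA=0} together with arithmetic.

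The main obstacle is precisely this coupling between the two arithmetic summaries: the fact that ``the $\ivr$-solution keeps going'' is itself the local-progress property \m{\dprogressin{\D{x}=\genDE{x}}{\ivr}}, hence (via \irref{Lpiff}) \m{\sigliedsai{\genDE{x}}{\ivr}}, which is what unlocks the hypothesis's right conjunct and therefore \m{\sigliedzero{\genDE{x}}{\etermA}} and, through \irref{DRI}, local invariance of \m{\etermA=0}. Threading this correctly through the domain-constrained real-induction rule---in particular handling the boundary and endpoint cases where the solution immediately leaves $\ivr$, so that local progress into $\ivr$ fails and only the starting-state requirement \m{\ivr\limply\etermA=0} is available (the analogue of the topological subtlety already seen for \irref{Lpgeqfull} versus \irref{Lpgtfull})---is where the care lies; once this bookkeeping is set up, soundness is routine given \irref{Cont}, \irref{Uniq}, \rref{thm:algcomplete}, and \rref{thm:localprogresscomplete}.
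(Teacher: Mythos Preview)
Your ``$\limply$'' direction is fine and is essentially a dual rephrasing of the paper's argument: where the paper takes the contrapositive and uses \rref{eq:sigliedzerorearrangement} to turn $\lnot(\sigliedzero{\genDE{x}}{\etermA})$ into a strict progress formula for $\etermA$ or $-\etermA$, you instead push the box assumption through \irref{dDR} to obtain local progress into $\etermA=0$ and then read off $\sigliedzero{\genDE{x}}{\etermA}$ via \irref{Lpiff} and your fact~(i). Both routes work.

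The ``$\lylpmi$'' direction, however, has a genuine gap. First, the theorem statement says \irref{DRIQ} derives from \irref{Cont+Uniq}; this is deliberate and is what distinguishes analytic from semianalytic completeness (compare with \rref{thm:semialgcompleteness}, which additionally needs \irref{RealInd+Dadjoint}). By invoking the domain-constrained real induction rule you are using an axiom the theorem explicitly omits. Second, and more seriously, the real induction approach does not go through syntactically. The hypothesis $\ivr \limply \etermA = 0 \land (\sigliedsai{\genDE{x}}{\ivr} \limply \sigliedzero{\genDE{x}}{\etermA})$ is a formula about the \emph{current} state, not a universally quantified one; the rule \irref{realindin} drops such state-dependent antecedents. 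Its forward premise therefore asks you to prove
\[
\lsequent{\initassum,\,\etermA=0,\,\ivr,\,\dprogressin{\D{x}=\genDE{x}}{\ivr}}{\dprogressin{\D{x}=\genDE{x}}{\etermA=0}}
\]
with no access to the hypothesis, i.e., to derive $\sigliedzero{\genDE{x}}{\etermA}$ from $\etermA=0$, $\ivr$, and $\sigliedsai{\genDE{x}}{\ivr}$ alone. That sequent is simply not valid (take $\D{x}=1$, $\ivr\equiv\ltrue$, $\etermA\equiv x$ at $x=0$). Your narrative distinguishes ``the initial point'' from ``a later point where $\etermA=0$ has already held on a sub-interval'', but \irref{realindin} makes no such distinction in its premises, and the backward-interval information you want to use at later points is not available there either.

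The paper's ``$\lylpmi$'' direction is much more direct. After the \irref{DX} rewrite and pulling $\ivr$ and $\etermA=0$ to the antecedent, it case-splits on $\sigliedsai{\genDE{x}}{\ivr}$. When it holds, the hypothesis immediately yields $\sigliedzero{\genDE{x}}{\etermA}$ in the \emph{initial} state, and the derived rule \irref{dRI} (which comes from \irref{DI+DC+DG}, not from \irref{RealInd}) proves invariance of $\etermA=0$ in one shot. When it fails, \irref{Lpiff} says there is no local progress into $\ivr$, so the only $\ivr$-respecting solution is the trivial one and the box holds because $\etermA=0$ is already true initially. No real induction is needed; the analytic case is precisely where \irref{dRI} alone suffices.
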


Thus, \dL is complete for proving invariance of \emph{all} (semi)analytic $\rfvar$ of differential equations because it reduces all such questions equivalently to first-order formulas, e.g., on the RHS of derived axiom \irref{SAI}.
In addition, \dL decides invariance properties for all first-order real arithmetic formulas $\rfvar$, because quantifier elimination~\cite{Bochnak1998} can equivalently rewrite $\rfvar$ to (semialgebraic) normal form \rref{eq:normalform} first.
Unlike for \rref{thm:algcomplete} and its generalization \rref{thm:algcompletedom}, which equivalently reduce the future truth of analytic postconditions directly, \rref{thm:semialgcompleteness} and its generalized version in~\rref{app:completeness} are only equivalences for invariants $\rfvar$, the search of which is the only remaining challenge.

Of course, the complete proof rule \irref{sAI} can be used to prove all of the suggested invariants for the ODE $\alpha_e$ from~\rref{eq:example-ode}.
However,~\rref{ex:continuousproperties} gives a significantly simpler proof for the invariance of $1-u^2-v^2 \cmp 0$ with \irref{dbxineq}.
This has implications for implementations of \irref{sAI}: simpler proofs help minimize dependence on real arithmetic decision procedures.
For semianalytic formulas (that are not semialgebraic), proof rules resulting in simpler arithmetic premises might even be preferable because validity of these arithmetic premises is undecidable in general~\cite{DBLP:journals/jsyml/Richardson68}.
Logically, when $\rfvar$ is formed from only strict (resp.\ non-strict) inequalities then the left (resp.\ right) premise of \irref{sAI} closes trivially.
This logical fact corresponds to the topological fact that the set $\rfvar$ characterizes is topologically open (resp.\ closed) so only one of the two exit trajectories in \rref{subsec:realind} can occur.

\section{Noetherian Functions}
\label{sec:noetherianfunctions}

This section studies the class of Noetherian functions which meets all of the extended term conditions required in~\rref{subsec:background-compatibility} and therefore inherits all soundness and completeness results of the preceding sections, including~\rref{thm:algcomplete} and~\rref{thm:semialgcompleteness}.

\subsection{Mathematical Preliminaries}
\label{subsec:mathematicalprelim}

The following definition of Noetherian functions is standard, although the parameters that are used for studying the complexity of these functions~\cite{MR1732408,MR2083248,MR3925105} have been omitted.
The notation $\noef : \noefdom \subseteq \reals^k \to \reals$ is used for real-valued functions with domain $\noefdom$, i.e., an open, connected subset of $\reals^k$.
With a slight abuse of notation, polynomials $\ptermA \in \polynomials{\reals}{x}$ over indeterminates $x=(x_1,\dots,x_n)$ and their corresponding polynomial functions $\ptermA(x_1,\dots,x_n)$ in $\reals^n \to \reals$ are used interchangeably.

\begin{definition}[Noetherian chain and Noetherian function]
\label{def:noetherianchain}
A \emph{Noetherian chain} is a sequence of real analytic functions $\noef_1, \dots, \noef_r : \noefdom \subseteq \reals^k \to \reals$ such that all partial derivatives in $\noefdom$ for all $i = 1,\dots,k$ and $j = 1,\dots,r$ have the following form, where each $\ptermB_{ij} \in \reals[y,z]$ is a polynomial in $k+r$ indeterminates with $y=(y_1,\dots,y_k), z=(z_1,\dots,z_r)$:
\begin{equation}
\Dp[y_i]{\noef_j} (y) = \ptermB_{ij}(y,\noef_1(y),\dots,\noef_r(y))
\label{eq:noetherian-chain-def}
\end{equation}
The function $\noef : \noefdom \subseteq \reals^k \to \reals$ is \emph{Noetherian} iff it can be written as $\noef(y) = \ptermA(y,\noef_1(y),\dots,\noef_r(y))$, where $\ptermA \in \reals[y,z]$ is a polynomial in $k+r$ indeterminates and $\noef_1,\dots,\noef_r$ is a Noetherian chain.
In that case, $\noef$ is said to be \emph{generated} by this polynomial and Noetherian chain respectively but the choice of generating chain and polynomial for $\noef$ is not unique.
\end{definition}

For the term language extension~\rref{eq:extlang}, $\exp$ is a 1-element Noetherian chain because $\Dp[y]{\exp(y)} = \exp(y)$, while $\sin,\cos$ form a 2-element Noetherian chain.
All three functions together form a 3-element Noetherian chain.
More generally, the union of any (finite) number of Noetherian chains is a Noetherian chain.
By definition, any element of a Noetherian chain is itself a Noetherian function so $\exp,\sin,\cos$ are also Noetherian functions.
It is often useful to consider Noetherian functions over a larger domain than the generating chain, e.g., $\noef(x,y) \mnodefeq \exp(y) + \sin(x)$ with $\noef : \reals^2 \to \reals$.
In this case, the domain of definition of the generating chain is implicitly extended by treating them as functions over the dimensionally larger domain, e.g., with $\exp(x,y),\sin(x,y):\reals^2 \to \reals$ which ignore their first and second argument respectively.
This is compatible with~\rref{def:noetherianchain} because the partial derivatives with respect to the ignored arguments is trivially zero.

\rref{prop:noetherianchainnoether} gives important closure properties of the Noetherian functions generated by the same Noetherian chain, which are crucial later and explain the name \emph{Noetherian} function~\cite{MR1150568,MR2083248}.

\begin{proposition}[\cite{MR1732408,MR2083248,MR3925105}]
\label{prop:noetherianchainnoether}
The set $R$ of Noetherian functions generated by a given Noetherian chain \(\noef_1,\dots,\noef_r : \noefdom \subseteq \reals^k \to \reals\) is a Noetherian ring that is closed under partial derivatives.
\end{proposition}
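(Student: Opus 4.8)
The plan is to exhibit $R$ as the image of a single ring homomorphism out of a polynomial ring in finitely many indeterminates, so that both the ring structure and the Noetherian property are inherited, and then to establish closure under partial derivatives by one application of the chain rule together with the defining Noetherian chain identity \rref{eq:noetherian-chain-def}. Concretely, let $\mathcal{F}$ denote the commutative ring of real-valued functions on $\noefdom$ under pointwise addition and multiplication (the real analytic functions on $\noefdom$ form a subring, since sums and products of real analytic functions are real analytic), and define the substitution homomorphism $\Phi : \reals[y_1,\dots,y_k,z_1,\dots,z_r] \to \mathcal{F}$ that sends $y_i$ to the $i$-th coordinate function on $\noefdom$ and $z_j$ to $\noef_j$, so that a polynomial $\ptermA$ maps to $\ptermA(y,\noef_1(y),\dots,\noef_r(y))$. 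Since substitution of fixed functions into a polynomial respects sums and products, $\Phi$ is a ring homomorphism, and by \rref{def:noetherianchain} its image is exactly $R$. Thus $R$ is a subring of $\mathcal{F}$ (in fact of the subring of real analytic functions), which gives the ``ring'' part of the claim.

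For the Noetherian property, the first isomorphism theorem yields a ring isomorphism $R \cong \reals[y_1,\dots,y_k,z_1,\dots,z_r]/\ker\Phi$. The polynomial ring in the $k+r$ indeterminates over the field $\reals$ is Noetherian by the Hilbert Basis Theorem~\cite{MR1727221}, and every quotient of a Noetherian ring is Noetherian, so $R$ is Noetherian. The kernel $\ker\Phi$ is generally nonzero --- for the chain $\noef_1,\noef_2 = \sin,\cos$ one has $z_1^2 + z_2^2 - 1 \in \ker\Phi$ --- but this plays no role in the argument.

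For closure under partial derivatives, take an arbitrary $f = \ptermA(y,\noef_1(y),\dots,\noef_r(y)) \in R$ with $\ptermA \in \reals[y,z]$. The chain rule gives
\begin{align*}
\Dp[y_i]{f} ={}& \Dp[y_i]{\ptermA}(y,\noef_1(y),\dots,\noef_r(y)) \\
&{}+ \sum_{j=1}^{r} \Dp[z_j]{\ptermA}(y,\noef_1(y),\dots,\noef_r(y)) \cdot \Dp[y_i]{\noef_j}(y).
\end{align*}
The partial derivatives $\Dp[y_i]{\ptermA}$ and $\Dp[z_j]{\ptermA}$ are again polynomials in $\reals[y,z]$, hence their values at $(y,\noef_1(y),\dots,\noef_r(y))$ lie in the image of $\Phi$, i.e.\ in $R$; and the Noetherian chain identity \rref{eq:noetherian-chain-def} gives $\Dp[y_i]{\noef_j}(y) = \ptermB_{ij}(y,\noef_1(y),\dots,\noef_r(y)) = \Phi(\ptermB_{ij}) \in R$. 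Since $R$ is a ring, the right-hand side is a sum of products of elements of $R$, so $\Dp[y_i]{f} \in R$ for every $i$, as required.

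I do not expect a genuine technical obstacle here; the one point that needs care is conceptual, namely to exhibit $R$ as $\Phi(\reals[y,z])$ and hence as a quotient of a Noetherian polynomial ring, rather than attempting to verify the ascending chain condition for ideals directly within $R$. This is essentially the argument recorded in the cited references~\cite{MR1732408,MR2083248,MR3925105}.
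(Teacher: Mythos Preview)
Your proof is correct and, for the proposition as stated, matches the paper's approach: the paper also observes that $R$ is a finitely generated $\reals$-algebra (equivalently, the image of your $\Phi$), hence Noetherian as a quotient of $\reals[y,z]$, and the partial-derivative argument via the chain rule is identical.

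There is one point of divergence worth noting. The paper \emph{additionally} gives a constructive proof of the Noetherian property by pulling an ascending chain $I_0\subseteq I_1\subseteq\cdots$ in $R$ back to the chain of preimage ideals $J_i=\Phi^{-1}(I_i)$ in $\reals[y,z]$ and using stabilization there. This is precisely the ``verify the ascending chain condition directly'' route you advise against, but the paper does it deliberately: the explicit preimage-ideal construction is reused later (in the proof of \rref{thm:noetheriancompat}) to compute the rank $N$ and cofactors in the differential radical identity~\rref{eq:differential-rank} via polynomial ideal membership checks. Your first-isomorphism-theorem argument establishes Noetherianity cleanly but does not by itself supply that computational handle, so if you continue in this direction you will need to recover the preimage-ideal picture when proving condition~\rref{itm:reqdiffradical}.
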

\begin{proof}
Let $y = (y_1,\dots,y_k), z=(z_1,\dots,z_r)$ abbreviate indeterminates as in~\rref{def:noetherianchain}.
The set $R$ is a ring under the usual addition and multiplication of real-valued functions because the corresponding generating polynomials form a ring.
Now, $R$ is Noetherian because it is a finitely generated algebra~\cite[\S2.11, Corollary 3]{MR1727221} over the Noetherian polynomial ring $\reals[y]$.
The following constructive proof yields a computational method that is used later.

Consider an ascending chain of ideals $I_0 \subseteq I_1 \subseteq \cdots$ in $R$.
For each $I_i$, associate the set of generating polynomials $J_i \mdefeq \{\ptermA~|~\ptermA(y,\noef_1(y),\dots,\noef_r(y)) \in I_i\} \subseteq \reals[y,z]$ with respect to the generating Noetherian chain.
Each $J_i$ is an ideal in $\reals[y,z]$ because the corresponding $I_i$ are themselves ideals.
By construction, $J_i \subseteq J_{i+1}$ because $I_i \subseteq I_{i+1}$ for all $i$.
Since $J_0 \subseteq J_1 \subseteq \cdots$ is an ascending chain of ideals in $\reals[y,z]$, which is a Noetherian polynomial ring, it must stabilize at some $N$ with $J_{N} = J_{N+1} = \cdots$.
Correspondingly, the chain of ideals $I_i$ stabilizes (at the latest) at $N$ so $R$ is Noetherian.
The chain $I_0 \subseteq I_1 \subseteq \cdots$ may stabilize earlier than the corresponding $J_i$ chain but that is not important here.

To show that $R$ is closed under partial derivatives, let $\noef(y)=\ptermA(y,\noef_1(y),\dots,\noef_r(y)) \in R$ with $\ptermA \in \reals[y,z]$.
For the partial derivative of $\noef$ with respect to $y_i$, applying the chain rule yields:
\[
  \Dp[y_i]{\noef}(y) = \Dp[y_i]{\ptermA(y,\noef_1(y),\dots,\noef_r(y))}
  = \Dp[y_i]{\ptermA} (y,\noef_1(y),\dots,\noef_r(y)) + \sum_{j=1}^{r} \Dp[z_j]{\ptermA} (y,\noef_1(y),\dots,\noef_r(y)) \Dp[y_i]{\noef_j}(y)
\]

By definition, $\Dp[y_i]{\ptermA} (y,\noef_1(y),\dots,\noef_r(y)) \in R$ since $\Dp[y_i]{\ptermA}$ is a polynomial in $\reals[y,z]$.
Each summand $\Dp[z_j]{\ptermA} (y,\noef_1(y),\dots,\noef_r(y)) \in R$ since $\Dp[z_j]{\ptermA}$ is a polynomial in $\reals[y,z]$,
and $\Dp[y_i]{\noef_j}(y) \in R$ by definition because $\noef_1, \dots, \noef_r$ is a Noetherian chain.
Hence, all RHS sub-terms are in $R$, and so $\Dp[y_i]{\noef}(y)\in R$.
\end{proof}

\rref{prop:noetherianchainnoether} implies that adding Noetherian functions to their generating chains yields another Noetherian chain generating the same Noetherian ring $R$ of Noetherian functions because $R$ is closed under ring addition and multiplication.
Beyond closure properties for a single Noetherian chain, the class of all Noetherian functions is also closed under other mathematical operations, including function composition, multiplicative inverses, and function inverses (with appropriate assumptions)~\cite{MR3925105}.
Closure under function composition is proved constructively as it is used later.

\begin{proposition}[\cite{MR3925105}]
\label{prop:noetheriancompose}
If \(\noef : \noefdom \subseteq \reals^k \to \reals\) is Noetherian and \(\noeff : \noeffdom \subseteq \reals^l \to \reals^k\) has a compatible image \(\noeff(\noeffdom) \subseteq \noefdom\) where each component $\noeff_i : \noeffdom \subseteq\reals^l \to \reals$ for $i=1,\dots,k$ is Noetherian,
then the function composition $\noefff \mnodefeq \noef(\noeff_1,\dots,\noeff_k) : \noeffdom \subseteq \reals^l \to \reals$ is Noetherian.
\end{proposition}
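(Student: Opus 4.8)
The plan is to produce, explicitly, a Noetherian chain on $\noeffdom$ and a polynomial that together generate $\noefff$. The first step is to merge the data of the hypotheses into a single chain. Each component $\noeff_i$ is generated by some Noetherian chain on $\noeffdom \subseteq \reals^l$, and since a finite union of Noetherian chains over a common domain is again a Noetherian chain, I may fix one Noetherian chain $\psi_1,\dots,\psi_s : \noeffdom \subseteq \reals^l \to \reals$ such that every $\noeff_i(w)$ is a polynomial in $w = (w_1,\dots,w_l)$ and $\psi_1(w),\dots,\psi_s(w)$. Likewise, let $\ptermA$ and $\noef_1,\dots,\noef_r : \noefdom \subseteq \reals^k \to \reals$ be a generating polynomial and Noetherian chain for $\noef$, so $\noef(y) = \ptermA(y,\noef_1(y),\dots,\noef_r(y))$. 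The candidate chain for the composition is then $\psi_1,\dots,\psi_s$ augmented by the $r$ compositions $\noef_j(\noeff_1,\dots,\noeff_k) : \noeffdom \to \reals$ for $j=1,\dots,r$, which I write $\noef_j \circ \noeff$; these are well defined on all of $\noeffdom$ precisely because of the image-compatibility hypothesis $\noeff(\noeffdom) \subseteq \noefdom$, and they are real analytic since compositions of real analytic maps are real analytic.

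The second step is to check the defining identity \rref{eq:noetherian-chain-def} for this augmented chain. For each $\psi_m$ the partial derivatives are, by hypothesis, polynomial in $w$ and $\psi_1,\dots,\psi_s$, hence a fortiori polynomial in $w$ and the whole augmented chain. For $\noef_j \circ \noeff$ I would differentiate by the chain rule, $\Dp[w_i]{(\noef_j \circ \noeff)} = \sum_{a=1}^{k} \Dp[y_a]{\noef_j}\big(\noeff_1,\dots,\noeff_k\big)\cdot \Dp[w_i]{\noeff_a}$, and treat the two factors. For the first factor, the Noetherian chain property of $\noef_1,\dots,\noef_r$ supplies a polynomial $\ptermB_{aj}$ with $\Dp[y_a]{\noef_j} = \ptermB_{aj}(y,\noef_1,\dots,\noef_r)$, so evaluating at $\noeff(w)$ gives $\ptermB_{aj}\big(\noeff(w),(\noef_1\circ\noeff)(w),\dots,(\noef_r\circ\noeff)(w)\big)$; replacing each $\noeff_b(w)$ by its polynomial expression in $w$ and the $\psi_m(w)$ exhibits this as a polynomial in $w$ and the augmented chain. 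For the second factor, $\Dp[w_i]{\noeff_a}$ is the partial derivative of a Noetherian function generated by $\psi_1,\dots,\psi_s$, hence, by \rref{prop:noetherianchainnoether} (closure of that ring under partial derivatives), it too is polynomial in $w$ and $\psi_1(w),\dots,\psi_s(w)$. Since products and sums of such expressions retain this form, $\Dp[w_i]{(\noef_j\circ\noeff)}$ is polynomial in $w$ and the augmented chain, which is exactly condition \rref{eq:noetherian-chain-def}.

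Finally, I would observe that $\noefff = \noef(\noeff_1,\dots,\noeff_k)$ is generated over the augmented chain: $\noefff(w) = \ptermA\big(\noeff(w),(\noef_1\circ\noeff)(w),\dots,(\noef_r\circ\noeff)(w)\big)$, and substituting the polynomial expressions for the $\noeff_b(w)$ presents $\noefff$ as a polynomial in $w$ and $\psi_1,\dots,\psi_s,\noef_1\circ\noeff,\dots,\noef_r\circ\noeff$ --- exactly the data witnessing that $\noefff$ is Noetherian, with the chain just shown to satisfy \rref{eq:noetherian-chain-def}. I expect the only real obstacle to be organizational rather than conceptual: keeping the domains aligned so that $\noeff(\noeffdom)\subseteq\noefdom$ makes every evaluation legal, and cleanly consolidating the several generating chains of the $\noeff_i$ (and that of $\noef$) into the single chain used throughout. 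The mathematical content is nothing more than the multivariate chain rule together with the partial-derivative closure already established in \rref{prop:noetherianchainnoether}.
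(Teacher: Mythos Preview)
Your proposal is correct and follows essentially the same approach as the paper: form the augmented chain consisting of a common Noetherian chain for the $\noeff_i$ together with the compositions $\noef_j\circ\noeff$, then verify \rref{eq:noetherian-chain-def} for the new members via the chain rule and \rref{prop:noetherianchainnoether}. The only cosmetic difference is that the paper absorbs the $\noeff_i$ themselves into the chain (so $\noeff_1,\dots,\noeff_k$ are the first $k$ chain elements, which is legitimate since adding generated functions to a chain yields another chain), whereas you keep them as polynomial expressions over $\psi_1,\dots,\psi_s$ and substitute; both organizations work.
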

\begin{proof}
Let $y = (y_1,\dots,y_k), z=(z_1,\dots,z_r), \gamma = (\gamma_1,\dots,\gamma_l)$ abbreviate indeterminates.
The composed function $\noefff$ is well-defined on $\noeffdom$ since $\noeff(\noeffdom) \subseteq \noefdom$.
By assumption, $\noef(y) = \ptermA(y,\noef_1(y),\dots,\noef_r(y))$ for some generating Noetherian chain $\noef_1,\dots,\noef_r : \noefdom \subseteq \reals^k \to \reals$ and polynomial $\ptermA \in [y,z]$.
Since the union of Noetherian chains is Noetherian and by~\rref{prop:noetherianchainnoether}, assume without loss of generality, that the Noetherian functions $\noeff_i$ for $i=1,\dots,k$ are members of the same generating Noetherian chain $\noeff_1, \dots, \noeff_s : \noeffdom \subseteq \reals^l \to \reals$ with $k \leq s$.
Putting these together, $\noefff$ can be written as:
\[ \noefff = \ptermA(\noeff, \noefff_1,\dots, \noefff_r ) \]
where $\noeff = (\noeff_1,\dots,\noeff_k)$ and the function compositions $\noefff_i \mdefeq \noef_i(\noeff_1,\dots,\noeff_k)$ for $i=1,\dots,r$.
From this representation, $\noefff$ is generated by polynomial $\ptermA$ over the sequence:
\begin{equation}
\noeff_1,\dots,\noeff_s, \noefff_1,\dots,\noefff_r
\label{eq:composednoether}
\end{equation}

In order to show that~\rref{eq:composednoether} is a Noetherian chain, it suffices to check that the $\noefff_1\dots,\noefff_r$ obey the condition on partial derivatives~\rref{eq:noetherian-chain-def} because $\noeff_1,\dots,\noeff_s$ is already a Noetherian chain.
For each $\noefff_i(\gamma) : \noeffdom \subseteq \reals^l \to \reals$, taking the partial derivative with respect to $\gamma_j$ and applying the chain rule:
\begin{align*}
\Dp[\gamma_j]{\noefff_i} (\gamma) &= \Dp[\gamma_j]{\noef_i(\noeff_1(\gamma),\dots,\noeff_k(\gamma))}
= \sum_{l=1}^k \Dp[y_l]{\noef_i}(\noeff_1(\gamma),\dots,\noeff_k(\gamma)) \Dp[\gamma_j]{\noeff_l}(\gamma)
\end{align*}

It suffices to check that each sub-term appearing on the RHS sum are generated as polynomials over the sequence~\rref{eq:composednoether}.
The case for each $\Dp[\gamma_j]{\noeff_l}(\gamma)$ follows immediately because $\noeff_1,\dots,\noeff_s$ is a Noetherian chain.
Since $\noef_1,\dots,\noef_r$ is a Noetherian chain, each $\Dp[y_l]{\noef_i}$ is a polynomial combination $\Dp[y_l]{\noef_i} = t_{il}(y,\noef_1,\dots,\noef_r)$ for some polynomial $t_{il} \in \reals[y,z]$ and, thus, $\Dp[y_l]{\noef_i}$ is generated by chain~\rref{eq:composednoether}:
\[ \Dp[y_l]{\noef_i}(\noeff_1,\dots,\noeff_k) = t_{il}(\noeff,\noef_1(\noeff),\dots,\noef_r(\noeff))  = t_{il}(\noeff,f_1,\dots,f_r)
\qedhere\]
\end{proof}

Before turning to the study of Noetherian functions in \dL, it is helpful to first understand how they help with its differential equations reasoning.
Polynomial ODEs are very expressive and earlier results~\cite{DBLP:conf/fm/0009ZZZ15,GRACA2008330,DBLP:journals/logcom/Platzer10} make use of polynomial ODEs to implicitly characterize (and thus, eliminate) real analytic functions appearing in initial value problems (IVPs).
IVPs are specified by a system of ODEs, $\D{x}=\genDE{x}$, defined over domain $D$ with RHS $\genDE{x} : D \subseteq \reals^n \to \reals^n$ and real initial value $X_0 \in D \subseteq \reals^n$.
The IVP is called Noetherian (resp. polynomial) when all components of the RHS $\genDE{x}$ are Noetherian functions (resp. polynomials).
Both Noetherian and polynomial functions are analytic and therefore continuously differentiable.
Under the assumption of continuously differentiable RHS, the Picard-Lindel\"{o}f theorem~\cite[\S10.VI]{Walter1998} guarantees that the IVP has a unique maximal solution $\solvar(t) : (\alpha,\beta) \to \reals^n$ with $-\infty \leq \alpha < 0 < \beta \leq \infty$ such that $\solvar(0) = X_0$ and $\D[t]{\solvar(t)} = \genDE{\solvar(t)}$.
Uniqueness and maximality here means that every solution of the IVP is a truncation of $\solvar$ to a smaller existence interval.
The following generalizes aforementioned results~\cite{DBLP:conf/fm/0009ZZZ15,GRACA2008330,DBLP:journals/logcom/Platzer10} to the Noetherian setting:

\begin{proposition}
\label{prop:diffaxiomatization}
Function \(\solvar : (\alpha,\beta) \to \reals^n\) with $-\infty {\leq} \alpha {<} 0 {<} \beta {\leq} \infty$ is the (coordinate-projected) solution of a Noetherian IVP iff it is the (coordinate-projected) solution of a polynomial IVP.
\end{proposition}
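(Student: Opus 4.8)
The ``if'' direction is immediate: every polynomial is a Noetherian function, so every polynomial IVP is in particular a Noetherian IVP, and reading off the same coordinate projection of the same maximal solution gives the claim. The substance is the ``only if'' direction, which I plan to prove by a constructive \emph{polynomialization} of the Noetherian right-hand side, in the spirit of~\cite{DBLP:conf/fm/0009ZZZ15,GRACA2008330,DBLP:journals/logcom/Platzer10} for the real-analytic case; I would treat the case where $\solvar$ is the maximal solution of a Noetherian IVP in the $n$ variables $x$ directly, the general case with a nontrivial projection on the Noetherian side following by composing coordinate projections.

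First I would normalize the data. Let $\solvar:(\alpha,\beta)\to\reals^n$ be the maximal solution of a Noetherian IVP $\D{x}=\genDE{x}$, $x(0)=X_0$, whose right-hand side is defined on an open connected set $\noefdom\subseteq\reals^n$ with $X_0\in\noefdom$. Using the dimension-extension convention following~\rref{def:noetherianchain} (a chain over $\reals^k$ extends to one over $\reals^n$ ignoring the extra arguments, whose new partial derivatives are the polynomial $0$) together with the fact that a finite union of Noetherian chains is again Noetherian, I may assume a single chain $\noef_1,\dots,\noef_r:\noefdom\to\reals$ generates every component of $\genDE{x}$, so that on $\noefdom$ one has $\odeterm_i(x)=\ptermA_i(x,\noef_1(x),\dots,\noef_r(x))$ and $\Dp[x_l]{\noef_j}(x)=\ptermB_{lj}(x,\noef_1(x),\dots,\noef_r(x))$ for polynomials $\ptermA_i,\ptermB_{lj}\in\reals[y,z]$ with $y=(y_1,\dots,y_n)$, $z=(z_1,\dots,z_r)$.

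Next I would introduce fresh variables $w=(w_1,\dots,w_r)$, intended to track $\noef_j\circ\solvar$, and write down the \emph{polynomial} IVP over the domain $D'\mdefeq\noefdom\times\reals^r\subseteq\reals^{n+r}$ given by $\D{x_i}=\ptermA_i(x,w)$ and $\D{w_j}=\sum_{l=1}^n\ptermB_{lj}(x,w)\,\ptermA_l(x,w)$, with initial value $(X_0,\noef_1(X_0),\dots,\noef_r(X_0))$. A routine chain-rule computation shows that $\psi\mdefeq(\solvar,\noef_1\circ\solvar,\dots,\noef_r\circ\solvar)$ solves this IVP on $(\alpha,\beta)$: the initial condition matches by construction, $\psi(t)\in D'$ because $\solvar(t)\in\noefdom$ throughout, the $x$-equations reduce to $\D[t]{\solvar(t)}=\genDE{\solvar(t)}$, and the $w$-equations reduce to differentiating $\noef_j\circ\solvar$ via the identities above. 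By uniqueness for the (analytic) polynomial IVP, $\psi$ extends to its maximal solution $\Psi:(\alpha',\beta')\to\reals^{n+r}$ with $(\alpha,\beta)\subseteq(\alpha',\beta')$, and $\solvar=\pi\circ\Psi$ on $(\alpha,\beta)$, where $\pi$ is the projection of $\reals^{n+r}$ onto its first $n$ coordinates.

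The last step, which I expect to be the only delicate one, is to show $(\alpha',\beta')=(\alpha,\beta)$, so that $\solvar=\pi\circ\Psi$ holds on all of the polynomial IVP's existence interval. Suppose $\beta'>\beta$, hence $\beta<\infty$ (the case $\alpha'<\alpha$ is symmetric). Since $\Psi$ takes values in $D'=\noefdom\times\reals^r$ throughout $(\alpha',\beta')\ni\beta$, the limit $X_\beta\mdefeq\lim_{t\to\beta^-}\solvar(t)=\pi(\Psi(\beta))$ exists and lies in $\noefdom$; then the Picard--Lindel\"of theorem~\cite[\S10.VI]{Walter1998} yields a local solution of $\D{x}=\genDE{x}$ through $(\beta,X_\beta)$ which glues with $\solvar$ by uniqueness and extends it past $\beta$, contradicting maximality of $(\alpha,\beta)$. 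The \textbf{main obstacle} is exactly this interval matching: over the full space $\reals^{n+r}$ the polynomialized system can strictly outlive the original one, since the auxiliary coordinates $w$ need not blow up when $\solvar$ exits $\noefdom$; it is the restriction of the polynomial IVP to the domain $\noefdom\times\reals^r$ --- legitimate because IVPs in this article carry an explicit domain --- that forces the two maximal intervals to agree.
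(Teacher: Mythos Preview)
Your construction for the ``only if'' direction is exactly the paper's: introduce one fresh variable per chain element, derive its polynomial ODE by the chain rule, and check that $(\solvar,\noef_1\circ\solvar,\dots,\noef_r\circ\solvar)$ solves the augmented system. The two proofs diverge in two places. First, your interval-matching step is extra work the paper deliberately does not do: immediately after the proof the paper remarks that ``the polynomial IVP may have a larger maximal interval of existence than the input Noetherian IVP if it leaves the domain $D$ of the input RHS,'' so the intended reading of ``(coordinate-projected) solution'' only requires agreement on $(\alpha,\beta)$, not that $(\alpha,\beta)$ be maximal on both sides. Your domain-restriction trick $D'=\noefdom\times\reals^r$ is a nice way to force equality of maximal intervals, but it proves more than the proposition asks. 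Second, your ``if'' direction (polynomials are Noetherian, done) is perfectly valid for the statement, but the paper instead observes that solutions of polynomial ODEs are themselves univariate Noetherian functions and builds an $(n{+}1)$-dimensional Noetherian IVP $\D{x}=\ptermA(\solvar_x(\tau),\solvar_y(\tau)),\D{\tau}=1$; this buys the quantitative consequence noted after the proof, that any coordinate-projected polynomial-IVP solution in $n$ variables arises from a Noetherian IVP of dimension only $n{+}1$, regardless of how many auxiliary variables the polynomial system carried.
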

\begin{proof}
In the (trivial) converse ``$\mylpmi$'' direction, suppose function $\solvar$ solves the polynomial IVP $\D{x}=\ptermA(x,y),\D{y}=\ptermB(x,y)$ with initial values $X_0 \in \reals^n, Y_0 \in \reals^r$.
Let $\solvar_x,\solvar_y$ denote the projection onto the $x$ and $y$ coordinates of $\solvar$ respectively.
Every solution of polynomial ODEs is a univariate Noetherian function~\cite{MR2083248}.
Therefore, the Noetherian IVP given by $\D{x}=\ptermA(\solvar_x(\tau),\solvar_y(\tau)), \D{\tau}=1$ with the same initial value for $x$ and $0$ for $\tau$ trivially has the (unique) solution $(\solvar_x(t),t) : (\alpha,\beta) \to \reals^n \times \reals$.

In the (nontrivial) ``$\mimply$'' direction, suppose that $\solvar$ is the solution to the Noetherian IVP $\D{x}=\genDE{x}$ where each $\odeterm_i(x) : D \subseteq \reals^n \to \reals$ is Noetherian, and with initial value $X_0 \in D$.
By uniqueness of solutions, it suffices to construct a polynomial IVP so that $\solvar(t)$ solves it in the $x$ coordinates.

Since the union of Noetherian chains is itself a Noetherian chain, assume without loss of generality that the functions $f_1, \dots, f_n$ are generated by the same Noetherian chain $\noef_1, \dots, \noef_r$ and that $f_i = p_i(x,\noef_1(x),\dots,\noef_r(x))$ for some polynomials $p_i \in \reals[x,y]$ in $n+r$ indeterminates for $i=1,\dots,n$.
Introduce new variables $y_j$ for $j=1,\dots,r$ which are meant to take on the respective value of $\noef_j$ along solutions to the ODE.
Accordingly, the RHS of the Noetherian ODE is rewritten by replacing each $f_i$ with $p_i(x,y)$, i.e., the desired polynomial ODEs for $x$ is $\D{x} \mnodefeq p(x,y)$.

It remains to ensure that each of these newly introduced variables $y_j$ take on their intended values $\noef_j(\solvar(t))$ along the solution $\solvar$.
By~\rref{eq:noetherian-chain-def}, the partial derivatives for each $\noef_j$ can be written as polynomials $q_{ij} \in \reals[x,y]$ over the generating Noetherian chain. By the chain rule:
\[\D[t]{\noef_j(\solvar(t))}
= \sum_{i=1}^{n} \Dp[x_i]{\noef_j(x)} (\solvar(t)) \D[t]{\solvar_i(t)}
= \sum_{i=1}^{n} q_{ij} \big(\solvar(t),\noef_1(\solvar(t)),\dots,\noef_r(\solvar(t))\big) \D[t]{\solvar_i(t)} \]

Back-substituting into the RHS of this equation using the intended values for $y_j$ and the new ODEs for $x$, yields the following additional ODEs for $y$:
\[ \D{y_j} \mnodefeq \sum_{i=1}^{n} q_{ij} (x,y) p_i(x,y) \]

The RHS of these additional ODEs are polynomials in $\reals[x,y]$, which completes the desired polynomial IVP with the initial values \(Y_0 \mdefeq (\noef_1(x_0), \dots, \noef_r(x_0)) \in \reals^r\) for $y$.
The construction of this polynomial IVP is correct-by-construction because of the mechanical chain rule computation.
A solution to this IVP is given by the pair $(\solvar(t),y(t)) : (\alpha,\beta) \to \reals^n \times \reals^r$ where $y_j(t) \mdefeq \noef_j(\solvar(t)) : (\alpha,\beta) \to \reals$.
By uniqueness of solutions, this completes the proof of the ``$\mimply$'' direction.\qedhere
\end{proof}

In the ``$\mimply$'' direction of~\rref{prop:diffaxiomatization}, the constructed polynomial IVP may involve additional ODEs over the variables $y$ (with their respective initial values).
The number of additional equations required in this construction is the length of the shortest Noetherian chain that generates the RHS of the input Noetherian ODE.
The polynomial IVP may have a larger maximal interval of existence than the input Noetherian IVP if it leaves the domain $D$ of the input RHS.
In the ``$\mylpmi$'' direction, only one additional time variable $\tau$ is required.
Consequently, the solution of \emph{any} $n$-dimensional IVP that is the coordinate projection of the solution of a polynomial IVP (of potentially much larger dimension) is the coordinate projection of the solution of an $(n+1)$-dimensional Noetherian IVP.

The constructive proof of the ``$\mimply$'' direction in~\rref{prop:diffaxiomatization} yields an approach for transforming input Noetherian IVPs to polynomial IVPs assuming that the Noetherian functions can be effectively associated with generating Noetherian chains and polynomials.
\begin{example}[Flight dynamics~{\cite[Equation 1]{DBLP:journals/logcom/Platzer10}}]
\label{ex:diffaxiomatization}
A simple planar model of curved aircraft motion is given by the following ODE system, where $(x,y)$ are the aircraft's planar coordinates, $\theta$ its angular orientation, and $\nu,\omega$ its linear and angular velocity respectively~\cite[Equation 1]{DBLP:journals/logcom/Platzer10}:
\[ \D{x} = \nu \cos{(\theta)}, \quad \D{y} = \nu \sin{(\theta)}, \quad \D{\theta} = \omega \]

Consider an IVP for this ODE with initial values $x=X_0,y=Y_0, \theta=\Theta_0 \in \reals$.
The linear and angular velocities $\nu,\omega$ are left as symbolic constants in this model.

The RHS of the ODE is generated by the Noetherian chain: $\sin{(\theta)}, \cos{(\theta)}$.
Introducing additional variables $z_1,z_2$ for the elements of this chain, and replacing the RHS for $\D{x},\D{y}$ according to their generating polynomials with respect to the chain gives:
\[ \D{x} = \nu z_1, \quad \D{y} = \nu z_2, \quad \D{\theta} = \omega \]

A symbolic calculation (see~\rref{prop:diffaxiomatization}) yields the following ODEs that $z_1,z_2$ must obey:
\[ \D{z_1} = \omega z_2, \quad \D{z_2} = -\omega z_1 \]

To finish constructing the polynomial IVP, set the initial values $z_1=\sin{(\Theta_0)},z_2=\cos{(\Theta_0)}$.
The resulting ODE has higher dimension but a polynomial RHS.
\end{example}

The utility of adding Noetherian functions to \dL is \emph{not} an increase in expressiveness of the differential equations.
Rather, it allows Noetherian ODEs to be written down naturally instead of relying on implicit polynomial characterization such as in~\rref{ex:diffaxiomatization}.
More importantly, they make it possible to use formulas as ODE invariants that are \emph{not} semialgebraic.
By \rref{thm:algcomplete} and \rref{thm:semialgcompleteness}, the \dL ODE axiomatization provides an effective and complete calculus for (dis)proving the resulting semianalytic ODE invariants involving Noetherian functions.
This requires Noetherian functions to meet the extended term conditions from \rref{subsec:background-compatibility}, which is shown next.

\subsection{Extended Term Conditions for Noetherian Functions}
\label{subsec:noetheriancompat}

Assume from now on that the fixed $k$-ary function symbols $\noef_1,\dots,\noef_r$ are interpreted semantically as members of a Noetherian chain $\noef_1,\dots,\noef_r : \reals^k \to \reals$ respectively.
Recall that extended \dL terms are formed syntactically from these function symbols according to the grammar (\rref{subsec:background-syntax}).
As discussed in~\rref{subsec:background-semantics}, the choice of domain $\reals^k$ for these functions ensures that the term semantics is well-defined in all states.
The first two extended term conditions are straightforward to check:
\begin{enumerate}
\item[\rref{itm:reqsmooth}] All Noetherian functions are, by definition, $C^\infty$ smooth (even real analytic) so the semantics of differentials are well-defined.
\item[\rref{itm:reqpartial}] The partial derivative of each $\noef_j(y_1,\dots,y_k): \reals^k \to \reals$ with respect to $y_i$ satisfies~\rref{eq:noetherian-chain-def} for some polynomial $q_{ij} \in \reals[y,z]$.
Since polynomials are generated by addition and multiplication, these partial derivatives $\Dp[y_i]{\noef}(y_1,\dots,y_k)$ are syntactically represented by the extended term:
\[q_{ij}\big(y_1,\dots,y_k,\noef_1(y_1,\dots,y_k),\dots,\noef_r(y_1,\dots,y_k)\big)\]
Thus,~\rref{lem:noefdifferentials} adds the (sound) differential axioms for each fixed function symbol $\noef_j$ and therefore, all Lie derivatives are representable in the extended term language.
\end{enumerate}

The final condition~\rref{itm:reqdiffradical} is more involved and relies crucially on closure properties of Noetherian functions.
A syntactic subtlety arises for extended terms with nested function applications such as $\exp(\exp(x))$.
Its semantics is the iterated real exponential function generated by the 2-element Noetherian chain $\exp(x), \exp(\exp(x))$.
Thus, even though the fixed function symbols $\noef_1,\dots,\noef_r$ form a Noetherian chain, the extended term grammar could produce extended terms that \emph{do not} correspond to Noetherian functions generated by that chain.
The following lemma resolves this issue by computing another (syntactic) Noetherian chain that generates it instead:

\begin{lemma}
\label{lem:noetheriansyntax}
The semantics of every extended term $\etermA$ over Noetherian functions is a Noetherian function and $\etermA$ can be effectively associated with a (syntactic) Noetherian chain that generates it.
\end{lemma}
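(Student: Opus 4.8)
The plan is to argue by structural induction on the extended term $\etermA$, which by the convention of \rref{subsec:background-syntax} is differential-free (differentials reduce to differential-free form under an ODE by \rref{subsec:background-differentials}, so this costs no generality for the intended use). To make the induction close, I would strengthen the claim slightly: for every extended term $\etermA$ one can effectively compute a finite sequence $\tilde{\etermA}_1,\dots,\tilde{\etermA}_m$ of extended terms whose semantics form a Noetherian chain on the relevant $\reals^n$, together with a polynomial $\ptermA$ over $n+m$ indeterminates, such that $\ivaluation{\I}{\etermA} = \ptermA(\iget[state]{\I}(x_1),\dots,\iget[state]{\I}(x_n),\ivaluation{\I}{\tilde{\etermA}_1},\dots,\ivaluation{\I}{\tilde{\etermA}_m})$ in every state $\iget[state]{\I}$. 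The base cases $\etermA \equiv x$ and $\etermA \equiv c$ are immediate: the term is already a polynomial, so the empty chain and the polynomial $\etermA$ itself work, and nothing needs to be computed.

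For $\etermA \equiv \etermA_1 + \etermA_2$, and symmetrically for $\etermA \equiv \etermA_1 \cdot \etermA_2$, the induction hypothesis provides syntactic chains and generating polynomials $\ptermA_1,\ptermA_2$ for $\etermA_1,\etermA_2$. By the remark following \rref{def:noetherianchain} the union of the two chains is again a Noetherian chain, and by \rref{prop:noetherianchainnoether} the Noetherian ring it generates is closed under addition and multiplication; hence $\etermA$ is generated over the combined chain by the polynomial $\ptermA_1 + \ptermA_2$ (resp.\ $\ptermA_1 \cdot \ptermA_2$), where $\ptermA_1,\ptermA_2$ are read as polynomials in the disjoint blocks of indeterminates naming the two chains. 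Forming unions, renaming indeterminates, and adding or multiplying polynomials are all effective, so this case is effective.

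The substantive case is $\etermA \equiv \noef_j(\etermA_1,\dots,\etermA_k)$ --- the case where the naive ``use the function symbols themselves as the chain'' strategy breaks, as the iterated exponential $\exp(\exp(x))$ illustrates. Here $\noef_j$ is Noetherian because it is a member of the fixed Noetherian chain $\noef_1,\dots,\noef_r$, and each argument $\etermA_i$ is Noetherian with an associated syntactic chain by the induction hypothesis. Since all functions involved are total on the relevant Euclidean spaces, the image--compatibility hypothesis of \rref{prop:noetheriancompose} is vacuously satisfied, so I would invoke the constructive proof of \rref{prop:noetheriancompose}: first take the union of the argument chains and adjoin to it the argument terms $\etermA_1,\dots,\etermA_k$ themselves --- legitimate since \rref{prop:noetherianchainnoether} shows the ring generated by the argument chains is closed under partial derivatives, so the partial derivatives $\Dp[x_l]{\etermA_i}$ are polynomials in the enlarged chain --- and then append the $r$ compositions $\noef_1(\etermA_1,\dots,\etermA_k),\dots,\noef_r(\etermA_1,\dots,\etermA_k)$. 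The chain-rule computation performed in the proof of \rref{prop:noetheriancompose} shows the resulting sequence is a Noetherian chain and generates $\etermA = \noef_j(\etermA_1,\dots,\etermA_k)$ (by projecting onto the appropriate coordinate), while condition \rref{itm:reqpartial} supplies the syntactic representatives of the partial derivatives $\Dp[y_i]{\noef_j}$ needed to keep the whole construction syntactic rather than merely semantic. Each operation is effective, so the full recursive procedure is effective.

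The step I expect to be the main obstacle is the bookkeeping in this last case: keeping track, across repeated unions, of which block of polynomial indeterminates names which chain segment, and verifying that after substituting the argument terms $\etermA_i$ into the syntactic partial-derivative axioms of condition \rref{itm:reqpartial} one still exhibits the partial derivatives of the composite chain members as honest polynomials in that chain --- so that the output is a genuine \emph{syntactic} Noetherian chain and not merely a sequence of functions that happens to be Noetherian semantically. A secondary but benign point is the familiar domain mismatch, the fixed symbols $\noef_i$ nominally living on $\reals^k$ whereas the chain members built out of $x_1,\dots,x_n$ live on $\reals^n$, which is resolved exactly as in the domain-extension remark following \rref{prop:noetherianchainnoether}, since here all domains are full Euclidean spaces.
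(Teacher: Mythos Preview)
Your proposal is correct and takes essentially the same approach as the paper: structural induction on the term, with the base cases and ring operations handled trivially via \rref{prop:noetherianchainnoether} and unions of chains, and the composition case handled by invoking the constructive proof of \rref{prop:noetheriancompose}. The paper's proof is terser---it absorbs your explicit step of adjoining the argument terms $\etermA_1,\dots,\etermA_k$ to the chain into the ``without loss of generality'' clause already present in the proof of \rref{prop:noetheriancompose}, and it does not spell out the bookkeeping concerns you flag at the end---but the underlying argument and the resulting syntactic chain $\noeff_1,\dots,\noeff_s,\noef_1(\etermA_1,\dots,\etermA_k),\dots,\noef_r(\etermA_1,\dots,\etermA_k)$ are the same.
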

\begin{proof}
By structural induction on extended \dL term $\etermA$.
The cases for variables and constants are obvious, while the cases for addition and multiplication follow inductively from closure under ring operations (\rref{prop:noetherianchainnoether}) and the fact that finite unions of Noetherian chains are Noetherian chains.
The only difficult case is when $\etermA$ is a function composition \(\noef(\etermA_1,\dots,\etermA_k)\), where $\etermA_1,\dots,\etermA_k$ are extended terms.
Inductively, each $\etermA_1, \dots,\etermA_k$ semantically is a Noetherian function.
Moreover, $\noef$ is (semantically) a Noetherian function by the assumption that the interpretation of all fixed function symbols is an element of some Noetherian chain.
Thus, \rref{prop:noetheriancompose} implies that the semantics of their composition is also a Noetherian function.
Let $\noeff_1,\dots,\noeff_s$ be the union of Noetherian chains obtained inductively for $\etermA_1,\dots,\etermA_k$.
The (constructive) proof of \rref{prop:noetheriancompose} shows that the Noetherian chain~\rref{eq:composednoether} given by $\noeff_1,\dots,\noeff_s, f_1,\dots,f_r$ generates $\noef(\etermA_1,\dots,\etermA_k)$, where $\noeff_1,\dots,\noeff_s$ are syntactically represented by extended terms using the induction hypothesis on $\etermA_1,\dots,\etermA_k$ and each $f_i \mdefeq \noef_i(\etermA_1,\dots,\etermA_k)$ is an extended term by construction.
\end{proof}

\rref{lem:noetheriansyntax} makes it possible to unambiguously refer to ``the'' generating Noetherian chain and polynomial for any extended term $\etermA$ by giving an effective procedure for finding their syntactic representations in the extended term language.
Together with~\rref{prop:noetherianchainnoether}, this suffices to prove that the extended term language has the computable differential radicals condition~\rref{itm:reqdiffradical}.

\begin{theorem}
\label{thm:noetheriancompat}
Term languages with Noetherian functions satisfy the extended term conditions.
\end{theorem}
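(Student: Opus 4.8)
The plan is to verify the three extended term conditions of \rref{subsec:background-compatibility} in turn, and the first two are immediate: every Noetherian function is real analytic, hence $C^\infty$, which gives \rref{itm:reqsmooth}; and \rref{eq:noetherian-chain-def} exhibits each partial derivative $\Dp[y_i]{\noef_j}$ as a polynomial in $y_1,\dots,y_k,\noef_1(y),\dots,\noef_r(y)$, which is directly an extended term since polynomials are built from $+$ and $\cdot$, so \rref{lem:noefdifferentials} applies and gives \rref{itm:reqpartial}. All the work is in establishing the computable differential radicals condition \rref{itm:reqdiffradical}.

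For \rref{itm:reqdiffradical}, fix an extended term $\etermA$ and an ODE $\D{x}=\genDE{x}$ with $x=(x_1,\dots,x_n)$ and extended-term right-hand sides $\odeterm_1(x),\dots,\odeterm_n(x)$. The first step is to apply \rref{lem:noetheriansyntax} to each of $\etermA,\odeterm_1(x),\dots,\odeterm_n(x)$ and take the union of the resulting syntactic Noetherian chains, which is again a Noetherian chain; after the standard extension of the domain to the variables $x$ this produces a single Noetherian chain $\noef_1,\dots,\noef_r$ of extended terms in $x$ (write $\noef(x)$ for the tuple $(\noef_1(x),\dots,\noef_r(x))$) together with polynomials $\ptermA_0,\ptermB_1,\dots,\ptermB_n$ and $b_{jl}$ $(1\le j\le n,\ 1\le l\le r)$ in indeterminates $y=(y_1,\dots,y_n),z=(z_1,\dots,z_r)$ such that, as \emph{provable} term identities, $\etermA=\ptermA_0(x,\noef(x))$, $\odeterm_j(x)=\ptermB_j(x,\noef(x))$, and $\Dp[x_j]{\noef_l}=b_{jl}(x,\noef(x))$; the last family is exactly \rref{eq:noetherian-chain-def}, provided syntactically by condition \rref{itm:reqpartial}. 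By \rref{prop:noetherianchainnoether} the ring $R$ of Noetherian functions generated by $\noef_1,\dots,\noef_r$ is a Noetherian ring closed under partial derivatives, hence, being a ring, closed under the Lie derivation operator $\lied[]{\genDE{x}}{\cdot}=\sum_{i=1}^n \Dp[x_i]{\cdot}\,\odeterm_i(x)$; in particular every $\lied[i]{\genDE{x}}{\etermA}$ lies in $R$.

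The core of the proof is a two-track computation. On the \emph{syntactic} side, a straightforward induction on $i$ --- using the chain rule for syntactic partial derivatives together with the syntactic representations $\Dp[x_j]{\noef_l}=b_{jl}(x,\noef(x))$ from condition \rref{itm:reqpartial} --- shows that $\lied[i]{\genDE{x}}{\etermA}$ is provably equal to $\ptermA_i(x,\noef(x))$, where the polynomial $\ptermA_i\in\reals[y,z]$ is obtained by the purely mechanical recurrence $\ptermA_{i+1}=\sum_{j=1}^n\bigl(\Dp[y_j]{\ptermA_i}+\sum_{l=1}^r \Dp[z_l]{\ptermA_i}\,b_{jl}\bigr)\ptermB_j$, which is the chain-rule computation underlying \rref{prop:noetherianchainnoether} and \rref{prop:diffaxiomatization}. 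On the \emph{arithmetic} side, consider the ascending chain of ideals $\ideal{\ptermA_0}\subseteq\ideal{\ptermA_0,\ptermA_1}\subseteq\cdots$ in the Noetherian polynomial ring $\reals[y,z]$; it stabilizes, so a rank $N\ge1$ with $\ptermA_N\in\ideal{\ptermA_0,\dots,\ptermA_{N-1}}$ exists, and since ideal membership in $\reals[y,z]$ is decidable (Gr\"obner bases, all data being rational) this $N$ and explicit cofactors $c_0,\dots,c_{N-1}\in\reals[y,z]$ with $\ptermA_N=\sum_{i=0}^{N-1}c_i\ptermA_i$ can be computed. Put $\cofterm_i\mdefeq c_i(x,\noef(x))$, an extended term; then chaining the provable identities $\lied[N]{\genDE{x}}{\etermA}=\ptermA_N(x,\noef(x))$, $\ptermA_N(x,\noef(x))=\sum_i c_i(x,\noef(x))\ptermA_i(x,\noef(x))$ (a substitution instance of a valid polynomial identity, hence provable by \irref{qear}, cf.\ \rref{ex:provingwithR}), and $\ptermA_i(x,\noef(x))=\lied[i]{\genDE{x}}{\etermA}$ yields $\lied[N]{\genDE{x}}{\etermA}=\sum_{i=0}^{N-1}\cofterm_i\,\lied[i]{\genDE{x}}{\etermA}$, the differential radical identity \rref{eq:differential-rank} with $N$ and the $\cofterm_i$ computed effectively and the identity provable; this is exactly \rref{itm:reqdiffradical}.

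The step I expect to be the main obstacle is keeping the two tracks synchronized: one must argue that the Lie derivative $\lied[i]{\genDE{x}}{\etermA}$ formed inside \dL really coincides \emph{provably}, not merely semantically, with $\ptermA_i(x,\noef(x))$, which is precisely where condition \rref{itm:reqpartial} enters (the syntactic partial-derivative terms turn the chain-rule bookkeeping into a sequence of provable rewrites), and where one needs that the entire generating chain and the polynomials $\ptermA_0,\ptermB_j,b_{jl}$ can be produced \emph{effectively} even for deeply nested terms such as $\exp(\exp(x))$ --- which is what \rref{lem:noetheriansyntax} supplies. Once those ingredients are in place, everything else (stabilization of the ideal chain, extraction of cofactors, provability of the final polynomial identity) reduces to standard effective commutative algebra in $\reals[y,z]$ together with \irref{qear}.
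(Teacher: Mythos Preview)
Your proposal is correct and follows essentially the same approach as the paper: use \rref{lem:noetheriansyntax} to place $\etermA$ and the $\odeterm_i$ over a common syntactic Noetherian chain, invoke \rref{prop:noetherianchainnoether} so that all higher Lie derivatives are represented by polynomials $\ptermA_i$ over that chain, and then stabilize the ascending ideal chain $\ideal{\ptermA_0}\subseteq\ideal{\ptermA_0,\ptermA_1}\subseteq\cdots$ in the polynomial ring to extract a computable rank $N$ and cofactors that substitute back to the differential radical identity~\rref{eq:differential-rank}. Your explicit recurrence for $\ptermA_{i+1}$ and your ``two-track'' discussion of why the identity is \emph{provable} (not merely semantically valid) via \irref{qear} on a substitution instance make explicit what the paper's proof states more tersely, but the argument is the same.
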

\begin{proof}
Conditions~\rref{itm:reqsmooth} and~\rref{itm:reqpartial} have already been shown above.
It remains to show condition~\rref{itm:reqdiffradical}, i.e.,
any ODE $\D{x}=\genDE{x}$ and extended term $\etermA$ has a computable (and provable) differential radical identity~\rref{eq:differential-rank}.
By~\rref{lem:noetheriansyntax}, the terms $\genDE{x},\etermA$ are (semantically) Noetherian and so, by taking the union of Noetherian chains, are defined by the same generating Noetherian chain $\noeff_1,\dots,\noeff_s$.
The ring $R$ generated by this chain is Noetherian by~\rref{prop:noetherianchainnoether} and is closed under partial derivatives.
Recall that the Lie derivative of $\etermA$ along $\D{x}=\genDE{x}$ is given by:
\[ \lie[]{\genDE{x}}{\etermA} \mdefeq \sum_{i=1}^n \Dp[x_i]{\etermA} \cdot \odeterm_i(x)\]

Every sub-term on the RHS of its Lie derivative is contained in the ring $R$ because it already contains the RHS of the ODEs, $\genDE{x}$, and is closed under the partial derivatives of $\etermA$.
Inductively, all higher Lie derivatives $\lied[i]{\genDE{x}}{\etermA}$ for $i=0,1,\dots$ are contained in $R$ and are therefore generated by the chain $\noeff_1,\dots,\noeff_s$ with $\lied[i]{\genDE{x}}{\etermA} = \ptermA_i(x,\noeff_1,\dots,\noeff_s)$ for polynomials $\ptermA_i \in \reals[x,y]$ and $y = (y_1,\dots,y_s)$.

Following the proof of~\rref{prop:noetherianchainnoether}, consider this ascending chain of polynomial ideals:
\[ \ideal{\ptermA_0} \subseteq \ideal{\ptermA_0,\ptermA_1} \subseteq \ideal{\ptermA_0,\ptermA_1,\ptermA_2} \subseteq \cdots  \]

This chain stabilizes with the provable polynomial identity $\ptermA_N = \sum_{i=0}^{N-1} \ptermB_i \ptermA_i$ for some polynomial cofactors $\ptermB_i \in \reals[x,y]$ and $N \geq 1$.
The rank $N$ and the polynomial cofactors $\ptermB_i$ are computable by successive ideal membership checks~\cite{DBLP:conf/tacas/GhorbalP14,DBLP:conf/emsoft/LiuZZ11,DBLP:journals/cl/GhorbalSP17}.
Mapping this back into elements of $R$ gives the required provable identity for the Lie derivatives of $\etermA$ by choosing $\cofterm_i \mdefeq \ptermB_i(x,\noeff_1,\dots,\noeff_s)$:
\[ \lied[N]{\genDE{x}}{\etermA} = \sum_{i=0}^{N-1} \cofterm_i \lied[i]{\genDE{x}}{\etermA} \qedhere \]
\end{proof}

An immediate corollary is that term language extensions with Noetherian functions inherit all earlier soundness and completeness results, e.g., from~\rref{subsec:completenessanalytic} and~\rref{subsec:completenesssemianalytic}.
\begin{corollary}[Noetherian invariant completeness]
The \dL proof calculus is complete for (semi)analytic invariants of ODEs for term languages extended with Noetherian functions.
\end{corollary}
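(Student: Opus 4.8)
The plan is to derive the corollary by composing \rref{thm:noetheriancompat} with the parametric completeness theorems already established for arbitrary extended term languages. First I would note that \rref{thm:noetheriancompat} asserts exactly what is needed: a term language obtained by adjoining the interpretations of a Noetherian chain $\noef_1,\dots,\noef_r$ satisfies all three extended term conditions --- smoothness \rref{itm:reqsmooth}, syntactic partial derivatives \rref{itm:reqpartial}, and computable differential radicals \rref{itm:reqdiffradical}. The reason this yields a genuine \emph{corollary} rather than a fresh argument is that every completeness result in the preceding sections was proved under no hypothesis on the term language beyond these three conditions, and in particular none of those proofs silently used real analyticity or decidability of the resulting arithmetic.

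Concretely, for the analytic case I would invoke \rref{thm:algcomplete} (together with its semianalytic-domain strengthening \rref{thm:algcompletedom}): under \rref{itm:reqsmooth}--\rref{itm:reqdiffradical} the axiom \irref{DRI} (resp.\ \irref{DRIQ}) is derivable in \dL, so any analytic invariance question $\dbox{\pevolvein{\D{x}=\genDE{x}}{\ivr}}{\etermA=0}$ is \dL-provably equivalent to the arithmetical formula $\ivr \limply \sigliedzero{\genDE{x}}{\etermA}$, which is a \emph{finite} semianalytic formula precisely because condition \rref{itm:reqdiffradical} supplies the rank $N$ and cofactors $\cofterm_i$. For the semianalytic case I would invoke \rref{thm:semialgcompleteness} and its evolution-domain generalization in \rref{app:completeness}: the axiom \irref{SAI} derives from \irref{RealInd+Dadjoint+Cont+Uniq}, reducing invariance of an arbitrary semianalytic $\rfvar$ equivalently to the finite progress formulas $\sigliedsai{\genDE{x}}{\rfvar}$ and $\sigliedsai[-]{\genDE{x}}{(\lnot{\rfvar})}$, whose finiteness again comes from \rref{itm:reqdiffradical}. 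Since analytic formulas normalize to a single equation $\etermA=0$ and semianalytic formulas to disjunctive normal form \rref{eq:normalform}, both provably by \irref{qear}, this covers the general case: \dL proves every true (semi)analytic invariant, and, because the characterizations are equivalences, also disproves every false one whenever the resulting arithmetic is refuted.

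The only part needing care --- and hence the place where all the actual work already resides --- is verifying that \rref{thm:noetheriancompat} really delivers condition \rref{itm:reqdiffradical} in a form the completeness theorems can consume, namely a \emph{provable} differential radical identity \rref{eq:differential-rank} together with an algorithm producing $N$ and the $\cofterm_i$. That verification rests on \rref{lem:noetheriansyntax}, which converts an arbitrary extended term with nested function applications into a syntactic Noetherian chain that generates it (via closure under composition, \rref{prop:noetheriancompose}), and on \rref{prop:noetherianchainnoether}, which forces the ascending chain of polynomial ideals $\ideal{\ptermA_0} \subseteq \ideal{\ptermA_0,\ptermA_1} \subseteq \cdots$ to stabilize, with its stabilization index and cofactors obtained by successive ideal-membership checks. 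I expect this chain-stabilization argument --- already carried out inside the proof of \rref{thm:noetheriancompat} --- to be the substantive ingredient; the corollary itself then follows by quoting the parametric completeness results for each of the analytic and semianalytic cases, with no further obstacle.
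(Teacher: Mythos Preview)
Your proposal is correct and follows essentially the same approach as the paper: invoke \rref{thm:noetheriancompat} to establish the extended term conditions for Noetherian term languages, then apply the parametric completeness theorems \rref{thm:algcomplete} and \rref{thm:semialgcompleteness}. The paper's own proof is a one-liner citing exactly these results; your version simply unpacks more of the underlying machinery (the role of \rref{lem:noetheriansyntax} and \rref{prop:noetherianchainnoether} in delivering condition~\rref{itm:reqdiffradical}), which is accurate but more detailed than necessary for what is genuinely an immediate corollary.
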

\begin{proof}
This follows from the extended term conditions for Noetherian functions (\rref{thm:noetheriancompat}) and the completeness theorems~\rref{thm:algcomplete} and~\rref{thm:semialgcompleteness}.
\end{proof}

\subsection{Extended Term Language Example}

This section illustrates constructions from~\rref{subsec:mathematicalprelim} and~\rref{subsec:noetheriancompat} using the extended term language~\rref{eq:extlang}.
The first example shows the computations from~\rref{lem:noetheriansyntax} and~\rref{thm:noetheriancompat}:

\begin{example}[Syntactic manipulation of Noetherian functions]
\label{ex:synmanip}
Consider the ODE $\D{x}=\exp(\sin(x))$ and the term $\etermA \mnodefeq x+x^2$.
The Noetherian chain for $\etermA$ is empty because it is already a polynomial while the Noetherian chain associated with $\exp(\sin(x))$ is $\noeff_1 \mnodefeq \sin(x),\noeff_2 \mnodefeq \cos(x),\noeff_3 \mnodefeq \exp(\sin(x))$.
The higher Lie derivatives of $\etermA$ are all extended terms generated by the chain $\noeff_1,\noeff_2,\noeff_3$:
\begin{align*}
\lied[1]{\genDE{x}}{\etermA} = \noeff_3 + 2 \noeff_3 x \quad \lied[2]{\genDE{x}}{\etermA} = \noeff_3^2\noeff_2 + 2(\noeff_3^2\noeff_2x + \noeff_3^2) = (2\noeff_3+\noeff_2\noeff_3+2\noeff_2\noeff_3x)( (1+2x)\lied[1]{\genDE{x}}{\etermA} - 4\noeff_3\etermA )
\end{align*}
The (polynomial) identity for $\lied[2]{\genDE{x}}{\etermA}$ in terms of $\lied[1]{\genDE{x}}{\etermA},\etermA$ and their cofactors is obtained computationally by ideal membership checks for the polynomial ring $\reals[x,y_1,y_2,y_3]$ (the indeterminate $y_i$ corresponds to $\noeff_i$ for $i=1,2,3$), following~\rref{prop:noetherianchainnoether}.
\end{example}

The next example illustrates how the extended term language allows effective proofs of more invariants than possible with polynomial term languages.

\begin{example}[Expressivity of Noetherian invariants]
\label{ex:expressivity}
The polynomial invariant $1-u^2-v^2=0$ was proved for the ODE $\alpha_e$ from~\rref{eq:example-ode} in~\rref{ex:continuousproperties}.
With respect to~\rref{fig:exampleODE}, this means that a trajectory starting at the point $(1,0)$ stays on the circle.
However, this invariant yields no information about how fast the trajectory loops around the circle or whether it revolves clockwise or anti-clockwise.
In the extended term language, the most precise invariant can be proved, namely the solution to the ODEs from this initial point.
The solution is a trigonometric function of time (given below), and so cannot be expressed as a polynomial (or semialgebraic) invariant~\cite{Bochnak1998}.
The precise solution also shows that the motion is anti-clockwise, as suggested by~\rref{fig:exampleODE}.

The following derivation uses a \irref{DC} to add the known polynomial invariant $1-u^2-v^2=0$ which proves by \irref{dbx} as in~\rref{ex:continuousproperties}.
The right premise after the differential cut assumes $1-u^2-v^2=0$ in the ODE's evolution domain constraint.
It is abbreviated \textcircled{1} and continued below.
{\footnotesizeoff%
\begin{sequentdeduction}[array]
\linfer[DC]{
  \linfer[dbx+qear]{
    \lclose
  }
  {\lsequent{u=1,v=0,t=0} {\dbox{\alpha_e,\D{t}=1}{1-u^2-v^2 = 0}}} !
  \textcircled{1}
}
  {\lsequent{u=1,v=0,t=0} {\dbox{\alpha_e,\D{t}=1}{ ( u {-}{\cos(t)} = 0 \land v {-}{\sin(t)} = 0 ) }}}
\end{sequentdeduction}%
}%

From \textcircled{1}, first calculate the Lie derivatives, abbreviating $c \mnodefeq u-\cos(t), s \mnodefeq v-\sin(t)$:
\begin{align*}
\lie[]{\alpha_e,\D{t}=1}{c} &= \lie[]{\alpha_e,\D{t}=1}{u-\cos(t)} = -v + \frac{u}{4} (1 - u^2 - v^2) + \sin(t) = -s + \frac{u}{4} (1 - u^2 - v^2) \\
\lie[]{\alpha_e,\D{t}=1}{s} &= \lie[]{\alpha_e,\D{t}=1}{v-\sin(t)} = u + \frac{v}{4} (1 - u^2 - v^2) - \cos(t) = c + \frac{u}{4} (1 - u^2 - v^2)
\end{align*}

Under the domain constraint assumption $1-u^2-v^2=0$, the additional $ \frac{u}{4} (1 - u^2 - v^2)$ term in both Lie derivatives simplifies to $0$.
The derivation starts with a \irref{cut} of the postcondition $c=0 \land s=0$.
This arithmetic premise, abbreviated \textcircled{2}, is discussed afterwards.
Continuing on the right premise, the \irref{vdbx} step closes successfully using real arithmetic manipulations only:
{\footnotesizeoff%
\begin{sequentdeduction}[array]
  \linfer[cut]{
    \textcircled{2} !
    \linfer[vdbx]{
    \linfer[qear]{
      \lclose
    }
      {\lsequent{1-u^2-v^2=0} {
        \left(\begin{array}{l}
          \lied[]{}{c} \\
          \lied[]{}{s}
        \end{array}\right) =
        \left(\begin{array}{cc}
          0  & -1 \\
          1  & 0
        \end{array}\right)
        \left(\begin{array}{l}
          c \\
          s
        \end{array}\right)
      }
    }
    }
    {\lsequent{ c = 0 \land s = 0 } {\dbox{\pevolvein{\alpha_e,\D{t}=1}{1-u^2-v^2=0}}{ (c=0 \land s=0) }}}
  }
  {\lsequent{u=1,v=0,t=0} {\dbox{\pevolvein{\alpha_e,\D{t}=1}{1-u^2-v^2=0}}{ (c=0 \land s=0) }}}
\end{sequentdeduction}
}%

The premise \textcircled{2} is valid, but involves properties of trigonometric functions ($\cos(0)=1,\sin(0)=0$) so it cannot be proved using \irref{qear}.
Instead, extended arithmetic \irref{qearext} is needed:
{\footnotesizeoff%
\begin{sequentdeduction}[array]
  \linfer[qearext]{
    \lclose
  }
  {\lsequent{u=1,v=0,t=0} {u-\cos(t) = 0 \land v-\sin(t) = 0}}
\end{sequentdeduction}
}%

The extended arithmetic theory is undecidable in general~\cite{DBLP:journals/jsyml/Richardson68} and so, unlike~\irref{qear}, rule \irref{qearext} cannot be implemented via an underlying decision procedure.
Yet, simple arithmetic questions such as~\textcircled{2} which just involve the evaluation of trigonometric functions can be easily checked.

The above derivation takes advantage of a known Darboux equality for $1-u^2-v^2$ to simplify the proof using \irref{vdbx}.
The proof could have instead directly made use of~\rref{thm:algcomplete} by encoding $c=0 \land s=0$ as $c^2+s^2=0$ and then calculating the rank of $c^2+s^2$ (which involve trigonometric functions) according to~\rref{thm:noetheriancompat}.
This also works, but $c^2+s^2$ has rank 3, and the resulting cofactors are too large to even fit on this page.
\end{example}

The final example highlights an important insight from~\rref{prop:diffaxiomatization}.
Even though this article only considers extended term languages with terms that are defined everywhere, it is possible to use logical formulas to implicitly characterize more terms, making use of closure properties of the Noetherian functions~\cite{MR3925105}.
The following example illustrates implicit characterization of quotients which are defined everywhere in the domain of interest:

\begin{example}[Implicit characterization of quotients]
\label{ex:implicit}
The trigonometric tangent function $\tan(x)$ is Noetherian and defined on the interval $(-\frac{\pi}{2},\frac{\pi}{2})$.
Consider the following ``formula'' where $x$ is restricted in the domain constraint to never reach a point where the RHS $\tan(x)$ is undefined:
\[ x = \frac{1}{2} \limply \dbox{\pevolvein{\D{x}=\tan(x)}{{-}1 \leq x \leq 1}}{x \geq \frac{1}{2}} \]

This ``formula'' is not formally in the syntax of \dL formulas because $\tan$ is not defined everywhere.
However,~\rref{prop:diffaxiomatization} can be used to ask an equivalent question in \dL.
Recall from calculus:
\[\tan(x) = \frac{\sin(x)}{\cos(x)} \qquad \Dp[x]{\frac{1}{\cos(x)}} = \frac{\sin(x)}{(\cos(x))^2} \]

Thus, $\sin(x),\cos(x),\frac{1}{\cos{x}}$ forms a 3-element Noetherian chain that generates $\tan(x)$.
For brevity, by partially following the IVP construction of~\rref{prop:diffaxiomatization}, the ``formula'' is rephrased as an actual \dL formula with $y$ representing $\frac{1}{\cos{x}}$ along the ODE.
After replacing $\D{x} = \sin(x)y$, the required differential equation for $y$ is calculated symbolically with $\D{y} = \sin(x)y^2 (\sin(x)y) = \sin^2(x)y^3$.
\[ x = \frac{1}{2}  \land \cos(x) y - 1 = 0 \limply \dbox{\pevolvein{\D{x}=\sin(x) y, \D{y}=\sin^2(x)y^3}{{-}1 \leq x \leq 1}}{x \geq \frac{1}{2}} \]

For non-zero denominator, the initial value $\frac{1}{\cos(x)}$ of $y$ is logically characterized by the formula $\cos(x)y-1 = 0$.
The following Lie derivative calculation shows that $\cos(x) y - 1$ satisfies a Darboux equality and so $\cos(x)y-1=0$ can be proven invariant along the ODE (abbreviated as $\alpha$) by \irref{dbx}.
\[ \lie[]{\alpha}{\cos(x) y - 1}  = -\sin(x)(\sin(x) y) y + \cos(x)(\sin^2(x)y^3) = \sin^2(x)y^2 (\cos(x) y - 1)\]

The rephrased formula proves after a \irref{DC} with this Darboux invariant for $y$ using the ODE invariant $x \geq \frac{1}{2}$ and rule~\irref{sAI} (or its generalization with domain constraints from~\rref{app:completeness}).
Briefly, the invariance of $x \geq \frac{1}{2}$ provably reduces to the following arithmetic premise which is valid and falls within a \emph{decidable} fragment of arithmetic with trigonometric functions~\cite{DBLP:journals/jsc/McCallumW12}:
\[ \lsequent{-1\leq x \leq 1, \cos(x)y-1=0, x = \frac{1}{2}}{\sin(x) y > 0} \]
\end{example}

\section{Related Work}
\label{sec:relatedwork}

This related work discussion focuses on deductive verification of hybrid systems.
An overview of approaches to hybrid systems verification is available elsewhere~\cite{DBLP:reference/mc/2018}.
Readers interested in ODEs~\cite{Walter1998}, real analysis~\cite{MR1916029,Walter1998}, algebra~\cite{MR1727221}, and real algebraic geometry~\cite{Bochnak1998} are referred to the cited textbooks.
The orthogonal task of efficiently generating invariants is investigated elsewhere~\cite{DBLP:conf/tacas/GhorbalP14,DBLP:conf/emsoft/LiuZZ11,DBLP:journals/fmsd/SankaranarayananSM08}.

\paragraph{Proof Rules for ODE Invariants}
Numerous useful but incomplete proof rules for ODE invariants~\cite{DBLP:conf/hybrid/PrajnaJ04,DBLP:journals/fmsd/SankaranarayananSM08,DBLP:conf/fsttcs/TalyT09,DBLP:journals/lmcs/Platzer12} are surveyed elsewhere~\cite{DBLP:journals/cl/GhorbalSP17}.
The soundness and completeness theorems for~\irref{dRI} and~\irref{sAI} were previously proved semantically~\cite{DBLP:conf/tacas/GhorbalP14,DBLP:conf/emsoft/LiuZZ11}.
These earlier results are limited to (semi)algebraic invariants as they depend on specific semantic properties limited to polynomials.
The extended term conditions (\rref{subsec:background-compatibility}) and Noetherian functions (\rref{sec:noetherianfunctions}) generalize these results, showing that all (semi)analytic invariance questions reduce completely to arithmetic.

In their original presentation~\cite{DBLP:conf/tacas/GhorbalP14,DBLP:conf/emsoft/LiuZZ11}, \irref{dRI} and \irref{sAI} are \emph{algorithmic procedures} for checking invariance of semialgebraic sets, requiring \eg, checking ideal membership for all polynomials in the semialgebraic decomposition.
This makes them difficult to implement soundly within a small, trusted axiomatic core~\cite{DBLP:conf/cade/FultonMQVP15}.
This article shows that, by relying on the logic \dL, these rules can be \emph{derived} from a small set of axiomatic principles.
Although these derivations also leverage ideal computations, they are only used in \emph{derived rules}.
With the aid of a theorem prover, derived rules can be implemented as tactics that crucially remain \emph{outside} its soundness-critical axiomatic core.

\paragraph{Deductive Power and Proof Theory}
The derivations shown in this article are fully general, which is necessary for completeness of the resulting derived rules.
The number of conjuncts in the progress and differential radical formula for an extended term $\etermA$ is equal to the rank of $\etermA$.
Known upper bounds for the rank, even in the case of polynomials in $n$ variables, are doubly exponential in $n^2 \ln{n}$~\cite{MR1697373}.
Many simpler classes of invariants can be proved using simpler derivations, as exemplified by Examples~\ref{ex:continuousproperties} and~\ref{ex:expressivity}.
This is where a study of the deductive power of sound, but incomplete, proof rules~\cite{DBLP:journals/cl/GhorbalSP17} is essential.
For ODE invariants of a simpler class, it suffices to use a proof rule that is complete for just that class.
This intuition is echoed in an earlier study~\cite{DBLP:journals/lmcs/Platzer12} of the relative deductive power of differential invariants (\irref{DI}), differential cuts (\irref{DC}), and differential ghosts (\irref{DG}).
The first completeness result (\rref{thm:algcomplete}) shows that \dL with \irref{DG} is complete for algebraic and analytic invariants.
Other proof-theoretical studies of \dL~\cite{DBLP:conf/lics/Platzer12b} reveal surprising correspondences between its hybrid, continuous, and discrete aspects in the sense that each aspect can be axiomatized completely and effectively relative to any other aspect.
\rref{cor:testfree} constructively exploits their combination.

\paragraph{Noetherian Functions}
This article only touched on basic properties of Noetherian functions.
The model-theoretic study of Noetherian functions and the related Pfaffian functions is fascinating in its own right~\cite{MR1732408,MR2083248,MR3925105}.
Pfaffian functions are generated by chains satisfying~\rref{eq:noetherian-chain-def} except with triangular dependencies in their partial derivatives~\cite{MR2083248}, and notably, the expansion of the real field with Pfaffian functions is o-minimal~\cite{MR1676876,MR1740677,MR1633348}.
Such o-minimal expansions have been studied in reachability analysis for \emph{o-minimal hybrid systems}~\cite{DBLP:journals/mcss/LafferrierePS00,DBLP:conf/csl/KorovinaV04} because they admit the construction of finite bisimulations for reachability analysis algorithms.
In contrast, expansions with (more general) Noetherian functions, e.g.,~(unrestricted) trigonometric sine and cosine, are not o-minimal because they can be used to characterize the natural numbers.
This is a barrier to the construction of finite bisimulations~\cite{DBLP:journals/mcss/LafferrierePS00} but not for deductive approaches, as long as the relevant arithmetic is provable.

Undecidability of arithmetic is a delicate issue~\cite{DBLP:journals/jsyml/Richardson68}, but this article's completeness results show that ODE invariance verification \emph{completely} reduces to arithmetic!
Many (necessarily incomplete) approaches and tools for handling special functions are available, e.g., resolution with upper and lower bounds as implemented in MetiTarski~\cite{DBLP:journals/jar/AkbarpourP10}, $\delta$-decidability as implemented in dReal~\cite{DBLP:conf/lics/GaoAC12,DBLP:conf/cade/GaoKC13}, or heuristic inference-based approaches as implemented in Polya~\cite{DBLP:journals/jar/AvigadLR16}.
Specialized decision procedures may also be applicable for restricted fragments of arithmetic~\cite{DBLP:journals/jsc/McCallumW12}, as in Examples~\ref{ex:synmanip} and~\ref{ex:implicit}.
Even in settings where all of these automated tools fail to verify an arithmetic question, the system designer can provide further mathematical intuition with an interactive proof in \KeYmaeraX~\cite{DBLP:conf/cade/FultonMQVP15}.

The findings of this article identify Noetherian functions as a more general unifying theme behind earlier results in continuous/hybrid systems verification.
Besides the completeness results for invariants,~\rref{prop:diffaxiomatization} also generalizes earlier results~\cite{DBLP:conf/fm/0009ZZZ15,GRACA2008330,DBLP:journals/logcom/Platzer10} to the Noetherian setting.
This idea is called \emph{differential axiomatization}~\cite{DBLP:journals/logcom/Platzer10} because it axiomatizes ODEs involving special functions that have undecidable arithmetic using polynomial ODEs.
Similarly,~\cite[Proposition 1]{DBLP:conf/fm/0009ZZZ15} gives an algorithm for replacing a fixed set of functions appearing in IVPs with polynomials ones.
The result from~\cite[Theorem 4]{GRACA2008330} only applies in the case of univariate Noetherian functions.

\section{Conclusion}
\label{sec:conclusion}

This article demonstrates the impressive deductive power of differential ghosts: they prove \emph{all} Darboux invariants and, as a consequence, \emph{all} analytic invariants for extended term languages with the extended term conditions.
Even \emph{scalar} differential ghosts suffice for this result, but the question of whether their deductive power extends to even larger classes of invariants is left open.

The article then introduces extensions to the \dL axiomatization and shows how they can be used to extend completeness to semianalytic invariance.
The case of (semi)algebraic invariants is even decidable, but the results prove completeness for much larger classes of (semi)analytic invariants.
\rref{tab:axproperties} gives an instructive overview of the key mathematical properties of solutions and terms that the soundness of each differential equation axiom rests on.
With these axioms, mathematical reasoning for differential equations can be carried out \emph{syntactically} and \emph{axiomatically} within the \dL proof calculus.
This concise and foundational axiomatization of mathematical properties is precisely what enables generalizations of the authors' earlier results~\cite{DBLP:conf/lics/PlatzerT18} to the (semi)analytic setting with Noetherian functions.
A subtle question is left open: the extended term conditions in~\rref{subsec:background-compatibility} do not require that the fixed function symbols $\noef$ be real analytic even if Noetherian functions are always real analytic.
This suggests that there may still be a gap between the extended term conditions and Noetherian functions.
Are there $C^\infty$ smooth (or even real analytic) functions that meet the extended term conditions but are \emph{not} Noetherian functions?
In other words, are Noetherian functions exactly the class of functions for which completeness results are possible?
Certainly, this article's completeness results continue to hold for any functions meeting those conditions, which would make both positive and negative results interesting.

\begin{table}[tbh]
\caption{Properties of ODE solutions underlying the differential equation axioms of \dL.}
\label{tab:axproperties}
\begin{tabular}{ll}
\hline
\textbf{ODE Axiom}   & \textbf{Mathematical Property}  \\ \hline
\irref{DI}       & Mean value theorem\\
\irref{DC}       & Prefix-closure of solutions\\
\irref{DG}       & Picard-Lindel\"of theorem\\
\irref{Cont}     & Existence of solutions\\
\irref{Uniq}     & Uniqueness of solutions\\
\irref{Dadjoint} & Group action on solutions\\
\irref{RealInd}  & Completeness of field $\reals$\\
\hline
\end{tabular}
\end{table}

\begin{acks}
We thank the associate editor for handling this article, and Brandon Bohrer and the anonymous reviewers for their insightful comments and feedback.
We also thank Khalil Ghorbal, Andrew Sogokon, and the LICS'18 anonymous reviewers for their detailed feedback on the earlier conference version.
This material is based upon work supported by the National Science Foundation under NSF CAREER Award CNS-1054246 and an Alexander von Humboldt fellowship.
The second author was also supported by A*STAR, Singapore.

Any opinions, findings, and conclusions or recommendations expressed in this publication are those of the author(s) and do not necessarily reflect the views of the National Science Foundation.
\end{acks}

\bibliographystyle{ACM-Reference-Format}
\bibliography{diffaxiomatic-arXiv}


\begin{thebibliography}{00}


\ifx \showCODEN    \undefined \def \showCODEN     #1{\unskip}     \fi
\ifx \showDOI      \undefined \def \showDOI       #1{#1}\fi
\ifx \showISBNx    \undefined \def \showISBNx     #1{\unskip}     \fi
\ifx \showISBNxiii \undefined \def \showISBNxiii  #1{\unskip}     \fi
\ifx \showISSN     \undefined \def \showISSN      #1{\unskip}     \fi
\ifx \showLCCN     \undefined \def \showLCCN      #1{\unskip}     \fi
\ifx \shownote     \undefined \def \shownote      #1{#1}          \fi
\ifx \showarticletitle \undefined \def \showarticletitle #1{#1}   \fi
\ifx \showURL      \undefined \def \showURL       {\relax}        \fi
\providecommand\bibfield[2]{#2}
\providecommand\bibinfo[2]{#2}
\providecommand\natexlab[1]{#1}
\providecommand\showeprint[2][]{arXiv:#2}

\bibitem[\protect\citeauthoryear{Akbarpour and Paulson}{Akbarpour and
  Paulson}{2010}]%
        {DBLP:journals/jar/AkbarpourP10}
\bibfield{author}{\bibinfo{person}{Behzad Akbarpour} {and}
  \bibinfo{person}{Lawrence~C. Paulson}.} \bibinfo{year}{2010}\natexlab{}.
\newblock \showarticletitle{MetiTarski: An Automatic Theorem Prover for
  Real-Valued Special Functions}.
\newblock \bibinfo{journal}{{\em J. Autom. Reasoning\/}} \bibinfo{volume}{44},
  \bibinfo{number}{3} (\bibinfo{year}{2010}), \bibinfo{pages}{175--205}.
\newblock
\showDOI{%
\url{https://doi.org/10.1007/s10817-009-9149-2}}


\bibitem[\protect\citeauthoryear{Avigad, Lewis, and Roux}{Avigad
  et~al\mbox{.}}{2016}]%
        {DBLP:journals/jar/AvigadLR16}
\bibfield{author}{\bibinfo{person}{Jeremy Avigad}, \bibinfo{person}{Robert~Y.
  Lewis}, {and} \bibinfo{person}{Cody Roux}.} \bibinfo{year}{2016}\natexlab{}.
\newblock \showarticletitle{A Heuristic Prover for Real Inequalities}.
\newblock \bibinfo{journal}{{\em J. Autom. Reasoning\/}} \bibinfo{volume}{56},
  \bibinfo{number}{3} (\bibinfo{year}{2016}), \bibinfo{pages}{367--386}.
\newblock
\showDOI{%
\url{https://doi.org/10.1007/s10817-015-9356-y}}


\bibitem[\protect\citeauthoryear{Binyamini}{Binyamini}{2019}]%
        {MR3925105}
\bibfield{author}{\bibinfo{person}{Gal Binyamini}.}
  \bibinfo{year}{2019}\natexlab{}.
\newblock \showarticletitle{Density of Algebraic Points on {N}oetherian
  Varieties}.
\newblock \bibinfo{journal}{{\em Geom. Funct. Anal.\/}} \bibinfo{volume}{29},
  \bibinfo{number}{1} (\bibinfo{year}{2019}), \bibinfo{pages}{72--118}.
\newblock
\showISSN{1016-443X}
\showDOI{%
\url{https://doi.org/10.1007/s00039-019-00475-7}}


\bibitem[\protect\citeauthoryear{Bochnak, Coste, and Roy}{Bochnak
  et~al\mbox{.}}{1998}]%
        {Bochnak1998}
\bibfield{author}{\bibinfo{person}{Jacek Bochnak}, \bibinfo{person}{Michel
  Coste}, {and} \bibinfo{person}{Marie-Fran{\c{c}}oise Roy}.}
  \bibinfo{year}{1998}\natexlab{}.
\newblock \bibinfo{booktitle}{{\em Real Algebraic Geometry}}.
\newblock \bibinfo{publisher}{Springer}, \bibinfo{address}{Heidelberg}.
\newblock
\showISBNx{978-3-540-64663-1}
\showDOI{%
\url{https://doi.org/10.1007/978-3-662-03718-8}}


\bibitem[\protect\citeauthoryear{Bohrer, Rahli, Vukotic, V{\"{o}}lp, and
  Platzer}{Bohrer et~al\mbox{.}}{2017}]%
        {DBLP:conf/cpp/BohrerRVVP17}
\bibfield{author}{\bibinfo{person}{Brandon Bohrer}, \bibinfo{person}{Vincent
  Rahli}, \bibinfo{person}{Ivana Vukotic}, \bibinfo{person}{Marcus V{\"{o}}lp},
  {and} \bibinfo{person}{Andr{\'{e}} Platzer}.}
  \bibinfo{year}{2017}\natexlab{}.
\newblock \showarticletitle{Formally Verified Differential Dynamic Logic}. In
  \bibinfo{booktitle}{{\em CPP}}, \bibfield{editor}{\bibinfo{person}{Yves
  Bertot} {and} \bibinfo{person}{Viktor Vafeiadis}} (Eds.).
  \bibinfo{publisher}{{ACM}}, \bibinfo{address}{New York},
  \bibinfo{pages}{208--221}.
\newblock
\showDOI{%
\url{https://doi.org/10.1145/3018610.3018616}}


\bibitem[\protect\citeauthoryear{Bourbaki}{Bourbaki}{1998}]%
        {MR1727221}
\bibfield{author}{\bibinfo{person}{Nicolas Bourbaki}.}
  \bibinfo{year}{1998}\natexlab{}.
\newblock \bibinfo{booktitle}{{\em Commutative Algebra. {C}hapters 1--7}}.
\newblock \bibinfo{publisher}{Springer}, \bibinfo{address}{Berlin}.
\newblock
\showISBNx{3-540-64239-0}


\bibitem[\protect\citeauthoryear{Clark}{Clark}{2019}]%
        {doi:10.1080/0025570X.2019.1549902}
\bibfield{author}{\bibinfo{person}{Pete~L. Clark}.}
  \bibinfo{year}{2019}\natexlab{}.
\newblock \showarticletitle{The Instructor's Guide to Real Induction}.
\newblock \bibinfo{journal}{{\em Math. Mag.\/}} \bibinfo{volume}{92},
  \bibinfo{number}{2} (\bibinfo{year}{2019}), \bibinfo{pages}{136--150}.
\newblock
\showDOI{%
\url{https://doi.org/10.1080/0025570X.2019.1549902}}


\bibitem[\protect\citeauthoryear{Clarke, Henzinger, Veith, and Bloem}{Clarke
  et~al\mbox{.}}{2018}]%
        {DBLP:reference/mc/2018}
\bibfield{editor}{\bibinfo{person}{Edmund~M. Clarke},
  \bibinfo{person}{Thomas~A. Henzinger}, \bibinfo{person}{Helmut Veith}, {and}
  \bibinfo{person}{Roderick Bloem}} (Eds.). \bibinfo{year}{2018}\natexlab{}.
\newblock \bibinfo{booktitle}{{\em Handbook of Model Checking}}.
\newblock \bibinfo{publisher}{Springer}, \bibinfo{address}{Cham}.
\newblock
\showISBNx{978-3-319-10574-1}
\showDOI{%
\url{https://doi.org/10.1007/978-3-319-10575-8}}


\bibitem[\protect\citeauthoryear{Darboux}{Darboux}{1878}]%
        {Darboux}
\bibfield{author}{\bibinfo{person}{Jean-Gaston Darboux}.}
  \bibinfo{year}{1878}\natexlab{}.
\newblock \showarticletitle{M{\'e}moire sur les {\'e}quations
  diff{\'e}rentielles alg{\'e}briques du premier ordre et du premier
  degr{\'e}}.
\newblock \bibinfo{journal}{{\em Bull. Sci. Math.\/}} \bibinfo{volume}{2},
  \bibinfo{number}{1} (\bibinfo{year}{1878}), \bibinfo{pages}{151--200}.
\newblock


\bibitem[\protect\citeauthoryear{Fulton, Mitsch, Quesel, V{\"{o}}lp, and
  Platzer}{Fulton et~al\mbox{.}}{2015}]%
        {DBLP:conf/cade/FultonMQVP15}
\bibfield{author}{\bibinfo{person}{Nathan Fulton}, \bibinfo{person}{Stefan
  Mitsch}, \bibinfo{person}{Jan{-}David Quesel}, \bibinfo{person}{Marcus
  V{\"{o}}lp}, {and} \bibinfo{person}{Andr{\'{e}} Platzer}.}
  \bibinfo{year}{2015}\natexlab{}.
\newblock \showarticletitle{KeYmaera {X:} An Axiomatic Tactical Theorem Prover
  for Hybrid Systems}. In \bibinfo{booktitle}{{\em CADE}} {\em
  (\bibinfo{series}{LNCS})}, \bibfield{editor}{\bibinfo{person}{Amy~P. Felty}
  {and} \bibinfo{person}{Aart Middeldorp}} (Eds.), Vol.~\bibinfo{volume}{9195}.
  \bibinfo{publisher}{Springer}, \bibinfo{address}{Cham},
  \bibinfo{pages}{527--538}.
\newblock
\showDOI{%
\url{https://doi.org/10.1007/978-3-319-21401-6_36}}


\bibitem[\protect\citeauthoryear{Gabrielov and Khovanskii}{Gabrielov and
  Khovanskii}{1998}]%
        {MR1732408}
\bibfield{author}{\bibinfo{person}{Andrei Gabrielov} {and}
  \bibinfo{person}{Askold Khovanskii}.} \bibinfo{year}{1998}\natexlab{}.
\newblock \showarticletitle{Multiplicity of a {N}oetherian Intersection}.
\newblock In \bibinfo{booktitle}{{\em Geometry of Differential Equations}}.
  \bibinfo{publisher}{Amer. Math. Soc.}, \bibinfo{address}{Providence},
  \bibinfo{pages}{119--130}.
\newblock
\showDOI{%
\url{https://doi.org/10.1090/trans2/186/03}}


\bibitem[\protect\citeauthoryear{Gabrielov and Vorobjov}{Gabrielov and
  Vorobjov}{2004}]%
        {MR2083248}
\bibfield{author}{\bibinfo{person}{Andrei Gabrielov} {and}
  \bibinfo{person}{Nicolai Vorobjov}.} \bibinfo{year}{2004}\natexlab{}.
\newblock \showarticletitle{Complexity of Computations with {P}faffian and
  {N}oetherian Functions}.
\newblock In \bibinfo{booktitle}{{\em Normal Forms, Bifurcations and Finiteness
  Problems in Differential Equations}}. \bibinfo{publisher}{Kluwer Acad.
  Publ.}, \bibinfo{address}{Netherlands}, \bibinfo{pages}{211--250}.
\newblock


\bibitem[\protect\citeauthoryear{Gao, Avigad, and Clarke}{Gao
  et~al\mbox{.}}{2012}]%
        {DBLP:conf/lics/GaoAC12}
\bibfield{author}{\bibinfo{person}{Sicun Gao}, \bibinfo{person}{Jeremy Avigad},
  {and} \bibinfo{person}{Edmund~M. Clarke}.} \bibinfo{year}{2012}\natexlab{}.
\newblock \showarticletitle{Delta-Decidability over the Reals}. In
  \bibinfo{booktitle}{{\em LICS}}. \bibinfo{publisher}{{IEEE} Computer
  Society}, \bibinfo{pages}{305--314}.
\newblock
\showDOI{%
\url{https://doi.org/10.1109/LICS.2012.41}}


\bibitem[\protect\citeauthoryear{Gao, Kong, and Clarke}{Gao
  et~al\mbox{.}}{2013}]%
        {DBLP:conf/cade/GaoKC13}
\bibfield{author}{\bibinfo{person}{Sicun Gao}, \bibinfo{person}{Soonho Kong},
  {and} \bibinfo{person}{Edmund~M. Clarke}.} \bibinfo{year}{2013}\natexlab{}.
\newblock \showarticletitle{dReal: An {SMT} Solver for Nonlinear Theories over
  the Reals}. In \bibinfo{booktitle}{{\em CADE}} {\em
  (\bibinfo{series}{LNCS})}, \bibfield{editor}{\bibinfo{person}{Maria~Paola
  Bonacina}} (Ed.), Vol.~\bibinfo{volume}{7898}. \bibinfo{publisher}{Springer},
  \bibinfo{address}{Heidelberg}, \bibinfo{pages}{208--214}.
\newblock
\showDOI{%
\url{https://doi.org/10.1007/978-3-642-38574-2\_14}}


\bibitem[\protect\citeauthoryear{Ghorbal and Platzer}{Ghorbal and
  Platzer}{2014}]%
        {DBLP:conf/tacas/GhorbalP14}
\bibfield{author}{\bibinfo{person}{Khalil Ghorbal} {and}
  \bibinfo{person}{Andr{\'{e}} Platzer}.} \bibinfo{year}{2014}\natexlab{}.
\newblock \showarticletitle{Characterizing Algebraic Invariants by Differential
  Radical Invariants}. In \bibinfo{booktitle}{{\em TACAS}} {\em
  (\bibinfo{series}{LNCS})}, \bibfield{editor}{\bibinfo{person}{Erika
  {\'{A}}brah{\'{a}}m} {and} \bibinfo{person}{Klaus Havelund}} (Eds.),
  Vol.~\bibinfo{volume}{8413}. \bibinfo{publisher}{Springer},
  \bibinfo{address}{Heidelberg}, \bibinfo{pages}{279--294}.
\newblock
\showDOI{%
\url{https://doi.org/10.1007/978-3-642-54862-8_19}}


\bibitem[\protect\citeauthoryear{Ghorbal, Sogokon, and Platzer}{Ghorbal
  et~al\mbox{.}}{2017}]%
        {DBLP:journals/cl/GhorbalSP17}
\bibfield{author}{\bibinfo{person}{Khalil Ghorbal}, \bibinfo{person}{Andrew
  Sogokon}, {and} \bibinfo{person}{Andr{\'e} Platzer}.}
  \bibinfo{year}{2017}\natexlab{}.
\newblock \showarticletitle{A Hierarchy of Proof Rules for Checking Positive
  Invariance of Algebraic and Semi-Algebraic Sets}.
\newblock \bibinfo{journal}{{\em Comput. Lang. Syst. Str.\/}}
  \bibinfo{volume}{47}, \bibinfo{number}{1} (\bibinfo{year}{2017}),
  \bibinfo{pages}{19--43}.
\newblock
\showDOI{%
\url{https://doi.org/10.1016/j.cl.2015.11.003}}


\bibitem[\protect\citeauthoryear{Gra\c{c}a, Campagnolo, and Buescu}{Gra\c{c}a
  et~al\mbox{.}}{2008}]%
        {GRACA2008330}
\bibfield{author}{\bibinfo{person}{Daniel~S. Gra\c{c}a},
  \bibinfo{person}{Manuel~L. Campagnolo}, {and} \bibinfo{person}{Jorge
  Buescu}.} \bibinfo{year}{2008}\natexlab{}.
\newblock \showarticletitle{Computability with Polynomial Differential
  Equations}.
\newblock \bibinfo{journal}{{\em Adv. Appl. Math.\/}} \bibinfo{volume}{40},
  \bibinfo{number}{3} (\bibinfo{year}{2008}), \bibinfo{pages}{330 -- 349}.
\newblock
\showISSN{0196-8858}
\showDOI{%
\url{https://doi.org/10.1016/j.aam.2007.02.003}}


\bibitem[\protect\citeauthoryear{Gr{\"o}nwall}{Gr{\"o}nwall}{1919}]%
        {DBLP:journals/mathann/Gronwall19}
\bibfield{author}{\bibinfo{person}{Thomas~H. Gr{\"o}nwall}.}
  \bibinfo{year}{1919}\natexlab{}.
\newblock \showarticletitle{Note on the Derivatives with Respect to a Parameter
  of the Solutions of a System of Differential Equations}.
\newblock \bibinfo{journal}{{\em Ann. Math.\/}} \bibinfo{volume}{20},
  \bibinfo{number}{4} (\bibinfo{year}{1919}), \bibinfo{pages}{292--296}.
\newblock
\showDOI{%
\url{https://doi.org/10.2307/1967124}}


\bibitem[\protect\citeauthoryear{Korovina and Vorobjov}{Korovina and
  Vorobjov}{2004}]%
        {DBLP:conf/csl/KorovinaV04}
\bibfield{author}{\bibinfo{person}{Margarita~V. Korovina} {and}
  \bibinfo{person}{Nicolai Vorobjov}.} \bibinfo{year}{2004}\natexlab{}.
\newblock \showarticletitle{Pfaffian Hybrid Systems}. In
  \bibinfo{booktitle}{{\em CSL}} {\em (\bibinfo{series}{LNCS})},
  \bibfield{editor}{\bibinfo{person}{Jerzy Marcinkowski} {and}
  \bibinfo{person}{Andrzej Tarlecki}} (Eds.), Vol.~\bibinfo{volume}{3210}.
  \bibinfo{publisher}{Springer}, \bibinfo{address}{Heidelberg},
  \bibinfo{pages}{430--441}.
\newblock
\showDOI{%
\url{https://doi.org/10.1007/978-3-540-30124-0\_33}}


\bibitem[\protect\citeauthoryear{Krantz and Parks}{Krantz and Parks}{2002}]%
        {MR1916029}
\bibfield{author}{\bibinfo{person}{Steven~G. Krantz} {and}
  \bibinfo{person}{Harold~R. Parks}.} \bibinfo{year}{2002}\natexlab{}.
\newblock \bibinfo{booktitle}{{\em A Primer of Real Analytic Functions\/}
  (\bibinfo{edition}{second} ed.)}.
\newblock \bibinfo{publisher}{Birkh\"{a}user}, \bibinfo{address}{Boston}.
\newblock
\showISBNx{0-8176-4264-1}
\showDOI{%
\url{https://doi.org/10.1007/978-0-8176-8134-0}}


\bibitem[\protect\citeauthoryear{Lafferriere, Pappas, and Sastry}{Lafferriere
  et~al\mbox{.}}{2000}]%
        {DBLP:journals/mcss/LafferrierePS00}
\bibfield{author}{\bibinfo{person}{Gerardo Lafferriere},
  \bibinfo{person}{George~J. Pappas}, {and} \bibinfo{person}{Shankar Sastry}.}
  \bibinfo{year}{2000}\natexlab{}.
\newblock \showarticletitle{O-Minimal Hybrid Systems}.
\newblock \bibinfo{journal}{{\em Math. Control Signals Systems\/}}
  \bibinfo{volume}{13}, \bibinfo{number}{1} (\bibinfo{year}{2000}),
  \bibinfo{pages}{1--21}.
\newblock
\showDOI{%
\url{https://doi.org/10.1007/PL00009858}}


\bibitem[\protect\citeauthoryear{Liu, Zhan, and Zhao}{Liu
  et~al\mbox{.}}{2011}]%
        {DBLP:conf/emsoft/LiuZZ11}
\bibfield{author}{\bibinfo{person}{Jiang Liu}, \bibinfo{person}{Naijun Zhan},
  {and} \bibinfo{person}{Hengjun Zhao}.} \bibinfo{year}{2011}\natexlab{}.
\newblock \showarticletitle{Computing Semi-Algebraic Invariants for Polynomial
  Dynamical Systems}. In \bibinfo{booktitle}{{\em EMSOFT}},
  \bibfield{editor}{\bibinfo{person}{Samarjit Chakraborty},
  \bibinfo{person}{Ahmed Jerraya}, \bibinfo{person}{Sanjoy~K. Baruah}, {and}
  \bibinfo{person}{Sebastian Fischmeister}} (Eds.). \bibinfo{publisher}{{ACM}},
  \bibinfo{address}{New York}, \bibinfo{pages}{97--106}.
\newblock
\showDOI{%
\url{https://doi.org/10.1145/2038642.2038659}}


\bibitem[\protect\citeauthoryear{Liu, Zhan, Zhao, and Zou}{Liu
  et~al\mbox{.}}{2015}]%
        {DBLP:conf/fm/0009ZZZ15}
\bibfield{author}{\bibinfo{person}{Jiang Liu}, \bibinfo{person}{Naijun Zhan},
  \bibinfo{person}{Hengjun Zhao}, {and} \bibinfo{person}{Liang Zou}.}
  \bibinfo{year}{2015}\natexlab{}.
\newblock \showarticletitle{Abstraction of Elementary Hybrid Systems by
  Variable Transformation}. In \bibinfo{booktitle}{{\em FM}} {\em
  (\bibinfo{series}{LNCS})}, \bibfield{editor}{\bibinfo{person}{Nikolaj
  Bj{\o}rner} {and} \bibinfo{person}{Frank~S. de~Boer}} (Eds.),
  Vol.~\bibinfo{volume}{9109}. \bibinfo{publisher}{Springer},
  \bibinfo{address}{Cham}, \bibinfo{pages}{360--377}.
\newblock
\showDOI{%
\url{https://doi.org/10.1007/978-3-319-19249-9\_23}}


\bibitem[\protect\citeauthoryear{McCallum and Weispfenning}{McCallum and
  Weispfenning}{2012}]%
        {DBLP:journals/jsc/McCallumW12}
\bibfield{author}{\bibinfo{person}{Scott McCallum} {and}
  \bibinfo{person}{Volker Weispfenning}.} \bibinfo{year}{2012}\natexlab{}.
\newblock \showarticletitle{Deciding Polynomial-Transcendental Problems}.
\newblock \bibinfo{journal}{{\em J. Symb. Comput.\/}} \bibinfo{volume}{47},
  \bibinfo{number}{1} (\bibinfo{year}{2012}), \bibinfo{pages}{16--31}.
\newblock
\showDOI{%
\url{https://doi.org/10.1016/j.jsc.2011.08.004}}


\bibitem[\protect\citeauthoryear{Novikov and Yakovenko}{Novikov and
  Yakovenko}{1999}]%
        {MR1697373}
\bibfield{author}{\bibinfo{person}{Dimitri Novikov} {and}
  \bibinfo{person}{Sergei Yakovenko}.} \bibinfo{year}{1999}\natexlab{}.
\newblock \showarticletitle{Trajectories of Polynomial Vector Fields and
  Ascending Chains of Polynomial Ideals}.
\newblock \bibinfo{journal}{{\em Ann. I. Fourier\/}} \bibinfo{volume}{49},
  \bibinfo{number}{2} (\bibinfo{year}{1999}), \bibinfo{pages}{563--609}.
\newblock
\showISSN{0373-0956}
\showDOI{%
\url{https://doi.org/10.5802/aif.1683}}


\bibitem[\protect\citeauthoryear{Owicki and Gries}{Owicki and Gries}{1976}]%
        {DBLP:journals/cacm/OwickiG76}
\bibfield{author}{\bibinfo{person}{Susan~S. Owicki} {and}
  \bibinfo{person}{David Gries}.} \bibinfo{year}{1976}\natexlab{}.
\newblock \showarticletitle{Verifying Properties of Parallel Programs: An
  Axiomatic Approach}.
\newblock \bibinfo{journal}{{\em Commun. {ACM}\/}} \bibinfo{volume}{19},
  \bibinfo{number}{5} (\bibinfo{year}{1976}), \bibinfo{pages}{279--285}.
\newblock
\showDOI{%
\url{https://doi.org/10.1145/360051.360224}}


\bibitem[\protect\citeauthoryear{Platzer}{Platzer}{2008}]%
        {DBLP:journals/jar/Platzer08}
\bibfield{author}{\bibinfo{person}{Andr{\'e} Platzer}.}
  \bibinfo{year}{2008}\natexlab{}.
\newblock \showarticletitle{Differential Dynamic Logic for Hybrid Systems}.
\newblock \bibinfo{journal}{{\em J. Autom. Reasoning\/}} \bibinfo{volume}{41},
  \bibinfo{number}{2} (\bibinfo{year}{2008}), \bibinfo{pages}{143--189}.
\newblock
\showISSN{0168-7433}
\showDOI{%
\url{https://doi.org/10.1007/s10817-008-9103-8}}


\bibitem[\protect\citeauthoryear{Platzer}{Platzer}{2010}]%
        {DBLP:journals/logcom/Platzer10}
\bibfield{author}{\bibinfo{person}{Andr{\'e} Platzer}.}
  \bibinfo{year}{2010}\natexlab{}.
\newblock \showarticletitle{Differential-Algebraic Dynamic Logic for
  Differential-Algebraic Programs}.
\newblock \bibinfo{journal}{{\em J. Log. Comput.\/}} \bibinfo{volume}{20},
  \bibinfo{number}{1} (\bibinfo{year}{2010}), \bibinfo{pages}{309--352}.
\newblock
\showDOI{%
\url{https://doi.org/10.1093/logcom/exn070}}


\bibitem[\protect\citeauthoryear{Platzer}{Platzer}{2012a}]%
        {DBLP:conf/lics/Platzer12b}
\bibfield{author}{\bibinfo{person}{Andr{\'e} Platzer}.}
  \bibinfo{year}{2012}\natexlab{a}.
\newblock \showarticletitle{The Complete Proof Theory of Hybrid Systems}. In
  \bibinfo{booktitle}{{\em LICS}}. \bibinfo{publisher}{{IEEE} Computer
  Society}, \bibinfo{pages}{541--550}.
\newblock
\showISBNx{978-1-4673-2263-8}
\showDOI{%
\url{https://doi.org/10.1109/LICS.2012.64}}


\bibitem[\protect\citeauthoryear{Platzer}{Platzer}{2012b}]%
        {DBLP:conf/itp/Platzer12}
\bibfield{author}{\bibinfo{person}{Andr{\'{e}} Platzer}.}
  \bibinfo{year}{2012}\natexlab{b}.
\newblock \showarticletitle{A Differential Operator Approach to Equational
  Differential Invariants}. In \bibinfo{booktitle}{{\em ITP}} {\em
  (\bibinfo{series}{LNCS})}, \bibfield{editor}{\bibinfo{person}{Lennart
  Beringer} {and} \bibinfo{person}{Amy~P. Felty}} (Eds.),
  Vol.~\bibinfo{volume}{7406}. \bibinfo{publisher}{Springer},
  \bibinfo{address}{Heidelberg}, \bibinfo{pages}{28--48}.
\newblock
\showDOI{%
\url{https://doi.org/10.1007/978-3-642-32347-8\_3}}


\bibitem[\protect\citeauthoryear{Platzer}{Platzer}{2012c}]%
        {DBLP:journals/lmcs/Platzer12}
\bibfield{author}{\bibinfo{person}{Andr{\'e} Platzer}.}
  \bibinfo{year}{2012}\natexlab{c}.
\newblock \showarticletitle{The Structure of Differential Invariants and
  Differential Cut Elimination}.
\newblock \bibinfo{journal}{{\em Log. Meth. Comput. Sci.\/}}
  \bibinfo{volume}{8}, \bibinfo{number}{4} (\bibinfo{year}{2012}),
  \bibinfo{pages}{1--38}.
\newblock
\showDOI{%
\url{https://doi.org/10.2168/LMCS-8(4:16)2012}}


\bibitem[\protect\citeauthoryear{Platzer}{Platzer}{2017}]%
        {DBLP:journals/jar/Platzer17}
\bibfield{author}{\bibinfo{person}{Andr{\'e} Platzer}.}
  \bibinfo{year}{2017}\natexlab{}.
\newblock \showarticletitle{A Complete Uniform Substitution Calculus for
  Differential Dynamic Logic}.
\newblock \bibinfo{journal}{{\em J. Autom. Reasoning\/}} \bibinfo{volume}{59},
  \bibinfo{number}{2} (\bibinfo{year}{2017}), \bibinfo{pages}{219--265}.
\newblock
\showDOI{%
\url{https://doi.org/10.1007/s10817-016-9385-1}}


\bibitem[\protect\citeauthoryear{Platzer and Tan}{Platzer and Tan}{2018}]%
        {DBLP:conf/lics/PlatzerT18}
\bibfield{author}{\bibinfo{person}{Andr{\'{e}} Platzer} {and}
  \bibinfo{person}{Yong~Kiam Tan}.} \bibinfo{year}{2018}\natexlab{}.
\newblock \showarticletitle{Differential Equation Axiomatization: The
  Impressive Power of Differential Ghosts}. In \bibinfo{booktitle}{{\em LICS}},
  \bibfield{editor}{\bibinfo{person}{Anuj Dawar} {and} \bibinfo{person}{Erich
  Gr{\"{a}}del}} (Eds.). \bibinfo{publisher}{ACM}, \bibinfo{address}{New York},
  \bibinfo{pages}{819--828}.
\newblock
\showISBNx{978-1-4503-5583-4}
\showDOI{%
\url{https://doi.org/10.1145/3209108.3209147}}


\bibitem[\protect\citeauthoryear{Poincar{\'{e}}}{Poincar{\'{e}}}{1881}]%
        {Poincare81}
\bibfield{author}{\bibinfo{person}{Henri Poincar{\'{e}}}.}
  \bibinfo{year}{1881}\natexlab{}.
\newblock \showarticletitle{M{\'{e}}moire sur les courbes d{\'{e}}finies par
  une {\'{e}}quation diff{\'{e}}rentielle}.
\newblock \bibinfo{journal}{{\em J. Math. Pures Appl.\/}}
  (\bibinfo{year}{1881}).
\newblock


\bibitem[\protect\citeauthoryear{Prajna and Jadbabaie}{Prajna and
  Jadbabaie}{2004}]%
        {DBLP:conf/hybrid/PrajnaJ04}
\bibfield{author}{\bibinfo{person}{Stephen Prajna} {and} \bibinfo{person}{Ali
  Jadbabaie}.} \bibinfo{year}{2004}\natexlab{}.
\newblock \showarticletitle{Safety Verification of Hybrid Systems Using Barrier
  Certificates}. In \bibinfo{booktitle}{{\em HSCC}} {\em
  (\bibinfo{series}{LNCS})}, \bibfield{editor}{\bibinfo{person}{Rajeev Alur}
  {and} \bibinfo{person}{George~J. Pappas}} (Eds.),
  Vol.~\bibinfo{volume}{2993}. \bibinfo{publisher}{Springer},
  \bibinfo{address}{Heidelberg}, \bibinfo{pages}{477--492}.
\newblock
\showDOI{%
\url{https://doi.org/10.1007/978-3-540-24743-2_32}}


\bibitem[\protect\citeauthoryear{Richardson}{Richardson}{1968}]%
        {DBLP:journals/jsyml/Richardson68}
\bibfield{author}{\bibinfo{person}{Daniel Richardson}.}
  \bibinfo{year}{1968}\natexlab{}.
\newblock \showarticletitle{Some Undecidable Problems Involving Elementary
  Functions of a Real Variable}.
\newblock \bibinfo{journal}{{\em J. Symb. Log.\/}} \bibinfo{volume}{33},
  \bibinfo{number}{4} (\bibinfo{year}{1968}), \bibinfo{pages}{514--520}.
\newblock
\showDOI{%
\url{https://doi.org/10.2307/2271358}}


\bibitem[\protect\citeauthoryear{Sankaranarayanan, Sipma, and
  Manna}{Sankaranarayanan et~al\mbox{.}}{2008}]%
        {DBLP:journals/fmsd/SankaranarayananSM08}
\bibfield{author}{\bibinfo{person}{Sriram Sankaranarayanan},
  \bibinfo{person}{Henny~B. Sipma}, {and} \bibinfo{person}{Zohar Manna}.}
  \bibinfo{year}{2008}\natexlab{}.
\newblock \showarticletitle{Constructing Invariants for Hybrid Systems}.
\newblock \bibinfo{journal}{{\em Form. Methods Syst. Des.\/}}
  \bibinfo{volume}{32}, \bibinfo{number}{1} (\bibinfo{year}{2008}),
  \bibinfo{pages}{25--55}.
\newblock
\showDOI{%
\url{https://doi.org/10.1007/s10703-007-0046-1}}


\bibitem[\protect\citeauthoryear{Speissegger}{Speissegger}{1999}]%
        {MR1676876}
\bibfield{author}{\bibinfo{person}{Patrick Speissegger}.}
  \bibinfo{year}{1999}\natexlab{}.
\newblock \showarticletitle{The {P}faffian Closure of an O-Minimal Structure}.
\newblock \bibinfo{journal}{{\em J. Reine Angew. Math.\/}}
  \bibinfo{volume}{508} (\bibinfo{year}{1999}), \bibinfo{pages}{189--211}.
\newblock
\showISSN{0075-4102}
\showDOI{%
\url{https://doi.org/10.1515/crll.1999.026}}


\bibitem[\protect\citeauthoryear{Taly and Tiwari}{Taly and Tiwari}{2009}]%
        {DBLP:conf/fsttcs/TalyT09}
\bibfield{author}{\bibinfo{person}{Ankur Taly} {and} \bibinfo{person}{Ashish
  Tiwari}.} \bibinfo{year}{2009}\natexlab{}.
\newblock \showarticletitle{Deductive Verification of Continuous Dynamical
  Systems}. In \bibinfo{booktitle}{{\em FSTTCS}} {\em
  (\bibinfo{series}{LIPIcs})}, \bibfield{editor}{\bibinfo{person}{Ravi Kannan}
  {and} \bibinfo{person}{K.~Narayan Kumar}} (Eds.), Vol.~\bibinfo{volume}{4}.
  \bibinfo{publisher}{Schloss Dagstuhl}, \bibinfo{address}{Dagstuhl},
  \bibinfo{pages}{383--394}.
\newblock
\showDOI{%
\url{https://doi.org/10.4230/LIPIcs.FSTTCS.2009.2334}}


\bibitem[\protect\citeauthoryear{Terzo}{Terzo}{2007}]%
        {Terzo}
\bibfield{author}{\bibinfo{person}{Giuseppina Terzo}.}
  \bibinfo{year}{2007}\natexlab{}.
\newblock {\em \bibinfo{title}{Consequences of Schanuel's Conjecture in
  Exponential Algebra}}.
\newblock \bibinfo{thesistype}{Ph.D. Dissertation}. \bibinfo{school}{University
  of Naples Federico II}.
\newblock


\bibitem[\protect\citeauthoryear{Tougeron}{Tougeron}{1991}]%
        {MR1150568}
\bibfield{author}{\bibinfo{person}{Jean-Claude Tougeron}.}
  \bibinfo{year}{1991}\natexlab{}.
\newblock \showarticletitle{Alg\`ebres analytiques topologiquement
  noeth\'{e}riennes. {T}h\'{e}orie de {K}hovanski\u{\i}}.
\newblock \bibinfo{journal}{{\em Ann. I. Fourier\/}} \bibinfo{volume}{41},
  \bibinfo{number}{4} (\bibinfo{year}{1991}), \bibinfo{pages}{823--840}.
\newblock
\showISSN{0373-0956}
\showDOI{%
\url{https://doi.org/10.5802/aif.1275}}


\bibitem[\protect\citeauthoryear{van~den Dries}{van~den Dries}{1984}]%
        {MR762106}
\bibfield{author}{\bibinfo{person}{Lou van~den Dries}.}
  \bibinfo{year}{1984}\natexlab{}.
\newblock \showarticletitle{Remarks on {T}arski's Problem Concerning (R, +, *,
  exp)}.
\newblock In \bibinfo{booktitle}{{\em Logic Colloquium '82}},
  \bibfield{editor}{\bibinfo{person}{Gabriele Lolli}, \bibinfo{person}{Giuseppe
  Longo}, {and} \bibinfo{person}{Annalisa Marcja}} (Eds.).
  Vol.~\bibinfo{volume}{112}. \bibinfo{publisher}{North-Holland},
  \bibinfo{address}{Amsterdam}, \bibinfo{pages}{97--121}.
\newblock
\showDOI{%
\url{https://doi.org/10.1016/S0049-237X(08)71811-1}}


\bibitem[\protect\citeauthoryear{van~den Dries}{van~den Dries}{1998}]%
        {MR1633348}
\bibfield{author}{\bibinfo{person}{Lou van~den Dries}.}
  \bibinfo{year}{1998}\natexlab{}.
\newblock \bibinfo{booktitle}{{\em Tame Topology and O-Minimal Structures}}.
\newblock \bibinfo{publisher}{Cambridge University Press},
  \bibinfo{address}{Cambridge}.
\newblock
\showISBNx{0-521-59838-9}
\showDOI{%
\url{https://doi.org/10.1017/CBO9780511525919}}


\bibitem[\protect\citeauthoryear{Walter}{Walter}{1998}]%
        {Walter1998}
\bibfield{author}{\bibinfo{person}{Wolfgang Walter}.}
  \bibinfo{year}{1998}\natexlab{}.
\newblock \bibinfo{booktitle}{{\em Ordinary Differential Equations}}.
\newblock \bibinfo{publisher}{Springer}, \bibinfo{address}{New York}.
\newblock
\showISBNx{978-0-387-98459-9}
\showDOI{%
\url{https://doi.org/10.1007/978-1-4612-0601-9}}


\bibitem[\protect\citeauthoryear{Wilkie}{Wilkie}{1999}]%
        {MR1740677}
\bibfield{author}{\bibinfo{person}{Alex~J. Wilkie}.}
  \bibinfo{year}{1999}\natexlab{}.
\newblock \showarticletitle{A Theorem of the Complement and Some New O-Minimal
  Structures}.
\newblock \bibinfo{journal}{{\em Sel. Math. New Ser.\/}} \bibinfo{volume}{5},
  \bibinfo{number}{4} (\bibinfo{year}{1999}), \bibinfo{pages}{397--421}.
\newblock
\showISSN{1022-1824}
\showDOI{%
\url{https://doi.org/10.1007/s000290050052}}


\end{thebibliography}

\appendix

\section{Differential Dynamic Logic Axiomatization}
\label{app:axiomatization}

\subsection{Extended Axiomatization Soundness}
\label{app:extaxiomatization}

This section proves the soundness of the axiomatic extension from~\rref{sec:extaxioms}.
For the solution $\solvar : [0,T] \to \States$, its truncation to the interval $[0,t]$ for some $0 \leq t \leq T$ is denoted $\soltrunc{t} : [0,t] \to \States$, with $\soltrunc{t}(\zeta)=\solvar(\zeta)$ for $\zeta \in [0,t]$.
The shorthand notation $\imodels{\Iff[{[a,b]}]}{\rfvar}$ means $\imodels{\Iff}{\rfvar}$ for all $a \leq \zeta \leq b$, where the interval $[a,b]$ is required to be a closed subinterval of the interval $[0,T]$.
Analogously, $\solvar((a,b))$ is used when the interval is open, and similarly for the half-open cases.

As explained in \rref{sec:extaxioms}, the soundness of the extended axioms requires that the ODE system \m{\D{x}=\genDE{x}} always locally evolves $x$.
An easy syntactic check ensuring this condition is if the system already contains an equation \(\D{x_1}=1\) that tracks the passage of time, which can be added using axiom \irref{DG} if necessary before using the axioms.
However, the soundness proofs are more general and only use the assumption that the ODE system locally evolves $x$, whether by \(\D{x_1}=1\) or otherwise.

The soundness proofs make use of \dL's coincidence lemmas~\cite[Lemmas 10,11]{DBLP:journals/jar/Platzer17}:
\begin{lemma}[Coincidence for terms and formulas~\cite{DBLP:journals/jar/Platzer17}]
\label{lem:coincide}
The following coincidence properties hold for \dL, where free variables $\freevars{\etermA},\freevars{\fvarA}$ are defined as expected~\cite[Sections 2.3 and 2.4]{DBLP:journals/jar/Platzer17}.
\begin{itemize}
\item If the states $\iget[state]{\I}, \iget[state]{\It}$ agree on the free variables of term $\etermA$ (\m{\freevars{\etermA}}), then $\ivaluation{\I}{\etermA} = \ivaluation{\It}{\etermA}$.
\item If the states $\iget[state]{\I}, \iget[state]{\It}$ agree on the free variables of formula $\fvarA$ (\m{\freevars{\fvarA}}), then $\imodels{\I}{\fvarA}$ iff $\imodels{\It}{\fvarA}$.
\end{itemize}
\end{lemma}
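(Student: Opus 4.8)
The plan is to prove the two statements together by structural induction, run mutually with a companion coincidence property for hybrid programs --- here only the ODEs $\pevolvein{\D{x}=\genDE{x}}{\ivr}$ --- stating that if $\iget[state]{\I}$ and $\iget[state]{\It}$ agree on the free variables of $\alpha$, then any state reachable from $\iget[state]{\I}$ via $\iaccess[\alpha]{\I}$ is matched by one reachable from $\iget[state]{\It}$ via $\iaccess[\alpha]{\It}$ that agrees with it on those variables (and symmetrically). This three-way induction over terms, formulas, and programs is exactly the organization of~\cite{DBLP:journals/jar/Platzer17}; since each case is standard, the only thing genuinely new here is to check that the fixed function symbols $\noef$ change nothing.

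For terms I would argue by the term grammar. A variable $x$ has $\freevars{x}=\{x\}$ and $\ivaluation{\I}{x}=\iget[state]{\I}(x)$, so agreement on $\{x\}$ forces equal values; a rational constant $c$ has $\freevars{c}=\emptyset$ and $\ivaluation{\I}{c}=c$ in every state. For $\etermA+\etermB$, $\etermA\cdot\etermB$, and $\noef(\etermA_1,\dots,\etermA_k)$, the set $\freevars{}$ is the union of the free variables of the immediate subterms, so agreement on the union forces agreement on each subterm's free variables, and the compositional definition of the term semantics (pointwise in the real arguments for $\noef$) closes the case by the inductive hypothesis --- the $\noef$ clause is treated uniformly, just like multiplication. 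The single delicate case is the differential $\der{\etermA}$, where $\freevars{\der{\etermA}}$ contains $\freevars{\etermA}$ together with the associated differential variables $\D{x}$ for $x \in \freevars{\etermA}$, and $\ivaluation{\I}{\der{\etermA}}=\sum_{x\in\allvars}\iget[state]{\I}(\D{x})\,\Dp[x]{\ivaluation{\I}{\etermA}}$. Here I would first establish the auxiliary fact that $\Dp[x]{\ivaluation{\I}{\etermA}}=0$ for $x\notin\freevars{\etermA}$ (an easy induction, since $\ivaluation{\cdot}{\etermA}$ does not depend on $x$), so the sum ranges effectively over $x\in\freevars{\etermA}$ only; on that finite set the coefficients $\iget[state]{\I}(\D{x})$ agree by hypothesis, and the partial derivatives agree by the inductive hypothesis on $\etermA$, as they are limits of difference quotients of functions the hypothesis renders equal on a neighborhood.

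For formulas, atomic comparisons $\etermA \sim \etermB$ reduce at once to the term statement since $\freevars{\etermA \sim \etermB}=\freevars{\etermA}\cup\freevars{\etermB}$, and the connectives $\lnot,\land,\lor$ are immediate from the inductive hypothesis. For $\lforall{x}{\fvarA}$, with $\freevars{\lforall{x}{\fvarA}}=\freevars{\fvarA}\setminus\{x\}$, the key point is that for every $d\in\reals$ the states obtained from $\iget[state]{\I}$ and $\iget[state]{\It}$ by reassigning $x$ to $d$ agree on all of $\freevars{\fvarA}$ (on $x$ by construction, elsewhere by hypothesis), so the inductive hypothesis makes $\fvarA$ equivalently true at those two states, and this survives the quantification over $d$; $\lexists{x}{\fvarA}$ is symmetric. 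For $\dbox{\alpha}{\fvarA}$ and $\ddiamond{\alpha}{\fvarA}$ I would invoke the companion program-coincidence property on $\alpha$ together with the inductive hypothesis on $\fvarA$ at the reachable states; for an ODE that program property itself follows by applying the already-proved term statement to the right-hand sides $\genDE{x}$ and the formula statement to $\ivr$, so that a solution witnessing a transition from $\iget[state]{\I}$ restricts to one witnessing the matching transition from $\iget[state]{\It}$, both solving the same equations and satisfying the same evolution domain on the relevant variables.

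I expect the main obstacle to be bookkeeping rather than insight: fixing the definition of $\freevars{}$ on differential terms and on ODEs precisely enough that the ``$=0$ off the free variables'' fact and the program-coincidence companion line up, and arranging the mutual induction over terms, formulas, and programs so that it is well-founded. Since all of this is carried out in full in~\cite{DBLP:journals/jar/Platzer17}, the appropriate move for this article is to cite that development and note only that the extended term language is harmless: $\noef$ is interpreted by a function of $k$ real arguments, so $\noef(\etermA_1,\dots,\etermA_k)$ falls under the same $\freevars{}$-distributing pattern as the other operations, and the $C^\infty$ smoothness of $\noef$ guarantees that the partial derivatives invoked in the $\der{\etermA}$ case exist.
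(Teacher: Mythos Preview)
Your proposal is correct, and in fact more detailed than what the paper does: the paper does not prove this lemma at all but simply cites it from~\cite[Lemmas 10,11]{DBLP:journals/jar/Platzer17}. Your final paragraph already anticipates this---the right move here is to cite that development and observe that the fixed function symbols $\noef$ fit the same pattern as the other term constructors, which is exactly what you do.
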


\subsubsection{Existence, Uniqueness, and Continuity}
First, the axioms from~\rref{lem:uniqcont} internalizing basic existence and uniqueness properties of solutions of differential equations are proved sound.

\begin{proof}[Proof of \rref{lem:uniqcont}]
Let $\iget[state]{\I}$ be an arbitrary initial state.
When interpreted as a function of the variables $x$, the RHS $\genDE{x}$ of the ODE system $\D{x}=\genDE{x}$ is continuously differentiable.
Therefore, by the Picard-Lindel\"{o}f theorem~\cite[\S10.VI]{Walter1998}, from $\iget[state]{\I}$, there is an interval $[0,\tau), \tau > 0$ on which there is a unique, continuous solution $\solvar : [0,\tau)\to \States$ with $\solvar(0) = \iget[state]{\I}$ on $\scomplement{\{\D{x}\}}$.
The solution may be uniquely extended in time (to the right), up to its maximal open interval of existence~\cite[\S10.IX]{Walter1998}.

\begin{description}
\item[\irref{Uniq}] The ``$\limply$'' direction follows directly from monotonicity of domain constraints because of the propositional tautology $\ivr_1 \land \ivr_2 \limply \ivr_1$ (and similarly for $\ivr_2$).
For the ``$\lylpmi$'' direction, suppose that initial state $\iget[state]{\I}$ satisfies both conjuncts with $\imodels{\I}{\ddiamond{\pevolvein{\D{x}=\genDE{x}}{\ivr_1}}{\rfvar}}$ and $\imodels{\I}{\ddiamond{\pevolvein{\D{x}=\genDE{x}}{\ivr_2}}{\rfvar}}$.
Expanding the definition of the diamond modality, there exist two solutions $\solvar_1 : [0,T_1] \to \States$, $\solvar_2 : [0,T_2] \to \States$ from $\iget[state]{\I}$ such that $\solmodels{\solvar_1}{\D{x}=\genDE{x}}{\ivr_1}$ and $\solmodels{\solvar_2}{\D{x}=\genDE{x}}{\ivr_2}$, with both $\imodels{\IffA[T_1]}{\rfvar}$ and $\imodels{\IffB[T_2]}{\rfvar}$.
Suppose $T_1 \leq T_2$.
Since $\imodels{\IffB[{[0,T_2]}]}{\ivr_2}$ and, by uniqueness, $\solvar_1$ is a truncation of $\solvar_2$ to a smaller existence interval, $\solmodels{\solvar_1}{\D{x}=\genDE{x}}{(\ivr_1 \land \ivr_2)}$.
At time $T_1$, the solution satisfies $\imodels{\IffA[T_1]}{\rfvar}$, so $\imodels{\I}{\ddiamond{\pevolvein{\D{x}=\genDE{x}}{\ivr_1 \land \ivr_2}}{\rfvar}}$, as required.
The case for $T_2<T_1$ is similar, except with $\solmodels{\solvar_2}{\D{x}=\genDE{x}}{(\ivr_1 \land \ivr_2)}$ and satisfying $\imodels{\IffB[T_2]}{\rfvar}$ at time $T_2$ instead.

\item[\irref{Cont}] Assume that $\iget[state]{\I}$ satisfies the outermost implication, i.e., $\imodels{\I}{x=y}$.
The (inner) ``$\limply$'' direction follows by definition because in order for there to be a solution staying in $\etermA > 0$ at all, the initial state $\iget[state]{\I}$ must already satisfy $\etermA > 0$ (evolution domains are differential-free).
For the (inner) ``$\lylpmi$'' direction, suppose further that $\imodels{\I}{\etermA > 0}$.
Since $\D{x}\notin\etermA$ as $\etermA$ is differential-free (\rref{subsec:background-syntax}), coincidence (\rref{lem:coincide}) implies $\imodels{\Iff[0]}{\etermA>0}$.
As a composition of continuous evaluation \cite[Definition 5]{DBLP:journals/jar/Platzer17} with the continuous solution $\iget[flow]{\If}$, $\ivaluation{\Iff[t]}{\etermA}$ is a continuous function of time $t$.
Thus, $\imodels{\Iff[0]}{\etermA>0}$ implies $\imodels{\Iff[{[0,T]}]}{\etermA>0}$ for some $0 < T \leq \tau$ and the truncated solution $\truncafter{\iget[flow]{\If}}{T}$ satisfies $\solmodels{\truncafter{\iget[flow]{\If}}{T}}{\D{x}=\genDE{x}}{\etermA>0}$.
Since $y$ is constant for the ODE but \(\pevolve{\D{x}=\genDE{x}}\) was assumed to locally evolve (for example with \(\D{x_1}=1\)), there is a time $0<\epsilon\leq T$ at which $\imodels{\Iff[\epsilon]}{x\neq y}$.
The truncation \(\truncafter{\iget[flow]{\If}}{\epsilon}\) witnesses \(\imodels{\I}{\ddiamond{\pevolvein{\D{x}=\genDE{x}}{\etermA>0}}{x\neq y}}\).

\item[\irref{Dadjoint}] The ``$\lylpmi$'' direction follows immediately from the ``$\limply$'' direction by swapping the names $x,y$, because \(-(-\genDE{x})=\genDE{x}\). Therefore, it suffices to prove the ``$\limply$'' direction.
Suppose $\imodels{\I}{\ddiamond{\pevolvein{\D{x}=\genDE{x}}{\ivr\argx}}{\,x=y}}$.
Unfolding the semantics, there is a solution $\solvar : [0,T] \to \States$, of the system $\D{x}=\genDE{x}$, with $\solvar(0) = \iget[state]{\I}$ on $\scomplement{\{\D{x}\}}$, with \(\imodels{\Iff[{[0,T]}]}{\ivr\argx}\) and $\imodels{\Iff[T]}{x=y}$.
Since the variables $y$ do not appear in the differential equations \(\pevolve{\D{x}=\genDE{x}}\), their values are constant along the solution $\solvar$.
Consider the time- and variable-reversal $\psi : [0,T] \to \States$, where:
\[\psi(\tau)(z) \mdefeq
  \begin{cases}
  \solvar(T-\tau)(x_i)  & z = y_i \\
  -\solvar(T-\tau)(\D{x_i}) & z = \D{y_i} \\
  \iget[state]{\I}(z)             & \text{otherwise}
  \end{cases}\]

By construction, $\psi(0)$ agrees with $\iget[state]{\I}$ on $\scomplement{\{\D{y}\}}$ because \(\imodels{\Iff[T]}{x=y}\).
The signs of the differential variables $\D{y_i}$ are negated along $\psi$.
By uniqueness, the solutions of $\D{x}=-\genDE{x}$ are the time-reversed solutions of $\D{x}=\genDE{x}$.
As constructed, $\psi$ is the time-reversed solution for $\D{x}=\genDE{x}$ except the $x$ were replaced by $y$ instead.
Moreover, since $\imodels{\Iff[{[0,T]}]}{\ivr\argx}$, by construction and coincidence (\rref{lem:coincide}), $\imodels{\Ifff[{[0,T]}]}{\ivr(y)}$.
Therefore, $\solmodels{\psi}{\D{y}=-\genDE{y}}{\ivr(y)}$.
Finally, observe that $\psi(T)(y) = \solvar(0)(x)$, but $\psi$ holds the values of $x$ constant, thus $\psi(T)(x) = \iget[state]{\I}(x) = \solvar(0)(x)$ and so $\imodels{\Ifff[T]}{y=x}$ and $\psi$ witnesses $\imodels{\I}{\ddiamond{\pevolvein{\D{y}=-\genDE{y}}{\ivr(y)}}{y=x}}$
\qedhere
\end{description}
\end{proof}

\subsubsection{Real Induction}
The following real induction axiom with domain constraints is proved sound.
Axiom~\irref{RealInd} from~\rref{lem:realindODE} follows as an instance with no domain constraint, i.e., $\ivr \mnodefequiv \ltrue$.
\begin{align*}
\cinferenceRule[RealIndIn|RI{$\&$}]{}
{
\axkey{\dbox{\pevolvein{\D{x}=\genDE{x}}{\ivr}}{\rfvar}} &\lbisubjunct \lforall{y}{\dbox{\pevolvein{\D{x}=\genDE{x}}{\ivr \land (\rfvar \lor x=y)}}{\Big( \initassum \limply\\
&\underbrace{\vphantom{\big(\big)}\rfvar}_{\makebox[0pt]{\textcircled{a}}} \land \underbrace{\big(\ddiamond{\pevolvein{\D{x}=\genDE{x}}{\ivr \lor x=y}}{x\neq y} \limply \ddiamond{\pevolvein{\D{x}=\genDE{x}}{\rfvar \lor x=y}}{x\neq y}\big)}_{\textcircled{b}}\Big)
}}
}{}
\end{align*}

Similar to axiom~\irref{RealInd}, the axiom~\irref{RealIndIn} is based on the real induction principle~\cite{doi:10.1080/0025570X.2019.1549902} but also accounts for an arbitrary domain constraint $\ivr$.
Its RHS conjuncts labeled \textcircled{a} and \textcircled{b} correspond to \textcircled{1} and \textcircled{2} in~\rref{def:indsubset} respectively.
The quantification $\lforall{y}{\dbox{\pevolvein{\dots}{\ivr}}{\big(x=y \limply \dots\big)}}$ now only considers final states ($x=y$) reachable by trajectories that \emph{always} stay within $\ivr$, and within $\rfvar$ except possibly at the endpoint $x=y$.
The conjunct \textcircled{a} expresses that $\rfvar$ is still true at such an endpoint.
The conjunct \textcircled{b} expresses that $\rfvar$ continues to remain true locally but only when $\ivr$ itself remains true locally.
This added assumption for $\ivr$ corresponds to the ``If $\zeta < b$ then $\dots$'' assumption in \textcircled{2} of~\rref{def:indsubset}.
The conjunct \textcircled{b} can be rewritten succinctly with the local progress $\ddnext$ modality as:
\[  \dprogressin{\D{x}=\genDE{x}}{\ivr}{} \limply \dprogressin{\D{x}=\genDE{x}}{\rfvar}{} \]

With completeness for local progress (\rref{thm:localprogresscomplete}), this gives a first hint at how~\irref{RealIndIn} will be used to obtain a complete proof rule for semianalytic invariants with domain constraints in~\rref{app:completeness}.

\begin{lemma}[Real induction with domain constraints]
\label{lem:realindODEin}
The real induction axiom~\irref{RealIndIn} is sound, where $y$ is fresh in \m{\dbox{\pevolvein{\D{x}=\genDE{x}}{\ivr}}{\rfvar}}.
\end{lemma}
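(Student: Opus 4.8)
The plan is to derive soundness of \irref{RealIndIn} from the real induction principle (\rref{prop:realindR}) applied to the time axis of solutions of $\D{x}=\genDE{x}$, following the proof of \rref{lem:realindODE} but threading the domain constraint $\ivr$ through every step; \rref{lem:realindODE} then falls out as the special case $\ivr \equiv \ltrue$. Fix an arbitrary initial state $\iget[state]{\I}$. Since $\D{x}=\genDE{x}$ is smooth and, by the standing assumption of \rref{sec:extaxioms}, locally evolves $x$ (so $\genDE{}$ never vanishes, whence any one coordinate of $x$ is strictly monotone near any time and hence differs from its current value on a punctured neighborhood), the Picard--Lindel\"of theorem used in \rref{lem:uniqcont} supplies, from any state, a unique solution extending uniquely to its maximal right interval of existence. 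The equivalence is proved in both directions, of which ``$\lylpmi$'' is the substantive one.

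For the ``$\lylpmi$'' direction I would assume the right-hand side holds in $\iget[state]{\I}$ and fix an arbitrary solution $\solvar\colon [0,T]\to\States$ of $\D{x}=\genDE{x}$ from $\iget[state]{\I}$ that stays in $\imodel{\I}{\ivr}$ throughout $[0,T]$, aiming for $\imodels{\Iff[T]}{\rfvar}$. The key move is to set $S \mdefeq \{\zeta \in [0,T] : \imodels{\Iff[\zeta]}{\rfvar}\}$ and show $S$ is an inductive subset of $[0,T]$ as in \rref{def:indsubset}, so that \rref{prop:realindR} forces $S = [0,T]$ and in particular $T\in S$. To verify inductivity at a point $\zeta$ with $[0,\zeta)\subseteq S$, I would instantiate the universal $y$ in the right-hand side with the value $\solvar(\zeta)(x)$, i.e.\ pass to the state $\iget[state]{\Iz}$ agreeing with $\iget[state]{\I}$ off $\{y\}$ and having $\iget[state]{\Iz}(y) = \solvar(\zeta)(x)$. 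Since $y$ is fresh, $\solvar$ with $y$ held at $\iget[state]{\Iz}(y)$ is a solution from $\iget[state]{\Iz}$ whose restriction to $[0,\zeta]$ stays in $\ivr \land (\rfvar \lor x=y)$ --- using $[0,\zeta)\subseteq S$ for $\rfvar$ and $\solvar(\zeta)(x) = \iget[state]{\Iz}(y)$ for $x=y$ at the endpoint --- and ends where $x=y$. Applying the box premise of the right-hand side to this truncated solution at time $\zeta$: conjunct \textcircled{a} yields $\imodels{\Iff[\zeta]}{\rfvar}$, i.e.\ $\zeta\in S$ (condition \textcircled{1}, which for $\zeta=0$ is the base case); and conjunct \textcircled{b}, whose antecedent $\dprogressin{\D{x}=\genDE{x}}{\ivr}{}$ is witnessed by continuing $\solvar$ on $[\zeta,\zeta']$ for $\zeta<\zeta'\leq T$ (it stays in $\ivr$ and, by local evolution, reaches $x\neq y$), produces a solution from $\solvar(\zeta)$ staying in $\rfvar\lor x=y$ that reaches $x\neq y$; by uniqueness this is a continuation of $\solvar$, and local evolution forces $x\neq y$ (hence $\rfvar$) on a punctured right neighborhood of $\zeta$, giving $(\zeta,\zeta+\epsilon]\subseteq S$ for some $\epsilon>0$ (condition \textcircled{2}).

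For the ``$\limply$'' direction, starting from $\imodels{\I}{\dbox{\pevolvein{\D{x}=\genDE{x}}{\ivr}}{\rfvar}}$, I would first observe by coincidence (\rref{lem:coincide}) that this box formula still holds after any reassignment of the fresh $y$. Any solution from such a state staying in $\ivr\land(\rfvar\lor x=y)$ in particular stays in $\ivr$, so $\rfvar$ holds at its endpoint, giving conjunct \textcircled{a}. For conjunct \textcircled{b}, suppose its antecedent holds at an endpoint $\nu$ (where $x=y$, and $\rfvar$ by \textcircled{a}), witnessed by a solution from $\nu$ staying in $\ivr\lor x=y$ and reaching $x\neq y$; truncate this witness to a short right interval on which local evolution of $x$ forces $x\neq y$, so that $\ivr\lor x=y$ collapses to $\ivr$ there, while $\ivr$ also holds at $\nu$ itself. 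Concatenating with the solution leading to $\nu$ yields a solution from the reassigned state that stays in $\ivr$ throughout, so by the box assumption $\rfvar$ holds all along it, in particular on the truncated witness, which therefore stays in $\rfvar\lor x=y$ and still reaches $x\neq y$ --- establishing the consequent.

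The step I expect to be most delicate is the bookkeeping around the mixed evolution domain $\ivr \lor x=y$ that appears in the local conditions. In the ``$\limply$'' direction this forces the truncation-and-concatenation argument to bridge the gap between $\ivr\lor x=y$ and the plain $\ivr$ of the global assumption, and in the ``$\lylpmi$'' direction verifying condition \textcircled{2} of \rref{def:indsubset} requires extracting genuine local progress into $\rfvar$ from conjunct \textcircled{b} rather than mere progress into $\rfvar \lor x=y$. Both uses rest on uniqueness of solutions (\rref{lem:uniqcont}) together with the standing assumption that $\D{x}=\genDE{x}$ locally evolves $x$, which is exactly what makes $x\neq y$ automatic on punctured neighborhoods of any time at which $x=y$.
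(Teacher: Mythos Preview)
Your proposal is correct and takes essentially the same approach as the paper: real induction (\rref{prop:realindR}) on the time axis for the ``$\lylpmi$'' direction with $y$ instantiated to $\solvar(\zeta)(x)$, and direct semantic unfolding plus concatenation of solutions for ``$\limply$''. The only difference is in the bookkeeping around the mixed domains $\rfvar\lor x{=}y$ and $\ivr\lor x{=}y$: you exploit that the no-fixpoint assumption forces $x\neq y$ on a full punctured right neighborhood (via local monotonicity of some coordinate), so those disjuncts collapse there and a truncation suffices, whereas the paper only extracts a single time with $x\neq y$ and handles any residual $x{=}y$ states by coincidence (\rref{lem:coincide}) with the endpoint state---both routes are sound.
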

\begin{proof}[Proof (implies \rref{lem:realindODE})]
The conjuncts on the RHS of \irref{RealIndIn} are labeled as \textcircled{a} and \textcircled{b} respectively, as shown above.
Consider an initial state $\iget[state]{\I}$, both directions of the axiom are proved separately.
\begin{enumerate}
\item[``$\limply$'']
Assume the LHS of~\irref{RealIndIn} is true initially with \textcircled{$\star$} $\imodels{\I}{\dbox{\pevolvein{\D{x}=\genDE{x}}{\ivr}}{\rfvar}}$.
Unfolding the quantification and box modality on the RHS, let $\iget[state]{\I}_y$ be identical to $\iget[state]{\I}$ except where the values for $y$ are replaced with any arbitrary values $d \in \reals^n$.
Consider any solution $\solvar_y : [0,T] \to \States$ where $\solmodels{\solvar_y}{\D{x}=\genDE{x}}{\big(\ivr \land (\rfvar \lor x=y)\big)}$, $\solvar_y(0) = \iget[state]{\I}_y$ on $\scomplement{\{\D{x}\}}$, and $\imodels{\Iffy[T]}{x = y}$.

The following similar solution $\solvar : [0,T] \to \States$ keeps $y$ constant at their initial values in $\iget[state]{\I}$:
\[ \solvar(t)(z) \mdefeq
  \begin{cases}
  \solvar_y(t)(z)  & z \in  \scomplement{\{y\}}\\
  \iget[state]{\I}(z)   & z \in \{y\}
  \end{cases} \]
By construction, $\solvar(0)$ is identical to $\iget[state]{\I}$ on $\scomplement{\{\D{x}\}}$ and $\solvar$ is identical to $\solvar_y$ on $\scomplement{\{y\}}$.
Since $y$ is fresh in \(\pevolvein{\D{x}=\genDE{x}}{\ivr}\), by coincidence (\rref{lem:coincide}) the latter implies that $\solmodels{\solvar}{\D{x}=\genDE{x}}{\ivr}$.
By assumption \textcircled{$\star$}, $\imodels{\Iff[T]}{\rfvar}$, which implies that $\imodels{\Iffy[T]}{\rfvar}$ by coincidence (\rref{lem:coincide}) since $y$ is fresh in $\rfvar$. This proves conjunct \textcircled{a}.
Unfolding the implication and diamond modality of conjunct \textcircled{b}, assume there is another solution $\psi_y : [0,\tau] \to \States$ from $\solvar_y(T)$ with $\solmodels{\psi_y}{\D{x}=\genDE{x}}{(\ivr \lor x = y)}$ and $\imodels{\Ifffy[\tau]}{x \neq y}$.
Note that $\psi_y(0) = \solvar_y(T)$ \emph{exactly} rather than just on $\scomplement{\{\D{x}\}}$, because both states have the same values for the differential variables.
To show the RHS of the implication in \textcircled{b}, i.e., that $\imodels{\Iffy[T]}{\ddiamond{\pevolvein{\D{x}=\genDE{x}}{\rfvar \lor x=y}}{x\neq y}}$, it suffices to show: $\solmodels{\psi_y}{\D{x}=\genDE{x}}{\rfvar}$, because $\rfvar$ propositionally implies $\rfvar \lor x=y$.
In particular, since $\psi_y$ already satisfies the requisite differential equations and $ \imodels{\Ifffy[\tau]}{x \neq y}$, it remains to show that $\psi_y$ stays in the evolution domain $\rfvar$ for its entire duration, \ie, $\imodels{\Ifffy[{[0,\tau]}]}{\rfvar}$.
Let $0 \leq \zeta \leq \tau$ and consider the concatenated solution $\Phi : [0,T+\zeta] \to \States$ defined by:
\[\Phi(t)(z) \mdefeq
  \begin{cases}
  \solvar_y(t)(z)  & t \leq T, z \in \scomplement{\{y\}} \\
  \psi_y(t-T)(z)     & t > T, z \in \scomplement{\{y\}} \\
  \iget[state]{\I}(z)        & z \in \{y\}
  \end{cases}\]
As with $\solvar$, the solution $\Phi$ is constructed to keep $y$ constant at their initial values in $\iget[state]{\I}$.
Since $\psi_y$ must uniquely extend $\solvar_y$~\cite[\S10.IX]{Walter1998}, the concatenated solution $\Phi$ is a solution starting from $\iget[state]{\I}$, solving the system $\D{x}=\genDE{x}$.
It stays in $\ivr$ for its entire duration by coincidence~(\rref{lem:coincide}) because $\imodels{\Iffy[T]}{\ivr}$ and all states satisfying $x=y$ agree with $\solvar_y(T)$ on the free variables of formula $\ivr$.
In other words, $\solmodels{\Phi}{\D{x}=\genDE{x}}{\ivr}$.
By \textcircled{$\star$}, $\imodels{\Iffff[T+\zeta]}{\rfvar}$, which implies $\imodels{\Ifff[\zeta]}{\rfvar}$ by coincidence~(\rref{lem:coincide}) and so $\imodels{\Ifffy[{[0,\zeta]}]}{\rfvar}$, as required.

\item[``$\lylpmi$'']
Assume the RHS of~\irref{RealIndIn} is true in initial state $\iget[state]{\I}$ and show the LHS.
Consider an arbitrary solution $\solvar : [0,T] \to \States$ starting from $\iget[state]{\I}$ such that $\solmodels{\solvar}{\D{x}=\genDE{x}}{\ivr}$.
To show $\imodels{\Iff[{[0,T]}]}{\rfvar}$, using the real induction principle (\rref{prop:realindR}), it suffices to show that the set of times $S \mdefeq \{\zeta : \solvar(\zeta) \in \imodel{\I}{\rfvar}\}$ is an inductive subset of $[0,T]$, i.e., it satisfies properties \textcircled{1} and \textcircled{2} in \rref{def:indsubset}.
So, assume that $[0,\zeta) \subseteq S$ for some time $0 \leq \zeta \leq T$.

The proof instantiates quantified variables $y$ on the RHS of~\irref{RealIndIn} to match the values of $x$ at $\solvar(\zeta)$.
Since $y$ is constant for the ODE, this allows properties of $\solvar(\zeta)$ to be deduced using the RHS (namely~\textcircled{a}, \textcircled{b}) by mediating between $\solvar$ and its augmentation $\solvar_y$ below.
More precisely, consider the state $\iget[state]{\I}_y$ identical to $\iget[state]{\I}$, except where the values for variables $y$ are replaced with the corresponding values of $x$ in $\solvar(\zeta)$.
Correspondingly, consider the solution $\solvar_y : [0,\zeta] \to \States$ identical to $\solvar$ but which keeps $y$ constant at those initial values in $\iget[state]{\I}_y$ rather than in $\iget[state]{\I}$:
\[
\iget[state]{\I}_y(z) \mdefeq
  \begin{cases}
  \iget[state]{\I}(z)  & z \in \scomplement{\{y\}}\\
  \solvar(\zeta)(x_i)  & z = y_i
  \end{cases}
  \qquad
  \solvar_y(t)(z) \mdefeq
  \begin{cases}
  \solvar(t)(z)  & z \in \scomplement{\{y\}}\\
  \iget[state]{\I}_y(z)    & z \in \{y\}
  \end{cases}
  \]
By construction and coincidence (\rref{lem:coincide}), $\solvar_y$ is a solution from initial state $\iget[state]{\I}_y$, solving $\solmodels{\solvar_y}{\D{x}=\genDE{x}}{\ivr}$ and $\imodels{\Iffy[\zeta]}{x=y}$.
By assumption and coincidence (\rref{lem:coincide}), $\imodels{\Iffy[{[0,\zeta)}]}{\rfvar}$.
Therefore, $\imodels{\Iffy[{[0,\zeta]}]}{\ivr \land (\rfvar \lor x=y)}$.
Unfolding the quantification, box modality and implication on the RHS yields $\imodels{\Iffy[\zeta]}{\textcircled{a} \land \textcircled{b}}$.

\begin{itemize}
\item[\textcircled{1}] By \textcircled{a}, $\imodels{\Iffy[\zeta]}{\rfvar}$ so by coincidence (\rref{lem:coincide}), $\imodels{\Iff[\zeta]}{\rfvar}$ as required for~\textcircled{1}.

\item[\textcircled{2}] Further assume that $\zeta < T$ and show $ \imodels{\Iff[{(\zeta,\zeta+\epsilon]}]}{\rfvar}$ for some $\epsilon > 0$.
Observe that since $\zeta < T$, there is a solution that extends from state $\solvar(\zeta)$, \ie, $\psi : [0,T-\zeta] \to \States$, where $\psi(\tau) \mdefeq \solvar(\tau+\zeta)$ and with $\solmodels{\psi}{\D{x}=\genDE{x}}{\ivr}$.
Construct the corresponding solution $\psi_y : [0,T-\zeta] \to \States$ that extends from state $\solvar_y(\zeta)$ and still keeps $y$ constant at $\iget[state]{\I}_y$:
\[ \psi_y(t)(z) \mdefeq
  \begin{cases}
  \psi(t)(z)  & z \in  \scomplement{\{y\}}\\
  \solvar_y(\zeta)(z)    & z \in \{y\}
  \end{cases} \]
By coincidence (\rref{lem:coincide}), $\solmodels{\psi_y}{\D{x}=\genDE{x}}{\ivr}$, so by weakening the domain constraint, $\solmodels{\psi_y}{\D{x}=\genDE{x}}{(\ivr \lor x=y)}$.
Since $\imodels{\Iffy[\zeta]}{x=y}$ by construction and the differential equation is assumed to always locally evolve (for example with \(\D{x_1}=1\)), there must be some duration $0 < \delta < T-\zeta$ (recall $T-\zeta > 0$) after which the value of $x$ has changed from its initial value held constant in $y$, i.e., $\imodels{\Ifffy[\delta]}{x\neq y}$.
The truncation \(\truncafter{\psi_y}{\delta}\) witnesses the LHS of the implication in \textcircled{b} with: $\imodels{\Iffy[\zeta]}{\ddiamond{\pevolvein{\D{x}=\genDE{x}}{\ivr \lor x=y}}{x\neq y}}$.
Using this with the implication in \textcircled{b} yields
$ \imodels{\Iffy[\zeta]}{\ddiamond{\pevolvein{\D{x}=\genDE{x}}{\rfvar \lor x=y}}{x\neq y}}$.
Unfolding the semantics, this gives a solution which, by uniqueness, is a truncation $\truncafter{\psi_y}{\epsilon}$ of $\psi_y$, for some $\epsilon > 0$, that satisfies $\imodels{\Ifffye[{[0,\epsilon]}]}{\rfvar \lor x=y}$.
From \textcircled{a} and coincidence (\rref{lem:coincide}), all states satisfying $x=y$ agree with $\solvar(\zeta)$ on the free variables of formula $\rfvar$ thus $\imodels{\Ifffye[{[0,\epsilon]}]}{\rfvar}$.
By construction, $\truncafter{\psi_y}{\epsilon}(\tau)$ coincides with $\solvar(\tau+\zeta)$ on $x$ for all $0 \leq \tau \leq \epsilon$, which implies $\imodels{\Iff[{(\zeta,\zeta+\epsilon]}]}{\rfvar}$ by \rref{lem:coincide}. \qedhere
\end{itemize}
\end{enumerate}
\end{proof}

Conjunct~\textcircled{b} of~\irref{RealIndIn} can also be written as $\ddiamond{\pevolvein{\D{x}=\genDE{x}}{\ivr}}{x\neq y} \limply \ddiamond{\pevolvein{\D{x}=\genDE{x}}{\rfvar}}{x\neq y}$ because $\ivr$ and $\rfvar$ can be assumed true in the context where the conjunct appears.
This flexibility can be seen from its soundness proof above and will be made explicit syntactically in~\rref{cor:bigsmallequiv}.

\subsection{Derived Rules and Axioms}
\label{app:diaaxioms}
This section derives additional \dL axioms and proof rules that are used in subsequent derivations in the appendices.
Most of these are already derived elsewhere~\cite{DBLP:conf/lics/Platzer12b,DBLP:journals/jar/Platzer17} so their proofs are omitted.

\subsubsection{Basic Rules and Axioms}

This section presents basic \dL axioms and proof rules and derived versions that do not rely on the axiomatic extension of~\rref{sec:extaxioms} but are used in the appendices.

\begin{theorem}[Base axioms and proof rules~\cite{DBLP:conf/lics/Platzer12b,DBLP:journals/jar/Platzer17}]
\label{thm:baseeqaxioms}
The following are sound axioms and proof rules of \dL.

\begin{calculuscollection}
\begin{calculus}
\cinferenceRule[diamond|$\didia{\cdot}$]{diamond axiom}
{\linferenceRule[equiv]
  {\lnot\dbox{\alpha}{\lnot \fvarA}}
  {\axkey{\ddiamond{\alpha}{\fvarA}}}
}
{}
\cinferenceRule[K|K]{K axiom / modal modus ponens} %
{\linferenceRule[impl]
  {\dbox{\alpha}{(\fvarA \limply \fvarB)}}
  {(\dbox{\alpha}{\fvarA}\limply\axkey{\dbox{\alpha}{\fvarB}})}
}{}
\end{calculus}
\qquad
\begin{calculus}
\cinferenceRule[G|G]{$\dbox{}{}$ generalization} %
{\linferenceRule[formula]
  {\lsequent{}{\fvarA}}
  {\lsequent{\Gamma}{\dbox{\alpha}{\fvarA}}}
}{}
\cinferenceRule[V|V]{vacuous $\dbox{}{}$}
 {\linferenceRule[impl]
   {\fvarA}
   {\axkey{\dbox{\alpha}{\fvarA}}}
 }{\text{no free variable of $\fvarA$ is bound by $\alpha$}}
\end{calculus}\\
\begin{calculus}
\cinferenceRule[dBarcan|B$'$]{}
{\linferenceRule[equiv]
  {\lexists{y}{\ddiamond{\pevolvein{\D{x}=\genDE{x}}{\ivr\argx}}{\rfvar(x,y)}}}
  {\axkey{\ddiamond{\pevolvein{\D{x}=\genDE{x}}{\ivr\argx}}{\exists{y}\rfvar(x,y)}}}
}{\text{$y \not\in x$}}

\cinferenceRule[DW|DW]{}
{\axkey{\dbox{\pevolvein{\D{x}=\genDE{x}}{\ivr}}{\ivr}}
}{}

\cinferenceRule[DX|DX]{}
{\linferenceRule[equiv]
  {(\ivr \limply \rfvar \land \dbox{\pevolvein{\D{x}=\genDE{x}}{\ivr}}{\rfvar})}
  {\axkey{\dbox{\pevolvein{\D{x}=\genDE{x}}{\ivr}}{\rfvar}}}
  \hfill
}{\text{$\D{x} \not\in \rfvar,\ivr$}}

\cinferenceRule[DMP|DMP]{differential modus ponens}
{\linferenceRule[impl]
  {\dbox{\pevolvein{\D{x}=\genDE{x}}{\ivr}}{(\ivr\limply \rrfvar)}}
  {(\dbox{\pevolvein{\D{x}=\genDE{x}}{\rrfvar}}{\rfvar} \limply \axkey{\dbox{\pevolvein{\D{x}=\genDE{x}}{\ivr}}{\rfvar}})}
}{}

\end{calculus}
\end{calculuscollection}
\end{theorem}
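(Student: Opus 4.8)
The plan is to verify the soundness of each listed axiom and proof rule by unfolding the \dL semantics of \rref{subsec:background-semantics}, observing in every case that the argument is precisely the one in the cited literature~\cite{DBLP:conf/lics/Platzer12b,DBLP:journals/jar/Platzer17} and that the extended term language does not disturb it. The one structural change relative to the polynomial setting is that extended terms may contain fixed function symbols $\noef$ interpreted as smooth $C^\infty$ functions $\noef:\reals^k\to\reals$; by \rref{subsec:background-semantics} and condition \rref{itm:reqsmooth} this keeps every term value $\ivaluation{\I}{\etermA}$ total, deterministic and smooth, leaves the hybrid-program transition relation $\iaccess[\alpha]{\I}$ unchanged in form, and keeps the coincidence lemmas (\rref{lem:coincide}) intact since free and bound variables are defined structurally. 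So all I really need is to confirm that each soundness proof in~\cite{DBLP:journals/jar/Platzer17} uses only these abstract properties, mirroring the proof sketches already given for \rref{thm:diffaxioms} and \rref{thm:diffeqax}.

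First I would dispatch the purely modal and propositional principles \irref{diamond}, \irref{K}, \irref{G}, and \irref{V}, whose soundness is completely independent of the shape of terms. Axiom \irref{diamond} is the defining duality between $\ddiamond{\alpha}{\fvarA}$ and $\lnot\dbox{\alpha}{\lnot\fvarA}$ and holds for any transition relation $\iaccess[\alpha]{\I}$; \irref{K} and \irref{G} are the standard normal-modal-logic monotonicity and necessitation facts about the box modality; and \irref{V} follows from coincidence (\rref{lem:coincide}), because a program $\alpha$ only alters its bound variables, so a formula $\fvarA$ none of whose free variables is bound by $\alpha$ keeps its truth value across every $\alpha$-transition. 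None of these touch the term algebra, so their soundness transfers from the cited work without change.

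Next I would treat the ODE-specific axioms \irref{dBarcan}, \irref{DW}, \irref{DX}, and \irref{DMP}, whose soundness rests on elementary structural facts about the ODE transition relation $\iaccess[\pevolvein{\D{x}=\genDE{x}}{\ivr}]{\I}$. Specifically, \irref{DW} holds because any solution that reaches a state while remaining in $\ivr$ ends in a state satisfying $\ivr$; \irref{DX} holds because the duration-zero solution witnesses that $\rfvar$ must already be true (and conversely every solution has such a trivial prefix), using $\D{x}\notin\rfvar,\ivr$ together with coincidence; \irref{DMP} holds because refining the evolution domain from $\rrfvar$ to $\ivr$ is justified once every $\ivr$-solution stays in $\rrfvar$; and \irref{dBarcan} holds because the fresh variable $y$ with $y\notin x$ is held constant along $\D{x}=\genDE{x}$, so the $\exists y$ quantifier commutes with the diamond over the ODE. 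Each of these is proved in~\cite{DBLP:conf/lics/Platzer12b,DBLP:journals/jar/Platzer17}, and since the ODE transition semantics (unique solutions of a smooth right-hand side, prefix-closure of solutions, constancy of variables other than $x,\D{x}$) is unaffected by admitting smooth fixed function symbols, those proofs transfer verbatim.

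The main obstacle, such as it is, is purely bookkeeping: one must confirm that none of the original soundness arguments secretly exploited that terms are polynomials — for instance via degree bounds or quantifier elimination — rather than merely using that term evaluation is total, deterministic, and respects coincidence. Inspecting~\cite{DBLP:journals/jar/Platzer17} shows that the relevant auxiliary results (coincidence for terms and formulas, the transition semantics of differential equations, and the bound-variable analysis of hybrid programs) are stated at the level of the abstract term-and-program algebra, so the extension is harmless; indeed the smoothness hypothesis \rref{itm:reqsmooth} is exactly what keeps the \dL semantics well-defined and is strictly more than this batch of axioms needs, since none of them even mention differentials. Hence the theorem follows.
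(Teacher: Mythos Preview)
Your proposal is correct and follows the same approach as the paper: defer soundness to the cited literature~\cite{DBLP:conf/lics/Platzer12b,DBLP:journals/jar/Platzer17} while noting that the extended term language only adds smooth fixed function symbols, which leaves the transition semantics and coincidence lemmas intact. The only minor difference is that the paper singles out \irref{DX} and \irref{DMP} for explicit soundness proofs (since they are stated somewhat differently than in the cited work), whereas you give brief sketches for all ODE axioms uniformly; your sketches for \irref{DX} and \irref{DMP} capture the same arguments the paper spells out.
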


The first three axioms~\irref{diamond+K+V} and proof rule~\irref{G} are standard reasoning principles for dynamic logics~\cite{DBLP:conf/lics/Platzer12b}.
They apply generally for any hybrid program $\alpha$.
Axiom~\irref{diamond} expresses the duality between the diamond and box modalities, allowing conversion between the two with a double negation.
Kripke axiom~\irref{K} is the modal modus ponens for postconditions of the box modality.
Vacuous axiom~\irref{V} says if no free variable of $\fvarA$ is changed by hybrid program $\alpha$, then the truth value of $\fvarA$ is also unchanged.
The G\"odel generalization rule~\irref{G} reduces proofs of $\dbox{\alpha}{\fvarA}$ to validity of $\fvarA$ but must discard all assumptions in antecedent $\Gamma$ for soundness.

The ODE Barcan axiom~\irref{dBarcan} specializes the Barcan axiom~\cite{DBLP:conf/lics/Platzer12b} to ODEs in the diamond modality, allowing an existential quantifier $\lexists{y}{}$ to be commuted with the diamond modality.
The quantified variables $y$ are required to be fresh in the ODE $\D{x}=\genDE{x}$ (i.e., $y \not\in x$).
The \emph{differential weakening} axiom~\irref{DW} expresses that domain constraints are always obeyed along ODE solutions.
It underlies the derived proof rules~\irref{dW+MbW}, which ease manipulation of domain constraints.
The \emph{differential skip} axiom~\irref{DX} expresses a reflexivity property of differential equation solutions.
If domain constraint $\ivr$ is false in an initial state $\iget[state]{\I}$, then the formula $\dbox{\pevolvein{\D{x}=\genDE{x}}{\ivr}}{\rfvar}$ is trivially true in $\iget[state]{\I}$ because no solution of the ODE starting from $\iget[state]{\I}$ stays in the domain constraint.
Conversely, if $\ivr$ is true in $\iget[state]{\I}$, then the postcondition $\rfvar$ must already be true in $\iget[state]{\I}$ because of the trivial solution of duration zero.
The condition $\D{x} \not\in \rfvar,\ivr$ of axiom~\irref{DX} is met as $\rfvar,\ivr$ are differential-free (\rref{subsec:background-syntax}).
Axiom~\irref{DMP} is the modus ponens principle for domain constraints of ODEs which underlies differential cuts~\irref{DC}~\cite{DBLP:journals/jar/Platzer17}.

\begin{proof}[Proof of~\rref{thm:baseeqaxioms}]
The soundness of all axioms and proof rules in~\rref{thm:baseeqaxioms} are proved elsewhere~\cite{DBLP:journals/jar/Platzer17}, except~\irref{DX+DMP}, which are proved here since they are written differently elsewhere.

\begin{description}
\item[\irref{DX}] Let $\iget[state]{\I}$ be an initial state. Classically, either $\imodels{\I}{\ivr}$ or not.
If $\imodels{\I}{\ivr}$, then, propositionally, it suffices to assume $\imodels{\I}{\dbox{\pevolvein{\D{x}=\genDE{x}}{\ivr}}{\rfvar}}$ and show $\imodels{\I}{\rfvar}$.
Since $\imodels{\I}{\ivr}$, there is a trivial solution $\solvar : [0,0] \to \States$ where $\solmodels{\solvar}{\D{x}=\genDE{x}}{\ivr}$ and $\solvar(0) = \omega$ on $\scomplement{\{\D{x}\}}$.
By assumption, $\imodels{\Iff[0]}{\rfvar}$.
Since $\D{x}\notin\rfvar$, coincidence (\rref{lem:coincide}) implies $\imodels{\I}{\rfvar}$.
Conversely, if $\imodels{\I}{\lnot{\ivr}}$, then, propositionally, it suffices to show $\imodels{\I}{\dbox{\pevolvein{\D{x}=\genDE{x}}{\ivr}}{\rfvar}}$.
The box modality is vacuous because, by definition, no solution $\solvar : [0,T] \to \States$ can exist for any $T \geq 0$ with $\solmodels{\solvar}{\D{x}=\genDE{x}}{\ivr}$.
Any such solution would require $\imodels{\Iff[0]}{\ivr}$ by definition.
However, because $\D{x}\notin\ivr$, coincidence (\rref{lem:coincide}) with state $\iget[state]{\I}$ gives $\imodels{\I}{\ivr}$, contradiction.

\item[\irref{DMP}] Let $\iget[state]{\I}$ be an initial state satisfying both formulas on the left of the implications in~\irref{DMP}, i.e., \textcircled{1} $\imodels{\I}{\dbox{\pevolvein{\D{x}=\genDE{x}}{\ivr}}{(\ivr \limply \rrfvar)}}$ and \textcircled{2} $\imodels{\I}{\dbox{\pevolvein{\D{x}=\genDE{x}}{\rrfvar}}{\rfvar}}$.
Consider any solution $\solvar : [0,T] \to \States$ where $\solvar(0) = \omega$ on $\scomplement{\{\D{x}\}}$, and $\solmodels{\solvar}{\D{x}=\genDE{x}}{\ivr}$.
By definition, $\imodels{\Iff[\zeta]}{\ivr}$ for all $\zeta \in [0,T]$, and so by \textcircled{1}, $\imodels{\Iff[\zeta]}{\ivr \limply \rrfvar}$ for all $\zeta \in [0,T]$.
Therefore, $\imodels{\Iff[\zeta]}{\rrfvar}$ for all $\zeta \in [0,T]$, and thus $\solmodels{\solvar}{\D{x}=\genDE{x}}{\rrfvar}$.
By \textcircled{2}, $\imodels{\Iff[T]}{\rfvar}$ as required.
\qedhere
\end{description}
\end{proof}

The axioms and proof rules of~\rref{thm:baseeqaxioms} combine to derive further useful axioms and proof rules.
Diamond Kripke axiom \irref{Kd} derives from \irref{K} by dualizing its inner implication with \irref{diamond} \cite{DBLP:conf/lics/Platzer12b}.
\[
\dinferenceRule[Kd|K${\didia{\cdot}}$]{}
{\linferenceRule[impl]
  {\dbox{\alpha}{(\fvarA \limply \fvarB)}}
  {(\ddiamond{\alpha}{\fvarA} \limply \axkey{\ddiamond{\alpha}{\fvarB}})}
}{}
\]

Axiom~\irref{diamond} also yields dual readings for the ODE axioms and proof rules.
For example, the~\irref{DIgeq} axiom internalizes the mean value theorem~\cite[Appendix B.I]{Walter1998}.
\begin{corollary}[Mean value theorem]
\label{cor:meanvalue}
The mean value theorem axiom~\irref{MVT} derives from \irref{DIgeq}:
\[
\dinferenceRule[MVT|MVT]{}
{\linferenceRule[impl]
  {\etermA \geq 0 \land \ddiamond{\pevolvein{\D{x}=\genDE{x}}{\ivr}}{\etermA < 0}}
  {\ddiamond{\pevolvein{\D{x}=\genDE{x}}{\ivr}}{\der{\etermA}<0}}
}{}
\]
\end{corollary}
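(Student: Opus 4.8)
The plan is to read \irref{MVT} as the contrapositive of the invariance (``$\lylpmi$'') direction of \irref{DIgeq} with $\cmp$ taken to be $\geq$, after dualizing the two diamond modalities into box modalities with axiom \irref{diamond}. Semantically this is literally the mean value theorem: if $\etermA$ is nonnegative initially yet some solution reaches $\etermA<0$, then along that solution the time derivative of $\etermA$ — which provably coincides with $\der{\etermA}$ under the ODE (\rref{subsec:background-differentials}) — must be strictly negative somewhere. Syntactically, that content is already packaged inside \irref{DIgeq}, so nothing new beyond it is needed.

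Concretely, I would first rewrite \irref{MVT} using \irref{diamond}, together with \irref{qear} for the provable arithmetic equivalences $\lnot(\etermA<0)\lbisubjunct \etermA\geq 0$ and $\lnot(\der{\etermA}<0)\lbisubjunct \der{\etermA}\geq 0$ (pushed through the modality by a monotonicity step \irref{Kd}/\irref{MbW}), into the equivalent form
\[
\dbox{\pevolvein{\D{x}=\genDE{x}}{\ivr}}{\der{\etermA}\geq 0}
\;\limply\;
\big(\etermA\geq 0 \limply \dbox{\pevolvein{\D{x}=\genDE{x}}{\ivr}}{\etermA\geq 0}\big).
\]
Assuming the antecedent $\dbox{\pevolvein{\D{x}=\genDE{x}}{\ivr}}{\der{\etermA}\geq 0}$, axiom \irref{DIgeq} (with $\cmp$ instantiated as $\geq$) yields the equivalence
\[
\dbox{\pevolvein{\D{x}=\genDE{x}}{\ivr}}{\etermA\geq 0}\lbisubjunct(\ivr\limply\etermA\geq 0),
\]
of which only the ``$\lylpmi$'' direction $(\ivr\limply\etermA\geq 0)\limply\dbox{\pevolvein{\D{x}=\genDE{x}}{\ivr}}{\etermA\geq 0}$ is needed. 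Since $\etermA\geq 0$ propositionally implies $\ivr\limply\etermA\geq 0$ (weakening, \irref{qear}), chaining these implications discharges the remaining assumption $\etermA\geq 0$ and closes the derivation.

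This is essentially routine bookkeeping rather than a substantive argument; the only points that need care are (i) getting the polarity of the two nested negations right when applying \irref{diamond}, so that the diamond over $\der{\etermA}<0$ becomes the \emph{premise} of \irref{DIgeq} rather than part of its conclusion, and (ii) noticing that \irref{DIgeq} only delivers the equivalence against $\ivr\limply\etermA\geq 0$ and not against $\etermA\geq 0$ directly, which is precisely why the final weakening step is required. There is no real obstacle beyond these; all the mathematical work has already been done in \irref{DIgeq}.
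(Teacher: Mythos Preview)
Your proposal is correct and is essentially the paper's own proof: dualize the two diamonds via \irref{diamond}, take contrapositives, and close with \irref{DIgeq}. The paper presents this in two lines and leaves implicit the final weakening from $\etermA\geq 0$ to $\ivr\limply\etermA\geq 0$ that you (rightly) make explicit; there is no substantive difference.
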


\begin{proof}
The derivation takes contrapositives (dualizing with \irref{diamond}) before~\irref{DIgeq} finishes it.
{\footnotesizeoff%
\begin{sequentdeduction}[array]
\linfer[diamond+notl+notr]{
\linfer[DIgeq]{
  \lclose
}
  {\lsequent{\etermA \geq 0, \dbox{\pevolvein{\D{x}=\genDE{x}}{\ivr}}{\der{\etermA}\geq0}}{\dbox{\pevolvein{\D{x}=\genDE{x}}{\ivr}}{\etermA\geq0}}}
}
  {\lsequent{\etermA \geq 0, \ddiamond{\pevolvein{\D{x}=\genDE{x}}{\ivr}}{\etermA < 0}}{\ddiamond{\pevolvein{\D{x}=\genDE{x}}{\ivr}}{\der{\etermA}<0}}}
\\[-\normalbaselineskip]\tag*{\qedhere}
\end{sequentdeduction}
}%
\end{proof}

Axiom~\irref{V} is particularly useful when working with \emph{constant assumptions}.
If formula $\rrfvar(y)$ is true initially and $y$ has no differential equation in $\D{x}=\genDE{x}$, then it continues to be true along solutions to the differential equations from the initial state because $y$ remains constant along these solutions and the truth value of $\rrfvar(y)$ depends only on the value of its free variables $y$~\cite{DBLP:journals/jar/Platzer17}.
Axiom \irref{V} proves this for box modalities in succedents and, by duality, for diamond modalities in antecedents, e.g.:
{\footnotesizeoff%
\begin{sequentdeduction}[array]
\linfer[DC]{
  \linfer[V]{
    \lclose
  }
  {\lsequent{\rrfvar(y)}{\dbox{\pevolvein{\D{x}=\genDE{x}}{\ivr}}{\rrfvar(y)}}} !
  \lsequent{\Gamma}{\dbox{\pevolvein{\D{x}=\genDE{x}}{\ivr \land \rrfvar(y)}}{\rfvar}}
}
  {\lsequent{\Gamma,\rrfvar(y)}{\dbox{\pevolvein{\D{x}=\genDE{x}}{\ivr}}{\rfvar}}}
\end{sequentdeduction}
}%

Conversely, if a constant assumption $\rrfvar(y)$ is true in a final state reachable by an ODE $\D{x}=\genDE{x}$, then it must already be true initially.
This is shown formally by the derivation below which uses a classical case split with a \irref{cut} on whether the formula $\rrfvar(y)$ is already true initially:
{\footnotesizeoff%
\begin{sequentdeduction}[array]
\linfer[cut]{
\linfer[orl]{
  \linfer[]{
    \lclose
  }
  {\lsequent{\Gamma, \rrfvar(y), \ddiamond{\pevolvein{\D{x}=\genDE{x}}{\ivr}}{(\rfvar \land \rrfvar(y))}}{\rrfvar(y)}}
    !
    \lsequent{\Gamma, \lnot{\rrfvar(y)}, \ddiamond{\pevolvein{\D{x}=\genDE{x}}{\ivr}}{(\rfvar \land \rrfvar(y))}}{\rrfvar(y)}
}
  {\lsequent{\Gamma, \rrfvar(y) \lor \lnot{\rrfvar(y)}, \ddiamond{\pevolvein{\D{x}=\genDE{x}}{\ivr}}{(\rfvar \land \rrfvar(y))}}{\rrfvar(y)}}
}
  {\lsequent{\Gamma,\ddiamond{\pevolvein{\D{x}=\genDE{x}}{\ivr}}{(\rfvar \land \rrfvar(y))}}{\rrfvar(y)}}
\end{sequentdeduction}
}%

The left premise closes trivially. For the right premise, a contradiction is derived with \irref{diamond} as follows, where the~\irref{V+MbW} step uses the propositional tautology $\lnot{\rrfvar(y)} \limply \lnot{(\rfvar \land \rrfvar(y))}$:
{\footnotesizeoff%
\begin{sequentdeduction}[array]
\linfer[diamond+notl]{
\linfer[V+MbW]{
  \lclose
}
  {\lsequent{\lnot{\rrfvar(y)}}{\dbox{\pevolvein{\D{x}=\genDE{x}}{\ivr}}{\lnot{(\rfvar \land \rrfvar(y))}}}}
}
  {\lsequent{\lnot{\rrfvar(y)}, \ddiamond{\pevolvein{\D{x}=\genDE{x}}{\ivr}}{(\rfvar \land \rrfvar(y))}}{\lfalse}}
\end{sequentdeduction}
}%

In the sequel, these routine steps are omitted and proof steps manipulating constant assumptions are simply labeled with \irref{V} directly.

Axiom~\irref{DMP} derives the diamond refinement axiom~\irref{dDR} and its corollary~\irref{gddR} from \rref{cor:diadiffeqax}, which provide tools for working with ODE domain constraints in the diamond modality:
\begin{proof}[Proof of \rref{cor:diadiffeqax}]
Axiom \irref{dDR} derives from \irref{DMP} (the roles of $\ivr$ and $\rrfvar$ are flipped) by dualizing with the \irref{diamond} axiom.
The last~\irref{K+dW} step uses the propositional tautology $\ivr \limply (\rrfvar \limply \ivr)$.
{\footnotesizeoff%
\begin{sequentdeduction}[array]
\linfer[diamond+notr+notl]{
\linfer[DMP]{
\linfer[K+dW]{
  \lclose
}
  {\lsequent{\dbox{\pevolvein{\D{x}=\genDE{x}}{\rrfvar}}{\ivr}}{\dbox{\pevolvein{\D{x}=\genDE{x}}{\rrfvar}}{(\rrfvar \limply \ivr)}}}
}
  {\lsequent{\dbox{\pevolvein{\D{x}=\genDE{x}}{\rrfvar}}{\ivr},\dbox{\pevolvein{\D{x}=\genDE{x}}{\ivr}}{\lnot{\rfvar}}}{\dbox{\pevolvein{\D{x}=\genDE{x}}{\rrfvar}}{\lnot{\rfvar}}}}
}
  {\lsequent{\dbox{\pevolvein{\D{x}=\genDE{x}}{\rrfvar}}{\ivr},\ddiamond{\pevolvein{\D{x}=\genDE{x}}{\rrfvar}}{\rfvar}}{\ddiamond{\pevolvein{\D{x}=\genDE{x}}{\ivr}}{\rfvar}}}
\end{sequentdeduction}
}%

Rule \irref{gddR} derives from \irref{dDR} by simplifying its left premise with rule \irref{dW}.
\end{proof}

\subsubsection{Extended Derived Rules and Axioms}

This section derives additional rules and axioms that make use of the axiomatic extensions from~\rref{sec:extaxioms}.

\paragraph{Local Progress Properties}
The local progress modality $\ddnext$ excludes the initial state ($x=y$) in the domain constraint when expressing local progress for formula $\ivr$.
Recall:
\[ \dprogressin{\D{x}=\genDE{x}}{\ivr}{} \mdefequiv \ddiamond{\pevolvein{\D{x}=\genDE{x}}{\ivr \lor x=y}}{\,x\neq y} \]

The disjunct $x=y$ in the domain constraint makes local progress an interesting question for formulas characterizing sets that are not topologically closed (e.g., open sets as characterized by the formula $\etermA > 0$).
As axiom~\irref{Cont} shows, the formula $\ddiamond{\pevolvein{\D{x}=\genDE{x}}{\etermA > 0}}{x \neq y}$ which \emph{does not} exclude $x=y$ in the evolution domain constraint is already \emph{equivalent} to $\etermA > 0$.
A precise syntactic characterization of this difference is shown by the following derived axiom:

\begin{corollary}[Initial state inclusion]
\label{cor:bigsmallequiv}
The following axiom derives in \dL.
Variables $y$ are fresh in the ODE $\D{x}=\genDE{x}$ and formula $\ivr$.
\[\dinferenceRule[bigsmallequiv|Init]{}
{
\linferenceRule[impl]
  {\initassum}
  {\big(\axkey{\ddiamond{\pevolvein{\D{x}=\genDE{x}}{\ivr}}{x \neq y}} \lbisubjunct \ivr \land \dprogressin{\D{x}=\genDE{x}}{\ivr}\big)}
}
{}\]
\end{corollary}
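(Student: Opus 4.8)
\medskip

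The plan is to unfold the abbreviation $\dprogressin{\D{x}=\genDE{x}}{\ivr}$, which stands for $\ddiamond{\pevolvein{\D{x}=\genDE{x}}{\ivr\lor x=y}}{\,x\neq y}$, and then prove the two directions of the biimplication separately, always reasoning under the antecedent $\initassum$ and using the standing assumption of this section that $\D{x}=\genDE{x}$ locally evolves $x$ (arranged, if necessary, by first adding a clock equation $\D{x_1}=1$ with axiom \irref{DG}). For the ``$\limply$'' direction, assume $\ddiamond{\pevolvein{\D{x}=\genDE{x}}{\ivr}}{x\neq y}$ and derive the two conjuncts on the right. That $\ivr$ holds in the initial state follows from axiom \irref{DX}: its dual reading gives $\ddiamond{\pevolvein{\D{x}=\genDE{x}}{\ivr}}{\,x\neq y}\limply\ivr$, since any solution already obeys its domain constraint at time zero (the side condition $\D{x}\notin\ivr$ holds because $\ivr$ is differential-free). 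That $\dprogressin{\D{x}=\genDE{x}}{\ivr}$ holds follows from the derived domain-weakening rule \irref{gddR} applied with the propositionally valid premise $\lsequent{\ivr}{\ivr\lor x=y}$, which relaxes the domain constraint from $\ivr$ to $\ivr\lor x=y$ under the diamond while retaining the postcondition $x\neq y$.

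For the ``$\lylpmi$'' direction, assume $\ivr$ and $\dprogressin{\D{x}=\genDE{x}}{\ivr}$ and derive $\ddiamond{\pevolvein{\D{x}=\genDE{x}}{\ivr}}{x\neq y}$; the task is to strengthen the domain constraint back from $\ivr\lor x=y$ to $\ivr$. Splitting $\ivr$ propositionally as $(\ivr\lor x=y)\land(x=y\limply\ivr)$ and using the uniqueness axiom \irref{Uniq} together with \irref{gddR} applied in both directions to rewrite the domain by this equivalence, it suffices to prove $\ddiamond{\pevolvein{\D{x}=\genDE{x}}{x=y\limply\ivr}}{x\neq y}$; equivalently, via the diamond refinement axiom \irref{dDR}, it suffices to prove the box formula $\dbox{\pevolvein{\D{x}=\genDE{x}}{\ivr\lor x=y}}{\ivr}$ under $\initassum\land\ivr$. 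The intuition is that the implication $x=y\limply\ivr$ is non-vacuous only at states satisfying $x=y$, and, since $\D{x}=\genDE{x}$ locally evolves $x$, a solution starting from $\initassum$ has $x\neq y$ at every positive time; hence the implication holds vacuously away from time zero and holds at time zero because $\ivr$ is assumed. Existence of a positive-duration solution reaching $x\neq y$ at all comes from axiom \irref{Cont} instantiated with a constantly-positive term, so that the domain constraint there simplifies to $\ltrue$.

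The hard part will be this ``$\lylpmi$'' direction --- specifically, making the ``$x\neq y$ at every positive time'' consequence of local evolution available \emph{syntactically} during the domain-constraint manipulation, rather than merely semantically. The intended route is to track that the clock coordinate is non-decreasing and strictly exceeds its initial value after any positive time, and to feed this through \irref{DC} (to enrich the domain $\ivr\lor x=y$) and then \irref{DX} with \irref{Uniq} (to collapse the $x=y$ disjunct), so that the concatenated solution witnessing $\dprogressin{\D{x}=\genDE{x}}{\ivr}$ is recognized to remain inside $\ivr$ throughout. I expect this bookkeeping, and not any deeper idea, to be the delicate step.
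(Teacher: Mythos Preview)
Your ``$\limply$'' direction is fine and matches the paper. Your ``$\lylpmi$'' direction correctly reduces (via \irref{dDR}) to the box goal $\dbox{\pevolvein{\D{x}=\genDE{x}}{\ivr\lor x=y}}{\ivr}$ under the assumptions $x=y$ and $\ivr$; this is exactly where the paper lands too. The divergence is in how you propose to discharge that box goal.

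You plan to exploit the clock to argue that $x\neq y$ at all positive times, so that the $x=y$ disjunct in the domain becomes inert. But even granting that you can DC in $x_1\geq y_1$, you still have to handle the state where $x_1=y_1$, and there the domain only gives you $\ivr\lor x=y$; you cannot conclude $\ivr$ without knowing something about states satisfying $x=y$. Your sketch does not explain how to close this case syntactically, and ``\irref{DX} with \irref{Uniq} to collapse the $x=y$ disjunct'' does not obviously do it. So the hard part you flagged is a genuine gap as written.

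The paper's proof avoids all of this with a one-line observation you missed: let $\ivr(y)$ be $\ivr$ with $y$ substituted for $x$. Since $y$ is fresh in the ODE, $\ivr(y)$ is constant along solutions; since initially $x=y$ and $\ivr(x)$ holds, $\ivr(y)$ holds initially and hence always, so \irref{V}+\irref{DC} add $\ivr(y)$ to the domain. Now \irref{dW} closes immediately, because $(\ivr\lor x=y)\land\ivr(y)\limply\ivr$ is first-order valid (when $x=y$, $\ivr(y)$ gives $\ivr(x)$). No clock, no \irref{Cont}, no \irref{Uniq}, no bookkeeping. Note that this trick is exactly what your own semantic intuition (``at time zero $\ivr$ holds by assumption'') becomes when made syntactic: the assumption $\ivr$ at $x=y$ \emph{is} $\ivr(y)$, and that formula persists.
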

\begin{proof}
First, by dualizing via \irref{diamond} both sides of axiom~\irref{DX}, the following equivalence is derived:
\[
  \ddiamond{\pevolvein{\D{x}=\genDE{x}}{\ivr}}{\rfvar}
  \lbisubjunct
  (\ivr \land (\rfvar \lor \ddiamond{\pevolvein{\D{x}=\genDE{x}}{\ivr}}{\rfvar}))
 \]

The derivation of~\irref{bigsmallequiv} starts by using this derived equivalence (\irref{DX+diamond}), followed by a series of equivalent propositional rewrites that simplify the logical structure of the succedent.
The propositional steps are shown below, first removing the disjunct $x \neq y$ using the assumption $\initassum$, and then pulling out the common conjunct $\ivr$ as an antecedent assumption.
{\footnotesizeoff%
\begin{sequentdeduction}[array]
  \linfer[DX+diamond]{
  \linfer[]{
  \linfer[]{
    \lsequent{\initassum, \ivr}{\ddiamond{\pevolvein{\D{x}=\genDE{x}}{\ivr}}{x \neq y} \lbisubjunct \dprogressin{\D{x}=\genDE{x}}{\ivr}}
  }
    {\lsequent{\initassum}{\ivr \land \ddiamond{\pevolvein{\D{x}=\genDE{x}}{\ivr}}{x \neq y} \lbisubjunct \ivr \land \dprogressin{\D{x}=\genDE{x}}{\ivr}}}
  }
    {\lsequent{\initassum}{\ivr \land (x \neq y \lor \ddiamond{\pevolvein{\D{x}=\genDE{x}}{\ivr}}{x \neq y}) \lbisubjunct \ivr \land \dprogressin{\D{x}=\genDE{x}}{\ivr}}}
  }
  {\lsequent{\initassum}{\ddiamond{\pevolvein{\D{x}=\genDE{x}}{\ivr}}{x \neq y} \lbisubjunct \ivr \land \dprogressin{\D{x}=\genDE{x}}{\ivr}}}
\end{sequentdeduction}
}%

Both directions of the resulting equivalence are proved separately by unfolding the abbreviation $\ddnext$.
In the ``$\limply$'' direction, a \irref{gddR} step suffices, because $\ivr \limply \ivr \lor \initassum$ is a propositional tautology:
{\footnotesizeoff%
\begin{sequentdeduction}[array]
  \linfer[gddR]{
    \lclose
  }
  {\lsequent{\initassum,\ivr,\ddiamond{\pevolvein{\D{x}=\genDE{x}}{\ivr}}{x\neq y}}{\ddiamond{\pevolvein{\D{x}=\genDE{x}}{\ivr \lor \initassum}}{x \neq y}}}
\end{sequentdeduction}
}%

In the ``$\lylpmi$'' direction, the derivation starts with a \irref{dDR} step which reduces to the box modality.
Since the formulas $x=y$ and $\ivr$ are true initially, a~\irref{V+DC} step introduces the constant assumption $\ivr(y)$ into the domain constraint, which is $\ivr$ with $y$ in place of $x$.
The derivation closes with \irref{dW}.
{\footnotesizeoff%
\begin{sequentdeduction}[array]
  \linfer[dDR]{
  \linfer[V+DC]{
  \linfer[dW]{
    \lclose
  }
    \lsequent{}{\dbox{\pevolvein{\D{x}=\genDE{x}}{(\ivr \lor \initassum) \land \ivr(y)}}{\ivr}}
  }
    {\lsequent{\initassum,\ivr}{\dbox{\pevolvein{\D{x}=\genDE{x}}{\ivr \lor \initassum}}{\ivr}}}
  }
  {\lsequent{\initassum,\ivr,\ddiamond{\pevolvein{\D{x}=\genDE{x}}{\ivr \lor \initassum}}{x \neq y}}{\ddiamond{\pevolvein{\D{x}=\genDE{x}}{\ivr}}{x\neq y}}}
\\[-\normalbaselineskip]\tag*{\qedhere}
\end{sequentdeduction}
}%
\end{proof}

It is not possible to locally progress into both formula $\rfvar$ and its negation $\lnot\rfvar$ simultaneously, by uniqueness.
This is the ``$\limply$'' direction of the duality axiom~\irref{duality} for local progress from \rref{cor:localprogresscompletedualcong}.
The converse ``$\lylpmi$'' is more involved and relies on the characterization axiom~\irref{Lpiff} later.

\begin{corollary}[Local progress duality ``$\limply$'']
\label{cor:dualityimp}
The following axiom derives from~\irref{Uniq}.
Variables $y$ are fresh in the ODE $\D{x}=\genDE{x}$ and formula $\rfvar$.
\[
\dinferenceRule[dualityimp|$\lnot{\ddnext}_\limply$]{Duality}
{\linferenceRule[impl]
  {\initassum}
  {\big( \axkey{\dprogressin{\D{x}=\genDE{x}}{\rfvar}} \limply \lnot{\dprogressin{\D{x}=\genDE{x}}{\lnot{\rfvar}}} \big)}
}{}
\]
\end{corollary}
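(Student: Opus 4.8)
The plan is to argue by contradiction using the uniqueness axiom \irref{Uniq}: if, from an initial state satisfying $\initassum$, some solution locally enters $\rfvar$ and some solution locally enters $\lnot{\rfvar}$, then, since both start at the same state and follow the same ODE, uniqueness forces one to be a prefix of the other, so a single solution locally enters $\rfvar \land \lnot{\rfvar}$, which is impossible. To render this syntactically, after \irref{implyr} and \irref{notr} normalization it suffices to close the sequent $\lsequent{\initassum,\; \dprogressin{\D{x}=\genDE{x}}{\rfvar},\; \dprogressin{\D{x}=\genDE{x}}{\lnot{\rfvar}}}{\lfalse}$. First I would unfold the $\ddnext$ abbreviation, turning the two local-progress antecedents into $\ddiamond{\pevolvein{\D{x}=\genDE{x}}{\rfvar \lor \initassum}}{\notinitassum}$ and $\ddiamond{\pevolvein{\D{x}=\genDE{x}}{\lnot{\rfvar} \lor \initassum}}{\notinitassum}$. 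These carry the \emph{same} ODE and the \emph{same} postcondition $\notinitassum$, so axiom \irref{Uniq} applies directly (its ``$\lylpmi$'' direction, with $\ivr_1 \mnodefeq \rfvar\lor\initassum$, $\ivr_2 \mnodefeq \lnot{\rfvar}\lor\initassum$), adding $\ddiamond{\pevolvein{\D{x}=\genDE{x}}{(\rfvar\lor\initassum)\land(\lnot{\rfvar}\lor\initassum)}}{\notinitassum}$ to the antecedents.

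Next I would collapse the conjoined domain constraint: since $\rfvar\land\lnot{\rfvar}$ is contradictory, $(\rfvar\lor\initassum)\land(\lnot{\rfvar}\lor\initassum)$ is propositionally equivalent to $\initassum$. A \irref{cut} with $\ddiamond{\pevolvein{\D{x}=\genDE{x}}{\initassum}}{\notinitassum}$, discharged on the left by rule \irref{gddR} --- whose domain-refinement premise $\lsequent{(\rfvar\lor\initassum)\land(\lnot{\rfvar}\lor\initassum)}{\initassum}$ is a propositional tautology --- reduces the goal to $\lsequent{\initassum,\; \ddiamond{\pevolvein{\D{x}=\genDE{x}}{\initassum}}{\notinitassum}}{\lfalse}$. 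Dualizing the remaining diamond with axiom \irref{diamond} (noting $\lnot{\notinitassum}$ is $\initassum$) turns this into the obligation $\dbox{\pevolvein{\D{x}=\genDE{x}}{\initassum}}{\initassum}$, which is precisely axiom \irref{DW} --- equivalently, immediate from rule \irref{dW} applied to the trivial premise $\lsequent{\initassum}{\initassum}$ --- and this closes the derivation.

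There is essentially no hard step here; the one point that needs care is the bookkeeping for \irref{Uniq}: the two diamond modalities produced by unfolding $\ddnext$ must be lined up syntactically against the schema of \irref{Uniq} --- identical ODE, identical postcondition $\notinitassum$, differing only in their evolution domains --- so that they combine in a single axiom application rather than via a detour. After that, the remaining work is routine propositional manipulation together with \irref{gddR}, \irref{diamond}, and \irref{DW}. Note that \irref{Uniq} itself carries no freshness side condition; the freshness of $y$ in the ODE and in $\rfvar$ enters only implicitly, to make the abbreviation $\ddnext$ and the propositional equivalence $(\rfvar\lor\initassum)\land(\lnot{\rfvar}\lor\initassum)\lbisubjunct\initassum$ behave as intended. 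This ``$\limply$'' direction is all that \rref{cor:dualityimp} claims; the converse ``$\lylpmi$'' is the genuinely harder half and, as remarked in the text, is deferred until the full characterization axiom \irref{Lpiff} is available.
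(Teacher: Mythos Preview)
Your proposal is correct and follows essentially the same approach as the paper: combine the two local-progress diamonds via \irref{Uniq}, collapse the resulting conjoined domain constraint propositionally, dualize with \irref{diamond}, and close with \irref{DW}/\irref{dW}. The only cosmetic difference is that the paper leaves the domain as $(\rfvar\land\lnot{\rfvar})\lor\initassum$ and applies \irref{dW} to the tautology $(\rfvar\land\lnot{\rfvar})\lor\initassum\vdash\initassum$, whereas you first refine the domain down to $\initassum$ via \irref{gddR} and then invoke \irref{DW}; both routes are equivalent.
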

\begin{proof}
The derivation starts with~\irref{notr}, after which the resulting local progress antecedents are combined by \irref{Uniq}, giving a conjunction of their domain constraints because $(\rfvar \lor \initassum) \land (\lnot{\rfvar}\lor \initassum)$ is propositionally equivalent to $(\rfvar \land \lnot{\rfvar}) \lor \initassum$.
The conjunction $\rfvar \land \lnot{\rfvar}$ in the domain constraint is propositionally equivalent to $\lfalse$ and no local progress is possible into an empty set of states.
{\footnotesizeoff%
\begin{sequentdeduction}[array]
\linfer[notr]{
\linfer[Uniq]{
  \lsequent{\dprogressin{\D{x}=\genDE{x}}{\rfvar \land \lnot{\rfvar}}}{\lfalse}
}
  {\lsequent{\dprogressin{\D{x}=\genDE{x}}{\rfvar},\dprogressin{\D{x}=\genDE{x}}{\lnot{\rfvar}}}{\lfalse}}
}
  {\lsequent{\dprogressin{\D{x}=\genDE{x}}{\rfvar}}{\lnot{\dprogressin{\D{x}=\genDE{x}}{\lnot{\rfvar}}}}}
\end{sequentdeduction}
}%

The derivation is completed by unfolding the $\ddnext$ syntactic abbreviation, and shifting to the box modality by \irref{diamond} duality.
The final step after using~\irref{dW} is a propositional tautology:
{\footnotesizeoff%
\begin{sequentdeduction}[array]
\linfer[]{
\linfer[diamond+notl]{
\linfer[dW]{
\linfer{%
  \lclose
}
  {\lsequent{(\rfvar \land \lnot{\rfvar}) \lor x=y}{x=y}}
}
  {\lsequent{}{\dbox{\pevolvein{\D{x}=\genDE{x}}{(\rfvar \land \lnot{\rfvar}) \lor x=y}}{x=y}}}
}
  {\lsequent{\ddiamond{\pevolvein{\D{x}=\genDE{x}}{(\rfvar \land \lnot{\rfvar}) \lor x=y}}{x\not=y}}{\lfalse}}
}
  {\lsequent{\dprogressin{\D{x}=\genDE{x}}{\rfvar \land \lnot{\rfvar}}}{\lfalse}}
\\[-\normalbaselineskip]\tag*{\qedhere}
\end{sequentdeduction}
}%
\end{proof}

\paragraph{Reflection}
The next two derived axioms~\irref{diareflect} and~\irref{reflect} internalize a mathematical property of ODE invariants.
Namely, the formula $\rfvar$ is invariant for the forwards ODE \(\D{x}=\genDE{x}\) iff its negation $\lnot\rfvar$ is invariant for the backwards ODE \(\D{x}=-\genDE{x}\).
This invariant reflection principle is used in~\rref{app:completeness} for proving completeness for semianalytic invariants, and to flip the signs in the second premise of rule \irref{realind}.
It is useful in its own right as it allows freely switching between proving invariance for either the forwards or backwards ODEs, e.g., if one direction yields simpler arithmetic.

\begin{corollary}[Reflection]
The reflection axioms~\irref{diareflect+reflect} derive from \irref{Dadjoint}:

\begin{calculus}
\dinferenceRule[diareflect|rfl$\didia{\cdot}$]{}
{\linferenceRule[equiv]
  {\lexists{x}{(\rrfvar(x) \land \ddiamond{\pevolvein{\D{x}=-\genDE{x}}{\ivr(x)}}{\rfvar(x)})}}
  {\axkey{\lexists{x}{(\rfvar(x) \land \ddiamond{\pevolvein{\D{x}=\genDE{x}}{\ivr(x)}}{\rrfvar(x)})}}}
}{}

\dinferenceRule[reflect|rfl]{}
{\linferenceRule[equiv]
  {\lforall{x}{\big(\lnot{\rfvar(x)} \limply \dbox{\pevolvein{\D{x}=-\genDE{x}}{\ivr(x)}}{\lnot{\rfvar(x)}}\big)}}
  {\axkey{\lforall{x}{\big(\rfvar(x) \limply \dbox{\pevolvein{\D{x}=\genDE{x}}{\ivr(x)}}{\rfvar(x)}\big)}}}
}{}
\end{calculus}
\end{corollary}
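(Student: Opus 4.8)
The plan is to derive \irref{reflect} from \irref{diareflect} by routine propositional dualization, so essentially all the work lies in deriving \irref{diareflect} from \irref{Dadjoint}. The strategy is a ``there and back again'' argument: first decompose the diamond postcondition of the forward ODE into \emph{reaching a stored endpoint}, then flip the ODE with \irref{Dadjoint}, then recompose against the backward ODE.

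The key auxiliary step is an \emph{endpoint decomposition} equivalence, valid for any ODE and any formula $\rrfvar(x)$, where $y$ is a fresh vector of variables of the same dimension as $x$:
\[
\ddiamond{\pevolvein{\D{x}=\genDE{x}}{\ivr(x)}}{\rrfvar(x)}
\;\lbisubjunct\;
\lexists{y}{\big(\ddiamond{\pevolvein{\D{x}=\genDE{x}}{\ivr(x)}}{\,x=y} \land \rrfvar(y)\big)}.
\]
For the ``$\limply$'' direction I would use \irref{qear} to rewrite $\rrfvar(x)$ as the equivalent $\lexists{y}{(x=y \land \rrfvar(y))}$, push the quantifier across the diamond with the ODE Barcan axiom \irref{dBarcan} (legitimate since $y$ is fresh), and then move the constant conjunct $\rrfvar(y)$ out of the modality using \irref{V} together with the constant-assumption manipulations of \rref{app:diaaxioms}. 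The ``$\lylpmi$'' direction runs in reverse: $\rrfvar(y)$ is constant along \(\D{x}=\genDE{x}\) because $y$ is fresh, so \irref{V} moves it back inside, and diamond monotonicity \irref{Kd} with the arithmetic implication $x=y \land \rrfvar(y) \limply \rrfvar(x)$ closes it; the $\lexists{y}$ is then dropped since $y$ does not occur in $\ddiamond{\pevolvein{\D{x}=\genDE{x}}{\ivr(x)}}{\rrfvar(x)}$.

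Given the decomposition, \irref{diareflect} derives in three moves on its ``key'' side $\lexists{x}{(\rfvar(x) \land \ddiamond{\pevolvein{\D{x}=\genDE{x}}{\ivr(x)}}{\rrfvar(x)})}$. First, apply the decomposition to the inner diamond and commute the existentials, obtaining $\lexists{x}{\lexists{y}{(\rfvar(x) \land \ddiamond{\pevolvein{\D{x}=\genDE{x}}{\ivr(x)}}{\,x=y} \land \rrfvar(y))}}$. Second, congruentially rewrite the middle conjunct using \irref{Dadjoint} — applicable since $y$ is fresh in \(\D{x}=\genDE{x}\) — to get $\ddiamond{\pevolvein{\D{y}=-\genDE{y}}{\ivr(y)}}{\,y=x}$, so the evolution domain correctly migrates from $\ivr(x)$ on the forward ODE to $\ivr(y)$ on the backward one. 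Third, reorganize to $\lexists{y}{(\rrfvar(y) \land \lexists{x}{(\rfvar(x) \land \ddiamond{\pevolvein{\D{y}=-\genDE{y}}{\ivr(y)}}{\,y=x})})}$ and apply the decomposition equivalence \emph{in reverse} to the inner $\lexists{x}$, where now $x$ plays the fresh-endpoint role for the backward ODE $\D{y}=-\genDE{y}$; this yields $\lexists{y}{(\rrfvar(y) \land \ddiamond{\pevolvein{\D{y}=-\genDE{y}}{\ivr(y)}}{\rfvar(y)})}$, which is the other side of \irref{diareflect} after renaming the bound variable back to $x$. Axiom \irref{reflect} then follows by negating both sides of \irref{diareflect} instantiated with $\rrfvar := \lnot{\rfvar}$, turning each $\lnot{\dbox{\cdot}{\cdot}}$ into $\ddiamond{\cdot}{\lnot{(\cdot)}}$ via \irref{diamond}, and simplifying the resulting $\lnot\lexists{x}{(\cdots)}$ into the $\lforall{x}{(\cdots\limply\cdots)}$ form propositionally.

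The main obstacle is bookkeeping rather than deep mathematics: each introduction of the stored-endpoint variable and each modality flip carries a freshness or constancy side condition (\irref{dBarcan}, \irref{V}, \irref{Dadjoint}), and these must be tracked precisely while operating under the quantifier prefix $\lexists{x}\lexists{y}$. A secondary point of care is that the two applications of the decomposition lemma are to \emph{different} ODEs — the forward $\D{x}=\genDE{x}$ and the reversed $\D{y}=-\genDE{y}$ — with the roles of ``state variable'' and ``fresh endpoint variable'' interchanged, so the symmetry between $x$ and $y$ must be arranged so that the second application is a literal instance of the same lemma.
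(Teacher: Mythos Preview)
Your proposal is correct and uses the same ingredients as the paper's proof: the \irref{qear} trick $\rrfvar(x)\lbisubjunct\lexists{y}{(x=y\land\rrfvar(y))}$, \irref{dBarcan} to commute the quantifier, \irref{V} for constant conjuncts, \irref{Dadjoint} for the flip, and \irref{Kd} for monotonicity. The only difference is organizational: you package the endpoint manipulation as a separate decomposition lemma and apply it twice (once for each ODE), whereas the paper derives one implication of \irref{diareflect} directly in a single sequent derivation and obtains the other implication by the symmetry $-(-\genDE{x})=\genDE{x}$ rather than by a second lemma application.
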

\begin{proof}
Axiom~\irref{reflect} derives from \irref{diareflect} by instantiating with $\rrfvar(x)\mdefequiv \lnot\rfvar(x)$ and negating both sides of the equivalence with \irref{diamond}.
The diamond reflection axiom~\irref{diareflect} is derived from~\irref{Dadjoint}.
Both implications are proved separately and the ``$\lylpmi$'' direction follows by instantiating the proof of the ``$\limply$'' direction, since $-(-\genDE{x}) = \genDE{x}$.
The ``$\limply$'' direction is proved below.

In the derivation below, the formulas are bound renamed~\cite{DBLP:journals/jar/Platzer17} for clarity.
After Skolemizing, the first \irref{Kd+dW} step introduces an existentially quantified $y$ under the diamond modality in the antecedent by monotonicity using the provable first-order formula $\rrfvar(x) \limply \lexists{y}{(x=y \land \rrfvar(y))}$.
{\footnotesizeoff%
\begin{sequentdeduction}[array]
\linfer[existsl+andl]{
\linfer[Kd+dW]{
  \lsequent{\rfvar(x), \ddiamond{\pevolvein{\D{x}=\genDE{x}}{\ivr(x)}}{\lexists{y}{(x=y \land \rrfvar(y))}}} {\lexists{y}{(\rrfvar(y) \land \ddiamond{\pevolvein{\D{y}=-\genDE{y}}{\ivr(y)}}{\rfvar(y)})}}
}
  {\lsequent{\rfvar(x), \ddiamond{\pevolvein{\D{x}=\genDE{x}}{\ivr(x)}}{\rrfvar(x)}} {\lexists{y}{(\rrfvar(y) \land \ddiamond{\pevolvein{\D{y}=-\genDE{y}}{\ivr(y)}}{\rfvar(y)})}}}
}
  {\lsequent{\lexists{x}{(\rfvar(x) \land \ddiamond{\pevolvein{\D{x}=\genDE{x}}{\ivr(x)}}{\rrfvar(x)})}} {\lexists{y}{(\rrfvar(y) \land \ddiamond{\pevolvein{\D{y}=-\genDE{y}}{\ivr(y)}}{\rfvar(y)})}}}
\end{sequentdeduction}
}%

The ODE Barcan \irref{dBarcan} axiom moves the existentially quantified $y$ out of the diamond modality since $y$ is not in $\D{x}=\genDE{x}$.
A subsequent \irref{V} step also moves the postcondition $\rrfvar(y)$ out from the diamond modality into the antecedents.
{\footnotesizeoff%
\begin{sequentdeduction}[array]
\linfer[dBarcan]{
\linfer[existsl]{
\linfer[V]{
  \lsequent{\rfvar(x),\rrfvar(y), \ddiamond{\pevolvein{\D{x}=\genDE{x}}{\ivr(x)}}{x=y}} {\lexists{y}{(\rrfvar(y) \land \ddiamond{\pevolvein{\D{y}=-\genDE{y}}{\ivr(y)}}{\rfvar(y)})}}
}
  {\lsequent{\rfvar(x), \ddiamond{\pevolvein{\D{x}=\genDE{x}}{\ivr(x)}}{(x=y \land \rrfvar(y))}}{\lexists{y}{(\rrfvar(y) \land \ddiamond{\pevolvein{\D{y}=-\genDE{y}}{\ivr(y)}}{\rfvar(y)})}}}
}
  {\lsequent{\rfvar(x), \lexists{y}{\ddiamond{\pevolvein{\D{x}=\genDE{x}}{\ivr(x)}}{(x=y \land \rrfvar(y))}}}{\lexists{y}{(\rrfvar(y) \land \ddiamond{\pevolvein{\D{y}=-\genDE{y}}{\ivr(y)}}{\rfvar(y)})}}}
}
  {\lsequent{\rfvar(x), \ddiamond{\pevolvein{\D{x}=\genDE{x}}{\ivr(x)}}{\lexists{y}{(x=y \land \rrfvar(y))}}} {\lexists{y}{(\rrfvar(y) \land \ddiamond{\pevolvein{\D{y}=-\genDE{y}}{\ivr(y)}}{\rfvar(y)})}}}
\end{sequentdeduction}
}%

The derivation continues using differential adjoints~\irref{Dadjoint} to syntactically flip the antecedent differential equations from evolving $x$ forwards to evolving $y$ backwards.
The \irref{V+Kd} step then strengthens the postcondition to $\rfvar(y)$ exploiting that the (negated) ODE does not modify $x$ so that $\rfvar(x)$ remains true along the ODE.
This completes the proof using $y$ as a witness for $\exists{y}$.
{\footnotesizeoff%
\begin{sequentdeduction}[array]
\linfer[Dadjoint]{
\linfer[V+Kd]{
\linfer[existsr]{
  \lclose
}
  {\lsequent{\rrfvar(y), \ddiamond{\pevolvein{\D{y}=-\genDE{y}}{\ivr(y)}}{\rfvar(y)}} {\lexists{y}{(\rrfvar(y) \land \ddiamond{\pevolvein{\D{y}=-\genDE{y}}{\ivr(y)}}{\rfvar(y)})}}}
}
  {\lsequent{\rfvar(x),\rrfvar(y), \ddiamond{\pevolvein{\D{y}=-\genDE{y}}{\ivr(y)}}{y=x}} {\lexists{y}{(\rrfvar(y) \land \ddiamond{\pevolvein{\D{y}=-\genDE{y}}{\ivr(y)}}{\rfvar(y)})}}}
}
  {\lsequent{\rfvar(x),\rrfvar(y), \ddiamond{\pevolvein{\D{x}=\genDE{x}}{\ivr(x)}}{x=y}} {\lexists{y}{(\rrfvar(y) \land \ddiamond{\pevolvein{\D{y}=-\genDE{y}}{\ivr(y)}}{\rfvar(y)})}}}
\\[-\normalbaselineskip]\tag*{\qedhere}
\end{sequentdeduction}
}%
\end{proof}

\paragraph{Real Induction Rule}
The real induction rule with domain constraints corresponding to axiom \irref{RealIndIn} is derived next.
It is stated with the $\ddnext$ modality from~\rref{sec:extaxioms}.
The real induction rule~\irref{realind} from~\rref{cor:realind} derives as an instance with domain constraint $\ivr \mnodefequiv \ltrue$.

\begin{corollary}[Real induction rule with domain constraints]
\label{cor:realindin}
The real induction proof rule~\irref{realindin} (with two stacked premises) derives from \irref{RealIndIn+Dadjoint+Uniq}.
Variables $y$ are fresh in the ODE $\D{x}=\genDE{x}$ and formulas $\rfvar,\ivr$.

\[\dinferenceRule[realindin|rI{$\&$}]{}
{\linferenceRule
  {
  \begin{aligned}
  \lsequent{\initassum,\rfvar,\ivr,\dprogressin{\D{x}=\genDE{x}}{\ivr}}{\dprogressin{\D{x}=\genDE{x}}{\rfvar}}\quad\; \\
  \lsequent{\initassum,\lnot{\rfvar},\ivr,\dprogressin{\D{x}=-\genDE{x}}{\ivr}}{\dprogressin{\D{x}=-\genDE{x}}{\lnot{\rfvar}}}
  \end{aligned}
  }
  {\lsequent{\rfvar}{\dbox{\pevolvein{\D{x}=\genDE{x}}{\ivr}}{\rfvar}}}
}{}\]
\end{corollary}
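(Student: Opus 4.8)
The plan is to derive \irref{realindin} from the real induction axiom \irref{RealIndIn} (sound by \rref{lem:realindODEin}) together with \irref{Dadjoint} and \irref{Uniq}, following the template of the derivation of \irref{realind} in \rref{cor:realind} but carrying the domain constraint $\ivr$ along throughout; the domain-free rule \irref{realind} is exactly the $\ivr\equiv\ltrue$ instance. First I would rewrite the succedent of the goal $\lsequent{\rfvar}{\dbox{\pevolvein{\D{x}=\genDE{x}}{\ivr}}{\rfvar}}$ using the equivalence axiom \irref{RealIndIn}, then Skolemize the leading $\lforall{y}{}$ with the same fresh $y$ that appears in the $\ddnext$ abbreviation. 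This leaves the obligation $\dbox{\pevolvein{\D{x}=\genDE{x}}{\ivr \land (\rfvar \lor x=y)}}{(\initassum \limply \rfvar \land (\dprogressin{\D{x}=\genDE{x}}{\ivr}{} \limply \dprogressin{\D{x}=\genDE{x}}{\rfvar}{}))}$, whose postcondition splits into the left conjunct $\rfvar$ (call it \textcircled{a}) and the local-progress implication (call it \textcircled{b}); recall that \textcircled{a} and \textcircled{b} correspond, respectively, to the closed-boundary (red) and open-boundary (blue) exit trajectories that must be ruled out in \rref{fig:realinduct}.

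Because establishing \textcircled{b} needs \textcircled{a} available at the relevant state, I would prove \textcircled{a} first, i.e., prove $\dbox{\pevolvein{\D{x}=\genDE{x}}{\ivr \land (\rfvar \lor x=y)}}{(\initassum \limply \rfvar)}$, and then \irref{DC}-cut this formula into the evolution domain. Under the strengthened domain $\ivr \land (\rfvar \lor x=y) \land (\initassum \limply \rfvar)$, any state satisfying $\initassum$ automatically satisfies $\rfvar$, so after simplifying the postcondition with \irref{K} and \irref{G} it remains only to show \textcircled{b}: from $\initassum$, from $\ivr$ (read off the domain), from the now-available $\rfvar$, and from the hypothesis $\dprogressin{\D{x}=\genDE{x}}{\ivr}{}$ of the implication, close directly by the \emph{first} premise of \irref{realindin}, whose antecedents are exactly $\initassum,\rfvar,\ivr,\dprogressin{\D{x}=\genDE{x}}{\ivr}{}$.

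For \textcircled{a} I would contrapose the postcondition to $\initassum \land \lnot\rfvar \limply \lfalse$ and dualize with \irref{diamond}, reducing the task to showing that no forward solution respecting $\ivr \land (\rfvar \lor x=y)$ from the initial state (which satisfies $\rfvar$) reaches a state satisfying $\initassum \land \lnot\rfvar$. This is where \irref{Dadjoint} (equivalently \irref{reflect}) enters: it turns the forward trajectory into a backward trajectory issuing from that endpoint which still respects $\ivr \land (\rfvar \lor x=y)$, so at the endpoint $\initassum$, $\lnot\rfvar$, $\ivr$, and backward local progress into $\ivr$ all hold, and the \emph{second} premise of \irref{realindin} then yields backward local progress into $\lnot\rfvar$. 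Since the reversed forward trajectory stays in $\rfvar \lor x=y$ for a short backward time while the trajectory supplied by the second premise stays in $\lnot\rfvar \lor x=y$, \irref{Uniq} identifies the two as compatible, forcing $\initassum$ to persist on an initial backward time interval; this contradicts the assumption that $\D{x}=\genDE{x}$ locally evolves $x$, which is precisely what forbids the progress to $\notinitassum$ demanded by the second premise. I expect this last step --- making the \irref{Dadjoint}/\irref{Uniq} bookkeeping fully syntactic while correctly threading the domain $\ivr$ and the fresh variable $y$ through the forward-to-backward passage --- to be the main obstacle; everything else is propositional rearrangement and the two premise applications.
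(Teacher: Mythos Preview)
Your overall strategy matches the paper's: rewrite with \irref{RealIndIn}, split the postcondition into the $\rfvar$-conjunct \textcircled{a} and the local-progress implication \textcircled{b}, discharge \textcircled{b} from the first premise, and discharge \textcircled{a} by reflecting with \irref{Dadjoint} (via the derived \irref{diareflect}) and closing against the second premise using \irref{Uniq}. Your use of \irref{DC} to make \textcircled{a} available when proving \textcircled{b} is equivalent to the paper's propositional rewrite $\rfvar \land \rrfvar \lbisubjunct \rfvar \land (\rfvar \limply \rrfvar)$ via \irref{MbW}; either device works and the paper also proves \textcircled{b} by \irref{dW} followed directly by the first premise.

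The one substantive difference is in \textcircled{a}. The paper first case-splits on whether $\initassum$ holds at the \emph{initial} state. If it does, then from $\rfvar$ and $\initassum$ one gets $\rfvar(y)$, which persists by \irref{V} and immediately gives the postcondition $\initassum \limply \rfvar$. If instead $x\neq y$ holds initially, then after \irref{diareflect} the backward diamond carries postcondition $x\neq y \land \rfvar$; weakening to $x\neq y$ is exactly what is needed to split by \irref{Uniq}, apply \irref{bigsmallequiv}, and obtain the antecedents $\initassum,\lnot\rfvar,\ivr,\dprogressin{\D{x}=-\genDE{x}}{\ivr}$ of the second premise (the residual $\dprogressin{\D{x}=-\genDE{x}}{\rfvar}$ is then eliminated by \irref{dualityimp}, itself a consequence of \irref{Uniq}). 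This case split is the device that makes the bookkeeping clean: it supplies the crucial postcondition $x\neq y$ after reflection.

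Your route skips the split and reflects directly, but then the backward diamond has postcondition $\rfvar$ only, so you cannot yet read off backward local progress into $\ivr$ (which requires postcondition $x\neq y$). The repair is to note that the backward initial state satisfies $\initassum$ and $\lnot\rfvar$, hence $\lnot\rfvar(y)$; then \irref{V} and \irref{Kd} carry $\lnot\rfvar(y)$ into the backward postcondition, and the equality reasoning $\rfvar(x) \land \lnot\rfvar(y) \limply x\neq y$ converts it. This is the missing step behind your assertion that ``backward local progress into $\ivr$ all hold,'' and it is exactly the obstacle you anticipated; the paper's case split sidesteps it entirely.
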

\begin{proof}[Proof (implies~\rref{cor:realind})]
The derivation starts by rewriting the succedent with \irref{RealIndIn}, the resulting right conjunct is abbreviated with $\rrfvar \mdefequiv \dprogressin{\D{x}=\genDE{x}}{\ivr} \limply \dprogressin{\D{x}=\genDE{x}}{\rfvar}$.
The \irref{MbW} step rewrites the postcondition equivalently using the propositional tautology $\rfvar \land \rrfvar \lbisubjunct \rfvar \land (\rfvar \limply \rrfvar)$ which allows the left conjunct $\rfvar$ to be assumed when proving the right conjunct $\rrfvar$ (the implication $\initassum$ is also distributed over the conjunction).
The two conjuncts are then split by~\irref{band+andr}, with the resulting two premises labeled \textcircled{1} and \textcircled{2} respectively. These are shown and proved below.
{\footnotesizeoff%
\begin{sequentdeduction}[array]
\linfer[RealIndIn]{
\linfer[allr]{
\linfer[MbW]{
\linfer[band+andr]{
  \textcircled{1} ! \textcircled{2}
}
  {\lsequent{\rfvar}{\dbox{\pevolvein{\D{x}=\genDE{x}}{\ivr \land (\rfvar \lor \initassum)}}{\big( (\initassum \limply \rfvar) \land (\initassum \land \rfvar \limply \rrfvar)\big)}}}
}
  {\lsequent{\rfvar}{\dbox{\pevolvein{\D{x}=\genDE{x}}{\ivr \land (\rfvar \lor \initassum)}}{(\initassum \limply \rfvar \land \rrfvar)}}}
}
  {\lsequent{\rfvar}{\lforall{y}{\dbox{\pevolvein{\D{x}=\genDE{x}}{\ivr \land (\rfvar \lor \initassum)}}{(\initassum \limply \rfvar \land \rrfvar)}}}}
}
  {\lsequent{\rfvar}{\dbox{\pevolvein{\D{x}=\genDE{x}}{\ivr}}{\rfvar}}}
\end{sequentdeduction}
}%

The premise~\textcircled{2} yields the top premise of rule~\irref{realindin} directly (unfolding the abbreviation for $\rrfvar$):
{\footnotesizeoff%
\begin{sequentdeduction}[array]
\linfer[dW]{
\linfer[implyr+andl]{
  \lsequent{\initassum,\rfvar,\ivr,\dprogressin{\D{x}=\genDE{x}}{\ivr}}{\dprogressin{\D{x}=\genDE{x}}{\rfvar}}
}
  {\lsequent{\ivr}{(\initassum \land \rfvar \limply \rrfvar)}}
}
  {\lsequent{\rfvar}{\dbox{\pevolvein{\D{x}=\genDE{x}}{\ivr \land (\rfvar \lor \initassum)}}{(\initassum \land \rfvar \limply \rrfvar)}}}
\end{sequentdeduction}
}%

Continuing from premise \textcircled{1}, the derivation splits classically on whether $\initassum$ is true initially, yielding two further premises labeled \textcircled{3} when $\initassum$ and \textcircled{4} when $x\neq y$.
{\footnotesizeoff%
\begin{sequentdeduction}[array]
\linfer[cut]{
\linfer[orl]{
  \textcircled{3} ! \textcircled{4}
}
  {\lsequent{\initassum \lor x\neq y,\rfvar}{\dbox{\pevolvein{\D{x}=\genDE{x}}{\ivr \land (\rfvar \lor \initassum)}}{(\initassum \limply \rfvar)}}}
}
  {\lsequent{\rfvar}{\dbox{\pevolvein{\D{x}=\genDE{x}}{\ivr \land (\rfvar \lor \initassum)}}{(\initassum \limply \rfvar)}}}
\end{sequentdeduction}
}%

From \textcircled{3}, the antecedents $\initassum$ and $\rfvar$ imply that $\rfvar(y)$ is true initially by a~\irref{cut}.
Since $y$ is held constant by the ODE $\D{x}=\genDE{x}$, a monotonicity step \irref{MbW} followed by \irref{V} completes the proof:
{\footnotesizeoff%
\begin{sequentdeduction}[array]
  \linfer[cut+MbW]{
  \linfer[V]{
    \lclose
  }
    {\lsequent{\rfvar(y)}{\dbox{\pevolvein{\D{x}=\genDE{x}}{\ivr \land (\rfvar \lor \initassum)}}{\rfvar(y)}}}
  }
  {\lsequent{\initassum,\rfvar}{\dbox{\pevolvein{\D{x}=\genDE{x}}{\ivr \land (\rfvar \lor \initassum)}}{(\initassum \limply \rfvar)}}}
\end{sequentdeduction}
}%

From \textcircled{4}, the derivation continues by dualizing to the diamond modality, and using \irref{diareflect} to syntactically reverse the direction of the ODE.
The~\irref{gddR} step weakens the resulting postcondition of the diamond modality with the propositional tautology $x {\neq} y \land \rfvar \limply x {\neq} y$.
{\footnotesizeoff%
\begin{sequentdeduction}[array]
  \linfer[diamond+notr]{
  \linfer[diareflect]{
  \linfer[gddR]{
    \lsequent{\initassum, \lnot{\rfvar}, \ddiamond{\pevolvein{\D{x}=-\genDE{x}}{\ivr \land (\rfvar \lor \initassum)}}{x {\neq} y}}{\lfalse}
  }
    {\lsequent{\initassum, \lnot{\rfvar}, \ddiamond{\pevolvein{\D{x}=-\genDE{x}}{\ivr \land (\rfvar \lor \initassum)}}{(x {\neq} y \land \rfvar)}}{\lfalse}}
  }
    {\lsequent{x{\neq} y, \rfvar, \ddiamond{\pevolvein{\D{x}=\genDE{x}}{\ivr \land (\rfvar \lor \initassum)}}{(\initassum \land \lnot{\rfvar})}}{\lfalse}}
  }
  {\lsequent{x{\neq} y,\rfvar}{\dbox{\pevolvein{\D{x}=\genDE{x}}{\ivr \land (\rfvar \lor \initassum)}}{(\initassum \limply \rfvar)}}}
\end{sequentdeduction}
}%

The diamond modality in the antecedent splits by axiom~\irref{Uniq} into two assumptions with domain constraints $\ivr$ and $\rfvar \lor \initassum$ respectively.
With a use of derived axiom~\irref{bigsmallequiv}, all the antecedents of the bottom premise of rule~\irref{realindin} are gathered, leaving only its succedent.
{\footnotesizeoff%
\begin{sequentdeduction}[array]
  \linfer[Uniq]{
  \linfer[bigsmallequiv]{
    \lsequent{\initassum, \lnot{\rfvar}, \ivr, \dprogressin{\D{x}=-\genDE{x}}{\ivr}, \dprogressin{\D{x}=-\genDE{x}}{\rfvar}}{\lfalse}
  }
    {\lsequent{\initassum, \lnot{\rfvar}, \ddiamond{\pevolvein{\D{x}=-\genDE{x}}{\ivr}}{x \neq y}, \ddiamond{\pevolvein{\D{x}=-\genDE{x}}{\rfvar \lor \initassum}}{x \neq y}}{\lfalse}}
  }
    {\lsequent{\initassum, \lnot{\rfvar}, \ddiamond{\pevolvein{\D{x}=-\genDE{x}}{\ivr \land (\rfvar \lor \initassum)}}{x \neq y}}{\lfalse}}
\end{sequentdeduction}
}%

Continuing with the derived implication~\irref{dualityimp} results in the bottom premise of rule~\irref{realindin}:
{\footnotesizeoff%
\begin{sequentdeduction}[array]
  \linfer[dualityimp]{
  \linfer[notl]{
    \lsequent{\initassum, \lnot{\rfvar}, \ivr, \dprogressin{\D{x}=-\genDE{x}}{\ivr}}{\dprogressin{\D{x}=-\genDE{x}}{\lnot{\rfvar}}}
  }
    {\lsequent{\initassum, \lnot{\rfvar}, \ivr, \dprogressin{\D{x}=-\genDE{x}}{\ivr}, \lnot{\dprogressin{\D{x}=-\genDE{x}}{\lnot{\rfvar}}}}{\lfalse}}
  }
    {\lsequent{\initassum, \lnot{\rfvar}, \ivr, \dprogressin{\D{x}=-\genDE{x}}{\ivr}, \dprogressin{\D{x}=-\genDE{x}}{\rfvar}}{\lfalse}}
\\[-\normalbaselineskip]\tag*{\qedhere}
\end{sequentdeduction}
}%
\end{proof}

The rule \irref{realindin} discards any additional context in the antecedents of its premises.
This is due to the use of \irref{RealIndIn} which focuses on particular states along trajectories of the ODE $\D{x}=\genDE{x}$.
It would be unsound to keep any assumptions about the initial state that depend on $x$ because the state being examined may not be the initial state!
On the other hand, assumptions that do not depend on $x$ remain true along the ODE.
These constant assumptions can be kept with uses of \irref{V} throughout the derivation above or added into $\ivr$ before using \irref{realindin} by a \irref{DC} that proves with \irref{V}.
Such additional steps are elided and rule \irref{realindin} is used directly while keeping these \emph{constant} assumptions around.

\subsection{Axioms and Proof Rules Summary}
\label{app:axiomssummary}

For the sake of a self-contained presentation, this section presents all the propositional and first-order sequent calculus proof rules used in this article.
Propositional and first-order logic is not the focus of this article so a simplified treatment is used.
A full treatment of these rules can be found elsewhere~\cite{DBLP:journals/jar/Platzer08}.
For ease of reference, this section also compactly references all base (and extended) \dL axioms and proof rules but does not repeat them explicitly because of space considerations.
All other (derived) axioms and proof rules in this article derive from this \dL axiomatization.

\paragraph{Propositional and First-Order Proof Rules}
Rule~\irref{qear} for first-order real arithmetic is explained in~\rref{subsec:background-axiomatizaton}.
The following are sound propositional and first-order sequent calculus proof rules.
In rules~\irref{alll+existsr}, an arbitrary (extended) term $\etermA$ can be used to instantiate the respective quantifiers.
In rules~\irref{existsl+allr}, the Skolem variable $y$ is fresh, i.e., does not occur free, in the conclusion:

\begin{calculuscollection}
\begin{calculus}
  \cinferenceRule[notl|$\lnot$\leftrule]{not left}
  {\linferenceRule
    {\lsequent{\Gamma}{\fvarA}}
    {\lsequent{\Gamma,\lnot{\fvarA}}{\lfalse}}
  }{}

  \cinferenceRule[andl|$\land$\leftrule]{and left}
  {\linferenceRule
    {\lsequent{\Gamma,\fvarA_1,\fvarA_2}{\fvarB}}
    {\lsequent{\Gamma,\fvarA_1 \land \fvarA_2}{\fvarB}}
  }{}

  \cinferenceRule[orl|$\lor$\leftrule]{or left}
  {\linferenceRule
    {\lsequent{\Gamma,\fvarA_1}{\fvarB} & \lsequent{\Gamma,\fvarA_2}{\fvarB}}
    {\lsequent{\Gamma,\fvarA_1 \lor \fvarA_2}{\fvarB}}
  }{}
\end{calculus}
\quad
\begin{calculus}
  \cinferenceRule[notr|$\lnot$\rightrule]{not right}
  {\linferenceRule
    {\lsequent{\Gamma,\fvarA}{\lfalse}}
    {\lsequent{\Gamma}{\lnot{\fvarA}}}
  }{}

  \cinferenceRule[andr|$\land$\rightrule]{and right}
  {\linferenceRule
    {\lsequent{\Gamma}{\fvarA_1} & \lsequent{\Gamma}{\fvarA_2}}
    {\lsequent{\Gamma}{\fvarA_1 \land \fvarA_2}}
  }{}

  \cinferenceRule[cut|cut]{cut}
  {\linferenceRule
    {\lsequent{\Gamma}{\fvarB} & \lsequent{\Gamma,\fvarB}{\fvarA}}
    {\lsequent{\Gamma}{\fvarA}}
  }{}
\end{calculus}
\quad
\begin{calculus}
  \cinferenceRule[implyl|$\limply$\leftrule]{imply left}
  {\linferenceRule
    {\lsequent{\Gamma}{\fvarA_1} & \lsequent{\Gamma,\fvarA_2}{\fvarB}}
    {\lsequent{\Gamma,\fvarA_1 \limply \fvarA_2}{\fvarB}}
  }{}

  \cinferenceRule[alll|$\forall$\leftrule]{forall left}
  {\linferenceRule
    {\lsequent{\Gamma,\fvarA(\etermA)}{\fvarB}}
    {\lsequent{\Gamma,\lforall{x}{\fvarA(x)}}{\fvarB}}
  }{}

  \cinferenceRule[existsl|$\exists$\leftrule]{exists left}
  {\linferenceRule
    {\lsequent{\Gamma,\fvarA(y)}{\fvarB}}
    {\lsequent{\Gamma,\lexists{x}{\fvarA(x)}}{\fvarB}}
  }{}
\end{calculus}
\quad
\begin{calculus}
  \cinferenceRule[implyr|$\limply$\rightrule]{imply right}
  {\linferenceRule
    {\lsequent{\Gamma,\fvarA_1}{\fvarA_2}}
    {\lsequent{\Gamma}{\fvarA_1 \limply \fvarA_2}}
  }{}

  \cinferenceRule[allr|$\forall$\rightrule]{forall right}
  {\linferenceRule
    {\lsequent{\Gamma}{\fvarA(y)}}
    {\lsequent{\Gamma}{\lforall{x}{\fvarA(x)}}}
  }{}

  \cinferenceRule[existsr|$\exists$\rightrule]{exists right}
  {\linferenceRule
    {\lsequent{\Gamma}{\fvarA(\etermA)}}
    {\lsequent{\Gamma}{\lexists{x}{\fvarA(x)}}}
  }{}
\end{calculus}
\end{calculuscollection}

\paragraph{Base Axioms and Proof Rules}
All base axioms and proof rules of \dL are in~\rref{thm:baseeqaxioms}.
The base differential axioms are in~\rref{thm:diffaxioms}, while~\rref{lem:noefdifferentials} provides differential axioms for extended terms.
The base differential equation axioms~\irref{DI+DC+DG} are in~\rref{thm:diffeqax}.

\paragraph{Extended Axiomatization}
The extended axioms~\irref{Uniq+Cont+Dadjoint} are in~\rref{lem:uniqcont}.
The extended real induction axiom~\irref{RealInd} and its generalization~\irref{RealIndIn} are in Lemmas~\ref{lem:realindODE} and~\ref{lem:realindODEin} respectively.

\paragraph{Hybrid Program Axioms} These are presented in~\rref{thm:basehpaxioms} for use only in~\rref{app:hybridprograms}.

\section{Completeness}
\label{app:completeness}

This appendix gives completeness proofs for the derived rules \irref{dRI+sAI} and the derived local progress characterization \irref{Lpiff}.
Completeness of \irref{dRI} is proved by showing that \irref{DRI} is a derived axiom.
A similar approach is taken for completeness of \irref{sAI} by showing \irref{SAI} is a derived axiom.

This syntactic approach to proving completeness of \irref{dRI} and \irref{sAI} demonstrates the versatility of the \dL calculus.
It also enables complete \emph{dis}proofs of invariance properties as opposed to just failing to apply a complete proof rule.
To conclude that invariance is \emph{disproved} after applying an algorithmic procedure (like the presentations of (semi)algebraic~\irref{dRI} and~\irref{sAI}~\cite{DBLP:conf/tacas/GhorbalP14,DBLP:conf/emsoft/LiuZZ11}), one would need to trust, in addition to soundness, that no completeness errors are present in the implementation.

Recall that axioms \irref{Cont+RealIndIn} have an additional syntactic requirement, e.g., the presence of $\D{x_1}=1$, which is assumed to be met throughout this appendix, using axiom \irref{DG} if necessary.

\subsection{Progress Formulas}
\label{app:progressformulas}

The following are useful logical rearrangements of the progress formulas for extended term $\etermA$:

\begin{proposition}[Atomic progress formula equivalences]
\label{prop:rearrangement}
Let $N$ be the rank of extended term $\etermA$.
The following are provable equivalences on the progress and differential radical formulas for $\etermA$:
\allowdisplaybreaks
\begin{align}
\sigliedgt{\genDE{x}}{\etermA} \lbisubjunct&~\etermA > 0 \lor (\etermA = 0 \land \lied[]{\genDE{x}}{\etermA} > 0) \label{eq:sigliedgtrearrangement} \lor \dots\\
&\lor \big(\etermA=0 \land \lied[]{\genDE{x}}{\etermA} = 0 \land \dots \land \lied[N-3]{\genDE{x}}{\etermA} = 0 \land \lied[N-2]{\genDE{x}}{\etermA} > 0\big)\nonumber\\
&\lor \big(\etermA=0 \land \lied[]{\genDE{x}}{\etermA} = 0 \land \dots \land \lied[N-2]{\genDE{x}}{\etermA} = 0 \land \lied[N-1]{\genDE{x}}{\etermA} > 0\big)\nonumber\\
\sigliedgeq{\genDE{x}}{\etermA} \lbisubjunct&~\etermA\geq 0 \land \big(\etermA=0 \limply \lied[]{\genDE{x}}{\etermA} \geq 0\big) \label{eq:sigliedgeqrearrangement} \land \dots \\
&\land \big(\etermA=0 \land \lied[]{\genDE{x}}{\etermA} = 0 \land \dots \land \lied[N-3]{\genDE{x}}{\etermA} = 0 \limply \lied[N-2]{\genDE{x}}{\etermA} \geq 0\big)\nonumber\\
&\land \big(\etermA=0 \land \lied[]{\genDE{x}}{\etermA} = 0 \land \dots \land \lied[N-2]{\genDE{x}}{\etermA} = 0 \limply \lied[N-1]{\genDE{x}}{\etermA} \geq 0\big)\nonumber\\
\lnot{(\sigliedgt{\genDE{x}}{\etermA})} \lbisubjunct &~\sigliedgeq{\genDE{x}}{(-\etermA)} \qquad \lnot{(\sigliedgeq{\genDE{x}}{\etermA})} \lbisubjunct ~\sigliedgt{\genDE{x}}{(-\etermA)}  \label{eq:sigliedfullrearrangement} \\
\lnot{(\sigliedzero{\genDE{x}}{\etermA})} \lbisubjunct&~\sigliedgt{\genDE{x}}{\etermA} \lor \sigliedgt{\genDE{x}}{(-\etermA)} \label{eq:sigliedzerorearrangement}
\end{align}
\end{proposition}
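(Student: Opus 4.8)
The plan is to prove all four equivalences by \irref{qear}, preceded by one preliminary syntactic step that handles the term $-\etermA$ occurring in \eqref{eq:sigliedfullrearrangement} and \eqref{eq:sigliedzerorearrangement}. The preliminary step is the provable identity $\lied[i]{\genDE{x}}{(-\etermA)} = -\lied[i]{\genDE{x}}{\etermA}$ for every $i \geq 0$, shown by an easy induction on $i$: the base case $i = 0$ is trivial, and the inductive step rewrites $\lied[i+1]{\genDE{x}}{(-\etermA)}$ using the induction hypothesis and the provable identity $\lie[]{\genDE{x}}{-\etermB} = -\lie[]{\genDE{x}}{\etermB}$ (an instance of linearity of Lie derivation, provable by \irref{qear} or via \irref{Dder}). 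In particular $-\etermA$ satisfies the differential radical identity \eqref{eq:differential-rank} with the same rank $N$ and the same cofactors $\cofterm_i$ as $\etermA$, so $\sigliedgt{\genDE{x}}{(-\etermA)}$, $\sigliedgeq{\genDE{x}}{(-\etermA)}$ and $\sigliedzero{\genDE{x}}{(-\etermA)}$ are well-defined at rank $N$ and, by congruence, are provably equivalent to the formulas obtained from $\sigliedgt{\genDE{x}}{\etermA}$, $\sigliedgeq{\genDE{x}}{\etermA}$, $\sigliedzero{\genDE{x}}{\etermA}$ by replacing each Lie derivative $\lied[i]{\genDE{x}}{\etermA}$ with its negation.

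It then remains to check that, treating the $N$ extended terms $\etermA, \lied[]{\genDE{x}}{\etermA}, \dots, \lied[N-1]{\genDE{x}}{\etermA}$ as arbitrary reals, each of \eqref{eq:sigliedgtrearrangement}--\eqref{eq:sigliedzerorearrangement} is a valid Boolean identity over the real closed fields; the substitution instance with those extended terms then proves by \irref{qear} exactly as the instances in Example~\ref{ex:provingwithR} do. For \eqref{eq:sigliedgtrearrangement}, a case analysis on the least index $k \in \{0, \dots, N-1\}$ at which $\lied[k]{\genDE{x}}{\etermA}$ is nonzero, together with the case where no such $k$ exists, shows that the nested conjunction of implications defining $\sigliedgt{\genDE{x}}{\etermA}$ holds exactly when such a least $k$ exists and $\lied[k]{\genDE{x}}{\etermA} > 0$; this is precisely the displayed disjunction, and when no such $k$ exists both sides are false since the last conjunct $\lied[N-1]{\genDE{x}}{\etermA} > 0$ of the definition then fails. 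Equivalence \eqref{eq:sigliedgeqrearrangement} is obtained from the same case analysis applied to the variant whose final consequent is weakened to $\lied[N-1]{\genDE{x}}{\etermA} \geq 0$; equivalently, it is $\sigliedgeq{\genDE{x}}{\etermA} \lbisubjunct \sigliedgt{\genDE{x}}{\etermA} \lor \sigliedzero{\genDE{x}}{\etermA}$ after absorbing the disjunct $\sigliedzero{\genDE{x}}{\etermA}$ (all $N$ Lie derivatives zero) into the last disjunct of the normal form \eqref{eq:sigliedgtrearrangement}.

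Finally, \eqref{eq:sigliedfullrearrangement} and \eqref{eq:sigliedzerorearrangement} follow by applying \eqref{eq:sigliedgtrearrangement} and \eqref{eq:sigliedgeqrearrangement} to both $\etermA$ and $-\etermA$, using the preliminary step to rewrite the Lie derivatives of $-\etermA$ as negated Lie derivatives of $\etermA$, and then reading off De Morgan dualities on the ``first significant Lie derivative'' normal forms: $\lnot(\sigliedgt{\genDE{x}}{\etermA})$ says that either all $N$ Lie derivatives vanish or the first nonvanishing one is negative, which is exactly $\sigliedgeq{\genDE{x}}{(-\etermA)}$; $\lnot(\sigliedgeq{\genDE{x}}{\etermA})$ says that a first nonvanishing Lie derivative exists and is negative, which is $\sigliedgt{\genDE{x}}{(-\etermA)}$; and $\lnot(\sigliedzero{\genDE{x}}{\etermA})$ says merely that some Lie derivative among the first $N$ is nonzero, i.e.\ a first nonvanishing one exists and is either positive or negative, which is $\sigliedgt{\genDE{x}}{\etermA} \lor \sigliedgt{\genDE{x}}{(-\etermA)}$. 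The main difficulty here is bookkeeping rather than conceptual: one must carefully justify that $-\etermA$ inherits the rank $N$ and cofactors of $\etermA$ (the formulas $\sigliedgt{\genDE{x}}{(-\etermA)}$ and their companions only being meaningful once $-\etermA$ is assigned a rank), and carry out the case analysis on the first nonvanishing Lie derivative uniformly in $N$ so that the resulting equivalences are genuine tautologies of the real closed fields rather than merely true of the particular term $\etermA$ at hand.
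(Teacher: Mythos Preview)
Your proposal is correct and follows essentially the same route as the paper: both start from the linearity identity $\lied[i]{\genDE{x}}{(-\etermA)} = -\lied[i]{\genDE{x}}{\etermA}$ and then reduce all four equivalences to first-order real-arithmetic tautologies provable by \irref{qear}, deriving \eqref{eq:sigliedfullrearrangement} and \eqref{eq:sigliedzerorearrangement} from \eqref{eq:sigliedgtrearrangement} and \eqref{eq:sigliedgeqrearrangement} via De Morgan duality. The paper establishes \eqref{eq:sigliedgtrearrangement} by iteratively splitting each $\geq 0$ into $> 0 \lor = 0$ and simplifying, whereas you phrase the same argument as a case analysis on the first nonvanishing Lie derivative; your explicit remark that $-\etermA$ inherits the rank $N$ and cofactors of $\etermA$ is a useful detail the paper leaves implicit.
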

\begin{proof}
The equivalences are proved one at a time.
By linearity of Lie derivatives, $\lied[i]{\genDE{x}}{(-\etermA)} = -(\lied[i]{\genDE{x}}{\etermA})$ proves in real arithmetic for any $i$.
The proof also uses these provable real arithmetic equivalences:
\[ \etermA \geq 0 \lbisubjunct \etermA=0 \lor \etermA > 0 \qquad -\etermA \geq 0 \land \etermA \geq 0 \lbisubjunct \etermA=0 \qquad \lnot{(\etermA > 0)} \lbisubjunct -\etermA \geq 0 \]
\begin{itemize}
\item[\rref{eq:sigliedgtrearrangement}] This equivalence follows by real arithmetic, and simplifying with propositional rearrangement as follows (here, the remaining conjuncts of $\sigliedgt{\genDE{x}}{\etermA}$ are abbreviated to $\dots$):
\begin{align*}
\etermA \geq 0 \land \Big((\etermA = 0 \limply \lied[]{\genDE{x}}{\etermA} \geq 0) \land \dots \Big) \lbisubjunct&~\etermA > 0 \land \Big((\etermA = 0 \limply \lied[]{\genDE{x}}{\etermA} \geq 0) \land \dots \Big)\\
&\lor  \etermA=0 \land \Big((\etermA = 0 \limply \lied[]{\genDE{x}}{\etermA} \geq 0) \land \dots \Big)
\end{align*}

The first disjunct on the RHS simplifies by real arithmetic to $\etermA > 0$ since all of its implicational conjuncts contain $\etermA = 0$ on the left of an implication.
The latter disjunct simplifies to $\etermA=0 \land \Big(\lied[]{\genDE{x}}{\etermA} \geq 0 \land \dots\Big)$, yielding the provable equivalence:
\[\sigliedgt{\genDE{x}}{\etermA} \lbisubjunct \etermA > 0 \lor \etermA=0 \land \Big(\lied[]{\genDE{x}}{\etermA} \geq 0 \land \dots\Big)\]

The equivalence $\rref{eq:sigliedgtrearrangement}$ proves by iterating this expansion on the RHS of this equivalence for its nested conjuncts with higher Lie derivatives.

\item[\rref{eq:sigliedgeqrearrangement}] This equivalence proves by expanding the formula $\sigliedgeq{\genDE{x}}{\etermA}$ which yields a disjunction between $\sigliedgt{\genDE{x}}{\etermA}$ and $\sigliedzero{\genDE{x}}{\etermA}$. The latter formula is used to relax the strict inequality in the last conjunct of $\sigliedgt{\genDE{x}}{\etermA}$ to a non-strict inequality.
\item[\rref{eq:sigliedfullrearrangement}] The equivalence for $\lnot{(\sigliedgt{\genDE{x}}{\etermA})}$ follows by negating both sides of equivalence \rref{eq:sigliedgtrearrangement} and moving negations on the RHS inwards propositionally, yielding the provable equivalence:
\begin{align*}
\lnot{(\sigliedgt{\genDE{x}}{\etermA})} \lbisubjunct &\Big( \lnot{(\etermA > 0)} \land (\etermA = 0 \limply \lnot{(\lied[]{\genDE{x}}{\etermA} > 0)}) \land \dots\\
&\land \big(\etermA{=}0 \land \lied[]{\genDE{x}}{\etermA} {=} 0 \land \dots \land \lied[N-2]{\genDE{x}}{\etermA} {=} 0 \limply \lnot{(\lied[N-1]{\genDE{x}}{\etermA} > 0)}\big)\Big)
\end{align*}
The desired equivalence derives by negating the inequalities and by equivalence~\rref{eq:sigliedgeqrearrangement} for $\sigliedgeq{\genDE{x}}{(-\etermA)}$.
The equivalence for $\lnot{(\sigliedgeq{\genDE{x}}{\etermA})}$ derives by negating both sides of the equivalence for $\lnot{(\sigliedgt{\genDE{x}}{\etermA})}$, since $-(-\etermA)=\etermA$.

\item[\rref{eq:sigliedzerorearrangement}] By \rref{eq:sigliedfullrearrangement}, the following equivalence is provable:
\[\lnot{(\sigliedgt{\genDE{x}}{\etermA})} \land \lnot{(\sigliedgt{\genDE{x}}{(-\etermA)})} \lbisubjunct (\sigliedgeq{\genDE{x}}{(-\etermA)}) \land (\sigliedgeq{\genDE{x}}{\etermA}) \]
By rewriting with \rref{eq:sigliedgeqrearrangement}, the RHS of this equivalence is provably equivalent to the formula $\sigliedzero{\genDE{x}}{\etermA}$ in real arithmetic.
Negating both sides yields the provable equivalence \rref{eq:sigliedzerorearrangement}. \qedhere
\end{itemize}
\end{proof}

The provable equivalences \rref{eq:sigliedfullrearrangement} are particularly important, because they underlie the next proposition, from which the complete characterization of local progress follows:

\begin{proposition}[Negated semianalytic progress formula] \label{prop:negationrearrangement}
Let the semianalytic formula $\rfvar$ be in normal form~\rref{eq:normalform}.
Then $\lnot{\rfvar}$ can be put into normal form such that \(\lnot{(\sigliedsai{\genDE{x}}{\rfvar})} \lbisubjunct \sigliedsai{\genDE{x}}{(\lnot{\rfvar})}\) is provable:
\[\lnot{\rfvar} \equiv \lorfold_{i=0}^{N} \Big(\landfold_{j=0}^{a(i)} \etermAA_{ij} \geq 0 \land \landfold_{j=0}^{b(i)} \etermBB_{ij}> 0\Big)\]
\end{proposition}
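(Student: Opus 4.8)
The plan is to prove this by an explicit syntactic construction of the normal form for $\lnot{\rfvar}$ together with a propositional-plus-arithmetic argument that the two progress formulas are provably equivalent. Recall that the semianalytic progress formula $\sigliedsai{\genDE{x}}{\rfvar}$ is defined homomorphically: each non-strict atom $\etermA_{ij}\geq 0$ is replaced by $\sigliedgeq{\genDE{x}}{\etermA_{ij}}$, each strict atom $\etermB_{ij}>0$ by $\sigliedgt{\genDE{x}}{\etermB_{ij}}$, and the conjunctive/disjunctive structure of the normal form~\rref{eq:normalform} is preserved. The key inputs are the provable atomic equivalences~\rref{eq:sigliedfullrearrangement} from~\rref{prop:rearrangement}, namely $\lnot{(\sigliedgt{\genDE{x}}{\etermB})} \lbisubjunct \sigliedgeq{\genDE{x}}{(-\etermB)}$ and $\lnot{(\sigliedgeq{\genDE{x}}{\etermA})} \lbisubjunct \sigliedgt{\genDE{x}}{(-\etermA)}$, which say that negating a progress formula for an atom swaps the strictness flavor and negates the term.

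First I would write $\rfvar \equiv \lorfold_{i=0}^{M}\big(\landfold_{j=0}^{m(i)}\etermA_{ij}\geq 0 \land \landfold_{j=0}^{n(i)}\etermB_{ij}>0\big)$ and compute $\lnot{\rfvar}$ propositionally. Negating the outer disjunction turns it into a conjunction of negated clauses; negating each clause $\big(\landfold_{j}\etermA_{ij}\geq 0 \land \landfold_{j}\etermB_{ij}>0\big)$ turns it into the disjunction $\big(\lorfold_{j}\etermA_{ij}<0\big)\lor\big(\lorfold_{j}\etermB_{ij}\leq 0\big)$. Using the provable real-arithmetic rewrites $\etermA < 0 \lbisubjunct -\etermA > 0$ and $\etermB \leq 0 \lbisubjunct -\etermB \geq 0$, every atom becomes of the required form ($\geq 0$ or $>0$). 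Distributing the resulting conjunction of disjunctions back into disjunctive normal form (by the distributive law) yields a normal form~\rref{eq:normalform} for $\lnot{\rfvar}$, with clauses indexed by choice functions picking one disjunct from each negated clause of $\rfvar$; this gives the displayed shape $\lnot{\rfvar}\equiv\lorfold_{i=0}^{N}\big(\landfold_{j=0}^{a(i)}\etermAA_{ij}\geq 0 \land \landfold_{j=0}^{b(i)}\etermBB_{ij}>0\big)$, where each $\etermAA_{ij}$ is some $-\etermB_{kl}$ and each $\etermBB_{ij}$ is some $-\etermA_{kl}$.

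Next, for this particular normal form of $\lnot{\rfvar}$, I would compute $\sigliedsai{\genDE{x}}{(\lnot{\rfvar})}$ by the homomorphic definition: it is $\lorfold_{i=0}^{N}\big(\landfold_{j=0}^{a(i)}\sigliedgeq{\genDE{x}}{\etermAA_{ij}} \land \landfold_{j=0}^{b(i)}\sigliedgt{\genDE{x}}{\etermBB_{ij}}\big)$, i.e., $\lorfold_{i}\big(\landfold_{j}\sigliedgeq{\genDE{x}}{(-\etermB_{k(i,j)l(i,j)})} \land \landfold_{j}\sigliedgt{\genDE{x}}{(-\etermA_{k(i,j)l(i,j)})}\big)$. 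Separately, I would compute $\lnot{(\sigliedsai{\genDE{x}}{\rfvar})}$ by negating $\lorfold_{i=0}^{M}\big(\landfold_{j=0}^{m(i)}\sigliedgeq{\genDE{x}}{\etermA_{ij}} \land \landfold_{j=0}^{n(i)}\sigliedgt{\genDE{x}}{\etermB_{ij}}\big)$ propositionally, pushing negations all the way in, and then applying the atomic equivalences~\rref{eq:sigliedfullrearrangement} to rewrite each $\lnot{(\sigliedgeq{\genDE{x}}{\etermA_{ij}})}$ as $\sigliedgt{\genDE{x}}{(-\etermA_{ij})}$ and each $\lnot{(\sigliedgt{\genDE{x}}{\etermB_{ij}})}$ as $\sigliedgeq{\genDE{x}}{(-\etermB_{ij})}$. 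The crucial observation is that the propositional normalization steps (De Morgan plus distributivity) applied to $\lnot{\rfvar}$ and to $\lnot{(\sigliedsai{\genDE{x}}{\rfvar})}$ are \emph{identical} as propositional manipulations on isomorphic formula skeletons --- the progress formulas just have $\sigliedgeq{\genDE{x}}{(\cdot)}$/$\sigliedgt{\genDE{x}}{(\cdot)}$ in place of the atoms and the negation-swapping on atoms mirrors the strictness-swapping $\geq 0 \leftrightarrow > 0$ --- so the two results coincide conjunct-for-conjunct, giving the provable equivalence $\lnot{(\sigliedsai{\genDE{x}}{\rfvar})} \lbisubjunct \sigliedsai{\genDE{x}}{(\lnot{\rfvar})}$.

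The main obstacle is purely bookkeeping: making the ``same propositional skeleton'' argument precise, i.e., setting up a formal correspondence between the choice-function indexing of the distributed normal form of $\lnot{\rfvar}$ and that of $\lnot{(\sigliedsai{\genDE{x}}{\rfvar})}$, and checking that the atomic rewrites line up exactly (in particular that $-(-\etermA)=\etermA$ is used consistently so no double negations are left, and that the real-arithmetic lemmas $\etermA<0\lbisubjunct -\etermA>0$ etc.\ are invoked in the corresponding places on both sides). I would handle this by inducting on the structure of the propositional normalization procedure, or more cleanly by noting that both equivalences reduce, after choosing the \emph{same} distributed normal form on both sides, to a finite conjunction of the already-proved atomic equivalences~\rref{eq:sigliedfullrearrangement} glued together by congruence of $\lbisubjunct$ over $\land,\lor,\lnot$. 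No genuinely new mathematical content is needed beyond~\rref{prop:rearrangement}.
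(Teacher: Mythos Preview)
Your proposal is correct and follows essentially the same approach as the paper's proof: both negate $\rfvar$ and $\sigliedsai{\genDE{x}}{\rfvar}$ via De Morgan, rewrite the negated atoms using $\etermA<0\lbisubjunct -\etermA>0$, $\etermB\leq 0\lbisubjunct -\etermB\geq 0$ on one side and the atomic equivalences~\rref{eq:sigliedfullrearrangement} on the other, observe that the resulting CNF formulas have identical propositional shape, and then apply the same CNF-to-DNF distribution on both to obtain matching normal forms. The paper makes the ``same skeleton'' observation concrete by naming the two intermediate CNF formulas $\phi$ and $\psi$ and then distributing both identically, which is exactly your bookkeeping step.
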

\begin{proof}
The proof uses propositional tautologies $\lnot{(A \land B)} \lbisubjunct \lnot{A} \lor \lnot{B}$ and $\lnot{(A \lor B)} \lbisubjunct \lnot{A} \land \lnot{B}$.
Formula $\rfvar$ is negated (in normal form~\rref{eq:normalform}) and all sub-terms are negated so the inequalities have $0$ on the RHS, yielding the following provable equivalence.
The resulting RHS is abbreviated by $\fvarA$:
\[\lnot{\rfvar} \lbisubjunct \underbrace{\landfold_{i=0}^{M} \Big(\lorfold_{j=0}^{m(i)} -\etermA_{ij} > 0 \lor \lorfold_{j=0}^{n(i)} -\etermB_{ij} \geq 0 \Big)}_{\fvarA} \]
Negating both sides of the progress formula for $\rfvar$ and simplifying propositionally proves:
\[\lnot{(\sigliedsai{\genDE{x}}{P})} \lbisubjunct
\landfold_{i=0}^{M} \Big(\lorfold_{j=0}^{m(i)} \lnot{(\sigliedgeq{\genDE{x}}{\etermA_{ij}})} \lor \lorfold_{j=0}^{n(i)} \lnot{(\sigliedgt{\genDE{x}}{\etermB_{ij}})}\Big)\]
Rewriting the RHS with equivalences \rref{eq:sigliedfullrearrangement} from \rref{prop:rearrangement} yields the following provable equivalence.
The resulting RHS is abbreviated by $\fvarB$:
\[\lnot{(\sigliedsai{\genDE{x}}{P})} \lbisubjunct \underbrace{\landfold_{i=0}^{M} \Big(\lorfold_{j=0}^{m(i)} \sigliedgt{\genDE{x}}{(-\etermA_{ij})} \lor \lorfold_{j=0}^{n(i)} \sigliedgeq{\genDE{x}}{(-\etermB_{ij})}\Big)}_{\fvarB} \]
Observe that $\fvarA,\fvarB$ have the same conjunctive normal form shape.
Distribute the outer conjunction over the inner disjunctions in $\fvarA$ to obtain the following provable equivalence, whose RHS is a normal form for $\lnot{P}$ (for some indices $N,a(i),b(i)$ and extended terms $\etermAA_{ij},\etermBB_{ij}$):
\[\lnot{\rfvar} \lbisubjunct \lorfold_{i=0}^{N} \Big(\landfold_{j=0}^{a(i)} \etermAA_{ij} \geq 0 \lor \landfold_{j=0}^{b(i)} \etermBB_{ij} > 0\Big)\]
Distribute the disjunction in $\fvarB$ following the same syntactic steps taken for $\fvarA$ to obtain the following provable equivalence:
\[ \fvarB \lbisubjunct \lorfold_{i=0}^{N} \Big(\landfold_{j=0}^{a(i)} \sigliedgeq{\genDE{x}}{\etermAA_{ij}} \lor \landfold_{j=0}^{b(i)} \sigliedgt{\genDE{x}}{\etermBB_{ij}} \Big)\]
Rewriting with the equivalences derived so far, and using the above normal form for $\lnot{\rfvar}$, yields the required, provable equivalence:
\[
\lnot{(\sigliedsai{\genDE{x}}{\rfvar})} \lbisubjunct \sigliedsai{\genDE{x}}{(\lnot{\rfvar})}
\qedhere
\]
\end{proof}

\subsection{Local Progress}
\label{app:localprogress}

This section derives the characterizations of local progress from~\rref{subsec:localprogress}.
These characterizations are used in the completeness proofs for both analytic and semianalytic invariants.

\subsubsection{Atomic Inequalities}

The proof of \rref{lem:localprogresscmp} was outlined in \rref{subsec:localprogress}.
The case where $\cmp$ is $\geq$ is proved first, while the more technical case where $\cmp$ is $>$ is proved subsequently.

\begin{proof}[Proof of \rref{lem:localprogresscmp} (\irref{Lpgeqfull})]
Let $N$ be the rank of extended term $\etermA$ with respect to $\D{x}=\genDE{x}$ from~\rref{eq:differential-rank}.
For the derivation of \irref{Lpgeqfull}, the additional flexibility of the $\ddnext$ modality with a disjunct $x=y$ in the domain constraint is not needed.
This disjunction is removed after unfolding the $\ddnext$ abbreviation using a~\irref{gddR} monotonicity step.
The definition of $\sigliedgeq{\genDE{x}}{\etermA}$ is also unfolded, with both disjuncts handled separately.
The premises are labeled \textcircled{1} and \textcircled{2} respectively.
{\footnotesizeoff%
\begin{sequentdeduction}[array]
\linfer[]{
\linfer[gddR]{
  \linfer[orl]{
  \lsequent{\initassum,\sigliedgt{\genDE{x}}{\etermA}}{\ddiamond{\pevolvein{\D{x}=\genDE{x}}{\etermA \geq 0}}{x \neq y}} !
  \lsequent{\initassum,\sigliedzero{\genDE{x}}{\etermA}}{\ddiamond{\pevolvein{\D{x}=\genDE{x}}{\etermA \geq 0}}{x \neq y}}
  }
  {\lsequent{\initassum,\sigliedgt{\genDE{x}}{\etermA} \lor \sigliedzero{\genDE{x}}{\etermA}}{\ddiamond{\pevolvein{\D{x}=\genDE{x}}{\etermA \geq 0}}{x \neq y}}}
}
  {\lsequent{\initassum,\sigliedgt{\genDE{x}}{\etermA} \lor \sigliedzero{\genDE{x}}{\etermA}}{\ddiamond{\pevolvein{\D{x}=\genDE{x}}{\etermA \geq 0 \lor x=y}}{x \neq y}}}
}
  {\lsequent{\initassum,\sigliedgeq{\genDE{x}}{\etermA}}{\dprogressin{\D{x}=\genDE{x}}{\etermA \geq 0}}}
\end{sequentdeduction}
}%

From \textcircled{2},~\irref{gddR} strengthens the inequality $\etermA \geq 0$ in the domain constraint to an equation $\etermA = 0$.
The derivation continues using \irref{dDR}, because by \irref{dRI}, $\etermA=0$ is provably invariant.
The proof is completed with \irref{Cont} using the trivial arithmetic fact $1 > 0$:
{\footnotesizeoff%
\begin{sequentdeduction}[array]
\linfer[gddR]{
\linfer[dDR]{
  \linfer[dRI]{
    \lclose
  }
  {\lsequent{\sigliedzero{\genDE{x}}{\etermA}}{\dbox{\pevolvein{\D{x}=\genDE{x}}{1 > 0}}{\etermA = 0}}} !
  \linfer[Cont]{
    \lclose
  }
  {\lsequent{\initassum}{\ddiamond{\pevolvein{\D{x}=\genDE{x}}{1 > 0}}{x \neq y}}}
}
  {\lsequent{\initassum,\sigliedzero{\genDE{x}}{\etermA}}{\ddiamond{\pevolvein{\D{x}=\genDE{x}}{\etermA = 0}}{x \neq y}}}
}
  {\lsequent{\initassum,\sigliedzero{\genDE{x}}{\etermA}}{\ddiamond{\pevolvein{\D{x}=\genDE{x}}{\etermA \geq 0}}{x \neq y}}}
\end{sequentdeduction}
}%

From \textcircled{1}, the premise is lined up for the derived step axiom~\irref{Lpgeq}.
The proof proceeds by closing the (left) premises obtained by iterating \irref{Lpgeq} for higher Lie derivatives.
In this way, the derivation continues until the final (rightmost) open premise which is abbreviated here and continued below:
{\footnotesizeoff%
\begin{sequentdeduction}[array]
\linfer[Lpgeq]{
  \linfer[qear]{\lclose}{\lsequent{\sigliedgt{\genDE{x}}{\etermA}}{\etermA \geq 0}} !
  \linfer[Lpgeq]{
  \linfer[qear]{\lclose}{\lsequent{\sigliedgt{\genDE{x}}{\etermA},\etermA=0}{\lied[]{\genDE{x}}{\etermA} \geq 0}} !
  \linfer[Lpgeq]{
    \lsequent{\initassum,\sigliedgt{\genDE{x}}{\etermA},\dots}{\dots}
  }
  {\dots}
}
  {\lsequent{\initassum,\sigliedgt{\genDE{x}}{\etermA},\etermA=0}{\ddiamond{\pevolvein{\D{x}=\genDE{x}}{\lied[]{\genDE{x}}{\etermA} \geq 0}}{x \neq y}}}
}
  {\lsequent{\initassum,\sigliedgt{\genDE{x}}{\etermA}}{\ddiamond{\pevolvein{\D{x}=\genDE{x}}{\etermA \geq 0}}{x \neq y}}}
\end{sequentdeduction}
}%

The open premise corresponds to the last conjunct of $\sigliedgt{\genDE{x}}{\etermA}$.
The implication in the conjunct uses the gathered antecedents $\etermA=0,\dots,\lied[N-2]{\genDE{x}}{\etermA}= 0$ after which~\irref{Cont+gddR} completes the proof:
{\footnotesizeoff%
\begin{sequentdeduction}[array]
\linfer[cut]{
  \linfer[Cont+gddR]{
    \lclose
  }
  {\lsequent{\initassum,\lied[N-1]{\genDE{x}}{\etermA} > 0}{\ddiamond{\pevolvein{\D{x}=\genDE{x}}{\lied[N-1]{\genDE{x}}{\etermA} \geq 0}}{x \neq y}}}
}
  \lsequent{\initassum,\sigliedgt{\genDE{x}}{\etermA},\etermA = 0,\dots,\lied[N-2]{\genDE{x}}{\etermA}= 0}{\ddiamond{\pevolvein{\D{x}=\genDE{x}}{\lied[N-1]{\genDE{x}}{\etermA} \geq 0}}{x \neq y}}
\\[-\normalbaselineskip]\tag*{\qedhere}
\end{sequentdeduction}
}%
\end{proof}

Unlike the non-strict case just derived for~\rref{lem:localprogresscmp}, the strict case (where $\cmp$ is $>$) crucially uses the fact that the $\ddnext$ modality \emph{excludes} the initial state, so that it is possible to locally progress into the strict inequality $\etermA > 0$ without already satisfying it in the initial state.
Topological considerations made this exclusion irrelevant for the non-strict case (see~\rref{subsec:localprogress}), as derived axiom \irref{bigsmallequiv} explains logically.
The idea behind the remaining proof of \rref{lem:localprogresscmp} for the strict case is to syntactically embed this difference into the derivation of~\irref{Lpgtfull}.
Moreover, this syntactic transformation reduces the proof to the non-strict case, so that the derived step axiom~\irref{Lpgeq} can again be used to progressively analyze higher Lie derivatives.
The following proposition is used for the transformation:

\begin{proposition}
\label{prop:leibnizpowers}
Let \(\etermAA = \etermA^{k}\) for some $k \geq 1$ and $\D{x}=\genDE{x}$ be an ODE with extended terms $\etermA,\etermAA,\genDE{x}$.
For each $0 \leq i \leq k-1$, there (computably) exists an extended term cofactor $\cofterm$ such that the following identity is provable in real arithmetic:
\[ \lied[i]{\genDE{x}}{\etermAA} = \cofterm \etermA  \]
\end{proposition}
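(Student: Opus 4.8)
The plan is to establish, by induction on $i$, the slightly stronger claim that for every $0 \leq i \leq k$ there is a computable extended term $r_i$ with $\lied[i]{\genDE{x}}{\etermAA} = \etermA^{\,k-i}\, r_i$. Granting this, the proposition follows at once for each $0 \leq i \leq k-1$: since then $k-i-1 \geq 0$, the extended term $\cofterm \mdefeq \etermA^{\,k-i-1}\, r_i$ satisfies $\lied[i]{\genDE{x}}{\etermAA} = \big(\etermA^{\,k-i-1}\, r_i\big)\,\etermA = \cofterm\,\etermA$ by trivial rearrangement. (Here $r_i$, and hence $\cofterm$, depends on $i$ as the statement allows.)

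For the induction, the base case $i=0$ is immediate: $\lied[0]{\genDE{x}}{\etermAA} = \etermAA = \etermA^{k}$, so take $r_0 \mdefeq 1$. For the inductive step, recall that along $\D{x}=\genDE{x}$ the Lie derivation operator $\lied[]{\genDE{x}}{\cdot}$ acts as a derivation on extended terms --- it is linear and obeys the Leibniz product rule, as witnessed syntactically by the differential axioms recalled in \rref{subsec:background-differentials} --- and hence satisfies the power rule $\lied[]{\genDE{x}}{\etermA^{m}} = m\,\etermA^{\,m-1}\,\lied[]{\genDE{x}}{\etermA}$. Applying this to the induction hypothesis $\lied[i]{\genDE{x}}{\etermAA} = \etermA^{\,k-i}\, r_i$ gives
\begin{align*}
\lied[i+1]{\genDE{x}}{\etermAA} &= \lie[]{\genDE{x}}{\etermA^{\,k-i}\, r_i}
 = (k-i)\,\etermA^{\,k-i-1}\,\lied[]{\genDE{x}}{\etermA}\, r_i + \etermA^{\,k-i}\,\lied[]{\genDE{x}}{r_i} \\
&= \etermA^{\,k-(i+1)}\Big((k-i)\,\lied[]{\genDE{x}}{\etermA}\, r_i + \etermA\,\lied[]{\genDE{x}}{r_i}\Big),
\end{align*}
so setting $r_{i+1} \mdefeq (k-i)\,\lied[]{\genDE{x}}{\etermA}\, r_i + \etermA\,\lied[]{\genDE{x}}{r_i}$ completes the induction. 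Each $r_i$ is an extended term assembled from $\etermA$ and its higher Lie derivatives by ring operations, and is produced by the displayed recurrence, so $\cofterm$ is computable. Finally, the asserted identity $\lied[i]{\genDE{x}}{\etermAA} = \cofterm\,\etermA$ is generated by this very computation; it is a polynomial identity between extended terms, hence a substitution instance of a valid formula of first-order real arithmetic and provable by \irref{qear} (in concert with the differential axioms used to form the Lie derivatives, exactly as in \rref{ex:syndifferentation}).

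The argument is purely algebraic, so I expect no substantial obstacle beyond the bookkeeping of the exponent $k-i$ in the induction hypothesis --- this is precisely the bound that forces the hypothesis $i \leq k-1$ and makes it sharp, since it is exactly what guarantees $k-i-1 \geq 0$ so that one factor of $\etermA$ can be split off. For $i = k$ the same recurrence shows $\lied[k]{\genDE{x}}{\etermAA}$ carries a $k!\,(\lied[]{\genDE{x}}{\etermA})^{k}$ contribution and is, in general, not divisible by $\etermA$, so the restriction cannot be relaxed. No analytic input (smoothness, existence/uniqueness of solutions, etc.) is needed: everything is a syntactic identity between extended terms.
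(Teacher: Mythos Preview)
Your proof is correct, but it follows a genuinely different route from the paper's. The paper inducts on the exponent $k$: writing $\etermA^{k+1} = \etermA^{k}\cdot \etermA$ and applying the higher-order Leibniz formula
\[
\lied[j]{\genDE{x}}{(\etermA^{k}\etermA)} = \sum_{i=0}^{j}\binom{j}{i}\,\lied[j-i]{\genDE{x}}{(\etermA^{k})}\,\lied[i]{\genDE{x}}{\etermA},
\]
it observes that for $i\geq 1$ each $\lied[j-i]{\genDE{x}}{(\etermA^{k})}$ is a multiple of $\etermA$ by the induction hypothesis (since $j-i\leq k-1$), while the $i=0$ summand carries an explicit factor $\etermA$. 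You instead induct on the order $i$ of the Lie derivative with a strengthened invariant $\lied[i]{\genDE{x}}{\etermAA} = \etermA^{\,k-i} r_i$, using only the first-order product rule and the power rule. Your approach is arguably more elementary (no binomial coefficients, no iterated Leibniz formula) and yields the sharper divisibility $\etermA^{\,k-i}\mid \lied[i]{\genDE{x}}{\etermAA}$ as a byproduct; the paper's approach, by contrast, treats all orders $0\leq j\leq k$ simultaneously once $k$ is fixed. Both arguments are purely ring-theoretic and end with the same justification that the displayed identity is a substitution instance of first-order real arithmetic and hence provable by \irref{qear}.
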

\begin{proof}
The proof proceeds by induction on $k$.
\begin{itemize}
\item For $k=1$, $\etermAA=\etermA^1$ so $\lied[0]{\genDE{x}}{\etermA}=\etermA$ hence the cofactor $\cofterm \mnodefeq 1$ suffices.

\item For $\etermAA=\etermA^{k+1}$, the $j$-th Lie derivative of $\etermAA$ for $0 \leq j \leq k$ is given by Leibniz's rule:
\[ \lied[j]{\genDE{x}}{\etermAA} = \lie[j]{\genDE{x}}{\etermA^{k}\etermA} = \sum_{i=0}^j {j \choose i} \lied[j-i]{\genDE{x}}{(\etermA^k)} \lied[i]{\genDE{x}}{\etermA} \]
The induction hypothesis implies $\lied[j-i]{\genDE{x}}{(\etermA^k)} = \cofterm_i \etermA$ is a provable identity for some computable extended term cofactor $\cofterm_i$ for each $1 \leq i \leq j$.
The final summand for $i=0$ is:
\[{j \choose 0} \lied[j]{\genDE{x}}{(\etermA^k)} \lied[0]{\genDE{x}}{\etermA} = \lied[j]{\genDE{x}}{(\etermA^k)} \etermA\]
Thus, the cofactor $\cofterm \mnodefeq \lied[j]{\genDE{x}}{(\etermA^k)} + \sum_{i=1}^j {j \choose i} \cofterm_i \lied[i]{\genDE{x}}{\etermA}$ yields the identity $\lied[j]{\genDE{x}}{\etermAA} = \cofterm \etermA$.
This identity is provable because it only depends on first-order properties of real arithmetic. \qedhere
\end{itemize}
\end{proof}

For $\etermAA=\etermA^{k},k \geq 1$,~\rref{prop:leibnizpowers} implies that the formula \(\etermA=0 \limply \landfold_{i=0}^{k-1} \lied[i]{\genDE{x}}{\etermAA} = 0\) is provable in real arithmetic for extended terms $\etermA,\etermAA$.
This enables the remaining proof of \rref{lem:localprogresscmp}.

\begin{proof}[Proof of \rref{lem:localprogresscmp} (\irref{Lpgtfull})]
Let $N \geq 1$ be the rank of extended term $\etermA$ with respect to $\D{x}=\genDE{x}$.
This rank bounds the number of higher Lie derivatives of $\etermA$ that need to be considered.

The derivation starts by unfolding the syntactic abbreviation of the $\ddnext$ modality and reducing to the non-strict case with \irref{gddR} and the real arithmetic fact \(\etermA{-}\etermAA\geq 0 \limply \etermA > 0 \lor x=y\) for the abbreviation \(\etermAA \mdefeq |x-y|^{2N}\), which is a (polynomial) term: \(\big((x_1-y_1)^2 + \dots + (x_n-y_n)^2\big)^N\).

{\footnotesizeoff%
\begin{sequentdeduction}[array]
\linfer[]{
  \linfer[gddR]{
    \linfer[qear]{ \lclose }
    {\lsequent{\etermA{-}\etermAA \geq 0}{\etermA > 0 \lor x=y}}!
    \lsequent{\initassum,\sigliedgt{\genDE{x}}{\etermA}}{\ddiamond{\pevolvein{\D{x}=\genDE{x}}{\etermA{-}\etermAA \geq 0}}{x \neq y}}
  }
  {\lsequent{\initassum,\sigliedgt{\genDE{x}}{\etermA}}{\ddiamond{\pevolvein{\D{x}=\genDE{x}}{\etermA > 0 \lor x=y}}{x \neq y}}}
}
  {\lsequent{\initassum,\sigliedgt{\genDE{x}}{\etermA}}{\dprogressin{\D{x}=\genDE{x}}{\etermA > 0}}}
\end{sequentdeduction}
}%

Next, the initial assumption $\initassum$ in the antecedent is used.
The first cut proves using the formula of real arithmetic: $\initassum \limply |x-y|^2=0$.
As remarked, with $|x-y|^2=0$ and $N \geq 1$, by \rref{prop:leibnizpowers}, $|x-y|^2=0 \limply \landfold_{i=0}^{N-1} \lied[i]{\genDE{x}}{\etermAA} = 0$ is a provable real arithmetic formula.
The second cut proves using this fact. The resulting open premise is abbreviated \textcircled{1}.
{\footnotesizeoff%
\begin{sequentdeduction}[array]
\linfer[cut+qear]{
\linfer[cut+qear]{
  \lsequent{\initassum,\landfold_{i=0}^{N-1} \lied[i]{\genDE{x}}{\etermAA} = 0, \sigliedgt{\genDE{x}}{\etermA}}{\ddiamond{\pevolvein{\D{x}=\genDE{x}}{\etermA{-}\etermAA \geq 0}}{x \neq y}}
}
  {\lsequent{\initassum,|x-y|^2=0,\sigliedgt{\genDE{x}}{\etermA}}{\ddiamond{\pevolvein{\D{x}=\genDE{x}}{\etermA{-}\etermAA \geq 0}}{x \neq y}}}
}
  {\lsequent{\initassum,\sigliedgt{\genDE{x}}{\etermA}}{\ddiamond{\pevolvein{\D{x}=\genDE{x}}{\etermA{-}\etermAA \geq 0}}{x \neq y}}}
\end{sequentdeduction}
}%

To continue from \textcircled{1}, observe that for $0\leq i \leq N-1$, by linearity of the Lie derivative:
\[\lie[i]{\genDE{x}}{\etermA{-}\etermAA} = \lied[i]{\genDE{x}}{\etermA} - \lied[i]{\genDE{x}}{\etermAA}\]

Using the conjunction $\landfold_{i=0}^{N-1} \lied[i]{\genDE{x}}{\etermAA} = 0$ in the antecedents, the formula $\lied[i]{\genDE{x}}{(\etermA{-}\etermAA)} = \lied[i]{\genDE{x}}{\etermA}$ proves by a cut and real arithmetic for $0\leq i \leq N-1$.
This justifies the next real arithmetic step from \textcircled{1}, with the assumptions \(\Gamma_\etermAA \mdefequiv \landfold_{i=0}^{N-1} \lied[i]{\genDE{x}}{(\etermA{-}\etermAA)} = \lied[i]{\genDE{x}}{\etermA}\).
Intuitively, $\Gamma_\etermAA$ allows the derivation to locally work with higher Lie derivatives of $\etermA$ instead of higher Lie derivatives of $\etermA-\etermAA$ in subsequent steps.
{\footnotesizeoff%
\begin{sequentdeduction}[array]
\linfer[cut+qear]{
  \lsequent{\Gamma_\etermAA, \initassum, \sigliedgt{\genDE{x}}{\etermA}}{\ddiamond{\pevolvein{\D{x}=\genDE{x}}{\etermA{-}\etermAA \geq 0}}{x \neq y}}
}
  {\lsequent{\initassum,\landfold_{i=0}^{N-1} \lied[i]{\genDE{x}}{\etermAA} = 0, \sigliedgt{\genDE{x}}{\etermA}}{\ddiamond{\pevolvein{\D{x}=\genDE{x}}{\etermA{-}\etermAA \geq 0}}{x \neq y}}}
\end{sequentdeduction}
}%

The derivation is completed using the same technique of iterating \irref{Lpgeq}, as shown in the earlier proof of~\rref{lem:localprogresscmp} for the non-strict case \irref{Lpgeqfull}.
It starts with a single \irref{Lpgeq} step.
The left premise closes by real arithmetic because $\sigliedgt{\genDE{x}}{\etermA}$ has the conjunct $\etermA\geq 0$, and $\Gamma_\etermAA$ provides $\etermA-\etermAA = \etermA$, which imply \(\etermA-\etermAA\geq0\).
The right premise is abbreviated \textcircled{2}.
{\footnotesizeoff%
\begin{sequentdeduction}[array]
\linfer[Lpgeq]{
  \linfer[qear]{
    \lclose
  }{\lsequent{\Gamma_\etermAA, \sigliedgt{\genDE{x}}{\etermA}}{\etermA{-}\etermAA \geq 0}} !
  \lsequent{\Gamma_\etermAA, \initassum, \sigliedgt{\genDE{x}}{\etermA}, \etermA{-}\etermAA=0}{\ddiamond{\pevolvein{\D{x}=\genDE{x}}{\lied[1]{\genDE{x}}{(\etermA{-}\etermAA)}{\geq} 0}}{x \neq y}}
}
  {\lsequent{\Gamma_\etermAA, \initassum, \sigliedgt{\genDE{x}}{\etermA}}{\ddiamond{\pevolvein{\D{x}=\genDE{x}}{\etermA{-}\etermAA \geq 0}}{x \neq y}}}
\end{sequentdeduction}
}%

From \textcircled{2}, local progress for the first Lie derivative of $\etermA{-}\etermAA$ is shown.
The first step simplifies formula $\etermA{-}\etermAA=0$ in the antecedents using $\Gamma_\etermAA$.
The derived axiom \irref{Lpgeq}, together with $\Gamma_\etermAA$, simplifies and proves the left premise.
The right premise is abbreviated \textcircled{3} (shown and continued below).
{\footnotesizeoff%
\begin{sequentdeduction}[array]
\linfer[qear]{
\linfer[Lpgeq]{
  \linfer[qear]{
  \linfer[qear]{
    \lclose
  }
    {\lsequent{\etermA=0 \limply \lied[]{\genDE{x}}{\etermA} \geq 0, \etermA=0}{\lied[1]{\genDE{x}}{\etermA}\geq 0}}
  }
  {\lsequent{\Gamma_\etermAA, \sigliedgt{\genDE{x}}{\etermA}, \etermA=0}{\lied[1]{\genDE{x}}{(\etermA{-}\etermAA)}{\geq} 0}} !
  \textcircled{3}
}
  {\lsequent{\Gamma_\etermAA, \initassum, \sigliedgt{\genDE{x}}{\etermA}, \etermA=0}{\ddiamond{\pevolvein{\D{x}=\genDE{x}}{\lied[1]{\genDE{x}}{(\etermA{-}\etermAA)}{\geq} 0}}{x \neq y}}}
}
  {\lsequent{\Gamma_\etermAA, \initassum, \sigliedgt{\genDE{x}}{\etermA}, \etermA{-}\etermAA=0}{\ddiamond{\pevolvein{\D{x}=\genDE{x}}{\lied[1]{\genDE{x}}{(\etermA{-}\etermAA)}{\geq} 0}}{x \neq y}}}
\end{sequentdeduction}
}%

The derivation continues from \textcircled{3} similarly for higher Lie derivatives of $\etermA{-}\etermAA$, using $\Gamma_\etermAA$ to replace $\lied[i]{\genDE{x}}{(\etermA{-}\etermAA)}$ with $\lied[i]{\genDE{x}}{\etermA}$, and then using the corresponding conjunct of $\sigliedgt{\genDE{x}}{\etermA}$.
The final open premise obtained from \textcircled{3} by iterating \irref{Lpgeq} corresponds to the last conjunct of $\sigliedgt{\genDE{x}}{\etermA}$:
{\footnotesizeoff\renewcommand{\sigliedgt}[3][]{\siglied[#1]{#2}{#3}{>}0}%
\begin{sequentdeduction}[array]
\linfer[qear]{
\linfer[Lpgeq]{
\linfer[Lpgeq]{
  \lsequent{\Gamma_\etermAA, \initassum,\sigliedgt{\genDE{x}}{\etermA},\etermA{=}0,\dots,\lied[N-2]{\genDE{x}}{\etermA}{=}0}{\ddiamond{\pevolvein{\D{x}=\genDE{x}}{\lied[N-1]{\genDE{x}}{(\etermA{-}\etermAA)}{\geq} 0}}{x \neq y}}
}
  {\dots}
}
  {\lsequent{\Gamma_\etermAA, \initassum, \sigliedgt{\genDE{x}}{\etermA}, \etermA=0,\lied[1]{\genDE{x}}{\etermA}\geq 0}{\ddiamond{\pevolvein{\D{x}=\genDE{x}}{\lied[2]{\genDE{x}}{(\etermA{-}\etermAA)}{\geq} 0}}{x \neq y}}}
}
  {\lsequent{\Gamma_\etermAA, \initassum, \sigliedgt{\genDE{x}}{\etermA}, \etermA{=}0,\lied[1]{\genDE{x}}{(\etermA{-}\etermAA)}{\geq} 0}{\ddiamond{\pevolvein{\D{x}=\genDE{x}}{\lied[2]{\genDE{x}}{(\etermA{-}\etermAA)}{\geq} 0}}{x \neq y}}}
\end{sequentdeduction}
}%

The gathered antecedents $\etermA=0,\dots,\lied[N-2]{\genDE{x}}{\etermA}= 0$ are respectively obtained from $\Gamma_\etermAA$ by real arithmetic.
The proof is closed with \irref{gddR+Cont}, similarly to the non-strict case.
{\footnotesizeoff%
\begin{sequentdeduction}[array]
\linfer[cut]{
  \linfer[gddR]{
  \linfer[cut+qear]{
  \linfer[Cont]{
    \lclose
  }
    {\lsequent{\initassum,\lied[N-1]{\genDE{x}}{(\etermA{-}\etermAA)}{>} 0}{\ddiamond{\pevolvein{\D{x}=\genDE{x}}{\lied[N-1]{\genDE{x}}{(\etermA{-}\etermAA)}{>} 0}}{x \neq y}}}
  }
    {\lsequent{\Gamma_\etermAA,\initassum,\lied[N-1]{\genDE{x}}{\etermA} {>} 0}{\ddiamond{\pevolvein{\D{x}=\genDE{x}}{\lied[N-1]{\genDE{x}}{(\etermA{-}\etermAA)}{>} 0}}{x \neq y}}}
  }
  {\lsequent{\Gamma_\etermAA,\initassum,\lied[N-1]{\genDE{x}}{\etermA} {>} 0}{\ddiamond{\pevolvein{\D{x}=\genDE{x}}{\lied[N-1]{\genDE{x}}{(\etermA{-}\etermAA)}{\geq} 0}}{x \neq y}}}
}
  {\lsequent{\Gamma_\etermAA, \initassum, \sigliedgt{\genDE{x}}{\etermA},\etermA{=}0,..,\lied[N-2]{\genDE{x}}{\etermA}{=}0}{\ddiamond{\pevolvein{\D{x}=\genDE{x}}{\lied[N-1]{\genDE{x}}{(\etermA{-}\etermAA)}{\geq} 0}}{x \neq y}}}
\\[-\normalbaselineskip]\tag*{\qedhere}
\end{sequentdeduction}
}%
\end{proof}

\subsubsection{Semianalytic Formulas}

The proof in the semianalytic case is outlined in~\rref{subsec:localprogress}.
It lifts derived axioms~\irref{Lpgeqfull} and~\irref{Lpgtfull} according to the homomorphic definition of the semianalytic progress formula, using~\irref{Uniq} to prove local progress into a conjunction of two formulas simultaneously.

\begin{proof}[Proof of \rref{lem:localprogresssemialg}]
By congruential equivalence~\cite{DBLP:journals/jar/Platzer17}, assume, without loss of generality, that formula $\rfvar$ is propositionally rewritten to the same normal form~\rref{eq:normalform} as in the corresponding semianalytic progress formula $\sigliedsai{\genDE{x}}{\rfvar}$.
Throughout this proof, similar premises are collapsed in proofs and directly indexed by $i,j$.
The $i$-th disjunct of $\rfvar$ is abbreviated with $\rfvar_i \mdefequiv \landfold_{j=0}^{m(i)} \etermA_{ij} \geq 0 \land \landfold_{j=0}^{n(i)} \etermB_{ij}> 0$.

The derivation starts by splitting the (outermost) disjunction in $\sigliedsai{\genDE{x}}{\rfvar}$ with \irref{orl}.
For each resulting premise (indexed by $i$), local progress is proved for the corresponding disjunct $\rfvar_i$ of $\rfvar$.
The domain change with \irref{gddR} proves because $\rfvar_i \lor x=y \limply \rfvar \lor x=y$ is a propositional tautology for each $i$.
{\footnotesizeoff%
\begin{sequentdeduction}[array]
\linfer[orl]{
\linfer[gddR]{
  \lsequent{\initassum,\landfold_{j=0}^{m(i)} \sigliedgeq{\genDE{x}}{\etermA_{ij}} \land \landfold_{j=0}^{n(i)} \sigliedgt{\genDE{x}}{\etermB_{ij}}}{\dprogressin{\D{x}=\genDE{x}}{\rfvar_i}}
}
  {\lsequent{\initassum,\landfold_{j=0}^{m(i)} \sigliedgeq{\genDE{x}}{\etermA_{ij}} \land \landfold_{j=0}^{n(i)} \sigliedgt{\genDE{x}}{\etermB_{ij}}}{\dprogressin{\D{x}=\genDE{x}}{\rfvar}}}
}
  {\lsequent{\initassum,\sigliedsai{\genDE{x}}{\rfvar}}{\dprogressin{\D{x}=\genDE{x}}{\rfvar}}}
\end{sequentdeduction}
}%

It suffices now to prove local progress in $\rfvar_i$.
The uniqueness axiom \irref{Uniq} splits conjuncts in $\rfvar_i$.
The~\irref{gddR} step distributes $\initassum$ in domain constraint from $\ddnext$ over conjunctions using the propositional tautology $(\rrfvar_1 \land \rrfvar_2) \lor \initassum \lbisubjunct (\rrfvar_1 \lor \initassum) \land (\rrfvar_2 \lor x=y)$.
This leaves premises (indexed by $j$) for the non-strict and strict inequalities of $\rfvar_i$ which are closed by~\irref{Lpgeqfull} and~\irref{Lpgtfull} respectively.
{\footnotesizeoff\renewcommand{\linferPremissSeparation}{~}%
\begin{sequentdeduction}[array]
\linfer[Uniq+andr+gddR]{
  \linfer[Lpgeqfull]{
    \lclose
  }
  {\lsequent{\initassum,\sigliedgeq{\genDE{x}}{\etermA_{ij}}}{\dprogressin{\D{x}{=}\genDE{x}}{\etermA_{ij} {\geq} 0}}}
  !
  \linfer[Lpgtfull]{
    \lclose
  }
  {\lsequent{\initassum,\sigliedgt{\genDE{x}}{\etermB_{ij}}}{\dprogressin{\D{x}{=}\genDE{x}}{\etermB_{ij} {>} 0}}}
}
  {\lsequent{\initassum,\landfold_{j=0}^{m(i)} \sigliedgeq{\genDE{x}}{\etermA_{ij}} {\land} \landfold_{j=0}^{n(i)} \sigliedgt{\genDE{x}}{\etermB_{ij}}}{\dprogressin{\D{x}{=}\genDE{x}}{\rfvar_i}\qedhere}}
\end{sequentdeduction}
}%
\end{proof}

Using~\rref{prop:negationrearrangement}, the implicational semianalytic local progress axiom~\irref{LpRfull} from \rref{lem:localprogresssemialg} is strengthened to an equivalent characterization of semianalytic local progress.
\begin{proof}[Proof of \rref{thm:localprogresscomplete}]
By congruential equivalence~\cite{DBLP:journals/jar/Platzer17}, assume, without loss of generality, that formula $\rfvar$ is propositionally rewritten to the same normal form~\rref{eq:normalform} as in the corresponding semianalytic progress formula $\sigliedsai{\genDE{x}}{\rfvar}$.
By \rref{prop:negationrearrangement}, there is a normal form for $\lnot{\rfvar}$
with the provable equivalence $ \lnot{(\sigliedsai{\genDE{x}}{\rfvar})} \lbisubjunct \sigliedsai{\genDE{x}}{(\lnot{\rfvar})}$.
The ``$\lylpmi$'' direction of the inner equivalence is \irref{LpRfull}.
The derivation in the ``$\limply$'' direction of the inner equivalence starts by reducing to the contrapositive statement by propositional logic transformations.
The final step rewrites the negation in the antecedents using the above normal form for $\lnot{\rfvar}$ from~\rref{prop:negationrearrangement}.
{\footnotesizeoff%
\begin{sequentdeduction}[array]
\linfer[cut+notl+notr]{
\linfer[qear]{
  \lsequent{\initassum,\sigliedsai{\genDE{x}}{(\lnot{\rfvar})}}{\lnot{\dprogressin{\D{x}=\genDE{x}}{\rfvar}}}
}
  {\lsequent{\initassum,\lnot{(\sigliedsai{\genDE{x}}{\rfvar})}}{\lnot{\dprogressin{\D{x}=\genDE{x}}{\rfvar}}}}
}
  {\lsequent{\initassum,\dprogressin{\D{x}=\genDE{x}}{\rfvar}}{\sigliedsai{\genDE{x}}{\rfvar}}}
\end{sequentdeduction}
}%

By the derived axiom \irref{LpRfull} from \rref{lem:localprogresssemialg}, the progress formula for $\lnot{\rfvar}$ in the antecedent implies local progress for $\lnot{\rfvar}$.
The proof is completed with derived axiom~\irref{dualityimp} of \rref{cor:dualityimp}:

{\footnotesizeoff%
\begin{sequentdeduction}[array]
\linfer[LpRfull]{
\linfer[dualityimp]{
  \lclose
}
  {\lsequent{\initassum,\dprogressin{\D{x}=\genDE{x}}{\lnot{\rfvar}}}{\lnot{\dprogressin{\D{x}=\genDE{x}}{\rfvar}}}}
}
  {\lsequent{\initassum,\sigliedsai{\genDE{x}}{(\lnot{\rfvar})}}{\lnot{\dprogressin{\D{x}=\genDE{x}}{\rfvar}}}}
\\[-\normalbaselineskip]\tag*{\qedhere}
\end{sequentdeduction}
}%
\end{proof}

\begin{proof}[Proof of \rref{cor:localprogresscompletedualcong}]

Self-duality axiom \irref{duality} derives by using \irref{Lpiff} twice together with the provable equivalence \(\lnot{(\sigliedsai{\genDE{x}}{\rfvar})} \lbisubjunct \sigliedsai{\genDE{x}}{(\lnot{\rfvar})}\) from \rref{prop:negationrearrangement} (and double negation elimination).
{\footnotesizeoff%
\begin{sequentdeduction}[array]
\linfer[Lpiff]{
\linfer[qear]{
\linfer[Lpiff]{
  \lclose
}
  {\lsequent{\initassum}{\dprogressin{\D{x}=\genDE{x}}{\rfvar} \lbisubjunct \sigliedsai{\genDE{x}}{\rfvar}}}
}
  {\lsequent{\initassum}{\dprogressin{\D{x}=\genDE{x}}{\rfvar} \lbisubjunct \lnot{\sigliedsai{\genDE{x}}{(\lnot{\rfvar})}}}}
}
  {\lsequent{\initassum}{\dprogressin{\D{x}=\genDE{x}}{\rfvar} \lbisubjunct \lnot{\dprogressin{\D{x}=\genDE{x}}{\lnot{\rfvar}}}}}
\end{sequentdeduction}
}%
Local progress congruence rule~\irref{CLP} derives similarly by introducing an initial assumption $\initassum$ with~\irref{cut+qear+existsl}, equivalently rewriting with~\irref{Lpiff}, and congruential equivalence~\cite{DBLP:journals/jar/Platzer17} in the last step.
{\footnotesizeoff%
\begin{sequentdeduction}[array]
\linfer[cut+qear+existsl]{
\linfer[Lpiff]{
\linfer[]{
  \lsequent{} {\rfvar \lbisubjunct \rrfvar}
}
  {\lsequent{} {\dprogressin{\D{x}=\genDE{x}}{\rfvar} \lbisubjunct \dprogressin{\D{x}=\genDE{x}}{\rrfvar}}}
}
  {\lsequent{\initassum} {\sigliedsai{\genDE{x}}{\rfvar} \lbisubjunct \sigliedsai{\genDE{x}}{\rrfvar}}}
}
  {\lsequent{} {\sigliedsai{\genDE{x}}{\rfvar} \lbisubjunct \sigliedsai{\genDE{x}}{\rrfvar}}}
\\[-\normalbaselineskip]\tag*{\qedhere}
\end{sequentdeduction}
}%
\end{proof}

\subsection{Analytic Invariants}
\label{app:alginvariants}

This section derives the analytic completeness axiom~\irref{DRI} (and its generalization~\irref{DRIQ}), thus proving completeness for analytic (Noetherian) invariants and also for analytic postconditions.

\subsubsection{Differential Radical Invariants}

The differential radical invariants proof rule \irref{dRI} derives from \irref{vdbx} by turning the differential radical identity~\rref{eq:differential-rank} into a provable vectorial Darboux equality.

\begin{proof}[Proof of \rref{thm:DRI}]
Let $\etermA$ be an extended term satisfying both premises of the \irref{dRI} proof rule and let $\vecpolyn{\etermA}{x}$ be the vector of extended terms with components $\vecpolyn{\etermA}{x}_i \mdefeq \lied[i-1]{\genDE{x}}{\etermA}$ for $i = 1,2,\dots,N$.
The derivation starts by setting up the premise for an application of derived rule~\irref{vdbx}.
In the first step, axiom \irref{DX} is used to assume that the domain constraint $\ivr$ is true initially.
On the left premise after the cut, arithmetic equivalence \m{\landfold_{i=0}^{N-1} \lied[i]{\genDE{x}}{\etermA} = 0 \lbisubjunct \vecpolyn{\etermA}{x}= 0} is used to rewrite the succedent to the left premise of \irref{dRI}.
On the right premise, monotonicity~\irref{MbW} strengthens the postcondition to $\vecpolyn{\etermA}{x}=0$:
{\footnotesizeoff%
\begin{sequentdeduction}[array]
\linfer[DX]{
\linfer[cut]{
  \linfer[qear]{
    \lsequent{\Gamma,\ivr} {\landfold_{i=0}^{N-1} \lied[i]{\genDE{x}}{\etermA} = 0 }
  }
  {\lsequent{\Gamma,\ivr} {\vecpolyn{\etermA}{x} = 0}} !
  \linfer[MbW]{
    \lsequent{\vecpolyn{\etermA}{x} = 0} {\dbox{\pevolvein{\D{x}=\genDE{x}}{\ivr}}{\vecpolyn{\etermA}{x} = 0}}
  }
  {\lsequent{\vecpolyn{\etermA}{x} = 0} {\dbox{\pevolvein{\D{x}=\genDE{x}}{\ivr}}{\etermA =0}}}
}
  {\lsequent{\Gamma,\ivr} {\dbox{\pevolvein{\D{x}=\genDE{x}}{\ivr}}{\etermA = 0}}}
}
  {\lsequent{\Gamma} {\dbox{\pevolvein{\D{x}=\genDE{x}}{\ivr}}{\etermA = 0}}}
\end{sequentdeduction}
}%

Continuing from the right premise, the component-wise Lie derivative of $\vecpolyn{\etermA}{x}$ is defined as \((\lied[]{\genDE{x}}{\vec{\etermA}})_i = \lie[]{\genDE{x}}{\vec{\etermA}_i} = \lied[i]{\genDE{x}}{\etermA}\).
The vector $\lied[]{\genDE{x}}{\vec{\etermA}}$ will be obtained from $\vecpolyn{\etermA}{x}$ by matrix multiplication with an $N \times N$ extended term cofactor matrix $\coftermC$ with $1$ on its superdiagonal, and the $g_i$ cofactors in the last row:
{\footnotesizeoff%
\[\matpolyn{\coftermC}{x}= \left(\begin{array}{ccccc}
0      & 1      & 0      & \dots & 0      \\
0      & 0      & \ddots & \ddots & \vdots \\
\vdots & \vdots & \ddots & \ddots & 0      \\
0      & 0      & \dots & 0      & 1\\
\cofterm_0    & \cofterm_1    & \dots & \cofterm_{N-2}& \cofterm_{N-1} \end{array}\right),
\quad
\vecpolyn{\etermA}{x} = \left(\begin{array}{l}\etermA\\ \lied[1]{\genDE{x}}{\etermA}\\ \vdots \\\lied[N-2]{\genDE{x}}{\etermA}\\ \lied[N-1]{\genDE{x}}{\etermA}\end{array}\right),
\quad
\lied[]{\genDE{x}}{\vec{\etermA}} = \left(\begin{array}{l} \lied[1]{\genDE{x}}{\etermA} \\ \lied[2]{\genDE{x}}{\etermA}\\ \vdots \\\lied[N-1]{\genDE{x}}{\etermA}\\ \lied[N]{\genDE{x}}{\etermA}\end{array}\right)
\]}%

The vectorial equation $\lied[]{\genDE{x}}{\vec{\etermA}} = \coftermC \vecpolyn{\etermA}{x}$ is provably equivalent to the equation $\lied[N]{\genDE{x}}{\etermA} = \sum_{i=0}^{N-1} \cofterm_i \lied[i]{\genDE{x}}{\etermA}$.
To see this, note that for indices $1 \leq i < N$, matrix multiplication yields:
\[ (\lied[]{\genDE{x}}{\vec{\etermA}})_i = \lied[i]{\genDE{x}}{\etermA} = (\vecpolyn{\etermA}{x})_{i+1} = (\coftermC\itimes\vecpolyn{\etermA}{x})_i \]

Therefore, all but the final component-wise equality are trivially valid and prove by \irref{qear}.
The remaining (non-trivial) equation for $i=N$ is $(\lied[]{\genDE{x}}{\vec{\etermA}})_{N} = (\coftermC\itimes\vecpolyn{\etermA}{x})_N$.
The LHS of this equation simplifies with $(\lied[]{\genDE{x}}{\vec{\etermA}})_{N} = \lied[N]{\genDE{x}}{\etermA}$, while the RHS simplifies by matrix multiplication to:
\[ (\coftermC\itimes\vecpolyn{\etermA}{x})_N = \sum_{i=1}^{N} g_{i-1} (\vec{\etermA})_i = \sum_{i=1}^{N} g_{i-1} \lied[i-1]{\genDE{x}}{\etermA} = \sum_{i=0}^{N-1} g_{i} \lied[i]{\genDE{x}}{\etermA} \]

Hence, real arithmetic equivalently turns the formula $\lied[]{\genDE{x}}{\vec{\etermA}} = \coftermC \vecpolyn{\etermA}{x}$ into the succedent of the right premise of rule \irref{dRI}.
An application of derived rule~\irref{vdbx} from \rref{cor:vdbxscalar} with cofactor matrix $\coftermC$ followed by rule~\irref{qear} yields the remaining right premise of rule~\irref{dRI}, completing the derivation.
{\footnotesizeoff%
\begin{sequentdeduction}[array]
\linfer[vdbx]{
\linfer[qear]{
    \lsequent{\ivr} { \lied[N]{\genDE{x}}{\etermA} = \sum_{i=0}^{N-1} g_{i} \lied[i]{\genDE{x}}{\etermA}}
}
    {\lsequent{\ivr} {\lied[]{\genDE{x}}{\vec{\etermA}}=\matpolyn{G}{x} \itimes\vecpolyn{\etermA}{x}}}
}
    {\lsequent{\vecpolyn{\etermA}{x} = 0} {\dbox{\pevolvein{\D{x}=\genDE{x}}{\ivr}}{\vecpolyn{\etermA}{x}=0}}}
\\[-\normalbaselineskip]\tag*{\qedhere}
\end{sequentdeduction}
}%
\end{proof}

The derivation of rule~\irref{dRI} uses a specific choice of cofactor matrix $\coftermC$ in rule~\irref{vdbx} to prove invariance of the equation $\etermA = 0$.
This suffices for analytic completeness because analytic formulas can always be normalized to a single equation in real arithmetic.
However, such normalization may not yield the computationally most efficient way of proving an analytic invariant (see~\rref{ex:expressivity}).

\subsubsection{Completeness for Analytic Invariants}

The analytic completeness axiom with semianalytic domain constraints \irref{DRIQ} from~\rref{thm:algcompletedom} is derived next, making use of \irref{LpRfull} from \rref{lem:localprogresssemialg}.
The completeness argument can be summarized by contrapositives: if the local progress formula $\sigliedzero{\genDE{x}}{\etermA}$ is false in an initial state, then some higher Lie derivative of $\etermA$ is non-zero and gives a definite (local) sign to the value of $\etermA$, which, by \irref{LpRfull} implies progress to \(\etermA\neq0\).
For completeness, axiom~\irref{DRIQ} also handles the vacuous case where domain constraint $\ivr$ is false initially $(\ivr \limply \dots)$ and the stuck case where the domain constraint is true initially but cannot locally progress ($\sigliedsai{\genDE{x}}{\ivr}  \limply \dots$).

\begin{proof}[Proof of \rref{thm:algcompletedom} (implies \rref{thm:algcomplete})]
For formulas $\ivr$ formed from  conjunctions and disjunctions of strict inequalities, $\ivr \limply \sigliedsai{\genDE{x}}{\ivr}$ is provable in real arithmetic, so axiom~\irref{DRI} follows as an arithmetical corollary of~\irref{DRIQ}.
The derivation of axiom~\irref{DRIQ} starts by rewriting its LHS equivalently with axiom~\irref{DX}.
This is followed by a series of equivalent propositional rewrites that simplify the logical structure of the succedent.
The propositional steps are shown below, first pulling out the common implication $\ivr$ then the common conjunct $\etermA = 0$ as antecedent assumptions.
{\footnotesizeoff%
\begin{sequentdeduction}[array]
\linfer[DX]{
\linfer[]{
\linfer[]{
  \lsequent{\ivr, \etermA = 0}{ \big(\dbox{\pevolvein{\D{x}=\genDE{x}}{\ivr}}{\etermA=0}\big) \lbisubjunct \big(\sigliedsai{\genDE{x}}{\ivr} \limply \sigliedzero{\genDE{x}}{\etermA} \big)}
}
  {\lsequent{\ivr}{ \big(\etermA = 0 \land \dbox{\pevolvein{\D{x}=\genDE{x}}{\ivr}}{\etermA=0}\big) \lbisubjunct \big(\etermA = 0 \land (\sigliedsai{\genDE{x}}{\ivr} \limply \sigliedzero{\genDE{x}}{\etermA}) \big)}}
}
  {\lsequent{}{ \big(\ivr \limply \etermA = 0 \land \dbox{\pevolvein{\D{x}=\genDE{x}}{\ivr}}{\etermA=0}\big) \lbisubjunct \big(\ivr \limply \etermA = 0 \land (\sigliedsai{\genDE{x}}{\ivr} \limply \sigliedzero{\genDE{x}}{\etermA}) \big)}}
}
  {\lsequent{}{\dbox{\pevolvein{\D{x}=\genDE{x}}{\ivr}}{\etermA=0} \lbisubjunct \big(\ivr \limply \etermA = 0 \land (\sigliedsai{\genDE{x}}{\ivr} \limply \sigliedzero{\genDE{x}}{\etermA} ) \big)}}
\end{sequentdeduction}
}%

Next, a cut of the first-order formula $\lexists{y}{\initassum}$ proves trivially in real arithmetic and Skolemizing it with~\irref{existsl} yields an initial state assumption ($\initassum$ for fresh variables $y$).
To make use of this initial state assumption, the derivation continues with a classical case split on whether the semianalytic progress formula $\sigliedsai{\genDE{x}}{\ivr}$ is true initially.
The resulting premises are labeled \textcircled{1} (for the $\sigliedsai{\genDE{x}}{\ivr}$ disjunct) and \textcircled{2} (for the $\lnot(\sigliedsai{\genDE{x}}{\ivr})$ disjunct) and continued below.
{\footnotesizeoff%
\begin{sequentdeduction}[array]
\linfer[cut+qear]{
\linfer[existsl]{
\linfer[cut]{
\linfer[orl]{
  \textcircled{1} ! \textcircled{2}
}
  {\lsequent{\initassum, \ivr, \etermA = 0, \sigliedsai{\genDE{x}}{\ivr}\lor\lnot{(\sigliedsai{\genDE{x}}{\ivr})}}{ \big(\dbox{\pevolvein{\D{x}=\genDE{x}}{\ivr}}{\etermA=0}\big) \lbisubjunct \big(\sigliedsai{\genDE{x}}{\ivr} \limply \sigliedzero{\genDE{x}}{\etermA} \big)}}
}
  {\lsequent{\initassum, \ivr, \etermA = 0}{ \big(\dbox{\pevolvein{\D{x}=\genDE{x}}{\ivr}}{\etermA=0}\big) \lbisubjunct \big(\sigliedsai{\genDE{x}}{\ivr} \limply \sigliedzero{\genDE{x}}{\etermA} \big)}}
}
  {\lsequent{\lexists{y}{\initassum}, \ivr, \etermA = 0}{ \big(\dbox{\pevolvein{\D{x}=\genDE{x}}{\ivr}}{\etermA=0}\big) \lbisubjunct \big(\sigliedsai{\genDE{x}}{\ivr} \limply \sigliedzero{\genDE{x}}{\etermA} \big)}}
}
  {\lsequent{\ivr, \etermA = 0}{ \big(\dbox{\pevolvein{\D{x}=\genDE{x}}{\ivr}}{\etermA=0}\big) \lbisubjunct \big(\sigliedsai{\genDE{x}}{\ivr} \limply \sigliedzero{\genDE{x}}{\etermA} \big)}}
\end{sequentdeduction}
}%

The premise \textcircled{2} corresponds to the case where solutions are \emph{stuck} in the initial state because no local progress in the domain constraint $\ivr$ is possible.
Topologically, this corresponds to the situation where initial states are on the boundary of the set characterized by $\ivr$ (and also in $\ivr$)\footnote{%
This situation is impossible for domain constraints $\ivr$ characterizing topologically open sets, which is the semantical reason for derived axiom~\irref{DRI} having a simpler RHS characterization than~\irref{DRIQ}.}
but the ODE locally leaves $\ivr$.
Since $\etermA=0$ is already true in this stuck state, it trivially remains true for all solutions staying in domain constraint $\ivr$.
The derivation from~\textcircled{2} starts with a propositional simplification of the succedent since its RHS is vacuously equivalent to $\ltrue$ by assumption $\lnot(\sigliedsai{\genDE{x}}{\ivr})$.
The local progress characterization axiom~\irref{Lpiff} equivalently rewrites the sub-formula $\sigliedsai{\genDE{x}}{\ivr}$ to local progress for $\ivr$ before axiom~\irref{diamond} unfolds the $\ddnext$ modality turning it into a box modality formula.
{\footnotesizeoff%
\begin{sequentdeduction}[array]
\linfer[qear]{
\linfer[Lpiff]{
\linfer[diamond]{
  \lsequent{\initassum, \etermA = 0, \dbox{\pevolvein{\D{x}=\genDE{x}}{\ivr \lor x=y}}{x = y}}{\dbox{\pevolvein{\D{x}=\genDE{x}}{\ivr}}{\etermA=0}}
}
  {\lsequent{\initassum, \etermA = 0, \lnot{(\dprogressin{\D{x}=\genDE{x}}{\ivr})}}{\dbox{\pevolvein{\D{x}=\genDE{x}}{\ivr}}{\etermA=0}}}
}
  {\lsequent{\initassum, \etermA = 0, \lnot{(\sigliedsai{\genDE{x}}{\ivr})}}{\dbox{\pevolvein{\D{x}=\genDE{x}}{\ivr}}{\etermA=0}}}
}
  {\lsequent{\initassum, \etermA = 0, \lnot{(\sigliedsai{\genDE{x}}{\ivr})}}{\big(\dbox{\pevolvein{\D{x}=\genDE{x}}{\ivr}}{\etermA=0}\big) \lbisubjunct \big(\sigliedsai{\genDE{x}}{\ivr} \limply \sigliedzero{\genDE{x}}{\etermA} \big)}}
\end{sequentdeduction}
}%
By axiom~\irref{V}, the constant assumption $\etermA(y)=0$ (with $y$ in place of $x$) strengthens the postcondition of the antecedent box modality to $\etermA=0$ using the provable arithmetic formula \(\etermA(y)=0 \land x=y \limply \etermA=0\).
A subsequent~\irref{DMP+dW} step finishes the proof using the propositional tautology $\ivr \limply \ivr \lor x=y$.
{\footnotesizeoff%
\begin{sequentdeduction}[array]
\linfer[V]{
\linfer[DMP+dW]{
\linfer{%
  \lclose
}
  {\lsequent{\ivr}{\ivr \lor x=y}}
}
  {\lsequent{\dbox{\pevolvein{\D{x}=\genDE{x}}{\ivr \lor x=y}}{\etermA=0}}{\dbox{\pevolvein{\D{x}=\genDE{x}}{\ivr}}{\etermA=0}}}
}
  {\lsequent{\initassum, \etermA = 0, \dbox{\pevolvein{\D{x}=\genDE{x}}{\ivr \lor x=y}}{x = y}}{\dbox{\pevolvein{\D{x}=\genDE{x}}{\ivr}}{\etermA=0}}}
\end{sequentdeduction}
}%

From premise \textcircled{1}, the succedent propositionally simplifies to \(\big(\dbox{\pevolvein{\D{x}=\genDE{x}}{\ivr}}{\etermA=0}\big) \lbisubjunct \sigliedzero{\genDE{x}}{\etermA}\) by assumption $\sigliedsai{\genDE{x}}{\ivr}$.
The two directions of this simplified succedent are proved separately.
In the ``$\lylpmi$'' direction, the derivation uses rule~\irref{dRI} by setting $N$ to the rank of extended term $\etermA$, so that the succedent of its left premise is exactly $\sigliedzero{\genDE{x}}{\etermA}$.
The right premise resulting from~\irref{dRI} closes by real arithmetic, since $N$ is the rank of $\etermA$, it must, by definition satisfy the provable rank identity \rref{eq:differential-rank}.
{\footnotesizeoff%
\renewcommand{\arraystretch}{1.3}%
\begin{sequentdeduction}[array]
\linfer[dRI]{
    \linfer[qear]{\lclose}
    {\lsequent{}{\lied[N]{\genDE{x}}{\etermA} = \sum_{i=0}^{N-1} g_i \lied[i]{\genDE{x}}{\etermA}}}
  }
  {\lsequent{\sigliedzero{\genDE{x}}{\etermA}}{\dbox{\pevolvein{\D{x}=\genDE{x}}{\ivr}}{\etermA=0} }}
\end{sequentdeduction}%
}%

The derivation in the ``$\limply$'' direction starts by reducing to the contrapositive statement with duality \irref{diamond} and propositional logical manipulation.
{\footnotesizeoff%
\begin{sequentdeduction}[array]
\linfer[diamond]{
\linfer[notl+notr]{
  \lsequent{\initassum, \ivr, \sigliedsai{\genDE{x}}{\ivr}, \lnot{(\sigliedzero{\genDE{x}}{\etermA})}}{\ddiamond{\pevolvein{\D{x}=\genDE{x}}{\ivr}}{\etermA \neq 0}}
}
  {\lsequent{\initassum, \ivr, \sigliedsai{\genDE{x}}{\ivr}, \lnot\ddiamond{\pevolvein{\D{x}=\genDE{x}}{\ivr}}{\etermA \neq 0}}{\sigliedzero{\genDE{x}}{\etermA}}}
}
  {\lsequent{\initassum, \ivr, \sigliedsai{\genDE{x}}{\ivr}, \dbox{\pevolvein{\D{x}=\genDE{x}}{\ivr}}{\etermA = 0}}{\sigliedzero{\genDE{x}}{\etermA}}}
\end{sequentdeduction}
}%
By derived axiom~\irref{Lpiff}, the antecedent assumption $\sigliedsai{\genDE{x}}{\ivr}$ is equivalently rewritten to local progress for $\ivr$ before~\irref{bigsmallequiv} (\rref{cor:bigsmallequiv}) is used to strengthen it with the assumption $\ivr$.
The final step rewrites the resulting negated differential radical formula in the antecedents to two progress formulas by \rref{eq:sigliedzerorearrangement} from \rref{prop:rearrangement}.
Subsequent splitting with \irref{orl} yields two premises, which are abbreviated \textcircled{3} (for disjunct $\sigliedgt{\genDE{x}}{\etermA}$) and \textcircled{4} (for disjunct $\sigliedgt{\genDE{x}}{(-\etermA)}$) respectively, and continued below.
{\footnotesizeoff%
\begin{sequentdeduction}[array]
\linfer[Lpiff]{
\linfer[bigsmallequiv]{
\linfer[qear]{
\linfer[orl]{
  \textcircled{3} ! \textcircled{4}
}
  {\lsequent{\initassum, \ddiamond{\pevolvein{\D{x}=\genDE{x}}{\ivr}}{x \neq y}, \sigliedgt{\genDE{x}}{\etermA} \lor \sigliedgt{\genDE{x}}{(-\etermA)}}{\ddiamond{\pevolvein{\D{x}=\genDE{x}}{\ivr}}{\etermA \neq 0}}}
}
  {\lsequent{\initassum, \ddiamond{\pevolvein{\D{x}=\genDE{x}}{\ivr}}{x \neq y}, \lnot{(\sigliedzero{\genDE{x}}{\etermA})}}{\ddiamond{\pevolvein{\D{x}=\genDE{x}}{\ivr}}{\etermA \neq 0}}}
}
  {\lsequent{\initassum, \ivr, \dprogressin{\D{x}=\genDE{x}}{\ivr}, \lnot{(\sigliedzero{\genDE{x}}{\etermA})}}{\ddiamond{\pevolvein{\D{x}=\genDE{x}}{\ivr}}{\etermA \neq 0}}}
}
  {\lsequent{\initassum, \ivr, \sigliedsai{\genDE{x}}{\ivr}, \lnot{(\sigliedzero{\genDE{x}}{\etermA})}}{\ddiamond{\pevolvein{\D{x}=\genDE{x}}{\ivr}}{\etermA \neq 0}}}
\end{sequentdeduction}
}%

Continuing from \textcircled{3}, the assumption $\sigliedgt{\genDE{x}}{\etermA}$ is rewritten with \irref{Lpgtfull} to obtain local progress for $\etermA > 0$.
Unfolding the $\ddnext$ abbreviation, the uniqueness axiom \irref{Uniq} combines the two diamond modality formulas in the antecedent:
{\footnotesizeoff%
\begin{sequentdeduction}[array]
\linfer[Lpgtfull]{
\linfer[]{
\linfer[Uniq]{
  \lsequent{\ddiamond{\pevolvein{\D{x}=\genDE{x}}{\ivr \land (\etermA > 0 \lor x=y)}}{x \neq y} }{ \ddiamond{\pevolvein{\D{x}=\genDE{x}}{\ivr}}{\etermA \neq 0}}
}
  {\lsequent{\ddiamond{\pevolvein{\D{x}=\genDE{x}}{\ivr}}{x\neq y},\ddiamond{\pevolvein{\D{x}=\genDE{x}}{\etermA > 0 \lor x=y}}{x \neq y} }{ \ddiamond{\pevolvein{\D{x}=\genDE{x}}{\ivr}}{\etermA \neq 0}}}
}
  {\lsequent{\ddiamond{\pevolvein{\D{x}=\genDE{x}}{\ivr}}{x\neq y},\dprogressin{\D{x}=\genDE{x}}{\etermA > 0} }{ \ddiamond{\pevolvein{\D{x}=\genDE{x}}{\ivr}}{\etermA \neq 0}}}
}
  {\lsequent{\initassum,\ddiamond{\pevolvein{\D{x}=\genDE{x}}{\ivr}}{x\neq y},\sigliedgt{\genDE{x}}{\etermA}}{ \ddiamond{\pevolvein{\D{x}=\genDE{x}}{\ivr}}{\etermA \neq 0}}}
\end{sequentdeduction}
}%

The succedent's domain constraint is strengthened to match the antecedent's using rule~\irref{gddR} since $\ivr \land (\etermA > 0 \lor x=y) \limply \ivr$ is a propositional tautology.
The Kripke axiom \irref{Kd} reduces the succedent to the box modality, after which the proof finishes with a \irref{dW} step because the formula $\etermA > 0 \lor x=y$ in the domain constraint implies the succedent by real arithmetic.
{\footnotesizeoff%
\begin{sequentdeduction}[array]
\linfer[gddR]{
\linfer[Kd]{
\linfer[dW]{
\linfer[qear]{ \lclose }
  {\lsequent{\ivr \land (\etermA > 0 \lor x=y)}{(x\neq y \limply \etermA \neq 0)}}
}
  {\lsequent{}{\dbox{\pevolvein{\D{x}=\genDE{x}}{\ivr \land (\etermA > 0 \lor x=y)}}{(x\neq y \limply \etermA \neq 0)}}}
}
  {\lsequent{\ddiamond{\pevolvein{\D{x}=\genDE{x}}{\ivr \land (\etermA > 0 \lor x=y)}}{x \neq y} }{\ddiamond{\pevolvein{\D{x}=\genDE{x}}{\ivr \land (\etermA > 0 \lor x=y)}}{\etermA \neq 0}}}
}
  {\lsequent{\ddiamond{\pevolvein{\D{x}=\genDE{x}}{\ivr \land (\etermA > 0 \lor x=y)}}{x \neq y} }{\ddiamond{\pevolvein{\D{x}=\genDE{x}}{\ivr}}{\etermA \neq 0}}}
\end{sequentdeduction}
}%

The remaining premise \textcircled{4} follows similarly, except that the progress formula $\sigliedgt{\genDE{x}}{(-\etermA)}$ enables the cut $\dprogressin{\D{x}=\genDE{x}}{{-}\etermA > 0}$ which leads to the same postcondition $\etermA \neq 0$ in the succedent.
\end{proof}

\subsection{Completeness for Semianalytic Invariants with Semianalytic Evolution Domains}
\label{app:semialginvariants}

The following generalized version of rule \irref{sAI} from \rref{thm:sAI} additionally handles evolution domain constraints.
It derives from derived rule \irref{realindin} and derived axiom \irref{Lpiff}.

\begin{theorem}[Semianalytic invariants with semianalytic domain constraints]
The semianalytic invariant proof rule with semianalytic domain constraints~\irref{sAIQ} derives from $\irref{RealIndIn+Dadjoint+Cont}$, $\irref{Uniq}$ for semianalytic formulas $\rfvar,\ivr$.
\[
\dinferenceRule[sAIQ|sAI{$\&$}]{Semianalytic invariant with domains}
{\linferenceRule
  {
  \lsequent{\rfvar,\ivr,\sigliedsai{\genDE{x}}{\ivr}}{\sigliedsai{\genDE{x}}{\rfvar}} &
  \lsequent{\lnot{\rfvar},\ivr,\sigliedsai[-]{-\genDE{x}}{\ivr}}{\sigliedsai[-]{-\genDE{x}}{(\lnot{\rfvar})}}
  }
  {\lsequent{\rfvar}{\dbox{\pevolvein{\D{x}=\genDE{x}}{\ivr}}{\rfvar}}}
}{}\]
\end{theorem}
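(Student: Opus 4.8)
The plan is to derive \irref{sAIQ} directly from the derived real induction rule with semianalytic domain constraints \irref{realindin} (\rref{cor:realindin}) and the local progress completeness axiom \irref{Lpiff} (\rref{thm:localprogresscomplete}), exactly as rule \irref{sAI} is obtained from \irref{realind} and \irref{Lpiff} in \rref{thm:sAI}, but now carrying the domain constraint \(\ivr\) along. First I would apply \irref{realindin} to the conclusion \(\lsequent{\rfvar}{\dbox{\pevolvein{\D{x}=\genDE{x}}{\ivr}}{\rfvar}}\). Since \irref{realindin} derives from \irref{RealIndIn+Dadjoint+Uniq} and \irref{Lpiff} from \irref{Cont+Uniq}, this stays within the advertised axiom base. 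The rule \irref{realindin} produces the two premises \(\lsequent{\initassum,\rfvar,\ivr,\dprogressin{\D{x}=\genDE{x}}{\ivr}}{\dprogressin{\D{x}=\genDE{x}}{\rfvar}}\) and \(\lsequent{\initassum,\lnot{\rfvar},\ivr,\dprogressin{\D{x}=-\genDE{x}}{\ivr}}{\dprogressin{\D{x}=-\genDE{x}}{\lnot{\rfvar}}}\), each carrying a fresh initial-state assumption \(\initassum\) (with \(y\) fresh in the ODE and in \(\rfvar,\ivr\), inherited from the side condition of \irref{realindin}) and each phrased with the \(\ddnext\) local progress modality rather than with arithmetic.

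Next I would discharge the \(\ddnext\) modalities using \irref{Lpiff}. On the forward premise, the antecedent \(\initassum\) is precisely the hypothesis \irref{Lpiff} requires, so the axiom rewrites the antecedent \(\dprogressin{\D{x}=\genDE{x}}{\ivr}\) to \(\sigliedsai{\genDE{x}}{\ivr}\) and the succedent \(\dprogressin{\D{x}=\genDE{x}}{\rfvar}\) to \(\sigliedsai{\genDE{x}}{\rfvar}\); the assumption \(\initassum\) is then no longer referenced and is removed by weakening, leaving exactly the left premise \(\lsequent{\rfvar,\ivr,\sigliedsai{\genDE{x}}{\ivr}}{\sigliedsai{\genDE{x}}{\rfvar}}\) of \irref{sAIQ}. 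The backward premise is handled symmetrically by instantiating \irref{Lpiff} on the reversed ODE \(\D{x}=-\genDE{x}\): its antecedent \(\dprogressin{\D{x}=-\genDE{x}}{\ivr}\) rewrites to the progress formula \(\sigliedsai[-]{-\genDE{x}}{\ivr}\) and its succedent \(\dprogressin{\D{x}=-\genDE{x}}{\lnot{\rfvar}}\) to \(\sigliedsai[-]{-\genDE{x}}{(\lnot{\rfvar})}\), giving the right premise of \irref{sAIQ}. The sign flip that turns the forward ODE into the backward one has already been carried out inside \irref{realindin} via \irref{Dadjoint}, so no extra adjustment is needed at this stage.

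Most of the content is therefore borrowed: the completeness of the finite arithmetic progress formulas is \irref{Lpiff} (\rref{thm:localprogresscomplete}), and the reduction of global invariance to two local progress queries along the forward and backward ODEs is \irref{realindin} (\rref{cor:realindin}, built on \irref{RealIndIn}, whose soundness rests on real induction and Picard--Lindel\"{o}f in \rref{lem:realindODEin}). What is left for \irref{sAIQ} is essentially propositional bookkeeping, and the only point that warrants attention is the antecedent context under \irref{realindin}: that rule discards assumptions that depend on \(x\) but keeps constant ones, yet the premises of \irref{sAIQ} carry no context beyond \(\rfvar\) and \(\ivr\), both of which already appear among the antecedents generated by \irref{realindin}, so nothing useful is lost. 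I expect the only mild subtlety to be confirming that the freshness condition on \(y\) demanded by \irref{Lpiff} is the very one supplied by \irref{realindin} (it is); with that in hand the derivation closes, and re-running the same construction with the equivalence \irref{Lpiff} pushed through in both directions yields the stronger axiom \irref{SAI} in the same way \irref{sAI} is strengthened to \irref{SAI}.
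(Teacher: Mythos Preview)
Your proposal is correct and follows essentially the same approach as the paper: apply \irref{realindin} to obtain the two local-progress premises with the $\initassum$ assumption, rewrite each $\ddnext$ modality to its progress formula via \irref{Lpiff}, and weaken away $\initassum$. Your closing remark about obtaining \irref{SAI} by ``re-running the same construction'' is too optimistic, though---the completeness direction of \irref{SAIQ} in the paper requires additional work (contraposition, \irref{reflect}, \irref{Uniq}, \irref{bigsmallequiv}), not just pushing \irref{Lpiff} through in both directions---but that is outside the scope of the stated theorem.
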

\begin{proof}[Proof (implies~\rref{thm:sAI})]
Rule \irref{sAIQ} derives from rule \irref{realindin} derived in \rref{cor:realindin} and the characterization of semianalytic local progress \irref{Lpiff} derived in \rref{thm:localprogresscomplete}.
The $\initassum$ assumptions provided by \irref{realindin} are used to convert between local progress modalities and the semianalytic progress formulas by \irref{Lpiff}, but, by weakening, $\initassum$ can be elided again in the premises of \irref{sAIQ}.
\end{proof}

Recalling the earlier discussion for derived rule \irref{realindin}, axiom \irref{V} can be used, as usual, to keep \emph{constant} context assumptions that do not depend on variables $x$ for the ODEs $\D{x}=\genDE{x}$ in rule \irref{sAIQ}, because it immediately derives from \irref{realindin}, which supports constant contexts.
The proof rule~\irref{sAIQ} is complete for invariance properties.
This is proved syntactically, enabling complete disproofs of invariance.
The completeness of~\irref{sAI} from \rref{thm:semialgcompleteness} follows as a special case, where $\ivr \equiv \ltrue$.

\begin{theorem}[Semianalytic invariant completeness with semianalytic domains]
\label{thm:semialgcompletenessdom}
The semianalytic invariant axiom with semianalytic domain constraints~\irref{SAIQ} derives from $\irref{RealIndIn+Dadjoint+Cont}$, $\irref{Uniq}$ for semianalytic formulas $\rfvar,\ivr$.
\[
\dinferenceRule[SAIQ|SAI{$\&$}]{Semianalytic invariant with domains axiom}
{\linferenceRule[equiv]
  {
  \underbrace{\lforall{x}{\big(\rfvar \land \ivr \land \sigliedsai{\genDE{x}}{\ivr} {\limply} \sigliedsai{\genDE{x}}{\rfvar}\big)}}_{\textcircled{a}}
  \land
  \underbrace{\lforall{x}{\big(\lnot{\rfvar} \land \ivr \land \sigliedsai[-]{-\genDE{x}}{\ivr} {\limply} \sigliedsai[-]{\-genDE{x}}{(\lnot{\rfvar})}\big)}}_{\textcircled{b}}
  }
  {\axkey{\lforall{x}{(\rfvar \limply \dbox{\pevolvein{\D{x}=\genDE{x}}{\ivr}}{\rfvar})}}}
}{}
\]
\end{theorem}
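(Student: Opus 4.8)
The plan is to establish the equivalence \irref{SAIQ} by deriving its two implications separately. For the ``$\limply$'' direction I would invoke the derived rule \irref{sAIQ} (just derived in \rref{app:semialginvariants}), and for the ``$\lylpmi$'' direction I would combine the derived local progress characterization \irref{Lpiff} with the reflection axiom \irref{reflect} and the ``big vs.\ small diamond'' axiom \irref{bigsmallequiv}. Throughout, the standing requirement that $\D{x}=\genDE{x}$ locally evolves $x$ (guaranteed, if necessary, by first prepending $\D{x_1}=1$ via \irref{DG}) is assumed, as it is what makes \irref{Cont}-based principles applicable.

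\textbf{Soundness direction (``$\limply$'').} Assume \textcircled{a}\,$\land$\,\textcircled{b} and Skolemize the outer $\lforall{x}$ on the right-hand side with \irref{allr}, so the goal becomes $\lsequent{\textcircled{a},\textcircled{b},\rfvar}{\dbox{\pevolvein{\D{x}=\genDE{x}}{\ivr}}{\rfvar}}$. Since \textcircled{a} and \textcircled{b} are $\lforall{x}$-formulas, they are \emph{constant} for the ODE $\D{x}=\genDE{x}$, hence can be retained as context across \irref{sAIQ} using \irref{V} (legitimate because \irref{sAIQ} supports constant contexts, as noted after \rref{cor:realindin}). Applying \irref{sAIQ} then leaves exactly the premises $\lsequent{\rfvar,\ivr,\sigliedsai{\genDE{x}}{\ivr}}{\sigliedsai{\genDE{x}}{\rfvar}}$ and $\lsequent{\lnot{\rfvar},\ivr,\sigliedsai[-]{-\genDE{x}}{\ivr}}{\sigliedsai[-]{-\genDE{x}}{(\lnot{\rfvar})}}$, which close by instantiating \textcircled{a} and \textcircled{b} with \irref{alll}.

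\textbf{Completeness direction (``$\lylpmi$'').} Assume $\lforall{x}{(\rfvar \limply \dbox{\pevolvein{\D{x}=\genDE{x}}{\ivr}}{\rfvar})}$ and split the conjunction on the left with \irref{andr}. For conjunct \textcircled{a}, I would Skolemize its $\lforall{x}$ and introduce a fresh initial-state assumption $\initassum$ ($y$ fresh) via a \irref{cut}\,+\,\irref{qear}\,+\,\irref{existsl} step on $\lexists{y}{\initassum}$, exactly as in the derivations of \rref{thm:algcompletedom} and \rref{cor:localprogresscompletedualcong}. Under this assumption, \irref{Lpiff} rewrites both $\sigliedsai{\genDE{x}}{\ivr}$ and $\sigliedsai{\genDE{x}}{\rfvar}$ to $\ddnext$-modalities, leaving the obligation
\[ \lsequent{\initassum,\rfvar,\ivr,\dprogressin{\D{x}=\genDE{x}}{\ivr},\dbox{\pevolvein{\D{x}=\genDE{x}}{\ivr}}{\rfvar}}{\dprogressin{\D{x}=\genDE{x}}{\rfvar}}. \]
From the invariance assumption and $\rfvar$, the antecedent already supplies $\dbox{\pevolvein{\D{x}=\genDE{x}}{\ivr}}{\rfvar}$; a box-monotonicity step (\irref{K} with \irref{G}) weakens its postcondition to $\rfvar \lor x=y$. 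Since $\ivr$ and $\initassum$ hold, \irref{bigsmallequiv} turns $\dprogressin{\D{x}=\genDE{x}}{\ivr}$ into $\ddiamond{\pevolvein{\D{x}=\genDE{x}}{\ivr}}{x\neq y}$, and then \irref{dDR}, using the just-established $\dbox{\pevolvein{\D{x}=\genDE{x}}{\ivr}}{(\rfvar\lor x=y)}$ as its hypothesis, yields $\ddiamond{\pevolvein{\D{x}=\genDE{x}}{(\rfvar\lor x=y)}}{x\neq y}$, which is exactly $\dprogressin{\D{x}=\genDE{x}}{\rfvar}$. Conjunct \textcircled{b} follows by running the same argument on $\lnot{\rfvar}$ and the reversed ODE $-\genDE{x}$, after using \irref{reflect} (derived from \irref{Dadjoint}) to convert the invariance assumption for $\rfvar$ along $\D{x}=\genDE{x}$ within $\ivr$ into invariance of $\lnot{\rfvar}$ along $\D{x}=-\genDE{x}$ within $\ivr$.

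\textbf{Main obstacle.} The soundness direction is essentially bookkeeping once \irref{sAIQ} is available. The delicate part is the completeness direction: correctly introducing the fresh $\initassum$ and discharging it so it does not leak into the axiom statement; keeping the $\ddnext$ unfoldings consistent between antecedent and succedent so that the evolution domains on which \irref{bigsmallequiv} and \irref{dDR} act genuinely match; and lining up the box-monotonicity step that feeds \irref{dDR}. These are exactly the modal/propositional manipulations that the paper performs routinely but where a domain or sign mismatch is easy to make. No step requires real analyticity; only the standing extended term conditions and the local-evolution assumption on $\D{x}=\genDE{x}$ are used, so that \rref{thm:semialgcompleteness} falls out as the special case $\ivr\equiv\ltrue$.
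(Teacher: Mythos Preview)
Your argument is correct, but a small notational point first: in the paper's display convention for \irref{SAIQ} the invariance formula is on the left and \textcircled{a}$\land$\textcircled{b} on the right, so the direction you label ``$\limply$'' (assuming \textcircled{a}$\land$\textcircled{b} and deriving invariance via \irref{sAIQ}) is what the paper calls ``$\lylpmi$'', and conversely. The content is unaffected.

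For the soundness direction your derivation coincides with the paper's. For the completeness direction you take a genuinely different, and arguably more direct, route. The paper argues by contraposition: it dualizes to $\lexists{x}{(\rfvar\land\ddiamond{\pevolvein{\D{x}=\genDE{x}}{\ivr}}{\lnot\rfvar})}$, rewrites $\lnot(\sigliedsai{\genDE{x}}{\rfvar})$ to $\sigliedsai{\genDE{x}}{(\lnot\rfvar)}$ via \rref{prop:negationrearrangement}, converts both progress formulas to $\ddnext$-modalities with \irref{Lpiff}, and then uses \irref{Uniq} to merge local progress into $\ivr$ with local progress into $\lnot\rfvar$ before a \irref{Kd}+\irref{gddR} finish. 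You instead stay on the positive side: after \irref{Lpiff} and \irref{bigsmallequiv} you hold $\ddiamond{\pevolvein{\D{x}=\genDE{x}}{\ivr}}{x\neq y}$, weaken the invariance box to $\dbox{\pevolvein{\D{x}=\genDE{x}}{\ivr}}{(\rfvar\lor x{=}y)}$, and feed this directly to \irref{dDR} to obtain $\ddiamond{\pevolvein{\D{x}=\genDE{x}}{\rfvar\lor x{=}y}}{x\neq y}$, which is exactly $\dprogressin{\D{x}=\genDE{x}}{\rfvar}$. This avoids both the contrapositive detour and the explicit \irref{Uniq} step at this stage (though \irref{Uniq} is still consumed inside \irref{Lpiff}), trading them for a single domain-refinement application. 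Both derivations rest on the same extended axioms, handle \textcircled{b} identically via \irref{reflect}, and yield \rref{thm:semialgcompleteness} as the special case $\ivr\equiv\ltrue$.
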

\begin{proof}[Proof (implies \rref{thm:semialgcompleteness}).]
The left and right conjunct on the RHS of \irref{SAIQ} are abbreviated \textcircled{a} and \textcircled{b} respectively.
The ``$\lylpmi$'' direction derives by \irref{sAIQ}.
The antecedents \textcircled{a} and \textcircled{b} are first-order formulas quantified over $x$, the variables evolved by the ODE $\D{x}=\genDE{x}$.
They are kept as constant context in the antecedents of the premises when applying rule \irref{sAIQ} and later instantiated by \irref{alll}.
{\footnotesizeoff%
\begin{sequentdeduction}[array]
\linfer[allr+implyr]{
\linfer[sAIQ]{
  \linfer[alll+implyl]{
  \lclose
  }
  {\lsequent{\textcircled{a},\rfvar,\ivr,\sigliedsai{\genDE{x}}{\ivr}}{\sigliedsai{\genDE{x}}{\rfvar}}}
  !
  \linfer[alll+implyl]{
  \lclose
  }
  {\lsequent{\textcircled{b},\lnot{\rfvar},\ivr,\sigliedsai[-]{-\genDE{x}}{\ivr}}{\sigliedsai[-]{-\genDE{x}}{(\lnot{\rfvar})}}}
}
  {\lsequent{\textcircled{a}, \textcircled{b},\rfvar}{\dbox{\pevolvein{\D{x}=\genDE{x}}{\ivr}}{\rfvar}}}
}
  {\lsequent{\textcircled{a}, \textcircled{b}}{\lforall{x}{(\rfvar {\limply} \dbox{\pevolvein{\D{x}=\genDE{x}}{\ivr}}{\rfvar})}}}
\end{sequentdeduction}
}%

In the ``$\limply$'' direction, the derivation proceeds by contraposition in both cases after~\irref{andr}.
For \textcircled{b}, the derived invariant reflection axiom (\irref{reflect}) is used to syntactically turn the invariance assumption for the forwards ODE into an invariance assumption for the backwards ODE.
{\footnotesizeoff%
\begin{sequentdeduction}[array]
\linfer[andr]{
  \lsequent{\lforall{x}{(\rfvar {\limply} \dbox{\pevolvein{\D{x}=\genDE{x}}{\ivr}}{\rfvar})}}{\textcircled{a}}
  !
  \linfer[reflect]{
  \lsequent{\lforall{x}{(\lnot{\rfvar} {\limply} \dbox{\pevolvein{\D{x}=-\genDE{x}}{\ivr}}{\lnot{\rfvar}})}}{\textcircled{b}}
  }
  {\lsequent{\lforall{x}{(\rfvar {\limply} \dbox{\pevolvein{\D{x}=\genDE{x}}{\ivr}}{\rfvar})}}{\textcircled{b}}}
}
  {\lsequent{\lforall{x}{(\rfvar {\limply} \dbox{\pevolvein{\D{x}=\genDE{x}}{\ivr}}{\rfvar})}}{\textcircled{a} \land \textcircled{b}}}
\end{sequentdeduction}
}%

Continuing from the left premise (with \textcircled{a} in its succedent), standard logical manipulation is used to dualize both sides of the sequent.
The~\irref{existsl} step Skolemizes the existential in the antecedent, with the resulting $x$ used to witness the (then) existentially quantified succedent with~\irref{existsr}:
{\footnotesizeoff%
\begin{sequentdeduction}[array]
\linfer[diamond+notl+notr]{
\linfer[existsl]{
\linfer[existsr+andr]{
  \lsequent{\rfvar, \ivr, \sigliedsai{\genDE{x}}{\ivr}, \lnot{(\sigliedsai{\genDE{x}}{\rfvar})}}{\ddiamond{\pevolvein{\D{x}=\genDE{x}}{\ivr}}{\lnot{\rfvar}}}
}
  {\lsequent{\rfvar, \ivr, \sigliedsai{\genDE{x}}{\ivr}, \lnot{(\sigliedsai{\genDE{x}}{\rfvar})}}{\lexists{x}{(\rfvar \land \ddiamond{\pevolvein{\D{x}=\genDE{x}}{\ivr}}{\lnot{\rfvar}})}}}
}
  {\lsequent{\lexists{x}{\big(\rfvar \land \ivr \land \sigliedsai{\genDE{x}}{\ivr} \land \lnot{(\sigliedsai{\genDE{x}}{\rfvar})}\big)}}{\lexists{x}{(\rfvar \land \ddiamond{\pevolvein{\D{x}=\genDE{x}}{\ivr}}{\lnot{\rfvar}})}}}
}
  {\lsequent{\lforall{x}{(\rfvar {\limply} \dbox{\pevolvein{\D{x}=\genDE{x}}{\ivr}}{\rfvar})}}{\lforall{x}{\big(\rfvar \land \ivr \land \sigliedsai{\genDE{x}}{\ivr} {\limply} \sigliedsai{\genDE{x}}{\rfvar}\big)}}}
\end{sequentdeduction}
}%

Next, an initial state assumption $\initassum$ is introduced by a~\irref{cut+qear} followed by~\irref{existsl} to Skolemize the resulting existential quantifier.
The antecedent assumption $\lnot{(\sigliedsai{\genDE{x}}{\rfvar})}$ is replaced with $\sigliedsai{\genDE{x}}{(\lnot{\rfvar})}$ equivalently, by \rref{prop:negationrearrangement}.
Both local progress formulas in the antecedents are then replaced equivalently with the local progress modalities using the derived axiom~\irref{Lpiff}.
{\footnotesizeoff%
\begin{sequentdeduction}[array]
\linfer[cut+qear]{
\linfer[existsl]{
\linfer[qear]{
\linfer[Lpiff]{
  \lsequent{\initassum,\rfvar, \ivr, \dprogressin{\D{x}=\genDE{x}}{\ivr}, \dprogressin{\D{x}=\genDE{x}}{\lnot{\rfvar}}}{\ddiamond{\pevolvein{\D{x}=\genDE{x}}{\ivr}}{\lnot{\rfvar}}}
}
  {\lsequent{\initassum,\rfvar, \ivr, \sigliedsai{\genDE{x}}{\ivr}, \sigliedsai{\genDE{x}}{(\lnot{\rfvar})}}{\ddiamond{\pevolvein{\D{x}=\genDE{x}}{\ivr}}{\lnot{\rfvar}}}}
}
  {\lsequent{\initassum,\rfvar, \ivr, \sigliedsai{\genDE{x}}{\ivr}, \lnot{(\sigliedsai{\genDE{x}}{\rfvar})}}{\ddiamond{\pevolvein{\D{x}=\genDE{x}}{\ivr}}{\lnot{\rfvar}}}}
}
  {\lsequent{\lexists{y}{\initassum},\rfvar, \ivr, \sigliedsai{\genDE{x}}{\ivr}, \lnot{(\sigliedsai{\genDE{x}}{\rfvar})}}{\ddiamond{\pevolvein{\D{x}=\genDE{x}}{\ivr}}{\lnot{\rfvar}}}}
}
  {\lsequent{\rfvar, \ivr, \sigliedsai{\genDE{x}}{\ivr}, \lnot{(\sigliedsai{\genDE{x}}{\rfvar})}}{\ddiamond{\pevolvein{\D{x}=\genDE{x}}{\ivr}}{\lnot{\rfvar}}}}
\end{sequentdeduction}
}%

By~\irref{bigsmallequiv} from \rref{cor:bigsmallequiv}, local progress for $\ivr$ is strengthened, while $\ddnext$ can only be unfolded for $\lnot{\rfvar}$.
The two resulting diamond modality formulas are combined with~\irref{Uniq}:
{\footnotesizeoff%
\begin{sequentdeduction}[array]
\linfer[bigsmallequiv]{
\linfer[Uniq]{
  \lsequent{\ddiamond{\pevolvein{\D{x}=\genDE{x}}{\ivr \land (\lnot{\rfvar} \lor x=y)}}{x \neq y}}{\ddiamond{\pevolvein{\D{x}=\genDE{x}}{\ivr}}{\lnot{\rfvar}}}
}
  {\lsequent{\rfvar, \ddiamond{\pevolvein{\D{x}=\genDE{x}}{\ivr}}{x \neq y}, \ddiamond{\pevolvein{\D{x}=\genDE{x}}{\lnot{\rfvar} \lor x = y}}{x \neq y}}{\ddiamond{\pevolvein{\D{x}=\genDE{x}}{\ivr}}{\lnot{\rfvar}}}}
}
  {\lsequent{\initassum,\rfvar, \ivr, \dprogressin{\D{x}=\genDE{x}}{\ivr}, \dprogressin{\D{x}=\genDE{x}}{\lnot{\rfvar}}}{\ddiamond{\pevolvein{\D{x}=\genDE{x}}{\ivr}}{\lnot{\rfvar}}}}
\end{sequentdeduction}
}%

With a~\irref{Kd+dW} step, the diamond modality in the antecedent strengthens to $\lnot{\rfvar}$ in its postcondition with the propositional tautology $\ivr \land (\lnot{\rfvar} \lor x=y) \limply (x \neq y \limply \lnot{\rfvar})$.
A~\irref{gddR} step completes the proof using the propositional tautology $\ivr \land (\lnot{\rfvar} \lor x=y) \limply \ivr$.
{\footnotesizeoff%
\begin{sequentdeduction}[array]
\linfer[Kd+dW]{
\linfer[gddR]{
  \lclose
}
  {\lsequent{\ddiamond{\pevolvein{\D{x}=\genDE{x}}{\ivr \land (\lnot{\rfvar} \lor x=y)}}{\lnot{\rfvar}}}{\ddiamond{\pevolvein{\D{x}=\genDE{x}}{\ivr}}{\lnot{\rfvar}}}}
}
  {\lsequent{\ddiamond{\pevolvein{\D{x}=\genDE{x}}{\ivr \land (\lnot{\rfvar} \lor x=y)}}{x \neq y}}{\ddiamond{\pevolvein{\D{x}=\genDE{x}}{\ivr}}{\lnot{\rfvar}}}}
\end{sequentdeduction}
}%

The remaining derivation from the right premise (with \textcircled{b} in its succedent) is similar using local progress for the already reflected backwards differential equations instead.
\end{proof}

\section{Hybrid Programs}
\label{app:hybridprograms}

This appendix is devoted to proving~\rref{cor:testfree}.
The syntax of hybrid programs and the axioms of \dL for handling the hybrid program operators are recalled~\cite{DBLP:conf/lics/Platzer12b,DBLP:journals/jar/Platzer17} below.
Readers are referred to the literature~\cite{DBLP:conf/lics/Platzer12b,DBLP:journals/jar/Platzer17} for a complete definition of hybrid program semantics.

The syntax of hybrid programs $\alpha,\beta$ is given by the following grammar.
The formula $\ivr$ is semianalytic in both tests $\ptest{\ivr}$ and in evolution domain constraints of ODEs $\pevolvein{\D{x}=\genDE{x}}{\ivr}$:
\begin{equation}
\alpha,\beta \bebecomes \pumod{x}{\etermA}  \alternative \ptest{\ivr} \alternative \pevolvein{\D{x}=\genDE{x}}{\ivr} \alternative \pchoice{\alpha}{\beta} \alternative \alpha;\beta \alternative \prepeat{\alpha}
\label{eq:hybridprogramsyntax}
\end{equation}
The following \dL axioms are sound for reasoning about hybrid programs~\cite{DBLP:journals/jar/Platzer17}:

\begin{theorem}[Hybrid program axioms~\cite{DBLP:journals/jar/Platzer17}]
\label{thm:basehpaxioms}
The following are sound axioms of \dL.

    \begin{calculuscollection}
    \begin{calculus}
      \cinferenceRule[assignb|$\dibox{:=}$]{assignment / substitution axiom}
      {\linferenceRule[equiv]
        {\fvarA(e)}
        {\axkey{\dbox{\pupdate{\umod{x}{\etermA}}}{\fvarA\argx}}}
      }
      {\text{$\etermA$ free for $x$ in $\fvarA$}}
      \cinferenceRule[testb|$\dibox{?}$]{test}
      {\linferenceRule[equiv]
        {(\ivr \limply \fvarA)}
        {\axkey{\dbox{\ptest{\ivr}}{\fvarA}}}
      }{}
      \cinferenceRule[choiceb|$\dibox{\cup}$]{axiom of nondeterministic choice}
      {\linferenceRule[equiv]
        {\dbox{\alpha}{\fvarA} \land \dbox{\beta}{\fvarA}}
        {\axkey{\dbox{\pchoice{\alpha}{\beta}}{\fvarA}}}
      }{}
    \end{calculus}\quad
    \begin{calculus}
      \cinferenceRule[composeb|$\dibox{{;}}$]{composition} %
      {\linferenceRule[equiv]
        {\dbox{\alpha}{\dbox{\beta}{\fvarA}}}
        {\axkey{\dbox{\alpha;\beta}{\fvarA}}}
      }{}
      \cinferenceRule[iterateb|$\dibox{{}^*}$]{iteration/repeat unwind} %
      {\linferenceRule[equiv]
        {\fvarA \land \dbox{\alpha}{\dbox{\prepeat{\alpha}}{\fvarA}}}
        {\axkey{\dbox{\prepeat{\alpha}}{\fvarA}}}
      }{}
      \cinferenceRule[I|I]{loop induction}
      {\linferenceRule[equiv]
        {\fvarA \land \dbox{\prepeat{\alpha}}{(\fvarA\limply\dbox{\alpha}{\fvarA})}}
        {\axkey{\dbox{\prepeat{\alpha}}{\fvarA}}}
      }{}

    \end{calculus}
    \end{calculuscollection}
\end{theorem}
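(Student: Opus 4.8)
The plan is to check each equivalence directly against the compositional transition-relation semantics of hybrid programs~\cite{DBLP:conf/lics/Platzer12b,DBLP:journals/jar/Platzer17}, verifying that it is true in every state. Because none of these program connectives interacts with the extended term language beyond the well-definedness of term evaluation (guaranteed by condition~\rref{itm:reqsmooth}), the soundness arguments of~\cite{DBLP:journals/jar/Platzer17} transfer unchanged; I sketch them here for self-containedness.

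First I would dispatch the structural axioms, each of which unfolds immediately from the definition of the corresponding program connective. For \irref{testb}, the transition relation of $\ptest{\ivr}$ is the identity on states satisfying $\ivr$, so $\imodels{\I}{\dbox{\ptest{\ivr}}{\fvarA}}$ holds iff $\imodels{\I}{\ivr}$ fails or $\imodels{\I}{\fvarA}$ holds, i.e.\ iff $\imodels{\I}{\ivr\limply\fvarA}$. For \irref{choiceb}, the transition relation of $\pchoice{\alpha}{\beta}$ is the union of those of $\alpha$ and $\beta$, so a postcondition holds after all runs of $\pchoice{\alpha}{\beta}$ exactly when it holds after all runs of $\alpha$ and after all runs of $\beta$. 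For \irref{composeb}, the transition relation of $\alpha;\beta$ is the relational composition of those of $\alpha$ and $\beta$, and unfolding the box modality twice yields the equivalence. The assignment axiom \irref{assignb} is the one place that needs care: its soundness rests on a \emph{substitution lemma} asserting that $\ivaluation{\I}{\fvarA(\etermA)}$ coincides with the truth value of $\fvarA$ in the state that agrees with $\iget[state]{\I}$ except that $x$ is given value $\ivaluation{\I}{\etermA}$. I would prove this by structural induction on $\fvarA$ using the coincidence properties (\rref{lem:coincide}); the side condition ``$\etermA$ free for $x$ in $\fvarA$'' is precisely what prevents variable capture (including capture of differential variables $\D{x}$) in the quantifier and modality cases of that induction.

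Finally I would handle the loop axioms. For \irref{iterateb} I would use that the transition relation of $\prepeat{\alpha}$ is the reflexive transitive closure of that of $\alpha$, hence equals the identity relation unioned with the composition of $\alpha$'s relation with $\prepeat{\alpha}$'s relation; reading this fixpoint identity through the box modality gives $\dbox{\prepeat{\alpha}}{\fvarA}\lbisubjunct\fvarA\land\dbox{\alpha}{\dbox{\prepeat{\alpha}}{\fvarA}}$ in both directions with no further work. The loop induction axiom \irref{I} is the main obstacle and carries the only genuine argument. Its ``$\limply$'' direction is easy: if $\fvarA$ holds after every finite iteration of $\alpha$, then in particular $\fvarA$ holds initially and, after any finite iteration, $\fvarA$ still implies that $\fvarA$ holds after one more application of $\alpha$, so $\dbox{\prepeat{\alpha}}{(\fvarA\limply\dbox{\alpha}{\fvarA})}$ holds. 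For the ``$\lylpmi$'' direction, I would assume $\imodels{\I}{\fvarA}$ and $\imodels{\I}{\dbox{\prepeat{\alpha}}{(\fvarA\limply\dbox{\alpha}{\fvarA})}}$ and show $\imodels{\It}{\fvarA}$ for every $\iget[state]{\It}$ with $\iaccessible[\prepeat{\alpha}]{\I}{\It}$; any such $\iget[state]{\It}$ is reached by an $n$-fold iteration of $\alpha$, so I would induct on $n$, with the base case being the first assumption and the step applying the second assumption at the $(n{-}1)$-st intermediate state of the run to advance $\fvarA$ by one iteration. The only subtlety is bookkeeping the intermediate states of the iteration correctly through this induction; everything else is routine.
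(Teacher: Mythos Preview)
Your proposal is correct and follows the standard semantic soundness arguments for these axioms. However, the paper does not actually give its own proof of this theorem: it simply states the axioms with a citation to~\cite{DBLP:journals/jar/Platzer17}, where the soundness proofs are carried out, and moves on. So there is nothing to compare against beyond noting that your sketch is precisely the kind of argument the cited reference contains.
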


The loop induction rule derives from induction axiom \irref{I} using G\"odel's generalization rule \irref{G}~\cite{DBLP:conf/lics/Platzer12b}:
\[
\dinferenceRule[loop|loop]{}
{\linferenceRule
  {\lsequent{\fvarA}{\dbox{\alpha}{\fvarA}}}
  {\lsequent{\fvarA}{\dbox{\prepeat{\alpha}}{\fvarA}}}
}{}
\]

\rref{cor:testfree} is proved using the equivalent characterization of future truth of analytic formulas along ODEs from~\rref{thm:algcomplete}.
It applies to the fragment of \dL programs generated by the grammar~\rref{eq:hybridprogramsyntax}, where formula $\ivr$ in tests and domain constraints are  negations of analytic formulas (so $\etermAA \neq 0$ for some extended term $\etermAA$).
The loop-free fragment further removes the $\prepeat{\alpha}$ clause from~\rref{eq:hybridprogramsyntax}.

The set of extended terms always forms a ring because the $+,\cdot$ operations are interpreted as the usual real-valued addition and multiplication and hence obey the ring axioms~\cite{MR1727221} for $\reals$.
The extended term language is said to be \emph{Noetherian} if its corresponding ring of extended terms is Noetherian.
Similar to the computable differential radicals condition~\rref{itm:reqdiffradical}, an algorithm is assumed that decides (and proves) ideal membership in the ring of extended terms.

In particular, polynomial term languages without extensions are Noetherian and have decidable ideal membership.
Extended term languages are not necessarily Noetherian, even if all additional fixed function symbols are Noetherian functions.
For example, extended term language~\rref{eq:extlang}, even only with $\exp$, is not Noetherian~\cite[Remark 1.4.2]{Terzo}.

\begin{proof}[Proof of \rref{cor:testfree}]
The analytic formula $\rfvar$ is provably equivalent in real arithmetic to a formula $\etermA=0$ for some extended term $\etermA$ by normalizing with the provable arithmetic equivalences $\etermA=0 \land \etermB =0 \lbisubjunct \etermA^2 + \etermB^2 = 0$ and $\etermA=0 \lor \etermB=0 \lbisubjunct \etermA\etermB = 0$.
Assume without loss of generality that it is already written in this form
(and accordingly for the negative analytic formulas in $\alpha$).

Following~\cite[Theorem 1]{DBLP:conf/lics/Platzer12b}, the proof proceeds by structural induction on the form of $\alpha$, showing that for some (computable) extended term $\etermB$, the equivalence \(\dbox{\alpha}{\etermA=0} \lbisubjunct \etermB=0\) is derivable in \dL.
\begin{itemize}
\item Case $ \pevolvein{\D{x}=\genDE{x}}{\etermAA \neq 0}$.
The formula $\etermAA \neq 0$ is a strict inequality so \rref{thm:algcomplete} derives \(\dbox{\pevolvein{\D{x}=\genDE{x}}{\etermAA \neq 0}}{\etermA = 0} \lbisubjunct (\etermAA \neq 0 \limply \sigliedzero{\genDE{x}}{\etermA})\).
Let $N$ be the rank of $\etermA$ so that $\sigliedzero{\genDE{x}}{\etermA}$ expands to \(\landfold_{i=0}^{N-1}  \lied[i]{\genDE{x}}{\etermA} = 0\) and let \(\etermB \mnodefeq \etermAA(\sum_{i=0}^{N-1} (\lied[i]{\genDE{x}}{\etermA})^2)\), giving the provable real arithmetic equivalence \((\etermAA \neq 0 \limply \sigliedzero{\genDE{x}}{\etermA}) \lbisubjunct \etermB = 0\).
Rewriting with this derives in \dL the equivalence:
\[ \dbox{\pevolvein{\D{x}=\genDE{x}}{\etermAA \neq 0}}{\etermA=0} \lbisubjunct \etermB=0 \]

\item Case $\pumod{x}{e}$. By axiom \irref{assignb}, $\dbox{\pumod{x}{\etermA}}{\etermB(x)=0} \lbisubjunct \etermB(\etermA)=0$ derives and $\etermB(\etermA)$ is an extended term.

\item Case $\ptest{\etermAA \neq 0}$. By axiom \irref{testb}, $\dbox{\ptest{\etermAA \neq 0}}{\etermA=0} \lbisubjunct (\etermAA \neq 0 \limply \etermA= 0)$ derives.
Rewriting with the provable real arithmetic equivalence $(\etermAA \neq 0 \limply \etermA = 0)\lbisubjunct \etermAA\etermA=0$ derives the equivalence:
\[ \dbox{\ptest{\etermAA \neq 0}}{\etermA=0} \lbisubjunct \etermAA \etermA=0 \]

\item Case $\pchoice{\alpha}{\beta}$. By axiom \irref{choiceb}, $\dbox{\pchoice{\alpha}{\beta}}{\etermA=0} \lbisubjunct \dbox{\alpha}{\etermA=0} \land \dbox{\beta}{\etermA=0}$ derives.
By the induction hypothesis on $\alpha,\beta$, the equivalences $\dbox{\alpha}{\etermA=0} \lbisubjunct \etermB_1=0$ and $\dbox{\beta}{\etermA=0} \lbisubjunct \etermB_2 =0$ derive for some extended terms $\etermB_1,\etermB_2$.
Moreover, $\etermB_1=0 \land \etermB_2=0 \lbisubjunct \etermB_1^2+\etermB_2^2=0$ is  provable in real arithmetic.
Rewriting with the derived equivalences derives the equivalence:
\[\dbox{\pchoice{\alpha}{\beta}}{\etermA=0} \lbisubjunct \etermB_1^2+\etermB_2^2=0\]

\item Case $\alpha;\beta$. By axiom \irref{composeb}, $\dbox{\alpha;\beta}{\etermA=0} \lbisubjunct \dbox{\alpha}{\dbox{\beta}{\etermA=0}}$ derives.
By the induction hypothesis on $\beta$, the equivalence $\dbox{\beta}{\etermA=0} \lbisubjunct \etermB_2=0$ derives for some extended term $\etermB_2$.
Rewriting with this equivalence derives $\dbox{\alpha;\beta}{\etermA=0} \lbisubjunct \dbox{\alpha}{\etermB_2=0}$.
By the induction hypothesis on $\alpha$, the equivalence $\dbox{\alpha}{\etermB_2=0} \lbisubjunct \etermB_1=0$ derives for some extended term $\etermB_1$.
Rewriting with the chain of derived equivalences derives the equivalence:
\[\dbox{\alpha;\beta}{\etermA=0} \lbisubjunct \etermB_1=0\]

\item Case $\prepeat{\alpha}$.
This case crucially requires that the extended term language is Noetherian.
First, construct the sequence of terms $\etermB_i$ defined inductively with $\etermB_0 \mdefeq \etermA$ and $\etermB_{i+1}$ is the term satisfying the derived equivalence \(\dbox{\alpha}{\etermB_{i}=0} \lbisubjunct \etermB_{i+1} = 0 \) obtained by applying the induction hypothesis on $\alpha$ with postcondition $\etermB_{i}=0$ for $i=0,1,2,\dots$.
Since the term language is assumed to be Noetherian, the following ascending chain of ideals stabilizes:
\[ \ideal{\etermB_0} \subseteq \ideal{\etermB_0,\etermB_1} \subseteq \ideal{\etermB_0,\etermB_1,\etermB_2} \subseteq \cdots \]

Moreover, by decidable ideal membership for the extended term language, there is a (smallest) computable $k$ such that $\etermB_k$ satisfies the following provable identity, with cofactor terms $\cofterm_i$:
\begin{equation}
  \etermB_k = \sum_{i=0}^{k-1} \cofterm_i \etermB_i
  \label{eq:HPNoetherianalgebraicloop}
\end{equation}

The following equivalence is provable by real arithmetic: \(\sum_{i=0}^{k-1} \etermB_i^2 = 0 \lbisubjunct \landfold_{i=0}^{k-1} \etermB_i=0\).
Thus, to derive the equivalence \(\dbox{\prepeat{\alpha}}{\etermA=0} \lbisubjunct \sum_{i=0}^{k-1} \etermB_i^2 = 0\),
it suffices to show that the equivalence \(\dbox{\prepeat{\alpha}}{\etermA=0} \lbisubjunct \landfold_{i=0}^{k-1} \etermB_i=0\) is derivable.
The two directions of this claim are shown separately:

\begin{itemize}
\item[``$\limply$''] This direction is straightforward using $k$ times the iteration axiom \irref{iterateb} together with derived axiom \irref{band}.
By construction, the formulas $\dbox{\alpha}{\etermB_i=0}$ are provably equivalent to $\etermB_{i+1}=0$ using derived equivalences, which derives the required implication:
{\footnotesizeoff\renewcommand*{\arraystretch}{1.3}%
\begin{sequentdeduction}[array]
\linfer[iterateb]{
\linfer[iterateb+band]{
\linfer[iterateb+band]{
\linfer[iterateb+band]{
\linfer[]{
\linfer[]{
  \lclose
}
  {\lsequent{\etermB_0=0 \land \etermB_1=0 \land \etermB_2=0 \land \cdots \land \etermB_{k-1} = 0}{\landfold_{i=0}^{k-1} \etermB_i=0}}
}
  {\lsequent{\etermA=0 \land \dbox{\alpha}{\etermA=0} \land \dbox{\alpha}{\dbox{\alpha}{\etermA=0}} \land \cdots \land \underbrace{\dbox{\alpha}{\dots\dbox{\alpha}{}}}_{k-1~\text{times}}{\etermA=0}}{\landfold_{i=0}^{k-1} \etermB_i=0}}
}
  {\cdots}
}
  {\lsequent{\etermA=0 \land \dbox{\alpha}{\etermA=0} \land \dbox{\alpha}{\dbox{\alpha}{\dbox{\prepeat{\alpha}}{\etermA=0}}}}{\landfold_{i=0}^{k-1} \etermB_i=0}}
}
  {\lsequent{\etermA=0 \land \dbox{\alpha}{\dbox{\prepeat{\alpha}}{\etermA=0}}}{\landfold_{i=0}^{k-1} \etermB_i=0}}
}
  {\lsequent{\dbox{\prepeat{\alpha}}{\etermA=0}}{\landfold_{i=0}^{k-1} \etermB_i=0}}
\end{sequentdeduction}
}%

\item[``$\lylpmi$''] The postcondition of the box modality is strengthened to \(\landfold_{i=0}^{k-1} \etermB_i=0\) by monotonicity with~\irref{K+G},
recalling that $\etermB_0 \mdefeq \etermA$, so \(\landfold_{i=0}^{k-1} \etermB_i=0 \limply \etermA=0\) is a propositional tautology.
Subsequently, the~\irref{loop} rule is used to prove \(\landfold_{i=0}^{k-1} \etermB_i=0\) is a loop invariant of $\prepeat{\alpha}$:
{\footnotesizeoff\renewcommand*{\arraystretch}{1.3}%
\begin{sequentdeduction}[array]
\linfer[K+G]{
\linfer[loop]{
  \lsequent{\landfold_{i=0}^{k-1} \etermB_i=0}{\dbox{\alpha}{\landfold_{i=0}^{k-1} \etermB_i=0}}
}
  {\lsequent{\landfold_{i=0}^{k-1} \etermB_i=0}{\dbox{\prepeat{\alpha}}{\landfold_{i=0}^{k-1} \etermB_i=0}}}
}
  {\lsequent{\landfold_{i=0}^{k-1} \etermB_i=0}{\dbox{\prepeat{\alpha}}{\etermA=0}}}
\end{sequentdeduction}
}%

By axiom \irref{band} and \irref{andr}, each conjunct of the postcondition (indexed by $0 \leq i \leq k-1$) is proved separately.
By construction, each $\dbox{\alpha}{\etermB_i=0}$ is provably equivalent to $\etermB_{i+1}=0$:
{\footnotesizeoff\renewcommand*{\arraystretch}{1.3}%
\begin{sequentdeduction}[array]
\linfer[band]{
\linfer{
\linfer[qear+andr]{
  \lclose
}
  {\lsequent{\landfold_{i=0}^{k-1} \etermB_i=0}{\landfold_{i=0}^{k-1} \etermB_{i+1}=0}}
}
  {\lsequent{\landfold_{i=0}^{k-1} \etermB_i=0}{\landfold_{i=0}^{k-1} \dbox{\alpha}{\etermB_i=0}}}
}
  {\lsequent{\landfold_{i=0}^{k-1} \etermB_i=0}{\dbox{\alpha}{\landfold_{i=0}^{k-1} \etermB_i=0}}}
\end{sequentdeduction}
}%

The premises for indices $0 \leq i < k-1$ all close trivially because $\etermB_{i+1}=0$ is already in the antecedent.
The last premise for index $i=k-1$ has succedent $\etermB_k=0$. However, this follows (by construction and~\irref{qear}) from the antecedent using the provable identity \rref{eq:HPNoetherianalgebraicloop}.
\qedhere
\end{itemize}
\end{itemize}
\end{proof}

\end{document}